\def\rado#1{\textcolor{red}{Rado:#1}}
\def\marcus#1{\textcolor{blue}{Marcus:#1}}
\def\cro{\mathop{\rm cr}\nolimits}
\def\pcr{\mathop{\rm pcr}\nolimits}
\def\iacr{\mathop{\rm iacr}\nolimits}
\def\iocr{\mathop{\rm iocr}\nolimits}
\def\iocro{\mbox{iocr-}$0$}
\newcommand{\ZN}{\mathbb{Z}}
\def\sct{\mathop{\rm sc}\nolimits}
\def\itemi{\item [$(i)$]}
\def\itemii{\item [$(ii)$]}
\def\itemiii{\item [$(iii)$]}
\newtheorem{theorem}{Theorem}[section]
\newtheorem{lemma}[theorem]{Lemma}
\newtheorem{corollary}[theorem]{Corollary}
\newtheorem{conjecture}[theorem]{Conjecture}
\theoremstyle{definition}
\newtheorem{remark}[theorem]{Remark}
\newtheorem{claim}[theorem]{Claim}
\newtheorem{question}[theorem]{Question}
\def\marrow{{\marginpar[\hfill$\longrightarrow$]{$\longleftarrow$}}}
\def\rado#1{{\color{red}\sc  Rado says: }{\marrow\sf #1}}
\def\rado#1{}
\def\marcus#1{{\color{blue}\sc  Marcus says: }{\marrow\sf #1}}
\def\marcus#1{}
\begin{document}

\title{Strong Hanani-Tutte for the Torus\thanks{An extended abstract of this paper appeared in SoCG 2021.}}

\author{
{Radoslav Fulek
} \\
{\small Department of Mathematics} \\[-0.13cm]
{\small UC San Diego} \\[-0.13cm]
{\small La Jolla, California 92093, USA} \\[-0.13cm]
{\small \tt radoslav.fulek@gmail.com}\\[-0.13cm]
\and
{Michael J.~Pelsmajer
} \\
{\small Department of Applied Mathematics} \\[-0.13cm]
{\small Illinois Institute of Technology} \\[-0.13cm]
{\small Chicago, Illinois 60616, USA} \\[-0.13cm]
{\small \tt pelsmajer@iit.edu}\\[-0.13cm]
\and
{Marcus Schaefer\thanks{This work was supported by a DePaul/CDM Faculty Summer Research Stipend Program.}
} \\
{\small Department of Computer Science} \\[-0.13cm]
{\small DePaul University} \\[-0.13cm]
{\small Chicago, Illinois 60604, USA} \\[-0.13cm]
{\small \tt mschaefer@cs.depaul.edu}\\[-0.13cm]
}

\maketitle

\begin{abstract}
If a graph can be drawn on the torus so that
every two independent edges cross an even number of times, then the graph can
be embedded on the torus.
\end{abstract}

\section{Introduction}

Two edges in a graph are \emph{independent} if they do not share a vertex.
A drawing of a graph on a surface is \emph{independently even}, or \emph{\iocro} for short, if every two independent edges cross\footnote{Crossings for us are proper crossings, not shared endpoints or touching points.} an even number of times in the drawing. Independently even drawings of graphs on surfaces are important relaxations of graph embeddings with a wide array of applications, which we will discuss in detail in Section~\ref{sec:A}.

In the plane, there is a beautiful characterization of planar graphs known as the Hanani-Tutte
theorem which says that a graph is planar, that is, it has an embedding in the plane,
if and only if it has an independently even drawing in the plane.
Equivalently, any drawing of a non-planar graph in the plane must contain
two independent edges that cross oddly.

There are several proofs of the Hanani-Tutte theorem, including the original $1934$ proof by Hanani and
the $1970$ proof by Tutte, see~\cite{PSS07} for more references. We also know that the result remains true for the projective plane\footnote{A sphere with a crosscap.
We assume that the reader is familiar with the basic terminology of drawings and embeddings in surfaces.
For background see~\cite{MT01, D05}.}~\cite{PSS09,CKP+17}.
On the other hand, counterexamples were found recently which show that the Hanani-Tutte theorem does not
extend to orientable surfaces of genus $4$ and higher~\cite{fulek2019counterexample}. An approximate version of the Hanani-Tutte theorem is true in any surface~\cite{FK19_sym,FK18_kura}; that is, for every (orientable) surface $S$ there is an (orientable) surface $S'$ so that if $G$ can be drawn on $S$ so that every two independent edges cross evenly, then $G$ can be embedded in $S'$.

We  complement these results by proving that the Hanani-Tutte theorem does extend to the torus.

\begin{theorem}\label{thm:HTtorus}
Let $G$ be a  graph. Suppose that $G$ can be drawn on the torus so that every two independent edges cross evenly. Then $G$ can be embedded on the torus.
\end{theorem}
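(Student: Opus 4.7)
The plan is to reduce the theorem to the planar Hanani-Tutte theorem via a surgical argument on the torus. I would begin with a counterexample $G$ minimizing $|E(G)|$, so $G$ has an independently even drawing on the torus but does not embed on the torus. Standard reductions should force $G$ to be $2$-connected with minimum degree at least $3$, since a vertex of degree at most $2$ can be suppressed or deleted while preserving both hypotheses, and Hanani-Tutte behaves well with respect to block decomposition (each block inherits an iocr-$0$ drawing, and torus embeddings of the blocks can be reassembled into a torus embedding of $G$).

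The heart of the approach is to find a controllable subgraph $H \subseteq G$ whose drawing we can use to cut the torus open into a planar region. Take a spanning tree $T$ of $G$ and apply a redrawing lemma in the spirit of Pelsmajer, Schaefer, and Stefankovi\v{c}: using the iocr-$0$ hypothesis, redraw $T$ so that it has no crossings at all while maintaining iocr-$0$ of the global drawing. The edges of $G \setminus T$ can then be processed one at a time, each closing a cycle $C_e$ with $T$ whose free-homotopy class on the torus is well defined from the redrawn picture. The iocr-$0$ assumption constrains pairs of these cycles: independent edges of $C_e$ and $C_f$ cross evenly, so the parity of the algebraic intersection of the two curves matches what their homotopy classes predict, yielding concrete combinatorial data to work with.

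From there, I would try to reorganize the co-tree edges, via redrawings that do not introduce odd crossings between independent pairs, so that at most two noncontractible cycles remain whose homotopy classes form a basis of $H_1$ of the torus. All remaining co-tree edges are then contractible relative to those two, and cutting the torus along representative curves of the two generators yields a topological square in which the remaining drawing is iocr-$0$ and has a prescribed boundary structure. A boundaried version of the planar Hanani-Tutte theorem then produces a planar embedding of the leftover graph inside the square; pasting back recovers a genuine torus embedding of $G$, contradicting the choice of $G$.

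The main obstacle I anticipate is precisely the reorganization step. Unlike the projective plane, where a single noncontractible cycle suffices to cut the surface into a disk, the torus requires a coordinated pair whose homotopy classes generate $\mathbb{Z}^2$, and an arbitrary collection of noncontractible co-tree cycles satisfying only iocr-$0$ can wind multiply, pairwise realize incompatible bases, or become contractible only after algebraic cancellation. Proving that a compatible pair can always be extracted, or produced by edge-by-edge redrawings, is where the genuinely new difficulty should lie. A secondary technical issue is that cutting along the chosen curves introduces a boundary along which vertices of $G$ may appear more than once, so the planar step must be carried out in a form that respects the cyclic order of boundary occurrences of each vertex, and this interplay between the homotopy bookkeeping and the boundary bookkeeping is where I would expect the bulk of the technical work.
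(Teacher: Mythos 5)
The overall shape of your plan---minimal counterexample, cut the torus open along essential cycles, apply a planar (or boundaried planar) Hanani--Tutte theorem inside, paste back---is the right starting intuition, and your preliminary reductions (2-connectivity, minimum degree $\ge 3$) are sound. But the step you flag as "the main obstacle I anticipate" is not a technical detail to be filled in; it is the whole content of the theorem, and your proposed route through it does not go anywhere. The paper shows early on (its Lemma on nearly disjoint essential cycles, Lemma~\ref{lem:2disjoint}) that if a minimal counterexample contained two vertex-disjoint (or even just nearly disjoint, touching at even vertices) essential cycles, then it would already have a compatible embedding. So the case where your "compatible pair generating $H_1$" can be found is precisely the case that is not hard. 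What a minimal counterexample is actually forced to contain is an $X$-configuration: two essential cycles $C_1, C_2$ meeting in a \emph{single} vertex $v$, where $v$ is odd and the four cycle-edges at $v$ provably cannot be made to cross pairwise evenly by flips (Lemma~\ref{lem:Rotations}). Cutting along such a pair does not give a square with a clean boundary; the shared odd vertex $v$ appears multiply on the boundary with uncontrollable rotation, which is exactly the boundary bookkeeping you acknowledge but underestimate. Your spanning-tree step has a related problem: cleaning $T$ of crossings (à la the even-tree lemma, Lemma~\ref{lem:treeclear}) requires first making the tree \emph{even}, which needs every tree vertex to be made even by flips---and the existence of the $X$-configuration is precisely the statement that some vertex cannot be made even.

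The paper's actual route after establishing the $X$-configuration is quite different from yours: it cuts along only \emph{one} essential cycle $C$ of the configuration (after clearing it of crossings via Lemma~\ref{lem:G-C}), obtaining a cylinder rather than a square, and then carries out an extensive analysis of the $C$-bridges in the cylindrical drawing. Through a sequence of reductions (Lemmas~\ref{lem:Xreduction333}, \ref{lem:X2ftCBridge}, \ref{lem:XCbridges}) it drives the minimal counterexample down to subdivisions of $K_5$ or $K_{3,t}$ with bracers, and then disposes of those by direct case analysis (Lemmas~\ref{lem:K5HTWC}, \ref{lem:K3nspokesHTWC}), using the non-toroidality of $K_{3,7}$ as a hard stop. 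A further point your proposal misses is that minimizing $|E(G)|$ alone is not enough: the induction needs a strengthened statement (weakly compatible embeddings in the presence of an $X$-configuration) and a more refined order $\prec'$ that also tracks vertex count, number of $C$-bridges, and cycle lengths, because several of the reductions replace subgraphs by gadgets that do not decrease edge count. None of this is recoverable from the plan you sketched; the reorganization into a basis of $H_1$ is not a missing lemma but the wrong mechanism, and the bridge analysis in the cylinder is what replaces it.
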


Among orientable surfaces, this leaves only the double and triple torus, for which we do not know whether the Hanani-Tutte theorem holds. For non-orientable surfaces, all cases starting with the Klein bottle are open.

Our approach extends and refines techniques developed in~\cite{CKP+17,PSS09} and other papers, in particular in Sections~\ref{sec:RIOT} and~\ref{sec:KM}. After that, in Sections~\ref{sec:MC} and~\ref{sec:HTT} our proof requires new tools.

The proof of Theorem~\ref{thm:HTtorus} is inductive, and for the induction to work we need to strengthen the result; this strengthened version is Theorem~\ref{thm:WUHT} in Section~\ref{sec:RIOT}. For the induction, we carefully define a partial order on  drawings of graphs on the torus. We then show that no minimal counterexample to Theorem~\ref{thm:WUHT} with respect to our partial order exists. Our reductions are mostly, but not always, minor preserving.

In the base case of our proof, we work with drawings of subdivisions of $K_{3,t}$, $t\le 6$ and $K_5$, possibly with some added paths in the case of $K_{3,t}$. Reducing to the base case is a relatively smooth procedure, using natural redrawing tools up to the point when the graph is
formed by a subdivision of $K_{3,t}$ or  $K_5$ with additional simple bridges. In this case an extensive case analysis seems to us the only way to proceed. The bulk of the full version of the paper deals with these cases. The main difficulty in this part of the  proof lies in performing the reduction steps so that the case analysis becomes manageable.

\subsection{Known Results}

The Hanani-Tutte theorem~\cite{C34, T70} has been known with many proofs for the plane for a while, for a survey of known results see~\cite{S14}. It was shown to be true for the projective plane by Pelsmajer et al.~\cite{PSS09} using the excluded minors for the projective plane, and later directly, without recourse to excluded minors by \'E. Colin de Verdi\`ere et al.~\cite{CKP+17}. Excluded minors stop being useful at that exact point, since we do not know the complete list of excluded minors for the torus or any higher-order surfaces.

If we strengthen the assumption of the Hanani-Tutte theorem to also require adjacent edges to cross each other evenly, then embeddability follows, for any surfaces. This is known as the weak Hanani-Tutte theorem.\footnote{The naming of this variant as ``weak'' is somewhat misleading, in that it allows a stronger conclusion.}

\begin{theorem}[Weak Hanani-Tutte for Surfaces~\cite{CN00,PSS09b}]\label{thm:WHTS}
    If a graph can be drawn in a surface $S$ (orientable or not) so that every two edges cross an even number of times, then the graph can be embedded in $S$ with the same rotation system (if $S$ is orientable), or the same embedding scheme (if $S$ is non-orientable).
\end{theorem}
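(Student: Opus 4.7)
The plan is to prove Theorem~\ref{thm:WHTS} by induction on $|E(G)|$, with the base case $|E(G)| \le 1$ being immediate. For the inductive step, first reduce to the case that $G$ is connected (by handling each component separately) and has minimum degree at least $2$: a pendant vertex can be deleted, the reduced graph embedded by the inductive hypothesis, and the pendant vertex then reinserted into any incident face without disturbing the embedding scheme.

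The core of the argument is a local redrawing procedure. The key lemma one needs is: starting from an even drawing, one can iteratively modify the drawing on $S$, preserving the embedding scheme and the evenness of every pair of edge-crossings, until no two adjacent edges cross. The basic move is a ``finger move'' that pushes a crossing of two adjacent edges $e,e'$ through their common vertex $v$; since $e$ and $e'$ cross evenly, their crossings can be paired up and cancelled while crossings with any third edge $g$ change only by a parity-preserving amount, because the modification routes a segment of $e$ parallel to part of $e'$, and $e'$ itself crosses $g$ evenly. Iterating over all pairs of adjacent edges at all vertices produces a drawing with the same embedding scheme in which no two adjacent edges cross.

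Once the drawing has no adjacent crossings, the local picture at each vertex realizes the prescribed rotation (and signatures on incident edges) exactly, so the drawing carries the embedding scheme $\Pi$ on its sleeve. It remains to eliminate the independent crossings, which still come in even-size groups. This is done by surgery: given two independent edges $e,f$ crossing at two consecutive common points $p$ and $q$, interchange the subarcs between $p$ and $q$ to remove both crossings. Since $p$ and $q$ are interior points of the edges, the rotation system and signatures are untouched. A global argument using Euler's formula, comparing the genus of $S$ to the combinatorial genus determined by $\Pi$, shows that one may iterate the surgery until no crossings remain, yielding the desired embedding of $G$ on $S$ realizing the embedding scheme.

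The main obstacle is the finger-move lemma, especially on non-orientable surfaces: the direction in which the finger is pushed can flip an edge signature, and controlling signatures while simultaneously preserving parities of crossings with all other edges requires careful local topology. Once this lemma is in place, the passage from an adjacent-crossing-free drawing to an embedding is essentially bookkeeping, since the realized rotation system determines the surface.
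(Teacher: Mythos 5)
The paper does not give its own proof of Theorem~\ref{thm:WHTS}; it imports the result from~\cite{CN00,PSS09b}. Your proposal can therefore only be judged on internal correctness, and there it has substantive gaps, both in places you characterize as routine.

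The ``finger move.'' You argue that rerouting a segment of $e$ alongside $e'$ changes $\cro(e,g)$ by a parity-preserving amount because $e'$ crosses $g$ evenly. That is not the relevant quantity. The change in $\cro(e,g)$ equals the crossing parity of $g$ with the closed curve $\gamma$ bounded by the old $e$-arc and the new one (which runs alongside $e'$). On a surface of positive genus $\gamma$ may be essential, or null-homologous but separating the two endpoints of $g$; in either case $g$ crosses $\gamma$ oddly and the parity of $\cro(e,g)$ flips. Nothing about $\cro(e',g)$ being even controls this. You flag this as the main obstacle but then treat it as a local-topology technicality; in fact controlling it is exactly the homological heart of the theorem.

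The arc-swap surgery and ``bookkeeping.'' The swap of subarcs of $e$ and $f$ between crossings $p,q$ changes $\cro(e,g)$ and $\cro(f,g)$ by the parity of $g$'s intersection with the closed curve $e_{pq}\cup f_{pq}$; again this need not be even, so after a swap some pair of independent edges may cross oddly, at which point the process jams with crossings remaining. On a non-orientable surface, the two swapped subarcs may carry opposite local signature, so the swap can also change the embedding scheme, contrary to your claim that it is untouched. Finally, the assertion that once the rotation system $\Pi$ is realized ``the passage to an embedding is essentially bookkeeping, since the realized rotation system determines the surface'' is circular: $\Pi$ determines \emph{some} combinatorial surface $S_\Pi$, but the theorem precisely asserts that the Euler genus of $S_\Pi$ is at most that of $S$, which is a non-trivial statement about how the facial walks of $\Pi$ sit homologically inside the drawing on $S$. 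That bound is the content of the cited proofs, and it is not supplied here; your Euler-formula argument is a gesture at it, not a proof. In short, the two steps you defer to as manageable are exactly where the theorem lives.
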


Unfortunately, the available proofs of Theorem~\ref{thm:WHTS} do not give any insight on how to establish the strong version on a surface whenever this is possible. It does not even settle the seemingly easy question whether a graph which can be drawn in a surface so that the only crossings are between adjacent edges, can be embedded in that surface.

For work on applying Hanani-Tutte to different planarity variants, see~\cite{DF21,FPSS12,FulekPS16,FulekPS17,GMS14,S13}. A variant of the (strong) Hanani-Tutte theorem in the context of approximating maps of graphs, which works on any surface, was  announced in a paper co-authored by the first author~\cite{FK18_approx}.

\section{Terminology, Definitions, and Basic Properties}
%

For the purposes of this paper, graphs are simple (no multiple edges or loops), and surfaces are compact $2$-manifolds (with or without boundary).

For a particular drawing $D$ of a graph $G$ on a surface $S$,
we make the following definitions:
The \emph{crossing parity} of a pair of edges is the number of times the two
edges cross modulo $2$. Two edges form an \emph{odd pair} if their crossing parity is $1$. A subgraph (or single edge) is \emph{even} if none of its edges belong to an
odd pair in $G$.


A closed curve $\gamma$ on a surface  $S$ is \emph{non-essential} if it forms the boundary of a (not necessarily connected) closed sub-manifold of~$S$, or equivalently, if its complement in $S$ can be two-colored so that path-connected components sharing a non-trivial part of $\gamma$ receive opposite colors. In homological terms, a closed curve $\gamma$ on  $S$ is \emph{essential} if  its homology class does not vanish over $\ZN_2$.
A closed curve {\em separates the surface} if removing the curve from the surface disconnects the surface.
A curve is {\em simple} if it is free of self-intersections.
A closed simple curve $\gamma$ separates the surface if and only if $\gamma$ is non-essential.
An essential simple closed curve is therefore {\em non-separating}.

We apply the same terminology to cycles in graph drawings, since a cycle determines a closed curve.
We say a subgraph in a drawing is {\em essential} if it contains an essential cycle.

Closed curves in the plane are non-essential.  Non-essential curves, on any surface, tend to be  easier to handle in proving results similar to ours, because they cross every other closed curve an even
number of times. (This is easy to see using the two-coloring of the complement of the curve from the above definition.) The difficulty in proving Hanani-Tutte type results lies in the presence of essential curves.

To make this more precise, suppose we are given a drawing $D$ of a graph $G$.  We associate with every closed walk $W$ of $G$ a $2$-dimensional vector $H(W)$ over $\ZN_2$ representing the 1-dimensional homology class of $W$ in $D$,
intuitively, we count how often $W$ crosses an equator and meridian of the torus, modulo $2$. The vectors $H(W)$'s satisfy the following properties, where $\oplus$ denote the symmetric difference applied to  edges.
\begin{enumerate}[(i)]
\item
$H(W)=(0,0)$ if and only if $W$ is non-essential;
\item
 $H(W_1 \oplus W_2)= H(W_1)+H(W_1)$, for every pair of closed walks $W_1$ and $W_2$; and
 \item  the number of crossings over $\ZN_2$ between closed curves that are drawings of closed walks $W_1$ and $W_2$ in $D$ (possibly after a small perturbation to achieve a generic position) is $H(W_1)^T \begin{pmatrix}
      0  & 1 \\ 1 &  0
    \end{pmatrix} H(W_2)$, that is, $W_1$ and $W_2$
    cross an odd number of times if and only if they belong to different non-vanishing homology classes.
\end{enumerate}

The following lemma rephrases a basic property of essential curves in a surface, see, for example,~\cite[Section 5.1]{FK18_kura} or~\cite[Proposition 4.3.1]{MR96}.

\begin{lemma}
\label{lem:3PC}
The family of essential cycles in a graph drawn in a surface
satisfies the $3$-path condition: given three internally disjoint paths with the same endpoints, if two of the cycles formed by the paths are non-essential then so is the third.
\end{lemma}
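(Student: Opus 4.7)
The plan is to derive Lemma~\ref{lem:3PC} directly from the algebraic properties (i)--(iii) of the homology map $H$ that were just listed in the preceding discussion. Nothing topological beyond those three bullet points should be needed.

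First I would set up notation. Let $P_1, P_2, P_3$ be the three internally disjoint paths with common endpoints $u, v$, and form the three cycles $C_{ij} = P_i \cup P_j$ for $\{i,j\} \subset \{1,2,3\}$. Viewing each $P_i$ as oriented from $u$ to $v$, each cycle $C_{ij}$ becomes a closed walk (traverse $P_i$ from $u$ to $v$, then $P_j$ from $v$ to $u$). The key combinatorial observation is that the edge-sets satisfy
\[
C_{12} \oplus C_{13} = (P_1 \cup P_2) \oplus (P_1 \cup P_3) = P_2 \cup P_3 = C_{23},
\]
because the edges of $P_1$ cancel and the paths are internally disjoint so no other edges overlap.

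Next I would apply property (ii) of $H$, which says that $H$ is additive with respect to $\oplus$ over $\ZN_2$. This gives $H(C_{23}) = H(C_{12}) + H(C_{13})$. Finally, I invoke property (i): a cycle (equivalently, a closed walk realizing a closed curve) is non-essential if and only if its homology vector is $(0,0)$. Thus if two of $C_{12}, C_{13}, C_{23}$ are non-essential, two of the three vectors $H(C_{12}), H(C_{13}), H(C_{23})$ are zero, forcing the third to be zero as well, so the third cycle is non-essential.

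I do not anticipate any real obstacle: the argument is essentially one line of $\ZN_2$-linear algebra once the homology dictionary has been set up. The only thing to double-check is that treating the $C_{ij}$ simultaneously as closed walks and as drawn cycles is legitimate for applying (i) and (ii); this is immediate since $P_1,P_2,P_3$ are internally disjoint, so each $C_{ij}$ is a simple closed curve in the drawing and the symmetric difference identity above holds literally, not just modulo boundaries.
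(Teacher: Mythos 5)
Your proof is correct and takes essentially the same approach as the paper's: both reduce the 3-path condition to the $\ZN_2$-additivity of the homology class. The paper phrases it via curve concatenation ($\gamma_1\gamma_2^{-1}\gamma_2\gamma_3^{-1}$ being homologous to $\gamma_1\gamma_3^{-1}$), whereas you phrase it via the edge-symmetric-difference identity $C_{12}\oplus C_{13}=C_{23}$ together with properties (i) and (ii) of $H$ from the preceding paragraph; these are the same computation.
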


\begin{proof}
Let $\gamma_1,\gamma_2,\gamma_3$ be three curves from point $x$ to point $y$ in the
surface, and for each $i$ let $\gamma_i^{-1}$ be the the reversal of $\gamma_i$ (i.e., $\gamma_i$ taken
from $y$ to $x$).
If $\gamma_1\gamma_2^{-1}$ (the concatenation of $\gamma_1$ and $\gamma_2^{-1}$) and $\gamma_2\gamma_3^{-1}$
are non-essential then $\gamma_1\gamma_2^{-1}\gamma_2\gamma_3^{-1}$ is also non-essential.
The claim follows since $\gamma_1\gamma_2^{-1}\gamma_2\gamma_3^{-1}$ is in the same 1-dimensional homology class over $\ZN_2$ as $\gamma_1\gamma_3^{-1}$.
\end{proof}

\paragraph{Edge-vertex move.}
An \emph{edge-vertex} $(e,v)$-\emph{move} (also known as \emph{van Kampen's finger-move}, or \emph{edge-vertex switch})
is a generic deformation of the edge $e$ in a drawing of $G$ changing the crossing
parity between $e$ and all the edges incident to $v$, without changing any other crossing parities; see the left illustration in Figure~\ref{fig:drawmoves}.
An edge-vertex move is performed as follows. Connect an interior point of $e$
to $v$ via a curve $\gamma$ that does not pass through any vertices (and does not cross $e$). Then
reroute $e$ close to $\gamma$ and around $v$ (as shown in the illustration). Since $e$
traverses $\gamma$ twice, only the crossing parities of $e$ with edges incident to $v$ change.

\begin{figure}[htp]
\centering
\includegraphics[height=0.5in]{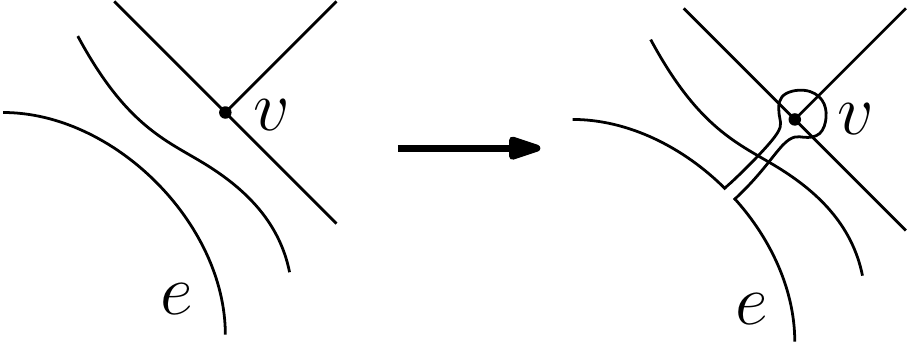} \hspace{15pt}
\includegraphics[height=0.5in]{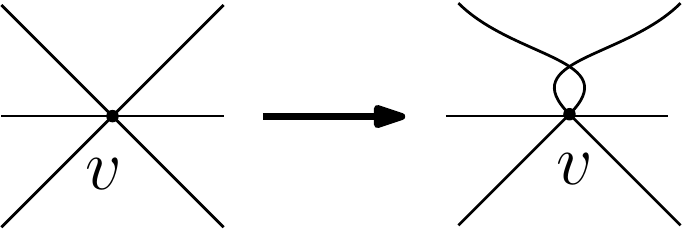} \hspace{15pt}
\includegraphics[height=0.5in]{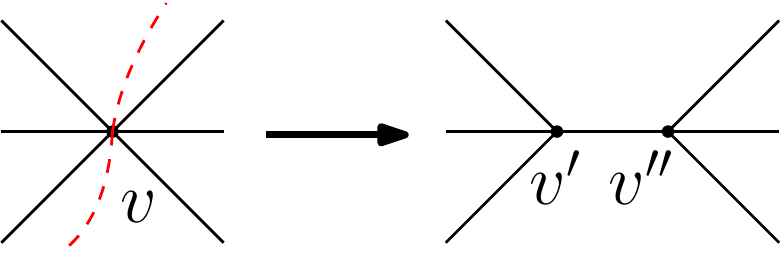}
 \caption{{\em Left:} An $(e,v)$-move. {\em Middle:} An edge-flip. {\em Right:} A vertex-split.}
\label{fig:drawmoves}
\end{figure}

\paragraph{Edge-flip.} An \emph{edge-flip}, or \emph{flip}, \emph{(at $v$)} in a drawing is a redrawing operation that happens near a vertex $v$, and which takes two consecutive edges in the rotation at $v$ and (locally) exchanges their position in the rotation at $v$; see the middle illustration in Figure~\ref{fig:drawmoves}. As a result, the crossing parity between the two flipped edges changes, and no other crossing parities are affected.

\paragraph{Edge contraction and vertex split.}
A \emph{contraction} of an  edge $e=uv$ in a drawing of a graph is an operation that turns
$e$ into a vertex
by moving $v$ along $e$ towards $u$ while dragging all the other edges incident to $v$ along $e$.
If $e$ is even, then contracting $e$ in this fashion does not change the crossing parity of any pair of edges.
Contraction may introduce multi-edges or loops at the vertices, which is why we typically only contract $uv$ partially, that is, we move $v$ close enough to $u$ so that $uv$ is free of crossings.

We will also often use the following operation which can be thought of as the inverse of contracting an edge in a drawing of a graph. To {\em split} a vertex $v$, we split its rotation into two contiguous parts,
and then cut through the vertex to separate those two parts. This results in two vertices $v'$ and $v''$ which we connect by a crossing-free edge $v'v''$ so that contracting $v'v''$ recovers
the original rotation at $v$; see the right illustration in Figure~\ref{fig:drawmoves}. Vertex-splits are not unique.

Edge-vertex moves and contractions may introduce self-crossings of edges. Such self-crossings
are easily resolved~\cite[Section 3.1]{FPSS12B}: remove the crossing, and reconnect the four severed ends so that the edge consists of a single curve. In this redrawing, essential cycles remain essential and non-essential cycles remain non-essential. (This is easy to see using the two-coloring of the complement of the curve corresponding to a non-essential cycle in the drawing.)

\section{Redrawing \iocro-Drawings on the Torus}\label{sec:RIOT}

In this section we establish some of the basic redrawing tools for \iocro-drawings in Section~\ref{sec:GRT}, some from earlier papers, show how to work with disjoint essential cycles, and give a first application of these tools to the $1$-spindle.

\subsection{Compatibility and Weak Compatibility}\label{sec:CaWC}

We say that a vertex $v$ in a drawing $D$ of a graph on a surface is \emph{even} if every two edges incident to $v$ cross each other an
even number of times; otherwise, $v$ is {\em odd}.\footnote{This will not conflict with the usual degree-based definition of even and odd
for vertices, since we will not be using that terminology.}
A drawing $D_2$ of a graph $G$  in an orientable surface $S$ is {\em compatible}  with a  drawing $D_1$ of $G$ in $S$ if every even vertex in $D_1$ is even in $D_2$, and the rotation at even vertices in $D_1$ is preserved in $D_2$. Note that the compatibility relation  is transitive, but not necessarily symmetric. We can define a notion of connectivity on even vertices: two even vertices $u$  and $v$ in a drawing of a graph are {\em evenly connected} if there exists a path connecting $u$ and $v$ consisting only of  even vertices. An {\em evenly connected component} in a drawing  is a maximal connected subgraph in the underlying abstract graph induced by a set of even vertices.

A drawing $D_2$ of a graph $G$  in an orientable surface $S$ is  {\em weakly compatible}
  with a  drawing $D_1$ of $G$ in $S$ if
   every even vertex in $D_1$ is even in $D_2$, and
  for every evenly connected component $K$ in $D_1$ either the rotation in $D_2$ at every vertex $v\in V(K)$ is the same as the rotation of $v$ in $D_1$, or the rotation in $D_2$ at every vertex $v\in V(K)$ is the reverse of the rotation of $v$ in $D_1$.
 Note that the weak compatibility relation is transitive, but not necessarily symmetric, just like compatibility.


We can now state a version of the Hanani-Tutte theorem on the plane that implies both the weak and
the strong version. While the result follows from the proof of the Hanani-Tutte theorem in~\cite{PSS07}, it was first explicitly stated, and given a new proof, in~\cite{FKP17}.

\begin{theorem}[The Unified Hanani-Tutte Theorem~\cite{FKP17}]\label{thm:HTS}
If $G$ has an \iocro-drawing in the plane, then $G$ has an embedding in the plane compatible with the \iocro-drawing.
\end{theorem}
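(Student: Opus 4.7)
The plan is to reduce to the weak Hanani-Tutte theorem in the plane (the planar special case of Theorem~\ref{thm:WHTS}). Given the \iocro-drawing $D_1$ of $G$ in the plane, I would first construct an intermediate \emph{even} drawing $D'$ of $G$ in the plane, in which every pair of edges (adjacent or independent) crosses evenly, and which moreover preserves the rotation at every vertex that is even in $D_1$.

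To construct $D'$, I would process each odd vertex $v$ of $D_1$ independently, leaving even vertices entirely untouched. Around $v$ choose a small closed disk $U_v$ so that no edge disjoint from $v$ meets $U_v$ and so that each edge at $v$ meets $\partial U_v$ in a single point; the cyclic order of these intersection points records the current rotation at $v$. Erase the drawing inside $U_v$ and redraw each edge at $v$ as a simple arc from $v$ to its point on $\partial U_v$, introducing inside $U_v$ exactly as many crossings between each pair of edges at $v$ as is needed to make that pair cross evenly in the new global drawing. Since two arcs in a disk with disjoint fixed endpoints may be drawn crossing each other any prescribed number of times, this local redrawing can realize any target parity pattern between pairs of edges at $v$; in particular, we can force every such pair to cross evenly. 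The redrawing only changes crossing parities of pairs of edges both incident to $v$, and all such pairs are adjacent, so the \iocro\ property is preserved. Originally-even vertices are not touched, so the rotations and evenness at every $D_1$-even vertex carry over unchanged into $D'$.

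Having produced the even drawing $D'$ with the rotations at all $D_1$-even vertices still matching those of $D_1$, I would invoke the weak Hanani-Tutte theorem for the plane (Theorem~\ref{thm:WHTS} specialized to the sphere), which yields a planar embedding $D_2$ of $G$ with the same rotation system as $D'$. In particular the rotations at the $D_1$-even vertices agree in $D_1$ and $D_2$, and since $D_2$ is an embedding every vertex of $D_2$ is trivially even. Therefore $D_2$ is compatible with $D_1$ in the sense of Section~\ref{sec:CaWC}, which is precisely the required conclusion.

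The main obstacle is the local-redrawing step: one must verify cleanly that the prescribed adjacent-edge parities at an odd vertex can be realized inside a disk without disturbing anything outside, which reduces to the topological fact above about pairs of arcs in a disk together with a careful choice of $U_v$ small enough to avoid every edge not incident to $v$. A secondary point to confirm is that changing the rotation at an odd vertex $v$ during this redrawing is harmless, which is automatic because compatibility places no constraints at vertices that are odd in $D_1$.
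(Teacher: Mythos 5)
The reduction to the weak Hanani--Tutte theorem hinges on the claim that, at an odd vertex $v$, the ends of the edges inside a small disk $U_v$ can be redrawn so as to realize an \emph{arbitrary} prescribed collection of pairwise crossing parities, and this claim is false once four or more edges meet at $v$. A single pair of arcs in a disk can of course be made to cross with any prescribed parity, but the $\binom{k}{2}$ parities cannot be prescribed independently. For any two disjoint pairs $\{i,j\}$, $\{l,m\}$ of edges at $v$, closing up $a_i\cup a_j$ and $a_l\cup a_m$ by arcs outside $U_v$ produces two closed curves in the plane, which therefore cross evenly; this forces
\[
x_{il}+x_{im}+x_{jl}+x_{jm}+I_v(\{i,j\},\{l,m\})+I_{\partial}(\{i,j\},\{l,m\})\equiv 0\pmod{2},
\]
where $x$ are the crossing parities inside $U_v$, $I_v$ records whether the two pairs interleave in the (freely chosen) rotation at $v$, and $I_\partial$ records whether they interleave on $\partial U_v$. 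The target you need, namely inside parity equal to the outside parity for every pair at $v$, must satisfy this system for \emph{some} rotation at $v$, and in general it cannot: if $e_1,e_2,e_3,e_4$ appear in this cyclic order in $D_1$ and only $e_1,e_3$ cross oddly, the three constraints above force all three interleaving values $I_v$ to equal $1$, which no cyclic order realizes. This is exactly the $X$-configuration discussed in Section~\ref{sec:XC}, and Lemma~\ref{lem:4obs} together with the paragraph preceding it makes the same point for edge-flips, which generate precisely the redrawings available to you inside $U_v$.

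This obstruction really does occur in planar \iocro-drawings: a $K_{1,4}$ can be drawn with the $X$-pattern at its centre, since all four edges are pairwise adjacent and so no independent pair constrains the drawing. The theorem still holds for $K_{1,4}$ --- trivially, as the leaves are even and carry no rotation information --- but the intermediate even drawing $D'$ you construct simply does not exist. There is also a global sanity check that should raise an alarm: your scheme (make every vertex even locally, then invoke Theorem~\ref{thm:WHTS}) is surface-agnostic, so if it were correct it would establish Theorem~\ref{thm:HTS} on every surface; yet the paper notes immediately after the statement that the theorem fails on every surface other than the plane. The proofs cited in the paper (\cite{FKP17}, building on \cite{PSS07}) do not reduce to the weak theorem; they run a minimal-counterexample induction that removes oddly crossing independent pairs while directly tracking the rotation at even vertices.
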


Theorem~\ref{thm:HTS} does not hold on any surface other than the plane~\cite[Theorem 7]{fulek2019counterexample}. The counterexample requires compatibility, it fails for weak compatibility.
If we assume that the graph is $3$-connected, then weak compatibility can be achieved on the torus.


\begin{theorem}\label{thm:WUHT}
If a $3$-connected graph $G$ has an \iocro-drawing in the torus, then $G$ has an embedding in the torus that is weakly compatible with the \iocro-drawing.
\end{theorem}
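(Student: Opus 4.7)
The plan is to prove Theorem~\ref{thm:WUHT} by induction on a carefully chosen partial order on pairs $(G,D)$ of $3$-connected graphs and their \iocro-drawings on the torus---lexicographically, say, on $|E(G)|$ followed by the number of odd pairs in $D$, with ties broken by some auxiliary drawing invariant. I would consider a minimal counterexample $(G,D)$ under this order and then systematically rule out each structural feature that would permit a reduction using the tools developed so far. If $D$ contains an even edge $uv$, partial contraction gives a smaller pair to which the inductive hypothesis would apply, and then a vertex-split restores a weakly compatible toroidal embedding of $G$; if every cycle of $G$ is non-essential, then $G$ is planar and the unified Hanani-Tutte theorem (Theorem~\ref{thm:HTS}) embeds $G$ inside a disk on the torus; if some odd vertex $v$ has two consecutive edges in its rotation that cross oddly, an edge-flip at $v$ strictly decreases the number of odd pairs; and repeated $(e,v)$-moves let me push odd parities around $G$ into a convenient configuration.

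I would then exploit the $\ZN_2$-homology machinery from properties $(i)$--$(iii)$ of Section~2, together with Lemma~\ref{lem:3PC}, to understand how essential cycles of $G$ arrange themselves in $D$. Because independent edges cross evenly, any two vertex-disjoint essential cycles of $D$ must lie in the same homology class over $\ZN_2$, and via the $3$-path condition this propagates to a coherent notion of a single ``meridian'' class across large subgraphs. Combined with $3$-connectivity and the compatibility/weak-compatibility theory of Section~\ref{sec:CaWC}, I would expect this to force $G$ to contain a highly structured essential subgraph and, after iterated reduction, to allow me to restrict attention to the base case flagged in the introduction: a subdivision of $K_{3,t}$ with $t\le 6$, or of $K_5$, possibly with a few additional simple bridges. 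Once all crossings between adjacent edges within each evenly connected component have been eliminated, a final invocation of the weak Hanani-Tutte theorem (Theorem~\ref{thm:WHTS}) on each component would produce an embedding whose rotation system at even vertices matches $D$'s, which is precisely weak compatibility.

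The main obstacle is the base case itself. Once the graph has been whittled down to a small subdivision of $K_{3,t}$ or $K_5$ with bridges, the generic redrawing tools are essentially exhausted, and one must, for each such graph, enumerate the admissible rotation systems on the torus, determine from the crossing pattern of $D$ which rotation each evenly connected component is forced into, and exhibit an explicit weakly compatible toroidal embedding. Given the combinatorial explosion already observed in the projective-plane proofs of~\cite{CKP+17,PSS09}---and emphasized by the authors themselves in the introduction---this step is likely where the bulk of the paper's effort will lie. My strategy would be to organize the case analysis by the $\ZN_2$-homology classes of the essential cycles present in $D$, together with parity invariants that record how each added bridging path attaches to the underlying $K_{3,t}$ or $K_5$, and to use symmetries of these small graphs aggressively to reduce the number of essentially different subcases that actually need to be checked by hand.
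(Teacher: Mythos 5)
Your outline captures the broad strategy the paper actually follows (minimal counterexample, reduce to a base case of subdivided $K_{3,t}$ or $K_5$ with small bridges, use the weak Hanani-Tutte theorem and the planar unified theorem along the way), but there is a genuine gap: you never isolate the \emph{$X$-configuration}, which is the crucial structural ingredient that makes the whole induction close. The paper does not prove Theorem~\ref{thm:WUHT} directly; it first observes (Lemma~\ref{lem:4obs}/Lemma~\ref{lem:Rotations}) that for a $3$-connected graph either every vertex can be made even by flips---in which case Theorem~\ref{thm:WHTS} already gives a compatible embedding---or the drawing contains a pair of essential cycles $(C,C_0)$ meeting in a single vertex whose incident ends cannot be untangled by flips. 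It then strengthens the inductive statement to ``HT-XWC'': a weakly compatible embedding exists \emph{whenever the drawing contains an $X$-configuration.'' Theorem~\ref{thm:WUHT} follows from these two facts, and all the reduction lemmas in Sections~\ref{sec:MC} and~\ref{sec:HTT} are set up specifically to preserve an $X$-configuration through the reduction. Without singling out this object, your sketch of ``exploit homology and $3$-connectivity to force a highly structured essential subgraph'' is not actionable; the $X$-configuration is precisely what ties $3$-connectivity, the forced odd vertex, and essentiality together.

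Two of your concrete reduction steps are also problematic. First, the proposed order (edges, then number of odd pairs) does not support the actual reductions: the key move (Lemma~\ref{lem:reduction333}) replaces the interior of a non-essential cycle nearly disjoint from $C$ by a star, which keeps the edge count the same while increasing the vertex count, so the paper orders by (isolated vertices, $|E|$ increasing, $|V|$ decreasing), later refined by bridge count and $|E(C)|+|E(C_0)|$; ``number of odd pairs'' is not monotone under those moves (and flips alone cannot always decrease it, as the discussion before Lemma~\ref{lem:4obs} makes clear). Second, ``contract an even edge, induct, then vertex-split back'' is not safe as stated: contraction can destroy $3$-connectivity, can create multi-edges, and after induction the split must recover the exact rotations at both ends of the edge, which a generic weakly compatible embedding of the contracted graph does not guarantee. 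Finally, applying the weak Hanani-Tutte theorem ``on each evenly connected component'' is not meaningful---Theorem~\ref{thm:WHTS} needs \emph{every} pair of edges, adjacent or not, to cross evenly, and in any case embeddings of components cannot be assembled independently on the torus. The base-case lemmas the paper actually uses (Lemma~\ref{lem:K3nspokesHTWC} for $K_{3,n}$ with bracers and Lemma~\ref{lem:K5HTWC} for $K_5$) require their own surgical constructions, not an appeal to Theorem~\ref{thm:WHTS}.
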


Theorem~\ref{thm:WUHT} follows by Lemma~\ref{lem:Rotations} and the strengthened version of Theorem~\ref{thm:HTtorus} stated at the beginning of Section~\ref{sec:mainproof}, where we prove our main result.

\subsection{General Redrawing Tools}\label{sec:GRT}

In this section, we collect some redrawing tools which (mostly) work on all surfaces, and may be useful for establishing Hanani-Tutte type results for surfaces other than the torus. We focus on orientable surfaces.

Our first result shows that an even tree (all its edges are even) can always be cleaned of crossings. The result is true for all surfaces, and remains true for forests, but we will not need these stronger versions.

\begin{lemma}\label{lem:treeclear}
    If $G$ has an \iocro-drawing $D$ containing an even tree $T$, then there exists an \iocro-drawing of $G$ that is compatible with $D$ in
    which $T$ is free of crossings. Only edges incident to $T$ are redrawn.
\end{lemma}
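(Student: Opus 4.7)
The plan is to iteratively apply the partial contraction operation (already defined in the paper) to every edge of $T$, one at a time. Recall that if $e=uv$ is even, partially contracting $e$---sliding $v$ along $e$ close to $u$ until $e$ is crossing-free, while dragging the other edges at $v$ along the former path of $e$---preserves the crossing parity of every pair of edges, because each dragged edge at $v$ picks up exactly as many new crossings with any other edge as $e$ had with that edge, which is an even number. It also preserves the rotation at every vertex: the cyclic order at $v$ is unchanged since only the endpoints at $v$ are slid along $e$, and no other vertex is moved.

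I would root $T$ at an arbitrary vertex $r$ and process its edges in order of increasing distance from $r$, writing each as $e_i=u_iv_i$ with $u_i$ the parent of $v_i$. At step $i$ I partially contract $e_i$. Because the previous steps preserved all crossing parities, $e_i$ is still even at this point (it was even in $D$), so the contraction is valid and leaves $e_i$ crossing-free. For any later step $j>i$, the vertex slid is $v_j$; since $e_j$ is no closer to $r$ than $e_i$ in the root-outward ordering, we have $v_j\neq v_i$ (each non-root vertex is the child endpoint of a unique tree edge) and $v_j\neq u_i$ (otherwise $e_j$ would be the edge above $u_i$, strictly closer to $r$ than $e_i$). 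Thus $v_j$ is not an endpoint of $e_i$, so $e_i$ is not among the dragged edges at step $j$ and stays crossing-free. After all $k$ steps, $T$ is crossing-free in the resulting drawing $D'$.

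It remains to verify the three conclusions of the lemma. The drawing $D'$ is \iocro{} because the crossing parities were preserved from the \iocro{} drawing $D$; it is compatible with $D$ because the rotation at every vertex has been preserved, so in particular every vertex that is even in $D$ is even in $D'$ with the same rotation; and only edges incident to some $v_i\in V(T)$ are ever redrawn, so only edges incident to $T$ are modified overall. Any self-crossings introduced by the dragging can be eliminated in the standard way described earlier in the paper, without affecting any of the above properties. The main obstacle is purely organizational---choosing the root-outward ordering so that a partial contraction never reintroduces crossings into a previously cleared tree edge. The partial-contraction fact itself is already packaged in the paper, so the lemma reduces to iterating it carefully.
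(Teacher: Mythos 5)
Your proof is correct and matches the paper's argument essentially verbatim: root the tree, process edges in an ancestor-first order, and partially contract each edge by sliding the child toward the parent, observing that a previously cleared edge is never incident to the slid vertex at a later step and so is never dragged again. Your phrasing of why the cleared edge cannot be re-dragged (distinguishing $v_j\neq v_i$ and $v_j\neq u_i$) is, if anything, a touch more explicit than the paper's one-line justification.
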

\begin{proof}
 Fix a root of the tree and orient all edges towards the root. Process the edges in a depth-first traversal (any order in which ancestor edges are processed first is fine) as follows: partially contract each edge $uv$ by moving the child $v$ along $uv$ towards $u$ until $uv$ is free of crossings. This does not change the crossing parity between any pair of edges, and an edge, once contracted, remains crossing-free, since it cannot be incident to an edge contracted later on. This clears the tree of crossings.
\end{proof}

The next lemma is one of our main redrawing tools. It shows that we can always clear an essential cycle of crossings, in any surface. A precursor of this lemma, for the projective plane can be found in~\cite{PSS09}.

\begin{lemma}
\label{lem:G-C}
Let $D$ be an \iocro-drawing of a graph $G$ on a surface $S$.
Suppose that $C$ is an essential cycle in  $D$.
Then there exists an \iocro-drawing $D'$ of $G$ compatible with $D$ on $S$ in which $C$ is crossing free, and every cycle is essential in $D'$ if and only if it is essential in $D$.
\end{lemma}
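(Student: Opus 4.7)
The plan is to clean $C$ of crossings by a two-phase redrawing: first adjust crossing parities so that every edge of $G \setminus E(C)$ crosses $C$ an even number of times, then reroute those edges inside a tubular neighborhood of $C$ to eliminate the remaining crossings in pairs. Since $S$ is orientable and $C$ is essential (hence non-separating), $C$ is two-sided and admits an annular neighborhood $A \cong C \times [-1,1]$ with boundaries $\partial^+ A$ and $\partial^- A$. Almost all of the redrawing will be localized to $A$.

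I would begin with a parity analysis. If $f \in E(G) \setminus E(C)$ shares no vertex with $V(C)$, then $f$ is independent of every edge of $C$, so by \iocro, $f$ crosses each edge of $C$ evenly, and hence crosses $C$ an even number of times in total. If $f$ is incident to some $v \in V(C)$, then $f$ is independent of every $C$-edge not at $v$, so the parity of $|f \cap C|$ equals the sum of crossing parities of $f$ with the two $C$-edges at $v$. I would then apply edge-vertex moves (and, where needed, edge-flips) at vertices of $C$ to make every such parity even. The key observation keeping this compatible with $D$ is that at an even vertex $v \in V(C)$, all pairs of edges at $v$ cross evenly by definition, so every edge at $v$ already has even $C$-parity; consequently no moves are needed at even vertices, preserving their rotation and evenness.

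Once every edge of $G \setminus E(C)$ crosses $C$ evenly, I would use the annular neighborhood $A$ to reroute. For each edge $f$, the intersection $f \cap A$ is a disjoint union of arcs, and the ``cross arcs'' (those running from $\partial^+ A$ to $\partial^- A$) occur in pairs along $f$ by the evenness condition. I would pair them up and, for each pair, redraw the portion of $f$ between the two crossings to stay on a single side of $C$ inside $A$, converting both cross arcs into same-side arcs and thereby removing two crossings of $f$ with $C$ at once. After processing every edge, the cycle $C$ is crossing-free.

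Finally, I would verify the three required properties. Compatibility with $D$ holds because no moves were performed at even vertices and the rerouting inside $A$ does not change vertex rotations; the \iocro property is preserved because edge-vertex moves only change crossing parities of adjacent (non-independent) edge pairs and the rerouting inside $A$ is a generic homotopy that does not affect crossings modulo $2$ with edges whose curves lie outside $A$; essentiality is preserved because the redrawing is supported in a small neighborhood of $C$ and hence induces the identity on $\mathbb{Z}_2$-homology. The main obstacle I anticipate is the parity-adjustment step: an edge-vertex move at an odd vertex $v \in V(C)$ using one of the two incident $C$-edges toggles the $C$-parities of all other edges at $v$ in unison, which is insufficient when some edges at $v$ are already even and others odd. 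Handling this ``mixed'' case will likely require either supplementing with edge-flips at odd vertices (each of which can toggle a single pairwise parity at the cost of swapping consecutive edges in the rotation, which is harmless at an odd vertex) or a global combinatorial accounting that uses the essentialness of $C$ to rule out unfixable obstruction patterns.
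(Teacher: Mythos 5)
The rerouting step is the crux, and it fails as stated. You propose to pair up the crossings of an edge $f$ with $C$ along $f$ and ``redraw the portion of $f$ between the two crossings to stay on a single side of $C$ inside $A$.'' But the subarc of $f$ between two consecutive crossings is, in general, not contained in the annulus $A$: between the crossings, $f$ can leave $A$ and wander arbitrarily through $S$. Replacing that subarc by an arc inside $A$ is therefore not a local operation, not a homotopy, and can change the homology class of any cycle through $f$; your closing claims that the redrawing preserves \iocro\ crossing parities and the set of essential cycles do not follow from anything you have argued, and both can in fact break. The paper's proof instead performs a genuine one-dimensional Whitney trick: cut $f$ at its crossings with $C$, reconnect consecutive severed ends along $C$ by pairs of parallel arcs on both sides (these cancel in homology and preserve crossing parities mod~$2$), and then---this is the key point your proposal never uses---invoke the fact that $C$ is essential, hence non-separating, to reconnect the resulting closed-curve components of $f$ to the surviving arc by further pairs of parallel curves through $S\setminus C$. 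Non-separation of $C$ is exactly what makes that relinking possible.

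There are two further gaps. You take for granted that $C$ has a simple underlying closed curve, so that the annulus $A$ exists; but in an arbitrary \iocro-drawing $C$ may self-cross. The paper handles this by first clearing $E(C)\setminus\{e^*\}$ of crossings via partial contraction (Lemma~\ref{lem:treeclear}), concentrating all $C$-crossings on a single even edge $e^*$ whose underlying curve is then simple. And you yourself flag the parity adjustment at odd vertices of $C$ as unresolved, correctly: an $(e,v)$-move for $e\in E(C)$ toggles the $C$-parity of every edge at $v$ ``in unison,'' and you do not supply a fix. The paper sidesteps this by aiming for the stronger target that every edge of $C$ becomes even, which it achieves purely by edge-flips at odd vertices of $C$ (each flip toggles a single pairwise parity and changes rotations only where that is permitted); this finer control is what your parity-adjustment phase lacks.
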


\begin{proof}
We start by modifying the drawing so that every edge of $C$ is even.
This can be achieved by flipping edges incident to odd vertices of $C$: For any odd vertex on $C$
first ensure that the two $C$-edges incident to the vertex cross evenly, and then move the remaining edges at
the vertex so they cross both $C$-edges evenly; this can be done using edge-flips; we do not need to change the rotation
at even vertices.

Let $e^*$ be an edge of $E(C)$. We apply Lemma~\ref{lem:treeclear} to $E(C)-e^*$
to obtain an \iocro-drawing of $G$ in which $E(C)-e^*$ is free of crossings. Since the proof
of Lemma~\ref{lem:treeclear} is based on partial contractions, $C$ remains essential. Moreover, the underlying curve of $C$ is simple (contains no self-crossings).

We now use a $1$-dimensional variant of the Whitney trick from~\cite[Lemma 2]{FPSS12B} to clear $e^*$ of crossings (without introducing crossings along $E(C)-e^*$). Suppose some edge $f$
crosses $e^*$; by assumption, it must do so an even number of times. We cut every such $f$ at every crossing with $e^*$ and reconnect every pair of consecutive severed ends of $f$ along $e^*$ on both sides of $e^*$ and staying close to $e^*$. This operation separates $f$ into a set of curves consisting of finitely many closed curves and one curve connecting the endpoints of $f$. Since $C$ is essential (and free of self-crossings), it does not separate the surface. We can therefore reconnect the closed curves belonging to $f$ by pairs of parallel curves which do not intersect $C$, thereby turning $f$ into a closed curve joining its end vertices, making the drawing proper again. Repeating this for every $f$ crossing $e^*$, we obtain a drawing $D'$ of $G$ which is compatible with the original drawing and in $C$ is free of crossings.
\end{proof}

\begin{corollary}
\label{cor:esseven}
If $G$ has an \iocro-drawing on the torus containing an essential cycle $C$ consisting of even vertices only, then $G$ has an embedding on the torus
that is compatible with the \iocro-drawing. $C$ remains essential in the new drawing.
\end{corollary}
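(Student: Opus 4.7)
The plan is to reduce to the planar case by cutting the torus along $C$ and invoking the unified Hanani-Tutte theorem on the sphere. First I apply Lemma~\ref{lem:G-C} to the essential cycle $C$: since compatibility preserves even vertices, the resulting \iocro-drawing $D'$ (compatible with $D$) still has $C$ consisting entirely of even vertices, and now $C$ is crossing-free. Because $C$ is essential on the torus it is non-separating, so cutting along $C$ opens the torus into a cylinder whose two boundary circles $C_1, C_2$ are copies of $C$; each $v \in V(C)$ splits into $v_1 \in V(C_1)$ and $v_2 \in V(C_2)$, with the non-$C$-edges at $v$ distributed between $v_1$ and $v_2$ according to which of the two arcs of the rotation at $v$ (delimited by the two $C$-edges) they lie in. Capping the two boundary circles with disks yields a drawing $D''$ of a graph $G''$ on the sphere $S^2$.

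Next I verify that $D''$ is \iocro. The only potentially new pairs of independent edges are pairs of non-$C$-edges in $G$ that shared a vertex $v \in V(C)$ but were placed at $v_1$ and $v_2$ respectively; these cross evenly in $D'$ (hence in $D''$) because $v$ is even. Every vertex of $C_1 \cup C_2$ is even in $D''$, with rotation equal to the appropriate arc of the rotation at the corresponding vertex of $C$ in $D'$. By Theorem~\ref{thm:HTS}, $G''$ has an embedding $E$ on $S^2$ compatible with $D''$. Because the vertices of $C_1 \cup C_2$ are even in $D''$, compatibility preserves their rotations in $E$; in particular, the two $C_i$-edges remain consecutive in the rotation at each $v_i$, in the same cyclic order as in $D''$. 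A face-trace along $C_i$ in $E$ (at each $v_i$, following the rotation from the incoming $C_i$-edge to the next edge) therefore cycles around $C_i$ and back without ever visiting a non-$C_i$-edge, so $C_i$ bounds a face of $E$.

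Finally I remove the two face-disks bounded by $C_1$ and $C_2$ from $E$ and identify $C_1$ with $C_2$ via $v_1 \leftrightarrow v_2$ and the induced bijection on edges. The resulting surface is a torus (an annulus with its two boundary circles identified), and the induced embedding of $G$ has rotation at each $v \in V(C)$ equal to the concatenation of the rotations at $v_1$ and $v_2$ in $E$, which reproduces the rotation at $v$ in $D'$ and hence in $D$; vertices outside $V(C)$ are unchanged from $E$. Thus the torus embedding is compatible with $D$, and $C$ is essential in it because cutting along $C$ recovers the annulus, so $C$ is non-separating. The crux of the argument is the face-trace step: without the preservation of rotations (not merely evenness) at vertices of $C_1 \cup C_2$, one could not guarantee that $C_i$ bounds a face in $E$, and the re-gluing step would break down.
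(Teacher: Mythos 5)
Your proof is correct and follows essentially the same route as the paper's: clear $C$ of crossings via Lemma~\ref{lem:G-C}, cut the torus along $C$ to obtain a spherical drawing, apply the unified Hanani-Tutte theorem (Theorem~\ref{lem:G-C}\hspace{-1pt}'s companion, Theorem~\ref{thm:HTS}), and reglue. Your face-trace argument usefully spells out a step the paper leaves implicit---namely why compatibility at the (even) vertices of $C_1,C_2$ forces those cycles to remain face boundaries after applying Theorem~\ref{thm:HTS}---so the write-up is a bit more explicit than the paper's but not a different proof.
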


In Lemma~\ref{lem:HT1S} we will show that the result remains true even if at most one vertex of $C$ is odd.

\begin{proof}
 Use Lemma~\ref{lem:G-C} to free $C$ of all crossings; the embedding is compatible, $C$ remains essential, and all vertices of $C$ remain even.
 We can then cut the torus along $C$, creating two duplicate copies of $C$ in a sphere with two holes; let the new graph be $G^*$. By filling in the holes, we obtain an \iocro-drawing of $G^*$ on the sphere, in which both copies of $C$ are free of crossings (and each bounds an empty face). Apply the unified Hanani-Tutte theorem, to obtain a compatible embedding of $G^*$; we can then reidentify the two copies of $C$ by adding a handle to the sphere, obtaining a compatible embedding of $G$ on the torus.
\end{proof}

The following lemma shows, roughly speaking, that for an \iocro-drawing
it is not a single vertex that makes the difference between planarity and non-planarity.
(It appeared, with a slightly different proof, for the projective plane in~\cite{PSS09}.)

\begin{lemma}
\label{lem:G-x}
Let $x\in{V(G)}$ and $H=G-{x}$. Suppose there is an \iocro-drawing of $G$ in a surface $S$ such that $H$ is non-essential. Then $G$ is planar.
If $S$ is orientable, then the planar embedding of $H$ induced by $G$ is compatible with the original \iocro-drawing of $H$ inherited from the \iocro-drawing of $G$.
\end{lemma}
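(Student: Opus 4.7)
The plan is to reduce the drawing of $G$ from $S$ to the sphere while preserving both the \iocro{}-condition and the drawing of $H$, and then invoke the Unified Hanani-Tutte theorem (Theorem~\ref{thm:HTS}). Since $H$ is non-essential, the inclusion-induced map $H_1(H;\ZN_2)\to H_1(S;\ZN_2)$ is zero. A standard long-exact-sequence argument applied to the pair $(S,H)$ therefore produces an essential simple closed curve $\gamma$ in $S$ that is disjoint from $H$. By a small homotopy inside $S\setminus H$ I arrange that $\gamma$ passes through $x$, and I further arrange that its two local arms at $x$ lie inside a single gap of the rotation at $x$ (between two consecutive edges incident to $x$), so that $\gamma$ locally separates an empty sector from everything else. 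Any local rerouting of edges $e_i$ needed to achieve this only creates crossings between pairs of edges incident to $x$, which are adjacent and therefore irrelevant for \iocro{}.

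Next I cut $S$ along $\gamma$ and cap the resulting boundary circle(s) with disks, decreasing the orientable or non-orientable genus of $S$ by one. Because both arms of $\gamma$ left $x$ in a single empty sector, all edges $e_i$ stay on one side of the cut, so $x$ can be treated as a single vertex on the capped surface without any further rerouting. Since $\gamma\cap H=\emptyset$, the drawing of $H$ is preserved verbatim, and \iocro{} is preserved because no crossings between independent pairs of edges have been altered. Iterating this cut-and-cap procedure reduces $S$ to the sphere; the result is an \iocro-drawing of $G$ on the sphere whose restriction to $H$ equals the original.

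Unified Hanani-Tutte (Theorem~\ref{thm:HTS}) now yields a planar embedding of $G$ compatible with the sphere drawing, so in particular $G$ is planar. If $S$ is orientable, compatibility at the vertices of $H$ transfers from the sphere drawing back to the original drawing of $H$, because the set of even vertices of $H$ and their rotations were preserved throughout the reduction. The main obstacle is producing $\gamma$ with both properties simultaneously: disjoint from $H$ and placed locally at $x$ so that both arms lie inside a single rotation-gap. This rests on the freedom to slide $\gamma$ within $S\setminus H$, combined with the key observation that any local rerouting at $x$ has no effect on \iocro{} since every pair of edges at $x$ is adjacent.
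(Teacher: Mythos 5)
Your proof hinges on the claim that, because $H$ is non-essential, there must exist an essential simple closed curve $\gamma$ in $S$ disjoint from the drawing of $H$. This claim is false, and the gap cannot be patched within your framework.

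The hypothesis ``$H$ is non-essential'' is a statement about the \emph{abstract} graph $H$: every cycle of the graph $H$, when traced in the drawing, is a null-homologous closed curve. It does not constrain the \emph{image} $\phi(H)\subseteq S$ of the drawing as a topological subspace. In your long-exact-sequence argument you need the map $H_1(\phi(H);\ZN_2)\to H_1(S;\ZN_2)$ to vanish, but non-essentiality only gives vanishing of $H_1(H;\ZN_2)\to H_1(S;\ZN_2)$, which factors through $H_1(\phi(H);\ZN_2)$. When edges of $H$ cross themselves or each other (which is permitted in an \iocro-drawing, since self-crossings and adjacent crossings are unconstrained), the image $\phi(H)$ acquires $1$-cycles that do not come from cycles of the graph, and these can be essential. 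Concretely: let $H$ consist of a single edge $uv$ (so $H$ has no cycles at all, and is trivially non-essential) and draw $uv$ on the torus with enough self-crossings that the drawing is cellular---every face of the arrangement is an open disk. Then every essential simple closed curve must meet $\phi(H)$, and there is no $\gamma$ to cut along. One easily extends this $H$ to an \iocro-drawing of a $G=H+x$, so the hypotheses of the lemma can genuinely hold in this situation.

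The paper's proof circumvents exactly this issue by aiming for a weaker target than disjointness: it fixes a planarizing family $\mathcal{C}$ of curves up front, contracts a spanning forest of $H$ so that every forest edge crosses every curve of $\mathcal{C}$ an \emph{even} (not zero) number of times, and then observes that non-essentiality forces the remaining edges of $H$ to cross each curve of $\mathcal{C}$ evenly as well. Cutting along $\mathcal{C}$, capping, and reconnecting edges across the disks then preserves crossing parities for $H$ precisely because of this evenness. Even parity is achievable by contractions; disjointness is not, and that is the piece of the argument your proposal is missing.
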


\begin{proof}
We consider the surface $S$ as a sphere with handles and crosscaps. For each crosscap we can choose a $1$-sided closed curve cutting through the crosscap; for each handle, we can pick two $2$-sided closed curves (sharing a single point) so that cutting the surface along the two curves results in a single (square) boundary hole. For the given surface $S$, we can choose a set ${\cal C}$ of $1$-sided and $2$-sided essential curves so that cutting the surface along these curves in ${\cal C}$ results in a planar surface (with a single hole for each crosscap and handle).\footnote{For non-orientable surfaces Mohar~\cite{M09} calls these a ``planarizing system of disjoint curves''. If we deformed the curves of $\cal C$ so that they all shared a single point, we'd get a set of generators of the fundamental group of the surface.}

Given an edge $uv$, we may contract the edge by pulling $u$ toward $v$ until $uv$ no longer crosses any curve of $\cal C$.  For any edge $e$ that crosses $uv$, when $u$ reaches $e$ then $e$ will be deformed so that instead of $u$ crossing $e$, $e$ will get pulled along to just stay in front of $u$, see Figure~\ref{fig:contraction}.
This may cause $e$ to add new crossings with curves of $\cal C$, two crossings at a time, whenever $u$ crosses a curve of $\cal C$.  Thus the crossing parity between each edge of $G$ not incident to $u$ and each curve of $\cal C$ will be unchanged by this operation.

\begin{figure}[t]
\centering
\includegraphics[scale=0.7]{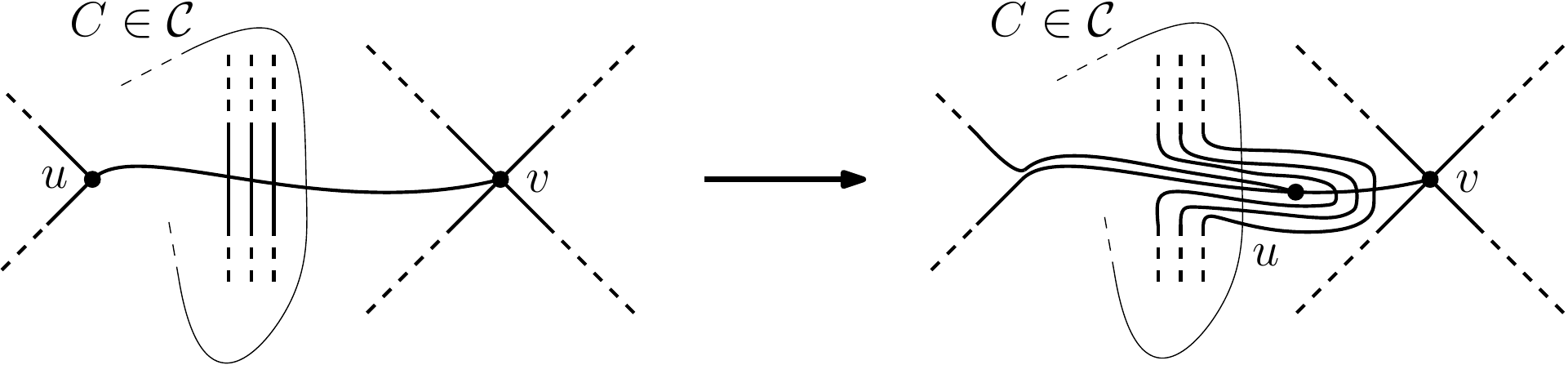}
 \caption{Contracting the edge $uv$ by pulling $u$ along $uv$ to $v$.}
\label{fig:contraction}
\end{figure}

Let $F$ be a rooted maximum spanning forest in $H := G-x$.
For each component of $F$, perform a breadth-first search transversal, contracting each edge as described towards the root of the component.  After an edge $e$ of $F$ is contracted it has zero crossings with every curve of $C$; later contractions may add crossings between $e$ and curves of ${\cal C}$ but without affecting the crossing parity. Thus, at the end of this process, every edge of $F$ crosses every curve of ${\cal C}$ evenly.

Consider any edge $e \in E(H)-E(F)$. Since we assumed that $H$ is non-essential (and that did not change during the redrawing we did), $e$ must cross each $C  \in {\cal C}$ evenly (if it did not, the (unique) cycle in $F \cup \{e\}$ is essential, a contradiction). In summary, every edge
of $H$ crosses each $C \in {\cal C}$ evenly.

Now cut the surface along the curves in ${\cal C}$, and fill in each boundary hole with a disk, resulting in a sphere. For each edge-crosscap intersection, reconnect the ends with a curve passing straight through the disk; for each edge broken at a handle,  reconnect it straight across the square-shaped disk. We remove any resulting self-crossings using the standard method~\cite[Section 3.1]{FPSS12B}. Since each edge of $H$ intersects each curve of ${\cal C}$ evenly, its crossing parity with other edges does not change.  In particular: if two edges did not cross oddly before the redrawing, but they do now, then both edges are incident to $x$, and even vertices, except possibly $x$, remain even.

We can then apply Theorem~\ref{thm:HTS} to obtain a plane drawing of $G$. If $S$ is orientable, then the subdrawing of $H$ is compatible to the original drawing of $H$ on $S$.
\end{proof}

The following is a useful consequence of Lemma~\ref{lem:G-x}.

\begin{corollary}
\label{cor:nonessH}
If $H$ has a non-essential \iocro-drawing in a surface $S$, then $H$ is planar, and, if $S$ is orientable, then the embedding of $H$ is compatible with the original drawing.
\end{corollary}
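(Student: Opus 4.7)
The plan is to derive Corollary~\ref{cor:nonessH} directly from Lemma~\ref{lem:G-x} by the standard trick of promoting $H$ to a graph $G$ with an extra vertex so that the hypothesis of Lemma~\ref{lem:G-x} becomes available.

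Concretely, given the non-essential \iocro-drawing $D$ of $H$ in $S$, I would form $G$ by adding a single isolated vertex $x$ to $H$, and extend $D$ to a drawing $D'$ of $G$ by placing $x$ at any point of $S$ that does not lie on $D$. Since no new edges are introduced, $D'$ is still an \iocro-drawing, and the subgraph $G - x = H$ is non-essential in $D'$. Lemma~\ref{lem:G-x} then applies and yields that $G$ is planar; hence $H$ is planar. When $S$ is orientable, the lemma further guarantees that the embedding of $H = G - x$ induced by the planar embedding of $G$ is compatible with the \iocro-drawing of $H$ inherited from $D'$, which is exactly the original drawing $D$.

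There is really no obstacle here: the only small thing to check is that adding an isolated vertex preserves both the \iocro\ property and the non-essentiality of $H$, which is immediate since no new edges (and hence no new crossings or new cycles) are created. So the corollary is essentially a rephrasing of Lemma~\ref{lem:G-x} in the special case where the ``extra'' vertex is isolated.
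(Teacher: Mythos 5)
Your proof is correct and is essentially the same as the paper's: the paper's own proof is precisely "Apply Lemma~\ref{lem:G-x} to $G = H \cup \{x\}$, where $x$ is a new vertex, not in $V(H)$." You have simply spelled out the routine verifications that the paper leaves implicit.
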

\begin{proof}
 Apply Lemma~\ref{lem:G-x} to $G = H \cup \{x\}$, where $x$ is a new vertex, not in $V(H)$.
\end{proof}

\subsection{Vertex-Disjoint and Nearly Disjoint Cycles}\label{sec:VDC}

A pair of edge-disjoint cycles $C_1$ and $C_2$ \emph{touch} at a vertex $v\in V(C_1) \cap V(C_2)$ if in the rotation at $v$ the edges
of $C_1$ and $C_2$ do not interleave; otherwise, we say the cycles {\em cross} at $v$.

A pair of \emph{nearly disjoint} cycles is a pair of edge-disjoint cycles for which any shared vertices are even and touching.

\begin{lemma}
\label{lem:2disjoint}
If $G$ has an \iocro-drawing on the torus containing two nearly disjoint essential cycles,
then $G$ has an embedding on the torus that is compatible with the \iocro-drawing, and the two nearly disjoint cycles remain essential.  
\end{lemma}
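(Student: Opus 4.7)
My plan is to reduce Lemma~\ref{lem:2disjoint} to the Unified Hanani-Tutte theorem (Theorem~\ref{thm:HTS}) by first cleaning $C_1$ and $C_2$ simultaneously of crossings in a compatible \iocro-drawing, then applying a cut-and-cap argument analogous to the proof of Corollary~\ref{cor:esseven}. As a preliminary homological remark, every edge of $C_1$ crosses every edge of $C_2$ evenly (by \iocro\ for independent pairs and by the evenness of shared vertices for adjacent pairs); hence by property $(iii)$ of the homology map $H$ the $\ZN_2$-homological intersection of $C_1$ and $C_2$ vanishes, and since both cycles are essential this forces $H(C_1) = H(C_2) \neq (0,0)$, so the two cycles lie in a common nonzero homology class.

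First I would perform a vertex-split at each shared vertex $v$. The touching condition means the $C_1$- and $C_2$-edges at $v$ do not interleave, so $v$ can be split into $v_1$ (holding the $C_1$-edges) and $v_2$ (holding the $C_2$-edges), joined by a crossing-free ``short'' edge. The resulting graph $G'$ carries a compatible \iocro-drawing in which $C_1$ and $C_2$ are vertex-disjoint essential cycles; since $v$ is even and the split inherits its rotation, $v_1$ and $v_2$ are even in this drawing. Following the edge-flip preparation in the proof of Lemma~\ref{lem:G-C}, performed only at odd (necessarily non-shared) vertices of $C_1 \cup C_2$, I would then make every edge of $C_1 \cup C_2$ cross every other edge evenly. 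Compatibility is preserved because flips occur only at odd vertices.

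Next I would clean $C_1 \cup C_2$ jointly. Choose $e_1 \in E(C_1)$ and $e_2 \in E(C_2)$ and let $T = (E(C_1) \setminus \{e_1\}) \cup (E(C_2) \setminus \{e_2\})$, a spanning forest of $C_1 \cup C_2$ whose edges are all even. Apply Lemma~\ref{lem:treeclear} to $T$ to make $T$ crossing-free without changing any crossing parities. Then apply the Whitney trick from the proof of Lemma~\ref{lem:G-C} first to $e_1$ (using that $C_1$ is essential) and then to $e_2$ (using that $C_2$ is essential). The Whitney trick reroutes only edges that cross its target; $T$-edges and (for the second trick) $e_1$ are crossing-free and thus untouched. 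Moreover, since $T$ is crossing-free, the pieces of any rerouted edge $f$ lie in a single connected component of the complement of $T$, so the rerouting parallels can be chosen to stay in that component and avoid introducing any new crossings on $T$ or on $e_1$. After both tricks, $C_1 \cup C_2$ is entirely crossing-free.

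With $C_1 \cup C_2$ crossing-free and the two cycles simple, disjoint, essential, and homologous, cutting the torus along $C_1 \cup C_2$ produces two annular regions. Capping each of the four resulting boundary circles with an empty disk yields two spheres carrying \iocro-drawings of subgraphs of $G'$; applying Theorem~\ref{thm:HTS} to each gives compatible planar embeddings, and regluing along $C_1$ and $C_2$ reconstructs a torus embedding of $G'$ (here it is crucial that $v_1, v_2$ remain even throughout, so that their rotations are preserved and the regluing is consistent). Contracting the short edges introduced by the vertex-splits recovers an embedding of $G$ compatible with the original \iocro-drawing, and both $C_1$ and $C_2$ trace the handle scars of the regluing, hence remain essential. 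The main obstacle I expect is the joint cleaning step: verifying that the Whitney trick's rerouting parallels can genuinely be confined to a single component of the complement of $T$ (so as to preserve crossing-freeness of both cycles simultaneously) requires a careful topological argument that uses the essentialness of the target cycle together with the disjointness of $C_1$ and $C_2$ secured by the preliminary vertex-splits.
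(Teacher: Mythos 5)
Your plan is the right one, and most of the ingredients (vertex-splits to reach the vertex-disjoint case, the homology remark that $H(C_1)=H(C_2)\neq(0,0)$, the cut-and-cap reduction to Theorem~\ref{thm:HTS}) match the paper. However, there is a genuine gap at precisely the step you flag at the end, and the ``careful topological argument'' you hope will close it does not exist. Here is why: after the first Whitney trick, $C_1$ is a crossing-free simple essential curve, so its complement is an (open) cylinder. Since $C_1$ and $C_2$ are disjoint and homologous, $C_2$ lies in this cylinder's interior and is isotopic to its core circle, so $C_2$ \emph{separates} the cylinder into two faces, i.e.\ $C_1\cup C_2$ separates the torus. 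When you then cut an edge $f$ at its crossings with $e_2$ and reconnect along $e_2$, the resulting closed components of $f$ are distributed over both cylinder-faces, and any closed component on the opposite face from $f$'s main arc cannot be reconnected to it without crossing $C_1$ or $C_2$. Your observation that the complement of the forest $T$ is connected is true but irrelevant here; once $C_1$ has been fully cleaned it is the complement of $C_1\cup(C_2\setminus e_2)$ plus the constraint of avoiding $e_2$ that governs what the parallel arcs may do, and that region \emph{is} disconnected across $C_2$.

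The missing idea, which the paper supplies, is that those stranded closed components should simply be \emph{discarded} rather than reconnected, together with a parity argument showing this is harmless: if after discarding two edges $f,f'$ were to cross oddly, they would both lie in the same cylinder-face, but then all their discarded components lie in the \emph{other} cylinder-face, and any two closed curves in a planar region (the other face is an annulus) cross each other evenly, so removing them cannot have changed the crossing parity of $f$ and $f'$. Hence $f$ and $f'$ already crossed oddly before discarding, contradicting \iocro. Without this discard-and-parity argument, the joint Whitney-trick step cannot be completed, so the proposal as written does not go through; with it added, the rest of your cut-and-cap argument (two capped annuli, two applications of Theorem~\ref{thm:HTS}, recontract the split edges) matches the paper's proof essentially verbatim.
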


This lemma implies that we can assume that a counterexample to the Hanani-Tutte theorem on the torus does not contain two vertex-disjoint essential cycles; we extend this result in Section~\ref{sec:MC}.

\begin{proof}
Let $C_1$ and $C_2$ be two nearly disjoint essential cycles in an \iocro-drawing of $G$.

It is sufficient to prove the result for two vertex-disjoint essential cycles: If $C_1$ and $C_2$ touch at an
even vertex $v$, we can split $v$ so that $C_1$ and $C_2$ are vertex-disjoint. We can then apply the result for vertex-disjoint essential cycles,
and then merge the split vertices by contracting edges added in the splitting. Since each even vertex $v$ splits into two even vertices, whose rotations will be preserved, the final contraction recovers the original rotation of $v$, and we have compatibility for even vertices.

We can therefore assume that $C_1$ and $C_2$ are vertex-disjoint. Apply Lemma~\ref{lem:G-C} to make $C_1$ crossing-free.
Next, we redraw $C_2$ proceeding more or less as in the proof of Lemma~\ref{lem:G-C}
until the last step, which is different. Let $e^*$ be an edge on $C_2$. Following the proof of Lemma~\ref{lem:G-C} we can ensure that $E(C_2)-e^*$ is free of crossings, so we have
a drawing in which $C_1$ and $C_2$ are disjoint {\it simple} closed curves.
Every edge of $C_1\cup C_2$ is crossing-free except possibly $e^*$, which is even.
Here is where the proof differs from the proof of Lemma~\ref{lem:G-C}:

$C_1\cup C_2$ separates the surface into two {\it faces}, each homeomorphic to a cylinder.
For each edge $f$ that crosses $C_2$ (at $e^*$), cut $f$ at each crossing. Since
$f$ crosses $e^*$ an even number of times, we can reconnect the broken ends of $f$ in consecutive pairs by drawing curves alongside $e^*$ (and without crossing it).  

This does not change the crossing parity of any pair of edges, and $f$ no longer crosses $C_2$.
However, $f$ may consist of a number of components, exactly one of which is a simple curve between the endpoints of $f$, while the others are closed curves. Each of the closed curves (if there are any) lies completely in one of the two cylinder-faces.  The closed components in the same face as the simple curve can all be joined into a single curve, using pairs of ``parallel curves'', while the closed components in the other face will simply be removed.
We do this for all edges $f$ that cross $C_2$. Suppose this construction yielded two edges $f$ and $f'$ that cross oddly; then both $f$ and $f'$ would have to lie in the same cylinder-face. Then all their discarded components were closed curves in the other cylinder-face; since
any two closed curves in the plane cross evenly, removing those discards did not change the crossing parity, so $f$ and $f'$ must have crossed
oddly to start with. We have thus obtained a compatible \iocro-drawing of $G$ on the torus such that $C_1$ and $C_2$ are both crossing-free, and essential.

Let $G^*$ be a plane graph obtained by removing one of the two cylinder faces and replacing it with two disks $D_i$ each containing a vertex $v_i$ and
crossing-free edges to each vertex of $V(C_i)$, $i \in \{0,1\}$.
By Theorem~\ref{thm:HTS}, $G^*$ has a planar embedding in which the orientation of $C_1$ and $C_2$ does
not change. We can remove $v_1,v_2$ from the drawing and transfer the drawing of $G^*$ back to its face.  Doing this with the other face as well completes the proof.
\end{proof}

If the second cycle $C'$ is not essential, we cannot directly prove that $G$ has a (compatible) embedding, but the following lemma allows us to clear both cycles of crossings.

\begin{lemma}\label{lem:crossingFreeCycle}
If $G$ has an \iocro-drawing on the torus containing an essential cycle $C$ and a cycle $C'$ that is nearly disjoint from $C$, then $G$ has a compatible \iocro-drawing in which $C$ and $C'$ are crossing-free.
\end{lemma}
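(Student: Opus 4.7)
My plan is to follow the template of the proof of Lemma~\ref{lem:2disjoint}, adapted to allow $C'$ to be non-essential. First, by splitting each even touching vertex of $C$ and $C'$ along a short crossing-free bridge (and later contracting the bridges back), I reduce to the case where $C$ and $C'$ are vertex-disjoint; this is compatible because the two copies of each split vertex are even and keep their rotations up to the later contraction. Then apply Lemma~\ref{lem:G-C} to make $C$ crossing-free; this preserves compatibility and keeps essential cycles essential.

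Next I set up to clear $C'$. Flip edges at each odd vertex of $C'$ so that every edge of $C'$ becomes even; since $V(C)\cap V(C')=\emptyset$, these flips do not touch $C$. Then pick some $e^*\in E(C')$ and apply Lemma~\ref{lem:treeclear} to the path $E(C')\setminus\{e^*\}$. Because the partial contractions in Lemma~\ref{lem:treeclear} only move vertices of $V(C')$ along edges of $C'$ (which do not cross $C$), the cleared status of $C$ is preserved, and we are reduced to clearing the single even edge $e^*$.

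To clear $e^*$, apply the Whitney trick from the final step of Lemma~\ref{lem:G-C}'s proof to each edge $f$ crossing $e^*$: cut $f$ at every $e^*$-crossing and reconnect consecutive severed ends alongside $e^*$ on both sides, producing one main curve from $f$'s endpoints plus some closed curves. If $C'$ is essential, Lemma~\ref{lem:G-C}'s argument applies directly: the closed curves can be tubed back into the main curve via pairs of parallel curves lying in $\text{torus}\setminus C'$, and we can additionally route these parallel-curve pairs to be disjoint from $C$ (since $C$ and $C'$ are disjoint essential cycles on the torus). If $C'$ is non-essential, then $C'$ bounds a disk $D$ disjoint from $C$, and the closed curves produced by the Whitney trick appear either in $D$ or in the pair-of-pants $P=\text{torus}\setminus(C\cup\overline{D})$. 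Closed curves in $D$ are null-homotopic and can be discarded without affecting any crossing parity, since any two closed curves in a disk cross evenly. Closed curves in $P$ are tubed back into the main curve by pairs of parallel curves lying in the planar cylinder $\text{torus}\setminus C$, arranged to cross $C'$ only through $e^*$; each such parallel pair crosses $e^*$ an even number of times, preserving $e^*$'s parity and every other crossing parity, and a final bigon-pushing step slides each parallel pair past an endpoint of $e^*$ (via a local edge-vertex move) to remove the extra crossings.

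The main obstacle is the non-essential sub-case when $f$'s endpoints lie inside $D$ so that the main curve sits in $D$ while the closed curves sit in $P$: here the tubing must cross $C'$, so parallel-pair crossings on $e^*$ are unavoidable and we must argue they can be cleanly eliminated without creating new crossings on $C$ or on the already-cleared edges $E(C')\setminus\{e^*\}$. This is made possible by the planarity of the cylinder $\text{torus}\setminus C$, where parallel-curve pairs preserve every crossing parity, together with the locality of the bigon-pushing moves near the endpoints of $e^*$; the outcome is a compatible \iocro-drawing in which both $C$ and $C'$ are crossing-free.
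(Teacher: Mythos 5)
Your proof follows the paper's overall template: split touching vertices, clear $C$ via Lemma~\ref{lem:G-C}, even-ify the $C'$ edges, clear all but one edge $e^*$ of $C'$, and finish with a Whitney trick on $e^*$. The divergence, and the gap, is in how you handle the closed components the Whitney trick produces when $C'$ is non-essential.

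Your rule is asymmetric: always discard closed components lying in the disk $D$, and always tube the ones lying in $P$ back to the main arc. Both halves of this rule are problematic. First, discarding a closed component of $f$ in $D$ is \emph{not} parity-neutral in general: if another edge $g$ also has its main arc in $D$, that arc can cross a null-homotopic closed curve an \emph{odd} number of times (the two endpoints of the arc can be separated by the closed curve). The ``any two closed curves in a disk cross evenly'' fact only controls closed-versus-closed, and does not justify discarding regardless of where the main arcs sit. Second, when $f$'s main arc is in $D$ and a closed component sits in $P$, any tube must cross $C'$ an odd number of times per strand, hence at least twice for the parallel pair, and bigon pushes can reduce but never eliminate this. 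Sliding the residual pair past an endpoint $v$ of $e^*$ does not remove those crossings from $C'$: it re-routes them onto the \emph{other} $C'$-edge incident to $v$ (the pair still passes from $D$ to $P$, so it still crosses $C'$ near $v$). This re-introduces crossings on the already-cleared part $E(C')\setminus\{e^*\}$, contradicting the claim that the move ``removes the extra crossings'' there.

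The paper's resolution is simpler and avoids both issues: for each $f$, keep (and tube) only the closed components on the \emph{same} side of $C'$ as $f$'s main arc, and discard all components on the opposite side — so no tube ever needs to cross $C'$. The discards are then justified by exactly the planarity observation you already have, applied symmetrically: if two edges $f,f'$ cross oddly afterwards, their kept parts must lie on a common side, so the discarded parts lie together on the common opposite side, and that side is either the disk $D$ or a subsurface of the annulus $\text{torus}\setminus C$ (a pair of pants), so the discarded closed curves cross each other evenly and the discard did not change the $f,f'$ parity. You should replace your asymmetric rule by this one; the rest of your outline then goes through.
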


\begin{proof}
If $C$ and $C'$ are both essential cycles, then the result follows from Lemma~\ref{lem:2disjoint}.
So we can assume that every cycle that is nearly disjoint from $C$ is non-essential, including $C'$.

As in the proof of Lemma~\ref{lem:2disjoint}, we can first split vertices where $C$ and $C'$ touch to make them vertex-disjoint, and after proving it for that case we can recontract and recover the original graph, with rotations of even vertices preserved.  That is to say, we can assume that $C$ and $C'$ are vertex-disjoint.

Lemma~\ref{lem:G-C} allows us to redraw so that $C$ is free of crossings (without changing the rotation of any even vertex). Next, make all edges of $C'$ even by flips at its vertices, then contract all but one edge of $C'$ so that it becomes a loop. Remove self-intersections of the loop, and uncontract. We now have an \iocro-drawing of $G$ in which all crossings with $C'$ occur along a single edge (and that edge is even). Cut edges crossing that edge of $C'$ and reconnect their ends pairwise (as usual). Edges which crossed $C'$ consists of multiple components now, one of them connecting the endpoints of the edge, and the other components closed curves. Reconnect the closed curves to the edge components if they are on the same side of $C'$; otherwise drop them. We claim that the crossing parity between pairs of independent edges remains zero: if there were two edges that now cross oddly, they must be on the same side of $C'$ to do so. But two closed components belonging to these edges can only cross evenly: if they lie inside $C'$, because $C'$ is non-essential, if they lie outside $C'$, because they lie in a cylinder (as they cannot cross $C$).
\end{proof}

\subsection{The $1$-Spindle}

Before we move on to the torus, we show how to apply our tools so far in a much simpler context than the torus (or the projective plane).

We show that there is a strong Hanani-Tutte theorem for the $1$-spindle. The {\em $1$-spindle} is a pseudosurface obtained from the sphere by identifying two distinct points (the {\em pinchpoint}). Embeddings are defined as usual, for drawings we allow at most one edge to pass through the pinchpoint. In an embedding, we can always assume that there is a vertex at the pinchpoint (if not, and there is an edge passing through the pinchpoint, pull the edge until an end-vertex lies in the pinchpoint, otherwise,
the pinchpoint isn't necessary, and the graph is planar, in which case we can make an arbitrary vertex a pinchpoint).

\begin{theorem}[Hanani-Tutte for $1$-Spindle]\label{thm:HT1S}
  If a graph has an \iocro-drawing on the $1$-spindle, it can be embedded on the $1$-spindle. The embedding is compatible with the  \iocro-drawing, except for (possibly) the vertex at the pinchpoint.
\end{theorem}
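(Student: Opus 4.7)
The plan is to reduce to the planar Hanani-Tutte theorem (Theorem~\ref{thm:HTS}) by cutting the $1$-spindle at the pinchpoint. First, without loss of generality a vertex $v$ lies at the pinchpoint: if an edge of $G$ passes through the pinchpoint in the drawing, subdivide it with a new vertex placed at the pinchpoint; if no edge does, the drawing lies in a cylinder embedded in the sphere, so Theorem~\ref{thm:HTS} already produces a planar (hence $1$-spindle) embedding compatible with the original drawing. The rotation at $v$ in the iocr-0 drawing splits into two contiguous arcs $\sigma_1,\sigma_2$, one per lobe of the pinchpoint. Let $E_i$ be the edges in $\sigma_i$, $N_i$ their endpoints other than $v$, $H = G-v$, and $G_i = H \cup E_i$ for $i \in \{1,2\}$.

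Cut the $1$-spindle at the pinchpoint to obtain a sphere with $v$ split into $v_1,v_2$. For each $i$, the resulting drawing of $G_i$ on the sphere (with $v$ located at $v_i$) is iocr-0: any pair of edges independent in $G_i$ is also independent in $G$, since two edges of $E_i$ are both incident to $v_i$ and hence adjacent in $G_i$, while all other possibly problematic pairs involve one edge of $H$ and an edge of $E_i$ whose $G_i$-independence matches $G$-independence. Applying Theorem~\ref{thm:HTS} to each $G_i$ produces a planar embedding $\mathcal{E}_i$ of $G_i$ compatible with its iocr-0 sphere drawing. In $\mathcal{E}_i$, the vertex $v_i$ sits in a face $F_i$ whose boundary contains $N_i$.

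To assemble a $1$-spindle embedding of $G$, it suffices to exhibit a single planar embedding $\mathcal{E}_H$ of $H$ into which both $(v_1,F_1)$ and $(v_2,F_2)$ can be reinserted in the respective faces, and then identify $v_1,v_2$ at the new pinchpoint. The embeddings $\mathcal{E}_1,\mathcal{E}_2$ each induce a planar embedding of $H$, both compatible with the iocr-0 drawing of $H$ (inherited from the $1$-spindle via the cylinder, e.g.\ via Corollary~\ref{cor:nonessH}). When $H$ is $3$-connected, Whitney's theorem forces these induced embeddings to coincide up to reflection, so one takes $\mathcal{E}_H$ to be either one and locates $F_1,F_2$ directly. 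The main obstacle, and the technical heart of the argument, is the case when $H$ has low connectivity: then the two induced embeddings may genuinely differ, and we must reconcile them by flipping subgraphs at $2$-cuts of $H$, using the compatibility constraints (rotations at even vertices of $H$ are preserved by both $\mathcal{E}_i$) to pin down a consistent global choice that simultaneously keeps $N_1$ and $N_2$ each on a single face boundary. The resulting embedding of $G$ on the $1$-spindle is then compatible with the original iocr-0 drawing at every vertex except possibly the pinchpoint $v$, as claimed.
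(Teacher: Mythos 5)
Your approach is genuinely different from the paper's. The paper converts the $1$-spindle drawing into a torus drawing by splicing in a handle at the pinchpoint and then adding a crossing-free essential $3$-cycle through $v$; it then invokes the torus machinery already built (Lemma~\ref{lem:G-x} if $G-v$ has no essential cycle, Lemma~\ref{lem:2disjoint} otherwise), and finally contracts the $3$-cycle back down to a point. Your route stays in the plane and tries to reduce directly to Theorem~\ref{thm:HTS}, which in principle is simpler and more self-contained.

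However, there is a genuine gap in the reconciliation step. You apply Theorem~\ref{thm:HTS} twice, to $G_1$ and $G_2$ separately, and then need a \emph{single} planar embedding of $H$ that simultaneously has $N_1$ on the boundary of one face and $N_2$ on the boundary of another. You observe that the two induced embeddings of $H$ are each compatible with the original drawing of $H$ and hope to reconcile them by ``flipping subgraphs at $2$-cuts,'' but this is exactly where the work is: two compatible embeddings of a non-$3$-connected $H$ can differ in ways that are not obviously interpolable while keeping both $N_1$ and $N_2$ on single face boundaries, and nothing in the argument as written rules that out. In effect, you have shown that there is a compatible embedding of $H$ with face $F_1$ and (possibly another) compatible embedding with face $F_2$, but not that there is one with both.

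This gap can be closed cleanly, and in a way that makes the two separate applications of Theorem~\ref{thm:HTS} unnecessary. Instead of $G_1$ and $G_2$, form a single graph $G' = H \cup \{v_1, v_2\} \cup E_1 \cup E_2$, with $v_1$ adjacent (via $E_1$) to $N_1$ and $v_2$ adjacent (via $E_2$) to $N_2$, and \emph{no} edge $v_1 v_2$. Cutting the $1$-spindle at the pinchpoint and capping the two holes yields an \iocro-drawing of $G'$ on the sphere (the verification is exactly your argument that pairs involving $E_1$ or $E_2$ have the same independence status in $G'$ as in $G$, plus the observation that edges at $v_1$ and edges at $v_2$ live in disjoint caps and hence never cross). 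A single application of Theorem~\ref{thm:HTS} to $G'$ now gives one planar embedding in which $v_1$ and $v_2$ occupy (possibly identical) faces of the induced embedding of $H$; identifying a point of each face produces the $1$-spindle, and placing $v$ there and routing $E_1,E_2$ into the respective lobes gives the embedding of $G$, compatible everywhere except at the pinchpoint vertex, exactly as required. With this modification your planar argument is complete and, I think, shorter than the paper's handle-and-torus detour; the paper's version has the expository advantage of exercising the torus lemmas that the rest of the paper depends on.
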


It would seem that this result should follow from a Hanani-Tutte theorem for the projective plane or the torus, but that does not seem immediately obvious.

\begin{proof}
    Fix an \iocro-drawing of $G$ on the $1$-spindle. We can assume that there is a vertex $v$ at the pinchpoint. If not, there must be an edge $e$ passing through the pinchpoint (otherwise, we have an \iocro-drawing on the sphere, and we are done by the Hanani--Tutte theorem in the plane). Consider the subcurve $\gamma$ of $e$    between the pinchpoint and a vertex $v$. For any edge $f$ which crosses $\gamma$ oddly, we perform an $(f,u)$-move for every vertex $u$ in $G$. This does not change the crossing parity of any pair of edges, but it does ensure that $\gamma$ is an even curve. We can then partially contract $e$ by pulling $v$ along $\gamma$ onto the pinchpoint. Since $\gamma$ is even, the drawing remains \iocro.

 \begin{figure}[t]
\centering
\includegraphics[scale=0.8]{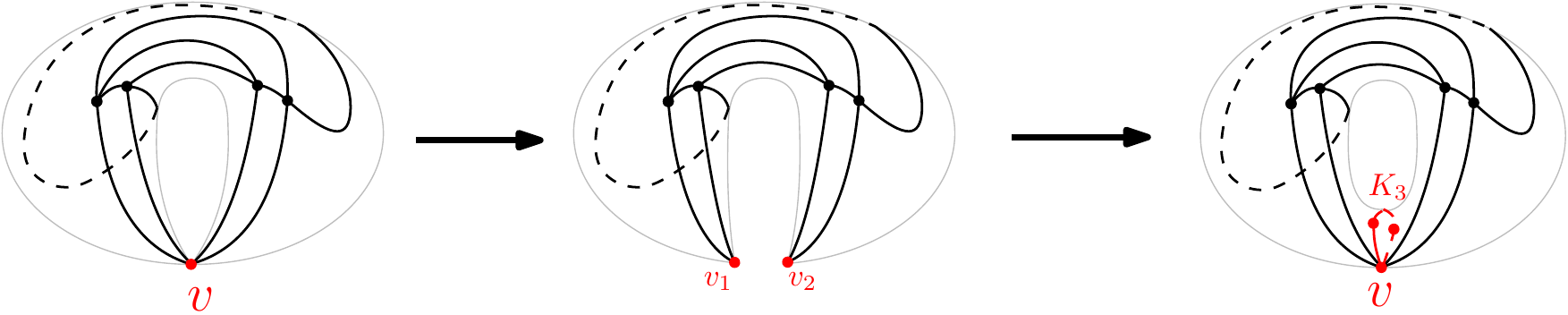}
 \caption{Surgery in the proof of Theorem~\ref{thm:HT1S}. Eliminating the pinched point splits the vertex $v$ into two vertices $v_1$ and $v_2$ that are subsequently merged using the handle. A $3$-cycle $K_3$ is introduced as a meridian of the handle passing through $v$.}
\label{fig:pinched}
\end{figure}

    Refer to Figure~\ref{fig:pinched}. Remove the pinchpoint (splitting $v$ into two copies), giving us a drawing on a sphere. Add a handle close to the two copies (close to where they were split), and move them back together, merging them into a single vertex.
    This gives us an \iocro-drawing of $G$ on the torus. Add an essential cycle, say a $3$-cycle $K_3$, through $v$ using the handle (and free of crossings), let $G^*$ denote the new graph. If the drawing of $G-v$ (as part of $G^*$) does not contain an essential cycle, then $G$ is planar, by Lemma~\ref{lem:G-x}, and we are done. So the drawing of $G-v$ contains an essential cycle, implying that the drawing of $G^*$ contains two disjoint essential cycles. By Lemma~\ref{lem:2disjoint} we can find a compatible embedding of $G^*$ in the torus. Note that in that embedding the essential cycle $K_3$ is still essential (and free of crossings), so we can contract it until it becomes a point and we have an embedding of $G$ on the $1$-spindle.
\end{proof}

The proof of Theorem~\ref{thm:HT1S} suggests the following result, which is useful by itself.

\begin{lemma}\label{lem:HT1S}
 If a graph has an \iocro-drawing on the torus containing an essential cycle $C$ such that exactly one vertex of $C$ is odd, then the graph has a compatible embedding on the torus.
\end{lemma}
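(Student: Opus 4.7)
The plan is to reduce to Corollary~\ref{cor:esseven} by splitting the lone odd vertex $v$ of $C$ into two vertices, one of which inherits only the two $C$-edges and is even, so that the essential cycle $C$ in the new graph has only even vertices.

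Since $v$ is odd in the given \iocro-drawing $D$, compatibility places no constraint on the rotation or crossing parities at $v$, so one may freely perform edge-flips there; moreover, an edge-flip at $v$ only toggles the crossing parity of the two flipped edges and leaves all other parities alone, so it does not disturb the parity or rotation of any other vertex. First, I would use flips at $v$ to make the two $C$-edges $e_1, e_2$ consecutive in the rotation at $v$ (by ``bubbling'' $e_1$ past its neighbors until it reaches $e_2$), and then, if necessary, one more flip between them to make $e_1$ and $e_2$ cross evenly while keeping them consecutive. Next, I would split $v$ into two vertices $v_1, v_2$ so that $v_1$ inherits only $e_1$ and $e_2$ and is joined to $v_2$ by a new crossing-free edge $v_1 v_2$; call the resulting graph $G'$. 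Then $v_1$ has exactly three incident edges, and each of the three pairs among them crosses evenly (for $e_1, e_2$ by construction, and for any pair involving $v_1 v_2$ because $v_1 v_2$ is crossing-free), so $v_1$ is even. The cycle $C$ viewed as a subgraph of $G'$ through $v_1$ is unchanged as a point set on the torus, so it remains essential, and every vertex of $C$ in $G'$ is now even: the vertices of $C$ other than $v$ were even in $D$ and were unaffected by the flips at $v$ or by the split, and $v_1$ is even by construction.

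Now I would apply Corollary~\ref{cor:esseven} to the \iocro-drawing of $G'$ with the essential even cycle $C$, obtaining an embedding of $G'$ on the torus compatible with that drawing; in particular, every vertex of $G$ distinct from $v$ that was even in $D$ retains its rotation. Finally, contract the edge $v_1 v_2$ in this embedding, merging $v_1$ and $v_2$ back into $v$ and recovering a drawing of $G$. Since $G$ is simple, $v_1 v_2$ is the unique edge between $v_1$ and $v_2$ (the other incident edges $e_1, e_2$ at $v_1$ go to distinct vertices, neither equal to $v_2$), so no multi-edges or loops are produced, and the result is a valid embedding of $G$ on the torus. The rotations of all originally-even vertices of $G$ are preserved, so the embedding is compatible with $D$ as required.

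The only mildly delicate step is the flip preparation of $v$'s rotation, but this is straightforward: since each flip toggles exactly one crossing parity between two consecutive edges at $v$, a sequence of flips can bubble $e_1$ next to $e_2$, and one more flip, if necessary, corrects the parity between them. Everything else is a direct application of Corollary~\ref{cor:esseven} together with the splitting and contraction operations from Section~2, combined with the observation that the rotation at the reconstituted vertex $v$ need not be controlled because $v$ is odd in~$D$.
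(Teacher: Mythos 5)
There is a genuine gap at the step where you apply Corollary~\ref{cor:esseven} to $G'$: you never verify that the drawing of $G'$ obtained after splitting $v$ is still \iocro. A vertex-split does not change the underlying curves, so crossing parities are preserved pointwise; but the split changes which pairs are \emph{independent}. After the split, $e_1$ and $e_2$ (now at $v_1$) become independent from every edge $f$ moved to $v_2$, and the pair $(e_i,f)$ was previously an \emph{adjacent} pair, on which the \iocro\ hypothesis imposes no constraint whatsoever. Your flips at $v$ only arrange that $e_1$ and $e_2$ cross \emph{each other} evenly; they do nothing to guarantee that $e_1,e_2$ cross the remaining edges at $v$ evenly. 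So $G'$ may contain independent odd pairs and Corollary~\ref{cor:esseven} need not apply.

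This failure is not a technicality that more flipping can fix: it is exactly the $X$-configuration obstruction from Section~\ref{sec:XC}. Take $v$ of degree~4 with rotation $e_1,f,e_2,g$, where $P(e_1,e_2)=1$ and all other adjacent pairs at $v$ cross evenly (this is a perfectly legal \iocro\ drawing --- e.g.\ $C$ and $C_0$ in an $X$-configuration, with $v$ the only odd vertex of $C$). Any flip sequence that made $e_1,e_2$ cross each other \emph{and} $f,g$ evenly, while leaving $e_1,e_2$ consecutive, would put $f,g$ consecutive with an arbitrary parity between them; one more flip would then make $v$ even, contradicting the fact (stated just before Lemma~\ref{lem:4obs}) that this four-edge configuration can never be made even by flips. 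Hence the required flip sequence does not exist, the split produces a non-\iocro\ drawing, and your argument stalls precisely in the case the lemma most needs to handle. The paper avoids this by a different surgery that never changes which pairs are independent at $v$: it cuts the torus along the open arc $C-\{v\}$ (not through $v$), doubling $V(C)-\{v\}$ but leaving $v$ untouched, and then dispatches the two cases (all essential cycles of the cut-open graph pass through $v$, or some essential cycle avoids $v$) via Lemma~\ref{lem:G-x} and Lemma~\ref{lem:2disjoint} respectively.
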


Recall that Corollary~\ref{cor:esseven} deals with the case that all vertices on $C$ are even.

\begin{proof}
 Fix an \iocro-drawing of $G$ on the torus. Use Lemma~\ref{lem:G-C} to clear $C$ of crossings. By assumption there is a $v \in V(C)$ so that all vertices in $V(C)-v$ are even. Let $\gamma$ be the open, simple curve from $v$ to $v$ along $C$ (excluding $v$). Cut the torus along $\gamma$, creating two copies of each edge in $E(C)$ and all vertices in $V(C)-v$. Let $G^*$ be the resulting graph. After filling in the hole, we have an \iocro-drawing of $G^*$ in the torus. If all essential cycles in the drawing of $G^*$ pass through $v$, then $G^*$ has a compatible planar embedding, by Lemma~\ref{lem:G-x}. From that, we can reconstruct a compatible embedding of $G$ in the torus as follows: both copies of the original $E(C)$ lie on the boundary of faces in the plane embedding of $G^*$ (only the rotation at $v$ can have changed, since it is odd), so by adding a single handle, we can merge the two copies of $V(C)-v$, and drop the edges $E(C)$ of one of those two copies. This gives us a compatible embedding of $G$ in the torus.

 Otherwise, there is an essential cycle in the \iocro-drawing of $G^*$ on the torus which avoids $v$. In this case, we can add a $3$-cycle $K_3$ through $v$ which lies between the two copies of $C$, and thus us essential. This $K_3$ and the essential cycle avoiding $v$ are vertex-disjoint, and we can apply Lemma~\ref{lem:2disjoint} to find a compatible embedding of $G^* \cup K_3$ in the torus. By removing the $K_3$ (except for $v$) from this embedding, and merging the two copies of $C$, we obtain a compatible embedding of $G$ in the torus, which is what we had to show.
\end{proof}

\section{Hanani-Tutte for Some Kuratowski Minors}\label{sec:KM}

Fulek and Kyn{\v c}l~\cite{FK18_kura} showed that the Hanani-Tutte theorem is true for any surface
if we restrict ourselves to the graphs known as Kuratowski minors, which include $K_{3,t}$ for any $t\ge 3$. 

\begin{lemma}[\cite{FK18_kura}]
 \label{lem:No7Stars}
For $t\ge 7$, $K_{3,t}$ does not admit an \iocro-drawing on the torus.
\end{lemma}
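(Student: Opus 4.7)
The plan is to reduce the statement to two classical facts. First, I would invoke the Fulek--Kyn{\v c}l theorem cited immediately above the lemma: the (strong) Hanani-Tutte theorem holds for \emph{Kuratowski minors} on every surface, and $K_{3,t}$ is itself a Kuratowski minor for every $t\ge 3$. This means that if $K_{3,t}$ admitted an \iocro-drawing on the torus for some $t\ge 7$, then $K_{3,t}$ would in fact embed on the torus. So the lemma is equivalent to showing that $K_{3,t}$ does not embed on the torus when $t\ge 7$.

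Second, I would appeal to Ringel's formula for the orientable genus of complete bipartite graphs: $g(K_{m,n}) = \lceil (m-2)(n-2)/4\rceil$. Specialized to $m=3$, this gives $g(K_{3,t}) = \lceil (t-2)/4\rceil$. Since $\lceil (t-2)/4\rceil \ge 2$ for every $t\ge 7$, the graph $K_{3,t}$ has no embedding on the torus (which has genus $1$), completing the proof.

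If for some reason one wanted to avoid citing Ringel's formula, the case $t=7$ (which is all that is needed, since $K_{3,7}\subseteq K_{3,t}$ for $t\ge 7$, and any \iocro-drawing of $K_{3,t}$ restricts to one of $K_{3,7}$) can be handled by a short Euler-characteristic argument: any embedding of a bipartite graph on the torus satisfies $|F|\le |E|/2$ (every face has length $\ge 4$), which together with $V-E+F = 0$ forces $|E|\le 2|V|$. For $K_{3,7}$ this demands $21 \le 20$, a contradiction. So actually the cleanest way to write it up is probably this Euler-characteristic bound, specialized to $K_{3,7}$, applied after the Fulek--Kyn{\v c}l reduction to an honest embedding.

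The ``main obstacle'' is really just bookkeeping: the nontrivial content is entirely carried by the Fulek--Kyn{\v c}l result, which converts parity information (independent pairs crossing evenly) into combinatorial embeddability. Once that is in hand, the remainder is a standard non-embeddability argument. I would therefore write the proof as a two-line reduction followed by the Euler-characteristic computation for $K_{3,7}$, and I would explicitly note that $K_{3,t}$ for $t\ge 7$ contains $K_{3,7}$ as a subgraph so that the reduction from $t$ to $7$ is immediate.
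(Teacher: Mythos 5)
The paper does not prove this lemma at all; it cites it directly from Fulek and Kyn\v{c}l~\cite{FK18_kura}, and your reconstruction is exactly the intended derivation: the Fulek--Kyn\v{c}l theorem (strong Hanani--Tutte for Kuratowski minors on every surface) converts an \iocro-drawing of $K_{3,t}$ into an embedding, and then Ringel's genus formula $g(K_{3,t})=\lceil (t-2)/4\rceil\ge 2$ for $t\ge 7$ --- or, equivalently, the Euler-characteristic bound $|E|\le 2|V|$ for bipartite graphs on the torus applied to $K_{3,7}$ --- rules out such an embedding. Your proof is correct and takes the same route the citation implicitly relies on.
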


Figure~\ref{fig:K6uC3} shows a toroidal embedding of $K_{3,6}$, so Lemma~\ref{lem:No7Stars} is sharp, even, as we can see in the figure, if we add a $3$-cycle on
the vertices of degree $6$. Therefore, Lemma~\ref{lem:No7Stars} implies that the Hanani-Tutte theorem on the torus is true for all $K_{3,t}$, $t \geq 7$ (and their subdivisions).

\begin{figure}[htp]
\centering
\includegraphics[scale=0.7]{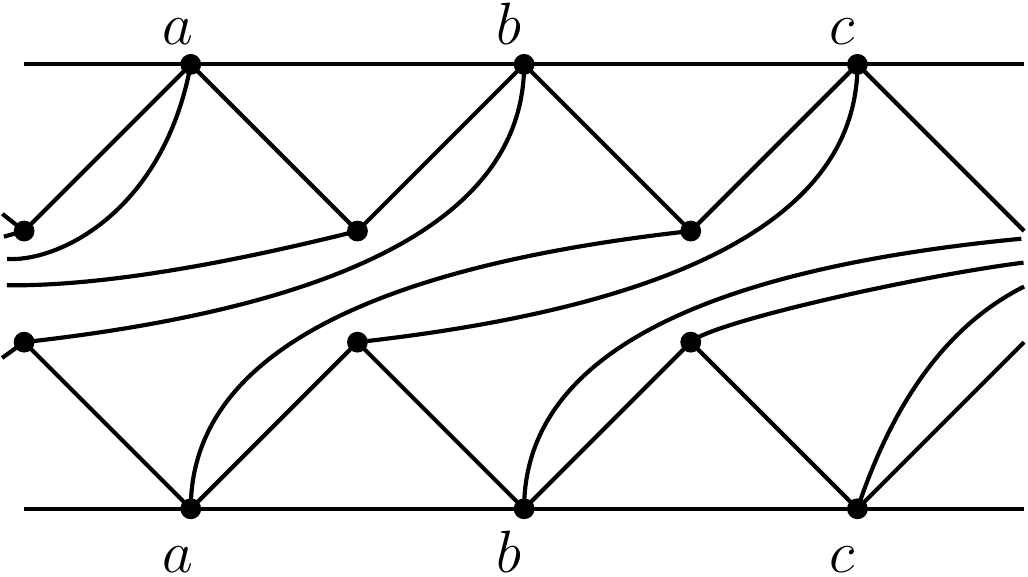}
 \caption{A torus embedding of $K_{3,6} \cup K_3$, with the union taken over the three vertices of degree $6$. The six $K_{1,3}$-claws incident to $c$ alternate connecting to $a$ and $b$ between $ab$ and $ba$.}
\label{fig:K6uC3}
\end{figure}

As a base case for our proof of Theorem~\ref{thm:HTtorus}, we need a unified Hanani-Tutte style result for Kuratowski minors on the torus.
Fulek and Kyn{\v c}l~\cite{fulek2019counterexample} showed that this is not possible, even for a $K_{3,4}$: there is an \iocro-drawing of $K_{3,4}$ on the torus which does not have a compatible embedding. We can show, however, that
there always is a {\em weakly} compatible embedding of $K_{3,n}$ up to $n = 6$ which is sharp, since $K_{3,7}$ has no embedding on the torus.

We establish a slightly stronger version. Let $a$, $b$, and $c$ be the three vertices of degree $n$ in $K_{3,n}$ (for $n=3$ pick the vertices of one of the two sides). We call a graph a {\em $K_{3,n}$ with bracers (at $a,b,c$)} if it is the union of the $K_{3,n}$ with (any number of)
interior-disjoint paths of length at most two between any two of $\{a,b,c\}$ (no multiple edges). These paths are the {\em bracers}.

\begin{lemma}\label{lem:K3nspokesHTWC} 
Let $G$ be a subgraph of a subdivision of a $K_{3,n}$ with bracers at $\{a,b,c\}$. For every \iocro-drawing of $G$ on the torus there exists a weakly compatible embedding of $G$.
\end{lemma}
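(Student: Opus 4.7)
The plan is to combine the redrawing tools of Section~\ref{sec:RIOT} with a structural case analysis made tractable by the bound $n\le 6$ (Lemma~\ref{lem:No7Stars}) and by the very restricted shape of $G$ (a subdivision of $K_{3,n}$ plus at most a few length-one or length-two bracers at $\{a,b,c\}$).

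First, I would dispose of the non-essential case: if the given iocr-$0$ drawing of $G$ is non-essential, Corollary~\ref{cor:nonessH} yields a plane embedding of $G$ compatible with the drawing, and a compatible plane embedding is, a fortiori, a weakly compatible toroidal embedding. Otherwise, the drawing contains an essential cycle $C$, and by Lemma~\ref{lem:G-C} I may assume $C$ is crossing-free without altering any even rotation or the essentiality class of any other cycle.

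Second, before descending into a full case analysis, I would try to close the argument with the general lemmas of Section~\ref{sec:RIOT}. If $C$ has at most one odd vertex, Corollary~\ref{cor:esseven} or Lemma~\ref{lem:HT1S} already provides a compatible (hence weakly compatible) toroidal embedding. If some cycle $C'$ nearly disjoint from $C$ is essential, Lemma~\ref{lem:2disjoint} closes the case. Otherwise every cycle nearly disjoint from $C$ is non-essential, and Lemma~\ref{lem:crossingFreeCycle} lets me further simplify by clearing a chosen such cycle as well. The remaining ``hard'' configuration is thus: $C$ is crossing-free with at least two odd vertices on it, and every nearly-disjoint cycle is non-essential. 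In that situation I would cut the torus along $C$ to obtain a cylindrical iocr-$0$ drawing of $G$, cap the two resulting boundary holes with disks, and in each disk attach a crossing-free auxiliary hub to the duplicated copy of $V(C)$. This produces an iocr-$0$ drawing on the sphere of an auxiliary planar graph $G^*$; the unified planar Hanani-Tutte theorem (Theorem~\ref{thm:HTS}) gives a compatible plane embedding of $G^*$, and deleting the hubs and re-identifying the two copies of $V(C)$ through a handle produces a toroidal embedding of $G$.

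The hardest part, and the reason the statement only asks for \emph{weak} compatibility, will be verifying that the resulting embedding is weakly compatible with the original drawing. At each $v\in V(C)$ the two sides of the cylinder may force differing rotations, and the gluing must be chosen so that on every evenly connected component $K$ of the original drawing, all rotations within $K$ are either uniformly preserved or uniformly reversed. Because $n\le 6$ and the bracers add only a bounded amount of extra structure, the combinatorial types of $(G,C)$ and of the evenly connected components are finite and amenable to a direct check, but the bookkeeping is delicate: strong compatibility is known to fail here by \cite{fulek2019counterexample}, so the slack offered by weak compatibility — the freedom to flip an entire evenly connected component as a whole — is exactly what is needed. I expect most of the work to lie in confirming, for each structural type of essential $C$ and of the incidence pattern of odd vertices among $\{a,b,c\}$, the spokes $av_i,bv_i,cv_i$, and the bracers, that at least one admissible gluing exists.
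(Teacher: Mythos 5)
Your approach has a genuine and fatal gap in the central ``hard'' case. After cutting the torus along a crossing-free essential cycle $C$ that contains an odd vertex $v$, the resulting drawing on the cylinder (or on the sphere after capping) is \emph{not} iocr-$0$: two edges $e,f$ incident to $v$ that lie on opposite sides of $C$ were adjacent in $G$ (so their odd crossing parity was permitted by the iocr-$0$ condition on the torus), but after cutting they become incident to the two distinct duplicated vertices $v^L$ and $v^R$ and hence \emph{independent} in $G^*$, while still crossing oddly. This is exactly what the paper's Section~\ref{sec:CV} calls an ``independent odd pair supported at $v$'' relative to $C$, and it explicitly notes that the cut drawing $D'$ can have independent odd pairs touching the boundary. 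Consequently you cannot apply the unified planar Hanani--Tutte theorem (Theorem~\ref{thm:HTS}) to $G^*$, and the whole cut-cap-and-reglue argument collapses precisely in the case you isolated as hard (at least two odd vertices on $C$). This is also why Corollary~\ref{cor:esseven} insists that $C$ consist of even vertices only, and why Lemma~\ref{lem:HT1S} tolerates at most one odd vertex on $C$ and has to work harder even then.

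The paper's actual proof sidesteps the cutting surgery entirely and exploits the very rigid structure of $G$. After normalizing (remove degree-$0$/$1$ vertices, suppress superfluous degree-$2$ vertices, subdivide bracers so they all have length $2$, and use edge-flips to make every degree-$3$ vertex even), the only vertices that can be odd are $a,b,c$. The proof then splits on how many of $\{a,b,c\}$ are odd. Three odd is handled by the fixed embedding of $K_{3,6}\cup K_3$ (Figure~\ref{fig:K6uC3}); zero odd is the weak Hanani--Tutte theorem; one odd is handled by Corollary~\ref{cor:esseven} or Lemma~\ref{lem:G-x} depending on whether some essential cycle avoids the odd vertex. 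The genuinely hard case is two odd and one even, isolated as Lemma~\ref{lem:K3nspokesHTWConeeven}. There the argument clears all edges at the even vertex $c$ of crossings, encodes the rotation at $c$ as a cyclic word over $\{a,b,ab,ba\}$, and shows that any ``bad'' adjacency in this word (e.g.\ an occurrence of $ab,a$) allows one to graft an extra claw into the drawing, increasing $n$ while keeping an iocr-$0$ drawing; repeating eventually contradicts the sharp bound $n\le 6$ from Lemma~\ref{lem:No7Stars}. This word-pattern argument, not a finite brute-force check over evenly connected components, is what makes weak compatibility attainable, and it is a substantively different (and more delicate) idea than the cut-and-cap strategy you outline.
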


The result remains true for bracers of arbitrary length, but we do not need this version.

\begin{proof}
    We can assume that $G$ contains no vertex $v$ of degree $0$ or $1$, since removing such a vertex cannot make another even vertex odd, and any embedding of $G-v$ can easily be extended to $G$. If $v$ is a vertex of degree~2 with neighbors$u,w$ and $uw\not\in E(G)$, then we can suppress $v$ by similar reasoning. Thus we may assume that $G$ is an induced subgraph of a $K_{3,n}$ with bracers.

    Let $S = \{a,b,c\}$. We want to replace bracers of length~1 with bracers of length~2. So suppose there is an edge $uw$ with $u,w \in S$. If necessary, we use edge-flips at $u$ to ensure that $uv$ crosses every edge incident to $u$ evenly. We can then subdivide $uw$ with a vertex $v$ placed very close to $u$; even vertices stay even, and after embedding we can suppress $v$. Thus we may assume that all bracers are paths of length~2. Finally, we may assume that all the vertices of degree at most $3$ are even, by performing edge-flips if necessary. So any odd vertices must belong to $S$.

    If $S$ contains no even vertices, let $K_S$ be a $K_3$ on $S$. Figure~\ref{fig:K6uC3} shows that $K_{3,6} \cup K_S$ has an embedding on the torus, which includes $G$. Each even vertex is its own evenly connected component, and any two rotations of a degree-$3$ vertex are weakly compatible, so this suffices. If $S$ contains only even vertices, we are done by Theorem~\ref{thm:WHTS}, the weak Hanani-Tutte theorem for surfaces.

    Suppose that $S$ contains exactly one odd vertex. If there is an essential cycle that avoids the odd vertex in $S$, we are done by Corollary~\ref{cor:esseven}. Otherwise, all essential cycles pass
    through the odd vertex in $S$, in which case we are done by Lemma~\ref{lem:G-x}.

    This leaves us with the case that $S$ contains two odd vertices and one even vertex. We will establish this case separately in Lemma~\ref{lem:K3nspokesHTWConeeven}.
\end{proof}

The following lemma covers the missing case in the proof of Lemma~\ref{lem:K3nspokesHTWC}.

\begin{lemma}\label{lem:K3nspokesHTWConeeven}
Let $G$ be a $K_{3,n}$ with bracers at $\{a,b,c\}$, where $n \leq 6$. If $G$ has an \iocro-drawing in which $c$ is even, and $a$ and $b$ are odd, then $G$ has a weakly compatible embedding on the torus.
\end{lemma}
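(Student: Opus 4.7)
The plan is to locate an essential cycle in the given drawing that avoids at least one of the two odd vertices $a,b$; such a cycle has at most one odd vertex (the other of $a,b$, if it lies on it), so it feeds directly into either Corollary~\ref{cor:esseven} or Lemma~\ref{lem:HT1S}, each of which yields a strongly compatible toroidal embedding. Strong compatibility trivially implies the weak compatibility demanded by the statement, so the conclusion would follow at once.

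To show that an essential cycle of $G$ avoiding $a$ exists, I would use non-planarity. By the convention set up in Lemma~\ref{lem:K3nspokesHTWC} we have $n\ge 3$, so $G$ contains $K_{3,3}$ as a subgraph and is therefore non-planar. Consider the drawing of $H:=G-a$ inherited from the given \iocro-drawing of $G$. If this inherited drawing were non-essential, Lemma~\ref{lem:G-x} (applied with $x=a$) would force $G$ to be planar, contradicting the $K_{3,3}$-subgraph. Hence some cycle of $H$ is essential in the inherited drawing, giving an essential cycle $C$ of $G$ that avoids $a$.

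Now the preparatory edge-flips performed at the start of the proof of Lemma~\ref{lem:K3nspokesHTWC} ensure that every vertex of degree at most $3$ is even, so $a$ and $b$ are the only odd vertices in the whole drawing. Since $C\subseteq H=G-a$, the only vertex of $C$ that could possibly be odd is $b$. If $b\notin V(C)$, then every vertex of $C$ is even and Corollary~\ref{cor:esseven} supplies a compatible toroidal embedding of $G$. If $b\in V(C)$, then $C$ contains exactly one odd vertex and Lemma~\ref{lem:HT1S} supplies a compatible toroidal embedding of $G$. In either case the resulting embedding is strongly, hence weakly, compatible with the original \iocro-drawing.

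The main obstacle really sits in the first step: a priori every essential cycle of $G$ could use both $a$ and $b$, which would block both redrawing tools. The non-planarity of $G$ via its $K_{3,3}$-subgraph, combined with Lemma~\ref{lem:G-x}, is precisely what rules this out. Note that the hypothesis $n\le 6$ plays no role in this sub-case—it was used in Lemma~\ref{lem:K3nspokesHTWC} only to invoke the toroidal embedding of $K_{3,n}\cup K_3$ from Figure~\ref{fig:K6uC3}—so the argument here is uniform in $n\ge 3$.
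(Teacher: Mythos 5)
Your argument is correct in its essentials, but it takes a genuinely different---and considerably shorter---route than the paper's. The paper proves the lemma constructively: it clears the edges at $c$ of crossings, encodes the rotation at $c$ as a cyclic word $\omega$ over the alphabet $\{a,b,ab,ba\}$, runs a minimality argument (maximal $n$, minimal number of bracers) to rule out all ``non-alternating'' patterns in $\omega$, and then directly realizes the surviving alternating pattern inside the standard $K_{3,6}\cup K_3$ embedding of Figure~\ref{fig:K6uC3}. You instead extract an essential cycle $C$ avoiding $a$ (via the contrapositive of Lemma~\ref{lem:G-x} and non-planarity of $K_{3,3}$), observe that after the standard edge-flips the only vertex of $C$ that can be odd is $b$, and feed $C$ directly into Lemma~\ref{lem:HT1S} or Corollary~\ref{cor:esseven}. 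In fact $C$ must pass through $b$ (every cycle of $G-a$ necessarily visits both $b$ and $c$, since $\{a,b,c\}$ separates all spoke and bracer vertices), so in practice it is always Lemma~\ref{lem:HT1S} that fires. What your approach buys is brevity and a \emph{stronger} conclusion---a compatible embedding, not merely a weakly compatible one. This is still consistent with the $K_{3,4}$ counterexample to Theorem~\ref{thm:HTS} cited in Section~\ref{sec:KM}: together with the other cases in the proof of Lemma~\ref{lem:K3nspokesHTWC}, your result would force that counterexample to have \emph{all three} of $a,b,c$ odd, with the obstruction living entirely in the rotations at the degree-3 vertices. What the paper's constructive approach buys, in turn, is control over which cycles survive as essential in the resulting embedding---a property exploited in Remark~\ref{rem:K3nspokesHTWC}---which a black-box appeal to Lemma~\ref{lem:HT1S} does not track.

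Two small points of care. First, you invoke $n\geq 3$ ``by the convention set up in Lemma~\ref{lem:K3nspokesHTWC},'' but the statement of Lemma~\ref{lem:K3nspokesHTWConeeven} itself places no lower bound on $n$. The fix is cheap and doesn't need the convention: if $G-a$ is non-essential in the given drawing, Lemma~\ref{lem:G-x} hands you a planar embedding of $G$ whose restriction to $G-a$ is compatible, and since $a$ is odd this is already a compatible (toroidal) embedding of $G$; otherwise $G-a$ contains an essential cycle and your argument proceeds. Second, the order of operations should be: perform the edge-flips first to make every vertex of degree at most $3$ even (a compatible modification), and only then locate $C$ in the modified drawing; your write-up finds $C$ before mentioning the flips, and while essentialness is unaffected by flips, the parity bookkeeping for Lemma~\ref{lem:HT1S} is cleanest with the flips done up front.
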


\begin{proof}
    Using edge-flips, if necessary, we can ensure that all vertices of degree $3$ are even.
    Since $c$ is an even vertex, we can use Lemma~\ref{lem:treeclear} to clear all its incident edges of crossings. Every edge
    incident to $c$ belongs to either a bracer from $c$ to $a$, a bracer from $c$ to $b$, or a {\em claw} $K_{1,3}$ with legs $a$ and $b$ (apart from $c$).
    In the last case we distinguish we distinguish between {\em $ab$-claws} and {\em $ba$-claws}, depending on whether the rotation at the central claw vertex is $abc$ or $bac$ (the vertex is even). We can then describe the rotation at $c$ as a cyclic list containing
    elements $a$, $b$, $ab$, and $ba$. For example $ab,a,ba,b,a,b$ describes the rotation at $c$ for a $K_{3,2}$ with (four) bracers, see Figure~\ref{fig:spokes} (left) for an illustration. We assume for the moment that there are no bracers between $a$ and $b$, so the cyclic list completely determines the rotation at even vertices of $G$.

    \begin{figure}[htp]
    \centering
    \includegraphics[scale=0.7]{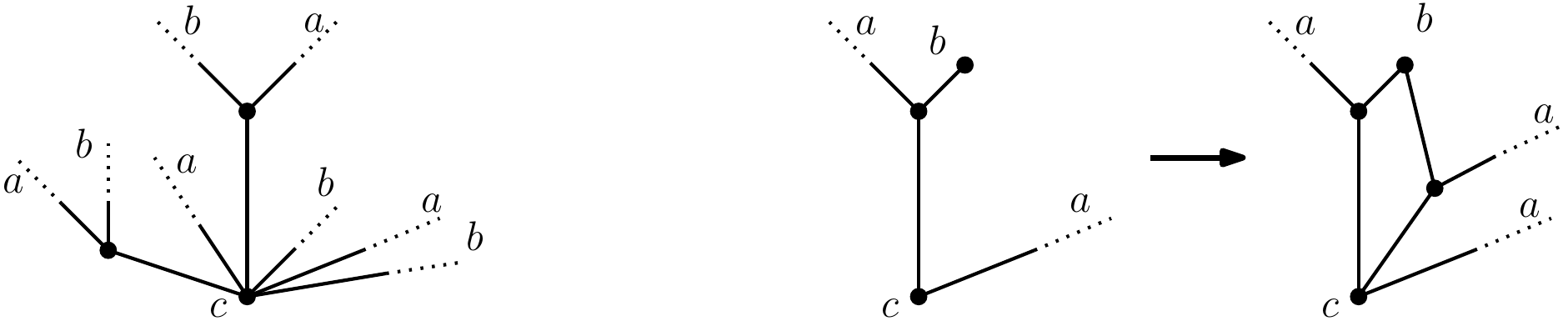}
    \caption{The cyclic permutation $\sigma$ contains the pattern $ab,a,ba,b,a,b$ (left). Introducing a claw if $ab,a$ appears in $\sigma$ (right).}
    \label{fig:spokes}
    \end{figure}

    Suppose $G$ is a counterexample to the statement of the lemma. So $G$ has an \iocro-embedding, but it does not have a weakly compatible embedding. Under this condition, let $n$ be maximal, and the number of bracers minimal. Then $n \leq 6$, since $K_{3,7}$ does not have an \iocro-embedding by Lemma~\ref{lem:No7Stars}.

    Consider the cyclic list $\omega$, constructed as above, for $c$. That list cannot contain the patterns $a,a$ or $b,b$: if it did, we could simplify to just $a$ or $b$, removing one of those bracers. The
    resulting graph has a weakly compatible embedding if and only if the original one does, since we can duplicate a bracer/delete a bracer. Hence, this would contradict the minimality of the number of bracers.

    Next we show that two consecutive items in $\omega$ cannot have different letters before and after the comma separating them. In other words, the following eight patterns do not occur: $a,b$; $b,a$; $a,ba$; $b,ab$; $ab,a$; $ba,b$; $ab,ab$ and $ba,ba$. Suppose that $\omega$
    contains one of those patterns. Without loss of generality, we can assume that in the pattern an item ending in $b$ is followed by an item starting with $a$; let $cvb$ and $cwa$ be the corresponding paths, belonging to a bracer or a claw. (See Figure~\ref{fig:spokes}, center image, for an example with pattern $ab,a$.) In the rotation at $c$ between $cv$ and $cw$, draw a small crossing-free edge $cz$, then draw edges $zb$ and $za$ alongside the paths $cvb$ and $cwa$. Since $cv$ and $vb$ cross all non-adjacent edges evenly and $v$ is even, $zb$ also crosses all non-adjacent edges evenly, and likewise for $za$. We have obtained an \iocro-drawing $D'$ be of a new graph $G'$. Note that $G'$ contains $K_{3,n+1}$. Since $K_{3,7}$ does not have an \iocro-drawing, $n\le 5$. An embedding of $G$ that is weakly compatible with its given drawing can easily be extended to an embedding of $G'$ (see Figure~\ref{fig:spokes}, right side) which is weakly compatible with its drawing, since 
    $a$ and $b$ are odd and the remaining vertices are even. But this contradicts the choice of $n$.

    It follows that any $a$ in $\omega$ must appear as part of the pattern $ba,a,ab$ and any $b$ must appear as part of $ab,b,ba$ and consecutive claws are either $ab,ba$ or $ba,ab$.
    In other words, claws alternate between $ab$- and $ba$-claws. We can then take the standard embedding of $K_{3,6}$ shown in Figure~\ref{fig:K6uC3}, for which the claws alternate as well, and add the $a$- and $b$-bracers into the embedding as specified by $\omega$; note that this correctly reproduces the rotations at all even vertices. Finally, we can add any number of bracers between $a$ and $b$ into the embedding, since $a$ and $b$ lie on the same face (and both are odd, so their rotation does not matter).
\end{proof}

\begin{remark}\label{rem:K3nspokesHTWC}
 We will apply Lemma~\ref{lem:K3nspokesHTWC} in a situation where $a$ and $b$ are part of an essential cycle. The construction in Lemma~\ref{lem:K3nspokesHTWConeeven} makes this cycle non-essential, but it is possible to keep the cycle essential. This requires a slightly sharper analysis, using that $n$ can be at most $4$, and not $6$ in this case.
\end{remark}


We also need a unified Hanani-Tutte theorem for $K_5$.

\begin{lemma}\label{lem:K5HTWC}
  For every \iocro-drawing of a subgraph $G$ of a subdivision of $K_5$ on the torus, there exists an embedding on the torus that is compatible with the \iocro-drawing.
\end{lemma}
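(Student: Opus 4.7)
My plan is to normalize $G$ and then do a case analysis on the set $T\subseteq\{v_1,\dots,v_5\}$ of odd branch vertices, calling on the redrawing tools already in place.

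\textbf{Preprocessing.} First, delete isolated and pendant vertices and suppress degree-$2$ vertices whose two neighbors are non-adjacent; none of these steps affect the existence of a compatible embedding. Then perform edge-flips at every vertex of degree at most $3$ to make it even. Since we have only flipped at vertices whose rotation we do not need to preserve (every such vertex is either a subdivision vertex which we ultimately suppress, or was already odd in the original drawing), compatibility considerations are unaffected. After this, the only vertices that can still be odd are branch vertices of $K_5$ of full degree $4$, so the odd vertices form a subset $T\subseteq\{v_1,\dots,v_5\}$ of the five branch vertices.

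\textbf{Small $|T|$.} If $|T|=0$, Theorem~\ref{thm:WHTS} directly produces an embedding compatible with the drawing. If $|T|=1$, say $T=\{v\}$, then the drawing of $G-v$ either contains an essential cycle $C$, which must consist entirely of even vertices and therefore yields a compatible embedding by Corollary~\ref{cor:esseven}, or it is non-essential and Lemma~\ref{lem:G-x} produces a planar, hence toroidal, compatible embedding. For $|T|=2$, say $T=\{u,v\}$, the same reduction to $G-v$ either delivers a non-essential $G-v$ (apply Lemma~\ref{lem:G-x}), or an essential cycle meeting at most the single odd vertex $u$, handled by Corollary~\ref{cor:esseven} (if $u\notin C$) or Lemma~\ref{lem:HT1S} (if $u\in C$). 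In every sub-case, each even vertex of $G$ remains even in $G-v$, so the compatibility obtained on $G-v$ lifts to $G$.

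\textbf{The hard case $|T|\ge 3$.} The strategy is to exhibit at least one of: (a) an essential cycle meeting at most one vertex of $T$ (then Corollary~\ref{cor:esseven} or Lemma~\ref{lem:HT1S} finishes); (b) two nearly disjoint essential cycles (then Lemma~\ref{lem:2disjoint} finishes, after splitting any shared even touching vertex); or (c) a vertex $v\in T$ with $G-v$ non-essential (then Lemma~\ref{lem:G-x} finishes). The $3$-path condition (Lemma~\ref{lem:3PC}) applied to the four internally-disjoint $v_iv_j$-paths in $K_5$ is the main propagation tool: non-essentiality of enough short cycles through a pair of branch vertices forces non-essentiality of the rest, which together with Lemma~\ref{lem:crossingFreeCycle} lets one clear chosen cycles of crossings and feed the result into (a)-(c). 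The main obstacle, which I expect to be where the bulk of the work goes, is the subcase $|T|\ge 4$: with at most one even branch vertex available, (a) and (b) are both hard to secure. I would analyze the rotation at the unique even branch vertex when $|T|=4$ in the spirit of Lemma~\ref{lem:K3nspokesHTWConeeven}, using the fact that $K_5$ restricted to any bipartition $\{a,b,c\}\cup\{d,e\}$ of its branch vertices sits inside a $K_{3,2}$ with bracers at $\{a,b,c\}$ plus the single edge $de$, and ruling out bad rotation patterns at the even branch vertex by adjoining auxiliary claws and appealing to Lemma~\ref{lem:No7Stars} to exclude the resulting $K_{3,7}$-containing drawings. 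For $|T|=5$ an analogous but more delicate argument, starting from any essential cycle and using Lemma~\ref{lem:crossingFreeCycle} to free a second cycle of crossings so that (b) becomes available, should close the argument.
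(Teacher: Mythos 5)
Your $|T|\le 2$ cases are correct, and in fact your argument for $|T|=2$ (remove an odd vertex $v$; if $G-v$ is non-essential apply Lemma~\ref{lem:G-x}, otherwise find an essential cycle avoiding $v$, which then has at most one odd vertex, and apply Corollary~\ref{cor:esseven} or Lemma~\ref{lem:HT1S}) is slicker than the paper's treatment of the corresponding $|S|=3$ case, which is an explicit rotation-by-rotation construction. However, you have the difficulty inverted. The cases $|T|=4,5$ that you flag as the main obstacle are trivial: with at most one even branch vertex, compatibility only constrains the rotation at that single vertex (or none at all), and any fixed toroidal embedding of $K_5$ realizes every cyclic rotation of the four neighbours at one chosen vertex after a suitable relabeling of those neighbours. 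The paper dispatches these as the cases $|S|\in\{0,1\}$ in a single sentence.

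The genuinely hard case is $|T|=3$, i.e.\ $|S|=2$, and here your proposal has a real gap. Your plan is to exhibit one of (a) an essential cycle meeting at most one odd vertex, (b) two nearly disjoint essential cycles, or (c) an odd vertex whose removal makes the drawing non-essential, but you give no argument that one of (a)--(c) must occur, and the paper's own proof strongly suggests that this trichotomy is not the right handle: for $|S|=2$ the paper discards the drawing entirely except for the rotations at the two even vertices $u,v$ and directly constructs a compatible $K_5$-embedding by an explicit case analysis on the two rotations (Figure~\ref{fig:K5ucba}). Appealing to Lemma~\ref{lem:K3nspokesHTWConeeven} or Lemma~\ref{lem:No7Stars} as you suggest also does not fit: those concern $K_{3,n}$-type graphs with at most one even vertex among $\{a,b,c\}$, whereas here you have two even degree-$4$ branch vertices of a $K_5$, a different combinatorial situation where both rotations must be matched simultaneously. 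Until you either prove the (a)--(c) trichotomy for $|T|=3$ or replace it with an explicit construction as in the paper, the proof is incomplete.
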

\begin{proof}
 Suppressing a subdivision vertex does not change the \iocro-ness of a drawing, and a suppressed vertex can always be reintroduced in an embedding, so we can assume that there are no subdivision vertices. We can also remove any vertices of degree at most $1$, since they do not affect embeddability. It follows that $G$ is a subgraph of $K_n$, for some $n \leq 5$. For $n \leq 4$, the result is easy, so we assume that $G$ is a subgraph of $K_5$.

 Let $S$ be the set of even vertices in the \iocro-drawing of $G$. We distinguish cases by the cardinality of $S$.\\

 {\bfseries\noindent Case $|S| \in \{0,1, 5\}$.} If $S$ is empty we can embed $G$ as part of a toroidal embedding of $K_5$; if $S$ contains all vertices we are done using
 Theorem~\ref{thm:WHTS}; if $S$ contains a single even vertex we take a toroidal embedding of $G$ and map the vertex and its neighbors to the embedding to get a compatible embedding.\\

 {\bfseries\noindent Case $|S| = 4$.} We distinguish two subcases: if there is an essential cycle consisting of vertices in $S$, we are done by Corollary~\ref{cor:esseven}. Otherwise, all essential cycles in the drawing must contain the fifth vertex not belonging to $S$. In that case, we are done by Lemma~\ref{lem:G-x}.\\

 {\bfseries\noindent Case $|S| = 2$.} In this case we construct an embedding of $G$ that is compatible with the rotation at the two even vertices. Since we are not using the given \iocro-drawing, we can assume that $G=K_5$, and then delete the edges not present in $G$ in the end.

 Let the two even vertices be $u$,$v$ and label the remaining vertices $a$, $b$, and $c$, so that the rotation at $u$ is $vabc$. Draw $uv$ in the plane, and to each of $u$ and $v$ attach half-edges to $a$, $b$, and $c$ (this will look like $+\!\!\!-\!\!\!+$).

 We first consider the case that the rotation at $v$ is not $uabc$. Then there are at least two half-edges at $v$ which can be connected to the half-edges emanating from $u$ without crossings. The (at most one) remaining pair of half-edges can be connected
 using a handle. This leaves us with the three edges between $a$, $b$, and $c$. Edges $ab$ and $bc$ can be drawn close to $u$, since
 half-edges for $ua$ and $ub$ as well as $ub$ and $uc$ are consecutive at $u$. Edge $ac$ can be added at $v$, if $va$ and $vc$ are consecutive
 in the rotation at $v$. If not, then the rotation at $v$ must be $ucba$, since we excluded the rotation $uabc$ at $v$. Then half-edges could be connected in the plane, and we use the handle for the edge $ac$.
 See Figure~\ref{fig:K5ucba} (left) for an illustration of this case.
  (The square with the cross inside represents the handle in the figures.)

\begin{figure}[htp]
\centering
\includegraphics[scale=0.7]{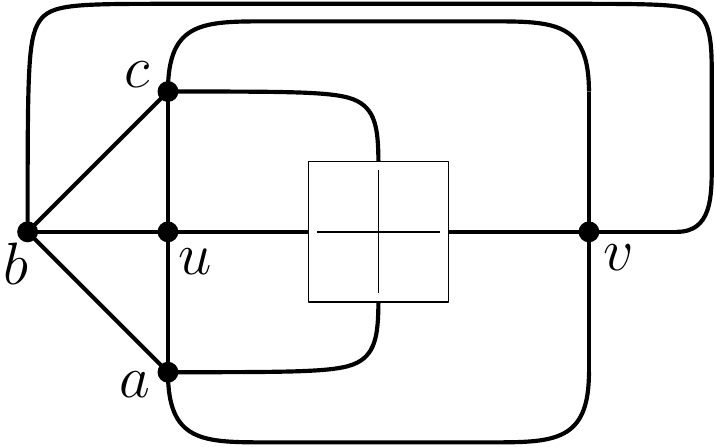} \hspace{20pt}
\includegraphics[scale=0.7]{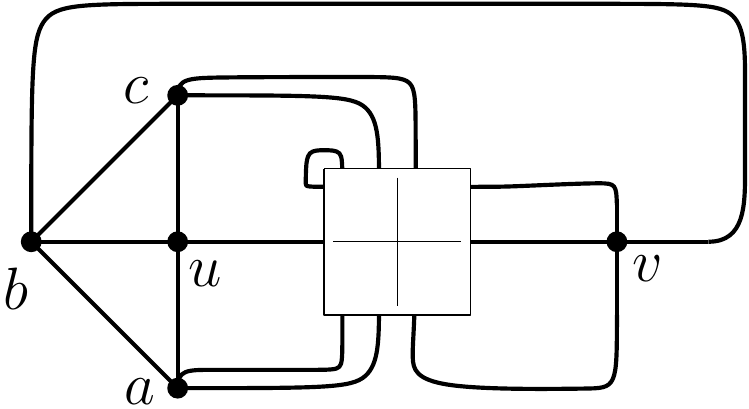}
 \caption{{\em (Two even vertices)} {\em Left:} Compatible embedding of $K_5$ in the case that $v$ has rotation $ucba$.{\em Right} Compatible embedding of $K_5$ in the case that $v$ has rotation $uabc$. We use $\boxplus$ for the (orientable) handle. }
\label{fig:K5ucba}
\end{figure}

 This leaves us with the case that the rotation at $v$ is $uabc$. Figure~\ref{fig:K5ucba} (right) shows a compatible embedding of the $K_5$ for this case.\\

 {\bfseries\noindent Case $|S| = 3$.} Let $u$, $v$, $w$ be the even vertices, and label the two remaining vertices $a$ and $b$.
 Let us first assume that $G$ contains the cycle $uvw$. If the cycle on $uvw$ is an essential cycle, we are done by Corollary~\ref{cor:esseven}. We can therefore assume that the cycle is non-essential or one of its edges is missing in $G$. If the cycle $uvw$ exists in $G$ we use Lemma~\ref{lem:treeclear} to clear $uv$ and $vw$ of crossings. It follows that $uw$ does not cross either $uv$ and $vw$, so the cycle on $uvw$ is a simple closed curve which, since it is not essential, bounds a plane region.

 \begin{figure}[htp]
\centering
\includegraphics[scale=0.7]{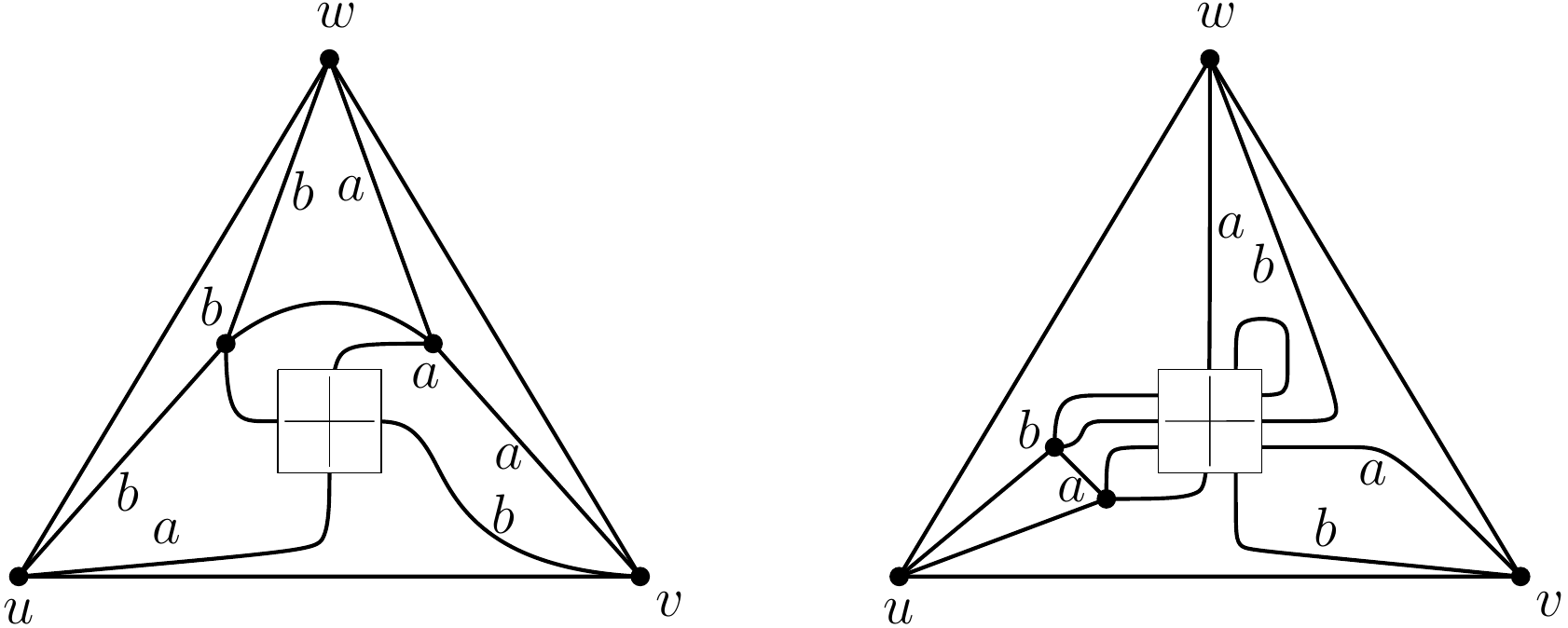}
 \caption{{\em (Three even vertices)} Compatible embeddings of $K_5$ in the case that half-edges leaving the triangle are $abbaab$ and $ababab$.}
\label{fig:K5abs}
\end{figure}

 If the  edge $ab$ exists it crosses the cycle an even number of times,
 so both $a$ and $b$ must lie on the same side of the cycle.  Since each edge incident to $a$ and $b$ also crosses the cycle an even number of times, all edges attaching to the cycle attach on the same side of the cycle (all inside the plane region, or all outside). This allows us to build a compatible embedding of $G$ as follows. In the plane draw a crossing-free cycle $uvw$, and add half-edges, as determined by the rotation, starting at the cycle towards $a$ and $b$. As we argued above, we can assume that $a$ and $b$, and the half-edges lie on the same side, without loss of generality, inside the cycle. If we write down the half-edges in the order in which they occur along the cycle $uvw$, there are only two possible results (up to renaming of $a$ and $b$ and shifting the list cyclically), namely $ababab$, and $abbaab$. Figure~\ref{fig:K5abs} shows a compatible embedding for each case, completing the proof.

 If the edge $ab$ does not exist and $a$ and $b$ are not on the same side of the cycle, the previous argument does not apply only when the two edges attaching at a vertex of the cycle are on its opposite sides, in which case we can easily construct a compatible embedding of $G$.

 \begin{figure}[htp]
\centering
\includegraphics[scale=0.7]{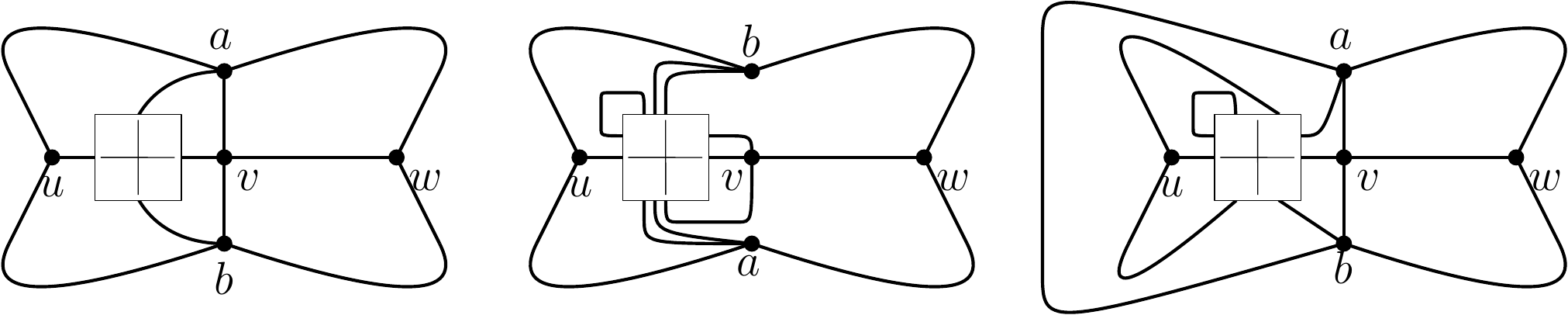}
 \caption{{\em (Three even vertices)} Compatible embeddings of $G$ in the case $uw\not\in E(G)$.}
\label{fig:K5uw}
\end{figure}

 Finally, if $uvw$ is not a cycle in $G$, the above argument applies (by an edge deletion in the end) unless $uvw$ is a path in $G$, $uw\not\in E(G)$, $v$ is a degree-4 vertex, and its two remaining edges attach on the opposite sides of the path $uvw$ (otherwise, we can add an even edge $uv$ into the drawing). In this case, up to symmetry there are three cases to consider depending on the rotations at $u$ and $w$, each of which can be easily completed into a compatible embedding of $G$, see Figure~\ref{fig:K5uw}.
\end{proof}

\section{Minimal Counterexamples}
\label{sec:MC}

In this section we collect properties of a minimal counterexample to the Hanani-Tutte theorem on the torus.
We order graphs first by their number of isolated vertices ({\bf in}creasing), then by the number of edges ({\bf in}creasing), and finally by the number of vertices ({\bf de}creasing).
We denote by $\prec$ the strict partial ordering based on these three numbers, and refer to a $\prec$-minimal graph (with certain properties). In a later section, we
will refine $\prec$ by an ordering $\prec'$; that is, whenever $G \prec H$, then $G \prec' H$.
In particular, any $\prec'$-minimal graph with a certain property is also $\prec$-minimal with that property. So the properties
proved for $\prec$-minimal graphs in this section are also true for $\prec'$-minimal graphs, which is why we
write minimal, rather than $\prec$- or $\prec'$-minimal in this section.

Since isolated vertices do not affect embeddability on a surface, or the Hanani-Tutte
criterion, a minimal counterexample contains no isolated vertices. Graphs without isolated vertices
are then ordered by number of edges (increasing), and number of vertices (decreasing); in other words, we use the (strict, partial) lexicographic order on $(|E(G)|, 2|E(G)|-|V(G)|)$; since graphs without isolated vertices satisfy $|V(G)| \leq 2|E(G)|$, this order is well-founded.
Note that if $H$ is a proper subgraph of $G$, then $H \prec G$, simply because it has fewer edges (since there are no isolated vertices).

Section~\ref{sec:WDC} presents some basic properties of minimal counterexamples with respect to cycles and cuts. In Section~\ref{sec:XC} we identify what we call an $X$-configuration, which must occur in a minimal counterexample to Theorem~\ref{thm:HTtorus}, the Hanani-Tutte theorem on the torus.
One issue we will face is that our proof of Theorem~\ref{thm:HTtorus} requires a strengthened assumption on (weakly) compatible embeddings, for which we do not know how to show that an $X$-configuration occurs. We start addressing this issue in Section~\ref{sec:PX}, which extends the proofs of Section~\ref{sec:WDC} to show that
they (mostly) still hold if the presence of an $X$-configuration is required.

\subsection{Nearly Disjoint Cycles and Cuts}\label{sec:WDC}

The following lemma is true when read with the bracketed weakly, or without it.\footnote{It is tempting to assume that the weakly compatible version of the lemma implies the compatible version, but this is not necessarily the case, since the minimal counterexample in the two cases may be different, and therefore have different properties.}

\begin{lemma}\label{lem:reduction333}
If $G$ is a minimal graph that has an \iocro-drawing $D$ on the torus, but {\em does not} have a [weakly] compatible embedding on the torus, then $D$ does not contain a pair of nearly disjoint cycles at least one of which is essential.
\end{lemma}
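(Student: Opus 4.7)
I argue by contradiction: assume $G$ is a $\prec$-minimal graph whose \iocro-drawing $D$ admits no [weakly] compatible embedding on the torus, yet $D$ contains nearly disjoint cycles $C_1, C_2$ with $C_1$ essential. If $C_2$ is also essential, Lemma~\ref{lem:2disjoint} immediately yields a compatible embedding of $G$ (hence a weakly compatible one), a contradiction. So assume $C_2$ is non-essential. By Lemma~\ref{lem:crossingFreeCycle} there is a compatible \iocro-drawing $D'$ of $G$ in which both $C_1$ and $C_2$ are crossing-free, and by transitivity of (weak) compatibility I may replace $D$ with $D'$.

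Then $C_2$ is a simple non-essential crossing-free cycle on the torus, so it bounds a disk $\Delta$, and no edge of $G$ crosses $C_2$. This partitions $G$ as $G^{in}\cup G^{out}$ with $G^{in}\cap G^{out}=C_2$, where $G^{in}$ is the subgraph drawn in $\bar{\Delta}$. Since $C_1$ is essential and a disk cannot carry an essential cycle, $C_1\subseteq G^{out}$. The principal case is $G^{in}\supsetneq C_2$: here $G^{out}$ has strictly fewer edges than $G$ and no fresh isolated vertices, so $G^{out}\prec G$, and by minimality $G^{out}$ admits a [weakly] compatible embedding $E^{out}$; meanwhile $G^{in}$, drawn non-essentially in $\bar{\Delta}$, has a compatible planar embedding $E^{in}$ by Corollary~\ref{cor:nonessH}. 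At each $v\in V(C_2)$, the two $C_2$-edges in the rotation of $D$ bracket all $G^{in}$-edges on the $\Delta$-side and all $G^{out}$-edges on the other side; whenever $v$ is even in $G^{out}$, this local picture is preserved in $E^{out}$, so $C_2$ bounds a face of $E^{out}$ on the appropriate side into which $E^{in}$ can be pasted, yielding a [weakly] compatible embedding of $G$ and the desired contradiction. The degenerate case $G^{in}=C_2$ (an empty disk face) I would handle separately, for example by deleting or contracting an edge of $E(C_2)$ first made even by flips as in the proof of Lemma~\ref{lem:G-C}, invoking minimality on the smaller graph, and re-inserting the edge along the face containing its endpoints.

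The hard part will be the gluing step when some $v\in V(C_2)\setminus V(C_1)$ is odd in $D$: [weak] compatibility does not pin down the rotation at $v$ in $E^{out}$, so $C_2$ may fail to bound a face there, and in the weakly compatible version one must additionally align the ``same-vs-reversed'' choice between $E^{in}$ and $E^{out}$ along shared evenly connected components. My plan is to precede the minimality argument by performing edge-flips among the non-$C_2$ edges at each odd $v\in V(C_2)$ to make $v$ even -- these flips are [weakly] compatible with $D$ (they only alter rotations at originally odd vertices) and leave $C_2$ crossing-free (the $C_2$-edges at $v$ already cross every other edge at $v$ zero times). Once every vertex of $V(C_2)$ is even, the rotations at $V(C_2)$ are forced in $E^{out}$ and the gluing goes through; any remaining reversal-alignment in the weak case I would fix by mirroring $E^{in}$ as needed.
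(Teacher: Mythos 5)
Your setup matches the paper's: reduce to $C_1$ essential and $C_2$ non-essential (both-essential handled by Lemma~\ref{lem:2disjoint}), clear both cycles via Lemma~\ref{lem:crossingFreeCycle}, then argue by minimality. But the reduction you build on top of this has a genuine gap, and the paper uses a different device precisely to avoid it.

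Your plan deletes the interior of the disk, invokes minimality on $G^{out}$, and then tries to paste $E^{in}$ back into the face of $E^{out}$ bounded by $C_2$. For that pasting to exist, you need the two $C_2$-edges at every $v\in V(C_2)$ to be consecutive in $E^{out}$, which you only get when the rotation at $v$ is preserved, i.e., when $v$ is even. You then propose to force this by ``edge-flips among the non-$C_2$ edges at each odd $v\in V(C_2)$ to make $v$ even.'' This step fails in general: after clearing $C_2$, the edges at $v$ split as $c_1,\,[\text{inside}],\,c_2,\,[\text{outside}]$, and the odd pairs at $v$ live entirely among the inside-edges or among the outside-edges. You may only flip within one of those blocks (flipping past $c_1$ or $c_2$ would make a $C_2$-edge odd and destroy the face), and you cannot ``rotate once around $v$'' either. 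But as the discussion preceding Lemma~\ref{lem:4obs} explains, four edges with the $X$-pattern cannot be made pairwise even by flips even with the full toolkit, let alone the restricted one you have here, and nothing in the hypotheses rules out that pattern among the inside-edges (or outside-edges) at $v$. So the flips step is not available, and without it the gluing is unjustified.

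The paper avoids this entirely by not keeping $C_2$ in the reduced graph. Instead, it deletes $E(C')$ and the disk interior and inserts a new vertex $v$ joined by crossing-free edges to every vertex of $V(C')$. That new vertex is even by construction, so in the [weakly] compatible embedding of $G^*$ its rotation is preserved; removing $v$ and its star then produces a region whose boundary visits $V(C')$ in the correct cyclic order, into which the planar embedding of $H$ (with $C'$ on the outer face) can be pasted. The vertices of $V(C')$ are allowed to remain odd, since compatibility places no constraint on them. This is the key idea you are missing. A secondary issue: your degenerate case $G^{in}=C_2$ gives $G^{out}=G$, so minimality does not apply, and your patch (delete or contract a $C_2$-edge and re-insert) is not shown to terminate in an embedding of the right graph. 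The paper's $G^*$ has the same edge count but one more vertex in that case, so $G^* \prec G$ holds automatically under the chosen order; this is exactly why the order decreases in $2|E|-|V|$.
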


For the weakly compatible version, every bracketed weakly in the proof needs to be read. For the compatible version, all bracketed occurrences
of weakly have to be dropped.

\begin{proof}
Let $C$ and $C'$ be nearly disjoint cycles in $D$. By Lemma~\ref{lem:2disjoint} not both of them can be essential, so let us assume that
$C$ is essential, and $C'$ is non-essential.

Using Lemma~\ref{lem:crossingFreeCycle}, we can clear both $C$ and $C'$ of crossings.
Let $H$ be the part of $G$ drawn within the disk bounded by $C'$, and including $C'$.
Then $H$ has a compatible planar embedding, by Theorem~\ref{thm:HTS},
in which $C'$ bounds the outer face (to see this, add a new vertex outside $C'$ and connect it to every vertex on $C'$ before applying  Theorem~\ref{thm:HTS}, the resulting drawing is still \iocro\ in the plane).

 \begin{figure}[h]
\centering
\includegraphics[scale=0.7]{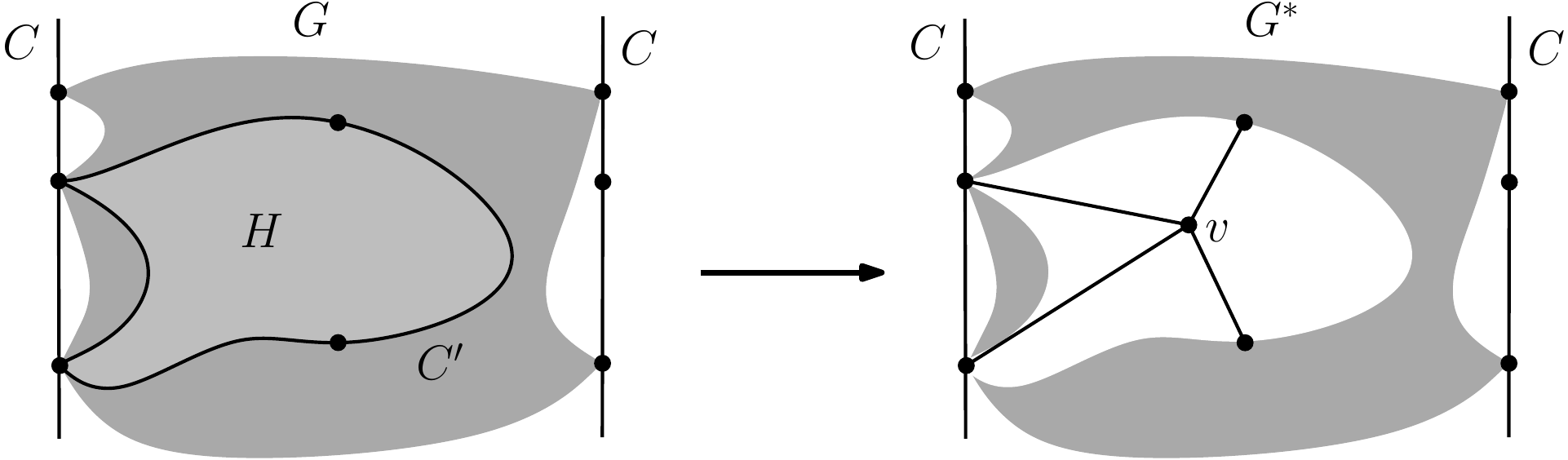}
 \caption{Construction of the drawing of the reduced graph $G^*$ in the proof of Lemma~\ref{lem:reduction333}.}
\label{fig:reduction333}
\end{figure}

Refer to Figure~\ref{fig:reduction333}.
Construct $G^*$ from $G$ as follows: remove the inner vertices of $H$, that is $V(H)-V(C')$, as well as $E(C')$, the edges of $C'$; in other words,
we clear the disk bounded by $C'$, including the edges of $C'$, but leaving its vertices. Add a new vertex $v$ in the disk, and connect
it to all vertices of $C'$. Then $v$ is even, since all its incident edges are free of crossings.

The drawing of $G^*$ is \iocro, and $G^* \prec G$ in our graph
ordering: if $H$ contains a vertex apart from $V(C')$, because the number of edges decreased; otherwise, $H = C'$, and the number of edges remained the same, but the number of vertices increased.

By minimality, $G^*$ has a [weakly] compatible embedding on the torus.

All of our redrawing operations and reductions in Lemma~\ref{lem:G-C}, Lemma~\ref{lem:2disjoint},  and Lemma~\ref{lem:crossingFreeCycle}
yield compatible drawings. We can therefore remove $v$ and edges incident to it from the drawing, and reinsert $H$ (in the weakly compatible case we may have to flip all of $H$) to obtain a [weakly] compatible embedding of the original graph $G$ on the torus, completing the proof.
\end{proof}

We use the previous lemma repeatedly to thin out $G$ in order to show that no minimal counter-example to the Hanani-Tutte theorem on the torus exists.

The following lemma allows us to focus on $2$-connected (for compatibility and weak compatibility) or $3$-connected counterexamples (in general).

\begin{lemma}
\label{lem:nearly-3-connected}
Suppose $G$ is a minimal graph that has an \iocro-drawing $D$ on the torus, but {\em does not} have a [weakly] compatible embedding on the torus. Then $G$ is $2$-connected and no 2-cut consists of two odd vertices in $D$.
\end{lemma}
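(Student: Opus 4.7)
The plan is to derive contradictions from the minimality of $G$ in each case, reading ``[weakly]'' uniformly according to the version being proved.

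\textbf{Connectedness.} If $G$ is disconnected, then either $D$ contains no essential cycle, in which case Corollary~\ref{cor:nonessH} supplies a compatible embedding of $G$ and contradicts that $G$ is a counterexample, or some component $G_1$ contains an essential cycle $C$. If another component $G_2$ contains any cycle $C'$, the pair $C, C'$ is vertex-disjoint with $C$ essential, contradicting Lemma~\ref{lem:reduction333}. Otherwise every other component is a forest, and I reduce $G$ by deleting a leaf vertex (or both endpoints of an isolated edge, so that no isolated vertex is created); the resulting graph is $\prec$-smaller, and by minimality it has a [weakly] compatible embedding into which the deleted piece can be reattached in any face.

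\textbf{No cut vertex.} If $v$ is a cut vertex, write $G = G_1 \cup G_2$ with $V(G_1) \cap V(G_2) = \{v\}$. If every essential cycle of $D$ passes through $v$, then $G - v$ is non-essential and Lemma~\ref{lem:G-x} with $x = v$ produces a compatible embedding. Otherwise some essential cycle $C$ avoids $v$, lying, say, in $G_1 - v$; if $G_2 - v$ contains any cycle $C'$, then $C, C'$ are vertex-disjoint with $C$ essential and Lemma~\ref{lem:reduction333} applies. The remaining case, $G_2 - v$ a forest, is handled by peeling leaf edges from $G_2$ exactly as in the disconnected case.

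\textbf{No 2-cut of odd vertices.} Suppose $\{u, w\}$ is a 2-cut with both $u, w$ odd in $D$. Write $G = G_1 \cup G_2$ with $V(G_1) \cap V(G_2) = \{u, w\}$; assume WLOG $uw \notin E(G)$. Split $u$ and $w$ into copies $u_i, w_i$ belonging to $G_i^*$, producing the disjoint union $G^* = G_1^* \sqcup G_2^*$. By 2-connectedness (already established), each split copy has a neighbor inside its side, so no new isolated vertices appear; $G^*$ has the same number of edges and two more vertices than $G$, hence $G^* \prec G$. The inherited drawing of $G^*$ on the torus is \iocro, and minimality gives a [weakly] compatible embedding of $G^*$. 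Since $u, w$ are odd in $D$, their rotations in the final embedding of $G$ are unconstrained by compatibility, so once the embeddings of $G_1^*$ and $G_2^*$ are positioned on the torus with $u_1, u_2$ on a common face (and similarly $w_1, w_2$), the copies can be identified to recover a [weakly] compatible embedding of $G$, the desired contradiction.

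The main obstacle is this final superposition step in the 2-cut case: ensuring that the two toroidal embeddings of $G_1^*$ and $G_2^*$ obtained from minimality can be combined with the split vertices sharing faces. The key is to argue, using Lemma~\ref{lem:reduction333} applied back to $G$ (which rules out two nearly disjoint cycles with one essential), that at least one of $G_1^*, G_2^*$ must be embedded non-essentially---so that its embedding fits inside a disk-face of the other side's embedding---and then to exploit the odd-vertex freedom at $u$ and $w$ to adjust rotations and faces as needed during the merge.
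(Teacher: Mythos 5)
Your proposal correctly identifies the high-level structure (rule out disconnectedness, cut vertices, and bad $2$-cuts using minimality), but two steps have genuine gaps.

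\textbf{Cut vertex, $v$ even.} You apply Lemma~\ref{lem:G-x} directly to $G$ with $x=v$ in the case that every essential cycle passes through $v$. But Lemma~\ref{lem:G-x} only guarantees that the resulting planar embedding is compatible with the drawing of $H=G-v$; it gives no control over the rotation at $v$ itself. When $v$ is odd this is fine, but when the cut vertex $v$ is even, compatibility requires preserving $v$'s rotation, which Lemma~\ref{lem:G-x} does not ensure. This is precisely the hard subcase, and the paper handles it with a much longer argument: it shows that each component $K'$ of $G-v$ must carry a cycle $C'$ through $v$ that crosses the chosen essential cycle $C$ oddly at $v$ (hence $C'$ is essential), and after clearing both of crossings it pins down $K'$ to be $C'-v$ before finishing with Corollary~\ref{cor:esseven}. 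Your leaf-peeling fallback for the forest case has a related problem: the only leaves of the forest $G_2-v$ that can exist are adjacent to $v$ (else degree~$1$ in $G$), so such a leaf $w$ has degree~$2$ in $G$, and after deleting $w$ and invoking minimality you must reinsert $w$ with both edges landing in a common face and in the correct rotational position at its two possibly-even neighbors, which is not automatic.

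\textbf{2-cut of odd vertices.} The claim that the inherited drawing of $G^*=G_1^*\sqcup G_2^*$ remains an \iocro-drawing is not justified. If an edge $e\in E(G_1)$ incident to $u$ crosses an edge $f\in E(G_2)$ incident to $u$ an odd number of times in $D$, that is allowed in $D$ (they are adjacent), but after splitting $u$ into $u_1$ and $u_2$ the edges $e,f$ become independent while still crossing oddly. Since $u$ is odd by hypothesis, such adjacent odd pairs can exist and can straddle the two sides of the cut; the paper's vertex-split operation requires cutting the rotation at $u$ into two contiguous even arcs, which need not be available here, and edge-flips at $u$ cannot in general eliminate all $G_1$--$G_2$ odd pairs. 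The paper avoids this entirely by never disconnecting the graph: it sets $G^*=G-V(K)+xy$ with the new edge $xy$ drawn along a path through $K$, so all crossings between edges at the cut vertices remain adjacent crossings and $G^*$ stays connected. That persistent edge $xy$ also resolves your acknowledged superposition difficulty: Lemma~\ref{lem:G-x} gives a planar drawing of $L(K)=G[V(K)\cup\{x,y\}]+xy$ in a disk with $xy$ on its boundary, and the disk is swapped into a neighborhood of the edge $xy$ in the toroidal embedding of $G^*$. Without an edge joining the two cut vertices there is no guarantee that $u_1,w_1$ lie on a common face of the embedding of $G_1^*$, nor $u_2,w_2$ in $G_2^*$, and your sketch of showing one side is non-essential does not by itself place all four split vertices where they need to be for the merge.
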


If we drop the requirement that the embedding be compatible or weakly compatible, we can even conclude that the counterexample has to be $3$-connected.

\begin{corollary}
\label{cor:3-connected}
Suppose $G$ is a minimal graph that has an \iocro-drawing on the torus, but {\em does not} have an embedding on the torus. Then $G$ is $3$-connected.
\end{corollary}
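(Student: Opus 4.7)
The plan is to reduce Corollary~\ref{cor:3-connected} to Lemma~\ref{lem:nearly-3-connected} by an edge-flip trick that converts any would-be two-cut into a forbidden one. By Lemma~\ref{lem:nearly-3-connected}, $G$ is already $2$-connected and no $2$-cut of $G$ consists of two odd vertices in $D$, so it only remains to rule out $2$-cuts with at least one even vertex. Suppose for contradiction that $\{u,v\}$ is such a $2$-cut; say $u$ is even in $D$.

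I would then perform an edge-flip at $u$ on two consecutive edges in the rotation at $u$; this is well-defined because $G$ is $2$-connected and hence $\deg(u) \ge 2$. By the description of an edge-flip recalled in Section~\ref{sec:RIOT}, this operation changes only the crossing parity of the two flipped edges. Since those two edges are adjacent at $u$, they do not form an independent pair, so the resulting drawing $D'$ of $G$ on the torus is still \iocro. Because $u$ was even in $D$, the flipped pair now crosses oddly and thus $u$ becomes odd in $D'$. If $v$ was odd in $D$ it stays odd in $D'$; if $v$ was even as well, an analogous edge-flip at $v$ makes $v$ odd.

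Since $G$ itself has not changed, the drawing $D'$ still exhibits $G$ as a minimal counterexample to Theorem~\ref{thm:HTtorus}: $G$ has an \iocro-drawing on the torus, namely $D'$, and still fails to embed on the torus. But now $\{u,v\}$ is a $2$-cut of $G$ consisting of two odd vertices in $D'$, contradicting Lemma~\ref{lem:nearly-3-connected} applied to the minimal counterexample $G$ with drawing $D'$. The only point that requires care is the verification that the edge-flip really converts an even vertex into an odd one while preserving \iocro-ness---both facts follow directly from the definition of the edge-flip and the observation that the two flipped edges share the vertex being flipped, so they are not counted by the \iocro\ condition.
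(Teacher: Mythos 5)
The flip trick you describe is clever and does correctly preserve the \iocro\ property (the two flipped edges are adjacent, so the independent pair condition is untouched), and a single flip does turn an even vertex into an odd one. However, your argument has a real gap at the point where you invoke Lemma~\ref{lem:nearly-3-connected}.

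Lemma~\ref{lem:nearly-3-connected} as stated has hypothesis ``$G$ is a \emph{minimal} graph that has an \iocro-drawing~$D$ but no [weakly] compatible embedding''. Minimality here means: every $\prec$-smaller graph with an \iocro-drawing has a [weakly] compatible embedding (w.r.t.\ that drawing). Your $G$, from the Corollary's hypothesis, is minimal for the weaker property ``has no embedding at all'': every $\prec$-smaller graph with an \iocro-drawing has \emph{some} embedding. But a $\prec$-smaller graph could have an embedding without having a compatible one, so your $G$ need not be a valid input to Lemma~\ref{lem:nearly-3-connected}. The paper flags precisely this danger in the footnote just before Lemma~\ref{lem:reduction333}: ``the minimal counterexample in the two cases may be different, and therefore have different properties.'' So neither the $2$-connectivity claim you open with, nor the odd-odd $2$-cut conclusion you appeal to after flipping, are available to you as a direct citation.

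The paper's own proof sidesteps this by arguing Lemma~\ref{lem:nearly-3-connected} and Corollary~\ref{cor:3-connected} \emph{in parallel}, using the notational device of ``[[weakly] compatible]'' read in three ways, so that minimality is interpreted against the matching property in each strand of the argument. In the strand where no compatibility is demanded, the final lifting step ``If $x$ and $y$ are odd, rotations at $x$ and $y$ can be changed, and the embedding \ldots is [weakly] compatible'' is unnecessary: one simply produces an embedding of $G$, regardless of the parity of $x$ and $y$, yielding a contradiction for \emph{every} $2$-cut and hence $3$-connectivity outright. Thus the flip trick, while neat, is both unjustified as written and unnecessary: if you re-run the proof of the lemma with the Corollary's notion of minimality, the dichotomy on parity of $x,y$ simply disappears. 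If you want to keep your shorter deduction, you would need to first establish an explicit version of Lemma~\ref{lem:nearly-3-connected} whose minimality hypothesis is ``no embedding'' rather than ``no [weakly] compatible embedding''; that version is what the paper's proof in fact delivers, but it is not what the stated lemma gives you.
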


The two cases in the lemma, and the third case in the corollary essentially can be established using parallel arguments, so we will present them together in a single proof. To capture the third case (where we do not assume compatibility or weak compatibility, we write ``a [[weakly] compatible embedding'', which can be read in three ways: ``a compatible embedding'', ``a weakly compatible embedding'', and ``an embedding'' (covering the case of the corollary).

\begin{proof}[Proof of Lemma~\ref{lem:nearly-3-connected} and Corollary~\ref{cor:3-connected}]
If $G$ is disconnected, and some component $K$ of $G$ does not contain an essential cycle, then
$K$ is planar and has a compatible embedding by Corollary~\ref{cor:nonessH}. By minimality of $G$, there also
is a [[weakly] compatible] embedding of $G-V(K)$ on the torus, and the embeddings of
$K$ and $G-V(K)$ can be combined into a single [[weakly] compatible] embedding of $G$.
If every component of $G$ contains an essential cycle, we can apply Lemma~\ref{lem:2disjoint} to obtain a compatible embedding. Therefore, $G$ is connected.

Suppose $G$ has a cut-vertex $x$. Then $G-x$ consists of at least two components; if each component of $G-x$
contains an essential cycle, then Lemma~\ref{lem:2disjoint} implies that $G$ has a compatible embedding, which is a contradiction.
Hence, some component $K$ of $G-x$ does not contain an essential cycle. If $x$ is odd, we use Lemma~\ref{lem:G-x}
to argue that $L=G[V(K)\cup x]$ has a planar embedding in which the embedding of $K$ is compatible with its original drawing. The graph $L$ can be embedded in a disk with $x$ on its boundary, and $G-V(K)$ has, by minimality of $G$, a [[weakly] compatible] embedding on the torus. Since $x$ is odd, we are allowed to change the rotation at $x$, so
we can combine the embeddings of $L$ and $G-V(K)$ into a single [[weakly] compatible] embedding of $G$ on the torus.

If $x$ is even, we need to argue more carefully, since we are not allowed to change the rotation at $x$. This means we cannot make use of Lemma~\ref{lem:G-x}. By Lemma~\ref{lem:reduction333} we know that $G$ does not contain two nearly disjoint cycles at least
one of which is essential. There must be at least one component $K$ of $G-x$ for which
$L=G[V(K)\cup x]$ contains an essential cycle $C$ (otherwise $G$ contains no essential cycle, and therefore has a compatible planar embedding by Corollary~\ref{cor:nonessH}).

If $C$ is contained in $K$, then $T = G[V(G)-V(K)]$ has to be a tree (by Lemma~\ref{lem:reduction333}). This is not possible, since the leaves of $T$ are degree-$1$ vertices in $G$, but a minimal $G$ does not contain vertices of degree at most $1$.

Therefore, the essential cycle $C$ must pass through $x$ in $L$. There can be no components
$K'$ of $G-x$ for which $L' = G[V(K') \cup x]$ does not contain a cycle. (Otherwise, $L'$ is a tree, which, as above, leads to degree-$1$ vertices in $G$, which do not exist in a minimal counterexample.) Therefore, if $K'$ is a component of $G-x$ other than $K$, then
$L' = G[V(K') \cup x]$ must contain a cycle $C'$. Because of Lemma~\ref{lem:reduction333} again, this cycle
cannot be nearly disjoint from $C$, so it must pass through $x$ and cross $C'$ in $x$. This forces $C'$ to be essential (the underlying curves of $C$ and $C'$ cross oddly, since $x$ is even, so both curves must be essential).
We clear $C$ of crossings using Lemma~\ref{lem:G-C}. We then use the same procedure described in Lemma~\ref{lem:G-C} to clear
$C'$ of crossings. This is possible in this particular case, since $C$ and $C'$ intersect in an even vertex $x$ (so the rotation of $x$ does not need to be modified for applying Lemma~\ref{lem:G-C}), and $C$ does not separate the torus, and neither does $C \cup C'$ once the crossings along $C'$ have been removed. We thus obtain a compatible drawing in which both $C$ and $C'$ are free of crossings. We know that $K'$ is a tree (if $K'$ contained a cycle, it would be vertex-disjoint from $C$, contradicting Lemma~\ref{lem:reduction333}. The leaves of $K'$ must be adjacent to $x$, otherwise they are degree-$1$ vertices in $G$ and would not occur in a minimal counterexample. We claim that $K'$ has to be $C'-x$. If that is not the case, then there must be a path $P$ from $x$ to a vertex $y$ on $C'-x$ otherwise avoiding $C'-x$. Then $P$ together with the path from $y$ to $x$ on $C'$ that ends on the same side of $C$ as $P$ yields a cycle nearly disjoint from $C$ and touching it (at $x$) which is a contradiction. Therefore $K'$ is just $C'-x$, so $L' = C'$.
Since $x$ is even, and we can make all other vertices of $C'$ even (they have degree $2$), we can apply Corollary~\ref{cor:esseven} to obtain a compatible embedding of $G$ in the torus.

Hence, we may assume that $G$ is two-connected.  Let $\{x,y\}$ be a cut-set.
If $K$ and $K'$
are components of $G-\{x,y\}$, by Lemma~\ref{lem:2disjoint} we may assume that either
$G[V(K)\cup \{x\}]$ or $G[V(K')\cup \{y\}]$ contains no essential cycle: say, the former.
Let $L(K)=G[V(K)\cup\{x,y\}]+xy$ where $xy$ is an edge from $x$ to $y$,
drawn along an $x,y$-path in $G-V(K)$, added only if $G[V(K)\cup\{x,y\}]$ doesn't already contain the edge $xy$.
This gives us an \iocro-drawing of $L(K)$ in which every essential cycle in it has to pass through $x$.
By Lemma~\ref{lem:G-x}, $L(K)$ is planar, and has an embedding in which the embedding of $K$ is compatible with its original drawing.  We can draw $L(K)$ in a small
disk, with $xy$ along the boundary of the disk.  Let $G^*=G-V(K)+xy$,
with the edge $xy$ 
drawn
along an $x,y$-path in $G[V(K) \cup \{x,y\}]$.
$G^*$ has an \iocro-drawing, so by minimality of $G$, we
have a [[weakly] compatible] embedding of $G^*$ in the torus.
We remove a small disk from the torus minus $G^*$, such that the
boundary of disk intersects $G^*$ at $xy$, and replace it with the disk containing $L(K)$ so that
the two $xy$'s match up. We then delete $xy$ from the drawing if it does not belong to $G$.
This gives us an embedding of $G$. If $x$ and $y$ are odd, rotations at $x$ and $y$ can be changed, and
the embedding we obtained is [weakly] compatible to the original drawing.
\end{proof}


\subsection{The $X$-Configuration}\label{sec:XC}

The Hanani-Tutte theorem holds for cubic graphs on arbitrary surfaces; this is because vertices of degree $3$ can be made even by edge-flips, at which point Theorem~\ref{thm:WHTS}, the weak Hanani-Tutte theorem for surfaces can be applied.

Edge-flips are no longer sufficient to deal with vertices of degree $4$ (or higher): Consider a vertex $v$ incident to four edges $e_1, e_2, e_3, e_4$ which occur in (cyclic) order $e_1, e_2, e_3, e_4$ in the rotation at $v$. Suppose that $e_1$ and $e_3$ cross oddly, and every other pair of (these four) edges crosses evenly. No matter what edge-flips are performed at $v$, there will always remain at least one pair of edges crossing oddly. This configuration is the unique obstacle to a vertex being even, up to edge-flips: suppose $v$ is incident to four edges. Using edge-flips we can make three consecutive edges at $v$ cross evenly with each other. Say the edges are $e_1, e_2, e_3, e_4$, in this order, and every two of $e_1, e_2$ and $e_3$ cross evenly. If $e_4$ crosses exactly $e_2$ oddly, we are done. If it crosses $e_3$ and $e_1$ oddly, we move the end of $e_4$ once around $v$, so that $e_4$ crosses exactly $e_2$ oddly. In all other cases, $e_4$ can be made to cross all of $e_1, e_2$ and $e_3$ evenly using edge-flips.

Hence, four edges at a vertex cannot always be made even by edge-flips. The next lemma shows that four edges are always the obstacle to making a vertex even by flips, independent of its degree.

\begin{lemma}\label{lem:4obs}
  If a vertex in a drawing cannot be made even by flips, then it is incident to four edges which cannot be made to cross each other evenly by flips.
\end{lemma}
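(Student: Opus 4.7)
I would prove the lemma by induction on $d=\deg(v)$. For $d\le 3$ the vertex can always be made even by flips (see the paragraph preceding the statement), so the hypothesis is vacuous, and for $d=4$ the four incident edges themselves form the required quadruple. So assume $d\ge 5$ and that the lemma holds at degree $d-1$. Two locality facts are used throughout. \emph{(L1)} For any $4$-subset $S$ of the edges at $v$, the flips at $v$ that act nontrivially on the induced rotation and pairwise parities of $S$ are precisely those swapping two members of $S$, and they act on the induced data exactly as the corresponding flips would at a stand-alone degree-$4$ vertex; hence $S$ can be made pairwise even by flips at $v$ if and only if it can be made pairwise even as a $4$-vertex in isolation. \emph{(L2)} For any edge $e$ at $v$, the flips at $v$ not involving $e$ coincide with the flips of the degree-$(d-1)$ vertex $v-e$, while the $e$-involving flips preserve both the induced rotation on $v-e$ and all pairwise parities on its $d-1$ edges; hence $v-e$ can be made even by flips if and only if the non-$e$ pairs at $v$ can all be zeroed without touching $e$.

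Suppose $v$ cannot be made even by flips. If some $v-e$ also cannot, the inductive hypothesis yields four edges at $v-e$ forming an obstruction there, and (L1) promotes them verbatim to an obstruction at $v$. Otherwise every $v-e$ can be made even. Fix any $e$ and use non-$e$ flips to make every non-$e$ pair at $v$ even; the only possibly odd pairs that remain are the $(x_{e,f})_{f\ne e}$. An $e$-involving flip slides $e$ one slot through the rotation at $v$ and toggles exactly one $x_{e,f}$. Lifting any sliding sequence to a walk on the universal cover $\ZN$ of the length-$(d-1)$ cycle of slots of $e$, and using that a walk from $0$ to $n$ in $\ZN$ crosses an integer an odd number of times exactly when that integer lies strictly between $0$ and $n$, one sees that the toggle patterns of $(x_{e,f})_f$ achievable by sliding $e$ are exactly the cyclic arcs of the rotation of the $d-1$ non-$e$ edges. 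Since $v$ cannot be made even, $(x_{e,f})_f$ is itself not an arc, so its $1$-entries form at least two cyclic blocks.

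Pick $a_1,a_2,a_3,a_4$ in cyclic order along the $d-1$ other edges, one from each of four consecutive blocks, so that $x_{e,a_1},x_{e,a_2},x_{e,a_3},x_{e,a_4}$ alternate; by a cyclic relabeling of the $a_i$ we may assume $e$ lies in the cyclic gap between $a_4$ and $a_1$ in the rotation at $v$. Then the $4$-subset $\{e,a_1,a_2,a_3\}$ has induced rotation $e,a_1,a_2,a_3$; the three pairs among $a_1,a_2,a_3$ all cross evenly, and the three parities $x_{e,a_j}$ read $(1,0,1)$ or $(0,1,0)$---exactly the unique $4$-edge obstruction described just before the lemma (three consecutive edges pairwise even in the rotation, with the fourth edge odd either with only the middle one or with only the two outer ones, the latter reduced to the former by sliding the fourth edge once around the $4$-vertex). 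By (L1), no flip sequence at $v$ can make these four edges pairwise even, which completes the induction. The main obstacle I anticipate is establishing the arc-reachability characterization used in the sliding step; the universal-cover walk argument above is how I would handle it, and once it is in place the alternating-block selection yields the obstructing quadruple directly from the $4$-vertex analysis already carried out in the text.
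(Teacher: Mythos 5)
Your approach is genuinely different from the paper's. The paper gives a direct combinatorial argument: it fixes a rotation achieving a longest block $B$ of consecutive pairwise-even ends, looks at the edge $g$ just past $B$, and by case analysis on which of the endpoints $e,f$ of $B$ the edge $g$ crosses oddly, exhibits a bad $4$-tuple inside $B\cup\{g\}$. You instead induct on degree, with the locality facts (L1)/(L2) reducing the question to $4$-subsets and to the subconfiguration $v-e$, and then analyse the residual parity vector of a fixed $e$ by sliding. Both (L1) and (L2) are sound, and Case~1 of your induction is correct.

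The gap is in Case~2, in the arc-reachability claim. Your universal-cover argument shows that after zeroing the non-$e$ pairs, a sequence of $e$-slides toggles $x_{e,f_i}$ for exactly the residues lying between the start slot and the end slot of the lifted walk; that is, the achievable toggle sets are precisely the cyclic arcs \emph{emanating from $e$'s current slot}. It is \emph{not} true that every cyclic arc of the other $d-1$ edges is achievable: an arc whose two endpoints are both non-neighbours of $e$ cannot be reached, because the walk endpoint in $\ZN$ would have to lie simultaneously in two incompatible places. (Concretely, at a degree-$5$ vertex with rotation $e,f_0,f_1,f_2,f_3$ and only $x_{ef_1}=x_{ef_2}=1$, the set $\{f_1,f_2\}$ is a single arc, yet the vertex cannot be made even; the obstructing invariant $x_{ab}+x_{bc}+x_{ca}+[abc\ \text{in rotation}]$ shows no sequence of flips---with or without extra non-$e$ flips---zeroes it.) Consequently, ``$v$ cannot be made even'' only yields ``$(x_{e,f})_f$ is not an arc adjacent to $e$'s slot''; it does not yield ``$(x_{e,f})_f$ is not an arc,'' and the ``at least two cyclic blocks'' step, and with it the selection of $a_1,\dots,a_4$ from four alternating blocks, can fail when $(x_{e,f})_f$ is a single arc $\{f_a,\dots,f_b\}$ separated from $e$ on both sides.

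The missing case is easy to close, but you do close it with a wrong justification rather than a missing one, so it counts as a real gap. Two repairs are available. First, observe directly that when $(x_{e,f})_f$ is a non-adjacent arc $\{f_a,\dots,f_b\}$, the $4$-set $\{e,f_{a-1},f_a,f_{b+1}\}$ already has induced rotation $e,f_{a-1},f_a,f_{b+1}$ with parities $(0,1,0)$ and even $f$-pairs, which is the $4$-edge obstruction, so you are done without any block-counting. Second, one can show that the Case~2 hypothesis actually excludes this configuration: deleting an endpoint of the arc (or, when the arc is a singleton, deleting an $f$ in one of the two gaps) leaves a degree-$(d-1)$ vertex whose parity vector is again a non-adjacent arc, hence not makeable even, contradicting ``every $v-f$ can be made even.'' Either repair completes your induction; as written, however, the claim that all cyclic arcs are achievable by sliding is false and the Case~2 argument does not go through.
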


This result can also be found in the full version of~\cite[Claim 8]{FK18_approx}.


\begin{proof}
 Let $v$ be a vertex in a drawing so that $v$ cannot be made even by edge-flips. Then $v$ has degree at least $4$. If it has degree exactly four, we are done by the argument preceding the lemma. So we can assume that the degree of $v$ is larger than $4$. Looking at all possible sequences of edge-flips, find a drawing which contains the longest block $B$ of consecutive ends so that all edges in the block cross each other evenly. Fix the rotation of that drawing. We can assume that $B$ contains at least four ends (otherwise, we can pick any four edges at $v$, since no four edges can be made to cross pairwise evenly). Let the block $B$ start with edge $e$ and end with edge $f$, and let the next edge after $f$ be $g$. Then $g \neq e$, since otherwise $v$ is even. If $g$ crosses
 both $e$ and $f$ oddly, we rotate $g$ once around $v$, flipping the crossing parity of $g$ with all other edges incident to $v$. In particular, we
 can assume that $g$ does not cross both $e$ and $f$ oddly. If $g$ crosses both $e$ and $f$ evenly, it has to cross some edge $h$ in $B-\{e,f\}$ oddly, since otherwise we could have added $g$ to the block $B$ of edges crossing pairwise evenly. In this case, $\{e,f,g,h\}$ are the
 four edges we were looking for. We can therefore assume that $g$ crosses exactly one of $e$ and $f$ evenly, without loss of generality let us say $g$ crosses $e$ evenly, and $f$ oddly (in the other case, we move $g$ in the rotation next to $e$).
 Then $B$ can be written as $e B_0 g' B_1 f$, where $g$ crosses all
 edges in $B_1 f$ oddly, and $g'$ evenly. Move $g$ past $f$ and $B_1$ so the new rotation is $e B_0 g' g B_1 f$. It is not possible that
 $g' = e$ or that $g$ crosses every edge in $B_0$ evenly, since in those cases $e B_0 g' g B_1 f$ would have been a longer block than $B$.
 Hence, there must be an edge $h \in B_0$ which crosses $g$ oddly. Then $e, h, g', g$ are the four edges we are looking for.
\end{proof}

The core of the inductive proof will be working with a specific configuration in \iocro-drawings: Two essential cycles $C_1$ and $C_2$ with $V(C_1) \cap V(C_2) = \{v\}$ and so that the edges of $C_1$ and $C_2$ incident to $v$ cannot be made to cross each other evenly by edge-flips at $v$. We call such a pair $(C_1,C_2)$ an {\em $X$-configuration}, see Figure~\ref{fig:pairOfCycles}.

\begin{figure}[htp]
\centering
\includegraphics[scale=1]{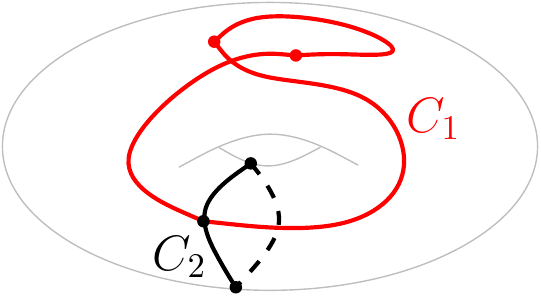}
 \caption{A pair of essential cycles $C_1$ and $C_2$ meeting in a single vertex $v$ so that
 the four edges of $C_1$ and $C_2$ incident to $v$ cannot be flipped so as to make them cross each other evenly.}
\label{fig:pairOfCycles}
\end{figure}

\begin{claim}\label{clm:Xmod}
 If $(C_1,C_2)$ is an $X$-configuration with $V(C_1) \cap V(C_2) = \{u\}$, and $C_3$ and $C_4$ are two cycles so that $V(C_3) \cap V(C_4) = \{u\}$ as well, and $u$ is incident to the same four edges in $C_3 \cup C_4$ as it is in $C_1 \cup C_2$, then $(C_3,C_4)$ is an $X$-configuration.
\end{claim}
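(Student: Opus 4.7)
The $X$-configuration condition has two parts: (a) both cycles are essential, and (b) the four edges at $u$ cannot be made pairwise even by edge-flips at $u$. Part~(b) is a property of the four edges alone (their pairwise crossing parities and the cyclic rotation at $u$) and does not reference the cycles containing them. Hence (b) carries over immediately from $(C_1,C_2)$ to $(C_3,C_4)$, and only (a)---the essentiality of $C_3$ and $C_4$---requires proof. Note that $V(C_3)\cap V(C_4)=\{u\}$ together with the absence of loops forces $C_3$ and $C_4$ to be edge-disjoint, so each contributes exactly two of the four edges at $u$.

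My plan for (a) is to compute the topological crossing parity $\cro(C_3,C_4)$ and show it is odd. By property~(iii) in Section~2, $\cro(C_3,C_4)=H(C_3)^T\bigl(\begin{smallmatrix} 0 & 1 \\ 1 & 0 \end{smallmatrix}\bigr)H(C_4)$, which is odd exactly when $H(C_3)$ and $H(C_4)$ are two distinct non-zero vectors in $\ZN_2^2$. So an odd crossing parity will force both $C_3$ and $C_4$ to be essential, and it suffices to establish $\cro(C_3,C_4)\equiv 1\pmod 2$.

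To compute $\cro(C_3,C_4)$ from the drawing, I will use that $V(C_3)\cap V(C_4)=\{u\}$ together with the \iocro{} hypothesis. Any edge $e$ of $C_3$ not incident to $u$ is independent of every edge $f$ of $C_4$, since a shared endpoint would lie in $V(C_3)\cap V(C_4)=\{u\}$; hence the pairwise crossing parity $c_{ef}=0$. The symmetric statement with $C_3,C_4$ swapped holds as well. Hence $\cro(C_3,C_4)$ collapses to a local count at $u$: the sum of the four crossing parities between the two $C_3$-edges at $u$ and the two $C_4$-edges at $u$, plus an ``interleave'' contribution equal to $1$ precisely when the two pairs of ends alternate in the rotation at $u$.

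The crux, which I expect to be the main obstacle, is to show this local count is odd, using non-flippability~(b). For the four edges at $u$ I will define analogous local sums $p_A,p_B,p_C$ for each of the three 2-2 partitions into pairs. A short check gives $p_A+p_B+p_C\equiv 1\pmod 2$: each $c_{ij}$ appears in exactly two of the three sums (so those contributions cancel mod~$2$), while exactly one of the three partitions is interleaved by the rotation. Moreover each $p_X$ is invariant under edge-flips at $u$: swapping two ends in the same part of $X$ changes neither the $c$-sum nor the interleave, and swapping two ends in different parts toggles both with zero net effect. By the analysis preceding Lemma~\ref{lem:4obs}, condition (b) implies that the four edges at $u$ are flip-equivalent to the canonical obstruction (rotation $e_1e_2e_3e_4$ with $c_{13}=1$ and all other $c_{ij}=0$), and in that canonical state all three $p_X$ are directly verified to be odd. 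By flip-invariance, all three $p_X$ are odd in the given drawing as well. In particular, the local parity for the partition realized by $(C_3,C_4)$ is odd, so $\cro(C_3,C_4)\equiv 1\pmod 2$ and both $C_3$ and $C_4$ are essential.
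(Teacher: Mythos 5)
Your proof is correct, and it takes a genuinely different route from the paper's. The paper first normalizes by edge-flips (clearing the $C_3$-edges and making the two $C_4$-ends at $u$ cross the two $C_3$-ends evenly) and then argues by contradiction: if the ends didn't alternate, one more flip would make all four edges pairwise even, violating the $X$-configuration condition; hence they alternate, so $C_3$ and $C_4$ cross oddly as curves and must both be essential. You instead introduce, for each of the three $2$--$2$ partitions of the four ends, a local parity invariant $p_X$ (the cross-pair crossing parities plus the interleave bit), verify that each $p_X$ is flip-invariant, that $p_A+p_B+p_C\equiv 1\pmod 2$ (each of the six pairs contributes to two of the three $p_X$'s, and exactly one partition is interleaved), and that all three are odd on the canonical obstruction which, by the discussion before Lemma~\ref{lem:4obs}, is the unique non-flippable state up to flips. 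This gets you the odd crossing parity of $C_3$ and $C_4$ directly, without choosing a normal form for the whole cycle $C_3$. Your argument is longer but yields a slightly stronger structural fact for free --- that \emph{every} one of the three ways of pairing the four ends at $u$ into two cycles has odd local parity --- whereas the paper's argument is shorter and more ad hoc. Both are sound; I checked the flip-invariance of $p_X$ in both cases (flipped pair within one part: no cross-pair and no change of interleaved partition affecting $X$; flipped pair split across parts: the cross-pair parity and the interleave bit each toggle, cancelling), and the evaluation on the canonical state (rotation $e_1e_2e_3e_4$, $c_{13}=1$, others $0$) does give $p_A=p_B=p_C=1$.
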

\begin{proof}
 We only have to show that $C_3$ and $C_4$ are essential. Using edge-flips, we first make all edges of $C_3$ even, and then ensure that the two edges of $C_4$ incident to $u$ cross the edges of $C_3$ incident to $u$ evenly. If the edges of $C_3$ and $C_4$ do not alternate in the rotation at $u$, we can make the two edges of $C_4$ cross each other evenly by edge-flips, without affecting their crossing parity with the edges of $C_3$, contradicting the assumption that $(C_1,C_2)$ is an $X$-configuration.  So the edges of $C_3$ and $C_4$ do alternate in the rotation at $u$, implying that $C_3$ and $C_4$ cross each other oddly as curves, and therefore are both essential.
\end{proof}

\begin{lemma}\label{lem:Rotations}
If $G$ is a minimal $3$-connected graph that has an \iocro-drawing $D$ on the torus, but {\em does not} have an embedding on the torus, then $D$ contains an $X$-configuration $(C_1,C_2)$.
\end{lemma}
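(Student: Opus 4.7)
The proof has three moving parts: locating a vertex with a local obstruction, extracting four ``bad'' edges, and then promoting these to the two cycles of an $X$-configuration. Some vertex $v$ of $G$ must fail to be made even by edge-flips at itself, for otherwise independent flips at every vertex (which do not interfere with each other's evenness, since a flip at one vertex affects crossings only between the two flipped edges) could be applied simultaneously to produce an even drawing of $G$, and Theorem~\ref{thm:WHTS} would then embed $G$ on the torus, contradicting the hypothesis. Fix such a $v$. Lemma~\ref{lem:4obs} yields four edges $f_1, f_2, f_3, f_4$ at $v$ that cannot be made pairwise crossing-even by flips at $v$; their other endpoints $u_1, u_2, u_3, u_4$ are distinct since $G$ is simple. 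Further flips at $v$ place these four edges into the canonical form of the discussion preceding Lemma~\ref{lem:4obs}: they appear in cyclic rotation $f_1, f_2, f_3, f_4$ at $v$, the opposite pair $(f_1, f_3)$ crosses oddly, and all other pairs among the four cross evenly.

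Next I would build $C_1$ through $v$ using $f_1, f_3$ and $C_2$ through $v$ using $f_2, f_4$, with $V(C_1) \cap V(C_2) = \{v\}$. Equivalently, find vertex-disjoint paths $P_{13}$ from $u_1$ to $u_3$ and $P_{24}$ from $u_2$ to $u_4$ in the $2$-connected graph $G-v$. Once they exist, the two cycles are forced to be essential. By property~(iii), the crossing parity of the closed curves realized by $C_1$ and $C_2$ equals $H(C_1)^{T} \bigl(\begin{smallmatrix} 0 & 1 \\ 1 & 0 \end{smallmatrix}\bigr) H(C_2)$. Since $V(C_1) \cap V(C_2) = \{v\}$, every edge of $C_1$ not incident to $v$ is independent of every edge of $C_2$, so these pairs contribute $0$ by the \iocro{} hypothesis; the four adjacent pairs at $v$ between edges of different cycles, namely $(f_1, f_2), (f_1, f_4), (f_3, f_2), (f_3, f_4)$, all cross evenly in the canonical form and so contribute $0$; and the two cycles alternate in the rotation at $v$, contributing exactly $1$. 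The bilinear form therefore equals $1$, forcing $H(C_1) \neq 0$ and $H(C_2) \neq 0$, so both cycles are essential. Together with $V(C_1) \cap V(C_2) = \{v\}$ and the fact that the four edges of $C_1 \cup C_2$ incident to $v$ are precisely the four produced by Lemma~\ref{lem:4obs}, this directly verifies that $(C_1, C_2)$ is an $X$-configuration.

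The main obstacle is the existence of the two vertex-disjoint paths in $G - v$: this is a $2$-disjoint-paths problem with a prescribed pairing, which is not always solvable in a $2$-connected graph. We have flexibility, however: by additional flips at $v$ we can rearrange the rotation so that a different pairing of $\{u_1, u_2, u_3, u_4\}$ becomes alternating in a corresponding canonical form, and since the crossing parity of two closed curves representing fixed homology classes is invariant under local flips, the essentiality calculation above goes through and yields $1$ for \emph{any} alternating pairing in the corresponding canonical form. So it suffices to find \emph{any} pairing of the four terminals realized by vertex-disjoint connecting paths in $G - v$. The plan here is to combine Menger's theorem with minimality of $G$: if no pairing works, the $2$-disjoint-paths obstruction forces a small vertex cut in $G - v$ separating the $u_i$'s in a crossing pattern, and together with $v$ this produces a $3$-cut of $G$; the $2$-cut reduction used in the proof of Lemma~\ref{lem:nearly-3-connected}, combined with Lemma~\ref{lem:reduction333} and the fact that $G - v \prec G$, then yields a strictly $\prec$-smaller counterexample and contradicts minimality. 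Carrying out this last reduction—matching the shapes of $2$-disjoint-paths obstructions with the available torus-surgery moves—is where I expect the bulk of the work to lie.
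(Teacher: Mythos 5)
Your plan follows the same three-step structure as the paper's proof---locate a vertex $v$ that cannot be made even by flips, extract four bad edges via Lemma~\ref{lem:4obs}, build two cycles through $v$ and show both essential via a crossing-parity computation---and the first and last steps are sound. One inefficiency: the detour of re-flipping the rotation to make an arbitrary pairing \emph{alternating} is unnecessary, and even risky, since forcing a different pair to alternate can make the odd pair consecutive, which would no longer be the canonical form. In the single canonical form the parity is already $1$ for \emph{every} pairing of the four edges into $C_1$ and $C_2$: if the odd pair splits across the two cycles, its edge crossings contribute the odd count while the cycles merely touch at $v$; if the odd pair lies inside one cycle, the four ends alternate at $v$ and the local crossing at $v$ contributes it. The paper makes exactly this two-case observation.

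The genuine gap is that you never actually produce the two cycles. You correctly abandon the prescribed pairing $\{f_1,f_3\},\{f_2,f_4\}$, but then leave the existence of \emph{some} realizable pairing as ``the bulk of the work,'' hedging between Menger and an unsketched minimality reduction. Minimality is never needed for this lemma: since $G$ is $3$-connected, $G-v$ is $2$-connected, so applying Menger to $A=\{u_1,u_2\}$ and $B=\{u_3,u_4\}$ in $G-v$ (no single vertex separates $A$ from $B$, because deleting one vertex from the $2$-connected $G-v$ leaves it connected) yields two vertex-disjoint $A$-$B$ paths and hence a realizable pairing. The paper instead uses an explicit step-by-step construction: choose a path $P_{12}$ between two of the $u_i$ avoiding $v$ and the other two (such a pair exists); form $C_{12}=P_{12}+f_1+f_2$ after relabelling so the chosen pair is $u_1,u_2$; take a path $P_3$ from $u_3$ to $V(C_{12})-\{v\}$ in $G-\{v,u_4\}$, ending at $w$; take a path $P_4$ from $u_4$ to $(V(P_3)\cup V(C_{12}))-\{v,w\}$ in $G-\{v,w\}$. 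Each step is possible because removing at most two vertices from the $3$-connected $G$ leaves it connected, and the resulting union always contains two cycles meeting exactly in $v$. Filling in your cycle construction with either of these arguments would complete the proof.
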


Figure~\ref{fig:pairOfCycles} serves as an illustration. We emphasize that Lemma~\ref{lem:Rotations} does not require the embedding to be compatible, or weakly compatible. We do not know whether an $X$-configuration is forced to exist in that case.

\begin{proof}
If every vertex of $G$ in an \iocro-drawing of $G$ on the torus can be made even by flips, we obtain an embedding of $G$. This follows from Theorem~\ref{thm:WHTS}, the weak version of the Hanani-Tutte theorem.
Therefore, there exists a vertex $v$ for which this is not the case.
By Lemma~\ref{lem:4obs} there are four edges $e_i = vu_i$, $1 \leq i \leq 4$ incident to $v$ which cannot all be made to cross each other evenly by edge-flips.

\begin{figure}[htp]
\centering
\includegraphics[scale=1]{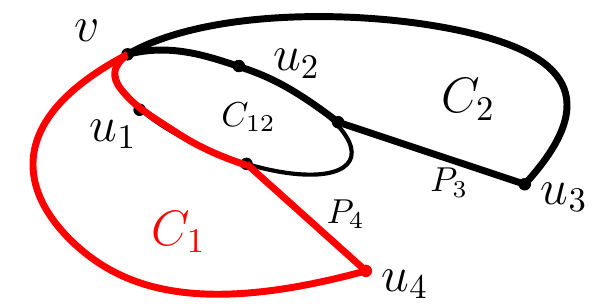}
 \caption{Finding an $X$-configuration $(C_1,C_2)$ in a minimal $3$-connected counterexample.}
\label{fig:Xconcounter}
\end{figure}

See Figure~\ref{fig:Xconcounter}. Since $G$ is $3$-connected, we can find edge-disjoint cycles $C_1,C_2$ intersecting in $v$ as follows:
Consider a minimal path in $G$ connecting two $u_i$ vertices that avoid $v$; without loss of generality this path is $P_{12}$ between $u_1$ and $u_2$
in $G- \{v,u_3,u_4\}$. Let $C_{12}$ denote the cycle obtained as the edge union of $E(P_{12}),\{e_1\}$ and $\{e_2\}$.
Let $P_3$ denote a path in $G- \{v,u_4\}$ between $u_3$ and $V(C_{12})- \{v\}$. Let $w$ denote the end vertex of $P_3$ on
$C_{12} - \{v\}$. Finally, let $P_4$ denote a path in $G- \{v,w\}$
between $u_4$ and $(V(P_{3})\cup V(C_{12})) - \{v,w\}$.
No matter where $P_4$ ends, the edge union of $E(C_{12}), E(P_3),E(P_4),\{e_3\}$ and $\{e_4\}$ contains a pair of cycles $C_1$ and $C_2$ meeting exactly in $v$.

As explained before Lemma~\ref{lem:4obs}, we can assume that the $e_i$ cross each other evenly, with the exception of one pair of edges which are not consecutive at $v$. If the ends of $C_1$ and $C_2$ do not alternate at $v$, then the two edges crossing oddly cannot both belong to the same $C_i$, since they would then be consecutive; therefore one belongs to $C_1$ and the other to $C_2$, which implies that $C_1$ and $C_2$ cross each other oddly (all edges of the cycles not incident to $v$ cross evenly, and the two cycles do not cross at $v$), hence both cycles are
essential. If the ends of $C_1$ and $C_2$ do alternate at $v$, then the two edges crossing oddly must belong to the same cycle (since otherwise they
would be consecutive), so again $C_1$ and $C_2$ cross each other oddly (at $v$, no other odd crossings), and both are essential.
\end{proof}

\subsection{The Persistence of the $X$-Configuration}\label{sec:PX}

In this section we show that Lemma~\ref{lem:reduction333} still works
when restricted to graphs with \iocro-drawings containing an $X$-configuration. With some restrictions, the same is true for Lemma~\ref{lem:nearly-3-connected}. We need these variants for our main induction.

\begin{lemma}\label{lem:Xreduction333}
If $G$ is a minimal graph that has an \iocro-drawing $D$ on the torus containing an $X$-configuration $(C, C_0)$, but {\em does not} have a [weakly] compatible embedding on the torus, then $D$ does not contain a cycle nearly-disjoint from $C$ (or $C_0$).
\end{lemma}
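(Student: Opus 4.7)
The plan is to adapt the proof of Lemma~\ref{lem:reduction333} so that the reduction preserves an $X$-configuration, allowing the minimality of $G$---taken within the class of graphs admitting an \iocro-drawing that contains an $X$-configuration but no [weakly] compatible embedding---to deliver the contradiction. By the symmetry between $C$ and $C_0$, I may assume $C'$ is nearly-disjoint from $C$. If $C'$ were essential, Lemma~\ref{lem:2disjoint} would already furnish a compatible embedding of $G$, so $C'$ must be non-essential. I then apply Lemma~\ref{lem:crossingFreeCycle} to clear both $C$ and $C'$ of crossings, let $\Delta$ be the disk bounded by $C'$, and build $G^*$ exactly as in Lemma~\ref{lem:reduction333}: empty the interior of $\Delta$ and plant a new even vertex $v$ inside $\Delta$ connected to every vertex of $V(C')$ by a crossing-free star. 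The bookkeeping $G^* \prec G$ goes through unchanged.

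The key new claim is that $G^*$ still contains an $X$-configuration. First I pin down the $X$-configuration vertex $u$: since $u$ is odd (otherwise the four edges at $u$ could be flipped to pairwise even crossings) while every shared vertex of $C$ and $C'$ is even by near-disjointness, $u \notin V(C')$. The touching condition at every shared vertex of $C,C'$ forces the two $C$-edges into a common rotational sector, hence onto the same side of $C'$; propagating along $C$, every arc of $C$ lies on a single side, which must be the exterior of $\Delta$ since $C$ is essential. Hence $u$ lies strictly outside $\Delta$, the cycle $C$ survives intact as $C^* := C$ in $G^*$, and the four $X$-configuration edges at $u$ are preserved. For the companion cycle I take $C_0^* := C_0$ if $C_0$ lies entirely outside $\Delta$; otherwise I shortcut $C_0$ through $v$ by concatenating the outside sub-arc of $C_0$ from $u$ to the first vertex $y \in V(C_0) \cap V(C')$ encountered in one direction, the star-path $y\,v\,y'$ where $y'$ is the first such vertex encountered in the opposite direction, and the outside sub-arc from $y'$ back to $u$. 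Since $V(C_0 - u) \cap V(C) = \emptyset$ and $v \notin V(C)$, we get $V(C^*) \cap V(C_0^*) = \{u\}$, and $C_0^*$ uses the same two edges of $C_0$ at $u$ as before.

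Essentiality of $C_0^*$ is a one-line intersection count: $C^* = C$ is crossing-free in the drawing of $G^*$, so $C^*$ and $C_0^*$ share no edge crossings, and the only shared vertex $u$ contributes an odd crossing since the four edges at $u$ alternate in the rotation (as forced by the $X$-configuration). Thus the $\ZN_2$-intersection number is $C^* \cdot C_0^* = 1$, forcing both cycles to be essential. The non-flippability of the four edges at $u$ carries over because the reduction does not touch any edge at $u$. Hence $(C^*, C_0^*)$ is an $X$-configuration in $G^*$; minimality gives $G^*$ a [weakly] compatible embedding, and reinserting $H$ into the face vacated by $v$ via Theorem~\ref{thm:HTS} (exactly as in Lemma~\ref{lem:reduction333}) yields a [weakly] compatible embedding of $G$, a contradiction. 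The technical subtlety I expect to wrestle with is ensuring that the preparatory clearing of $C$ and $C'$ does not itself destroy the $X$-configuration at $u$, since Whitney-style steps in Lemma~\ref{lem:crossingFreeCycle} can in principle alter crossing parities between non-independent edges at an odd vertex; this can be addressed by performing those Whitney tricks on edges of $C$ and $C'$ not incident to $u$, or by re-establishing the $X$-configuration in $G^*$ after the fact via the argument of Lemma~\ref{lem:Rotations}.
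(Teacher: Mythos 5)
Your proof takes essentially the same approach as the paper: build $G^*$ as in Lemma~\ref{lem:reduction333}, observe that $C$ survives intact (the touching condition forces $C$ entirely outside the disk) and that the odd vertex of the $X$-configuration cannot lie on $C'$, and shortcut $C_0$ through the new star vertex to get the companion cycle meeting $C$ only at that vertex, then appeal to minimality. One small remark: the alternation of the four edges at $u$ is not given a priori by the $X$-configuration definition but is forced once $C$ is crossing-free --- this is precisely the content of Claim~\ref{clm:Xmod}, which is the cleaner way to justify both the alternation and the essentiality of $C_0^*$; and your closing worry about the clearing step is a real subtlety that the paper also leaves implicit, relying on the compatibility and essentiality-preservation guarantees of Lemma~\ref{lem:G-C} and the flip-invariance of the $X$-configuration property.
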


\begin{proof}
We show that the proof of Lemma~\ref{lem:reduction333} still applies. To that end, it will be sufficient to verify that the \iocro-drawing of the graph $G^*$ constructed in that proof contains an $X$-configuration. Let $(C, C_0)$ be the $X$-configuration in $G$, and $C'$ be a cycle in $G$ nearly disjoint from $C$. By definition, $C$ and $C'$ only touch in even vertices. It follows that $C$ cannot intersect the interior of $H$ and
is therefore present, unchanged, in $G^*$. The vertex $v \in V(C) \cap V(C_0)$ is odd, by the definition of $X$-configuration, so it cannot lie on $C'$. If $C_0$ contains no edges of $H \cup C'$, then $C_0$ exists unchanged in $G^*$, and $(C, C_0)$ is the $X$-configuration in $G^*$ we are looking for.

Hence, $C'$ must contain a first
and a last vertex of $C_0$ as we traverse it from $v$ to $v$. We modify $C_0$ to directly connect these two vertices via the newly added vertex in $G^*$. This gives us a new cycle $C'_0$ which intersects $C$ exactly in $v$. Therefore, $C'_0$ must still be essential, and we have found
the $X$-configuration $(C, C'_0)$ in $G^*$ we needed.
\end{proof}

Lemma~\ref{lem:nearly-3-connected} showed that a minimal counterexample (with compatible embedding)
has to be $2$-connected, and placed restrictions on a $2$-cut. If the presence of $X$-configurations
is required, the $2$-cut part needs further restrictions, so we state the $2$-cut part separately.

\begin{lemma}
\label{lem:X2-connected}
If $G$ is a minimal graph that has an \iocro-drawing on the torus containing an $X$-configuration, but {\em does not} have a [weakly] compatible embedding on the torus, then $G$ is $2$-connected.
\end{lemma}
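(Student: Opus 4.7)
The plan is to follow the proof of the $2$-connectedness portion of Lemma~\ref{lem:nearly-3-connected}, threading the $X$-configuration $(C,C_0)$ through each reduction so that the minimality hypothesis, now restricted to graphs admitting an \iocro-drawing that contains an $X$-configuration, can still be invoked. Throughout I substitute Lemma~\ref{lem:Xreduction333} for Lemma~\ref{lem:reduction333} wherever the original proof appeals to a nearly-disjoint-cycle reduction.

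First I show $G$ is connected. Because $C$ and $C_0$ are connected subgraphs sharing the vertex $v$, their union lies in a single component $K^\ast$ of $G$. For any other component $K\neq K^\ast$ there are two possibilities: either $K$ contains an essential cycle, in which case that cycle together with $C$ gives two vertex-disjoint essential cycles and Lemma~\ref{lem:2disjoint} produces a compatible embedding of $G$, a contradiction; or $K$ contains no essential cycle, in which case Corollary~\ref{cor:nonessH} embeds $K$ compatibly in the plane, and the strictly smaller graph $G-V(K)\prec G$ still carries the $X$-configuration, so minimality of $G$ furnishes a [weakly] compatible embedding of $G-V(K)$ on the torus, which combines with the planar embedding of $K$ into a [weakly] compatible embedding of $G$, again a contradiction.

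Next I show $G$ has no cut vertex. Assume $x$ is a cut vertex, let $K_1,\dots,K_p$ be the components of $G-x$, and write $L_i=G[V(K_i)\cup\{x\}]$. Because the two $x$-incident edges of any cycle of $G$ must lie in a single $L_i$, each of $C$ and $C_0$ lies entirely in some $L_i$. If every $K_j$ contains an essential cycle, then two such cycles from distinct components are vertex-disjoint and Lemma~\ref{lem:2disjoint} gives a contradiction; so some $K_m$ contains no essential cycle. To invoke minimality on $G-V(K_m)$, I must verify that $G-V(K_m)$ still carries an $X$-configuration. If $v\neq x$, both $C$ and $C_0$ live in a single $L_\alpha$ and I choose $m\neq\alpha$; such an $m$ exists, otherwise every $K_j$ with $j\neq\alpha$ has an essential cycle, which together with an essential cycle inside $L_\alpha$ again produces two vertex-disjoint essential cycles. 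If $v=x$, then $x$ is odd (since $v$ is odd), and if $C\subseteq L_\alpha$, $C_0\subseteq L_\beta$ with $\alpha=\beta$ I proceed as in the previous subcase. If $\alpha\neq\beta$, then $C$ and $C_0$ are already vertex-disjoint outside $x$; splitting $x$ so that the two edges of $C$ at $x$ go to one copy and the two edges of $C_0$ at $x$ go to the other yields two genuinely vertex-disjoint essential cycles in the resulting drawing, Lemma~\ref{lem:2disjoint} supplies a compatible embedding of the split graph, and recontracting the split produces a [weakly] compatible embedding of $G$, using that the rotation at the odd vertex $x$ need not be preserved — again a contradiction.

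With $m$ chosen, the peel-off step proceeds as in Lemma~\ref{lem:nearly-3-connected}: if $x$ is odd, Lemma~\ref{lem:G-x} embeds $L_m$ in a small planar disk with $x$ on its boundary and minimality applied to $G-V(K_m)\prec G$ supplies a [weakly] compatible toroidal embedding into which $L_m$ is re-inserted near $x$; if $x$ is even, the more delicate argument from Lemma~\ref{lem:nearly-3-connected} goes through verbatim, using Lemma~\ref{lem:Xreduction333} in place of Lemma~\ref{lem:reduction333}. Note that $x$ being even already forces $v\neq x$, so the $X$-configuration is confined to a single $L_\alpha$ and is preserved by all modifications performed outside it. The main obstacle is the case $v=x$ with $\alpha\neq\beta$ above, where the $X$-configuration straddles the cut vertex and no $m$-choice keeps both cycles intact; the vertex-split resolution — separating $C$ and $C_0$ at the odd vertex $x$ to expose two vertex-disjoint essential cycles — is the essential new ingredient over the proof of Lemma~\ref{lem:nearly-3-connected}.
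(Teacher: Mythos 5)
Your proposal handles connectedness and the cut-vertex case $x\neq v$ soundly (although you do more work than necessary: since at most one of $C,C_0$ passes through $x\neq v$, any component $K$ of $G-x$ disjoint from $C\cup C_0$ must contain a cycle by the no-degree-one-vertex property, and that cycle is nearly disjoint from $C$ or $C_0$, so Lemma~\ref{lem:Xreduction333} gives an immediate contradiction without a peel-off). The subcase $v=x$ with $\alpha=\beta$ is also fine.

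The gap is in the subcase $v=x$ with $\alpha\neq\beta$: the vertex split you propose cannot be carried out as a drawing operation that preserves the \iocro\ property. The defining feature of the $X$-configuration is precisely that the four edges of $C\cup C_0$ at $v$ cannot all be made to cross each other evenly by flips, and (as argued in the proof of Lemma~\ref{lem:Rotations}) this forces the curves $C$ and $C_0$ to cross an odd number of times. After clearing $C$ of crossings via Lemma~\ref{lem:G-C} there are no edge–edge crossings between $C$ and $C_0$, so they must cross oddly \emph{at $v$}, i.e.\ the two $C$-edges and the two $C_0$-edges alternate in the rotation at $v$; then there is no contiguous arc of the rotation containing both $C$-edges and neither $C_0$-edge, so the split you describe is simply not a vertex-split in the sense defined in the paper. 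If instead you work in a drawing where the $C$-edges and $C_0$-edges happen not to alternate at $v$, then (again by the argument in Lemma~\ref{lem:Rotations}) the irreducible odd pair at $v$ consists of one $C$-edge and one $C_0$-edge; after splitting these become \emph{independent} edges that still cross oddly, so the resulting drawing of the split graph is no longer \iocro\ and Lemma~\ref{lem:2disjoint} cannot be invoked. In short, the $X$-configuration is exactly the obstruction to pulling $C$ and $C_0$ apart at $v$, and your split runs straight into it.

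The paper circumvents this by a different reduction for this subcase: it observes that the component $K$ of $G-v$ containing $C_0-v$ must be a tree (else a cycle in $K$ is vertex-disjoint from $C$, contradicting Lemma~\ref{lem:Xreduction333}), deletes $E(K)-E(C_0)$ to produce $G^*$ in which $v$ is the only odd vertex on $C_0$, applies Lemma~\ref{lem:HT1S} (the one-odd-vertex-on-an-essential-cycle lemma, which grew out of the $1$-spindle argument) to embed $G^*$ compatibly, and then re-inserts the planar piece $L=G[V(K)\cup\{v\}]$ alongside $C_0$. This is the missing ingredient your proposal needs in place of the vertex split.
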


\begin{proof}
Let $G$ contain an $X$-configuration $(C_1,C_2)$ with $C_1$ and $C_2$ intersecting in $v$. The proof of Lemma~\ref{lem:nearly-3-connected}
still shows that $G$ is connected.

Suppose that $G$ has a cut-vertex $x$. If $x \neq v$, then $C_1 \cup C_2$ is contained in $G-K$
for some component $K$ of $G-x$. If $L = G[V(K)\cup x]$ contains a cycle, we are done by Lemma~\ref{lem:Xreduction333},
since either $C_1$ or $C_2$ is vertex-disjoint from that cycle (not both of them can pass through $x \neq v$).

We conclude that $x = v$, so $x$ is odd. In that case, the original proof in Lemma~\ref{lem:nearly-3-connected}
works, as long as $C_1 \cup C_2$ lies in $G-K$ for some component $K$ of $G - v$. If that is not the case, then
$C_1-v$ and $C_2-v$ lie in different components of $G-v$. Let $K$ be the component of $G-v$ containing $C_2-v$
and let $L=G[V(K)\cup v]$. By Lemma~\ref{lem:G-C}, we can assume that $C_1$ is free of crossings.
By Lemma~\ref{lem:Xreduction333},
$K$ cannot contain a cycle (it would be vertex-disjoint from $C_1$), all essential cycles in $L$ pass
through $v$, and so, by Lemma~\ref{lem:G-x}, $L$ has a plane embedding, which is compatible, except for, possibly, at $v$ (in $L$, $v$ may be even; we can ignore this, however, since in the original drawing it is odd).
Let $G^*$ be the subgraph of $G$ obtained by removing all edges of $E(K)-E(C_2)$,  and all (resulting) isolated vertices.  Lemma~\ref{lem:HT1S} then applies to $G^*$ and $C_2$: Odd degree-2 vertices of $C_2$ can be easily made even, and hence, $v$ is the only odd vertex of $C_2$ in $G^*$. Therefore, we
know that $G^*$ has a [weakly] compatible embedding (except for at $v$) on the torus. We can then reinsert the compatible embedding of $L$ along $C_2$ to obtain a [weakly] compatible embedding of $G$: Every connected component of $K-C_2$ is connected by an edge with at most one other vertex of $C_2$  besides $v$.  This contradicts $G$ being a counterexample. (Note that we did not apply Lemma~\ref{lem:X2-connected} inductively in this case; we could not have, because removing $v$ destroyed the only $X$-configuration we knew about.)

In particular, $G$ is $2$-connected.
\end{proof}

\begin{lemma}
\label{lem:X2-cutodd}
If $G$ is a minimal graph that has an \iocro-drawing on the torus containing an $X$-configuration, but {\em does not} have a [weakly] compatible embedding on the torus, then $G$ has no 2-cut consisting of two odd vertices, unless both vertices belong to the $X$-configuration, with one of them belonging to both cycles.
\end{lemma}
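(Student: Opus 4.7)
The plan is to adapt the reduction scheme from Lemma~\ref{lem:nearly-3-connected}: given a 2-cut $\{x,y\}$ of odd vertices, I would pick a component $K$ of $G - \{x,y\}$ for which the attached side admits no essential cycle, use Lemma~\ref{lem:G-x} to embed $L(K) = G[V(K)\cup\{x,y\}] + xy$ in the plane, apply minimality to $G^* = G - V(K) + xy$, and splice. The new requirement is that the $X$-configuration $(C_1,C_2)$ at $v$ must persist (or be replaced) in $G^*$; otherwise minimality cannot be invoked. Because $x$ and $y$ are odd, their rotations may be reassigned at will during the splice, so no compatibility problem arises at the cut vertices.

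Suppose $\{x,y\}$ is such a 2-cut and the exception fails, so either (Case I) WLOG $y \notin V(C_1)\cup V(C_2)$, or (Case II) both $x,y$ lie in $V(C_1)\cup V(C_2)$ and $v\notin\{x,y\}$. In Case I with $x\neq v$, the set $V(C_1)\cup V(C_2)$ minus the at most one cut vertex on it is still connected and lies in a single component $K_1$, so any other component $K$ is disjoint from the $X$-configuration. Any essential cycle in $G[V(K)\cup\{y\}]$ would then be vertex-disjoint from $C_1$, contradicting Lemma~\ref{lem:2disjoint}, so the standard reduction goes through and $G^*$ retains $(C_1,C_2)$; minimality then produces a [weakly] compatible embedding of $G$, a contradiction. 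The easy subcase of Case II, when $x\in V(C_1)$ and $y\in V(C_2)$, is handled identically, since $(C_1-x)\cup(C_2-y)$ remains connected through $v$.

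The main obstacle is Case I with $x=v$: removing $\{v,y\}$ may split $C_1-v$ and $C_2-v$ into distinct components $K_1,K_2$, with no spare component to reduce against. Here I would instead apply Lemma~\ref{lem:G-x} to $v$ itself. If $G-v$ is non-essential, $G$ is planar and thus has a toroidal embedding compatible with $D$ (odd $v$ is the only vertex whose rotation could shift), a contradiction. Otherwise $G-v$ contains an essential cycle $C'$, and since every $K_1$--$K_2$ path in $G-v$ must use $y$, the cycle $C'$ lies entirely in $K_1\cup\{y\}$ or in $K_2\cup\{y\}$. Either way $C'$ is vertex-disjoint from the opposite cycle ($C_2$ or $C_1$), and Lemma~\ref{lem:2disjoint} closes the case.

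The other delicate subcase is Case II with $x,y\in V(C_1)\setminus V(C_2)$: writing $C_1-\{x,y\} = P_v\cup P_0$ with $v\in P_v$, it can happen that the arc $P_0$ lies in a component $K_2$ different from the component $K_1\supseteq P_v\cup C_2$, with no third component available. In that situation I would still remove $K_2$, but draw the new edge $xy$ of $G^*$ alongside $P_0$. The cycle $C_1^* = P_v+xy$ is then homologous to $C_1$ and hence essential, meets $C_2$ only at $v$, and uses the same four edges at $v$ as $(C_1,C_2)$, so Claim~\ref{clm:Xmod} certifies $(C_1^*,C_2)$ as an $X$-configuration in $G^*$. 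Minimality followed by the splice then yields a [weakly] compatible embedding of $G$, a contradiction, and the lemma follows.
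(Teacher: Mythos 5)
Your proof is correct and takes essentially the same approach as the paper: carve up the analysis by where $x,y$ sit relative to $C_1\cup C_2$ and the shared vertex $v$, run the 2-cut reduction from Lemma~\ref{lem:nearly-3-connected} whenever the $X$-configuration survives into $G^*$, and invoke Lemma~\ref{lem:2disjoint} or Lemma~\ref{lem:G-x} in the remaining degenerate situations. Two small stylistic differences: in the same-cycle subcase you appeal to Claim~\ref{clm:Xmod} to certify $(C_1^*,C_2)$, whereas the paper simply asserts that the shortened configuration is still an $X$-configuration; and in the $\{v,y\}$ split case the paper applies Lemma~\ref{lem:G-x} piecewise (embedding each $G[V(K)\cup\{v,y\}]+vy$ in the plane and gluing), while you apply Lemma~\ref{lem:G-x} once to $G$ and $v$ after showing $G-v$ is non-essential --- a slightly cleaner route to the same planar embedding. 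You might want to say explicitly that when $C_1-v$ and $C_2-v$ remain in a single component of $G-\{v,y\}$ the standard reduction already applies (the paper states this as a separate subcase), but this is implicit in your setup and not a real gap.
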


The lemma does not cover the case in which the 2-cut consists of the vertex in which the two cycles of the $X$-configuration intersect together
with another odd vertex belonging to one of the cycles. We will deal with this situation later when investigating bridges.

\begin{proof}
By Lemma~\ref{lem:X2-connected}, we know that $G$ is two-connected.  Let $\{x,y\}$ be a cut-set
for which both $x$ and $y$ are odd. Let $(C,C_0)$ be the $X$-configuration, with $v$ being the vertex in which $C$ and $C_0$ intersect.
If $C \cup C_0$ contains at most one of $x$ and $y$, say $y$, and that vertex is not $v$,
then the proof in Lemma~\ref{lem:nearly-3-connected} can be used:
As $K'$  we choose the connected component of $G- \{x,y\}$ containing $C \cup C_0 - \{y\}$, so that $G[V(K')\cup \{y\}]$ contains $C \cup C_0$
which forces $G[V(K)\cup \{x\}]$ to not have an essential cycle. The proof can then be completed as described, since we induct on a graph $G^*=G-V(K)+xy$ which contains the $X$-configuration $(C,C_0)$.

Next, let us consider the case that both $x$ and $y$ belong to $C \cup C_0$, both of them different
from $v$. If $x$ and $y$ lie on the same cycle, $C$ or $C_0$, then we can choose $K$ and $K'$ so that
the other cycle belongs to $G[V(K')\cup \{y\}]$. The rest of the proof is as in Lemma~\ref{lem:nearly-3-connected},
but we observe that $G^*=G-V(K)+xy$ contains an $X$-configuration based on $(C, C_0)$ in which one of the cycles may
have been shortened by replacing the path between $x$ and $y$ on that cycle by $xy$.
If $x$ and $y$ lie on different cycles, $C\cup C_0 - \{x,y\}$ is connected, so belongs to the same component, and we let $K'$ be that component.
The rest of the proof is then as before.

We are left with the case that the $2$-cut has the form $\{v, y\}$ for some vertex $y \not\in V(C \cup C_0)$.
If $C-\{v\}$ and $C_0-\{v\}$ both belong to $G[V(K)]$ for the same component $K$ of $G- \{v,y\}$, we can argue as in Lemma~\ref{lem:nearly-3-connected}, since the $G^*$ constructed in that proof contains the $X$-configuration $(C,C_0)$. Otherwise, $C-\{v\}$ and $C_0-\{v\}$ belong to different components of $G - \{v,y\}$. In that case,
we can argue that if $G[V(K) \cup \{y\}]$ contains an essential cycle for any component $K$ of $G - \{v,y\}$, we are done by Lemma~\ref{lem:2disjoint}, since that cycle is vertex-disjoint from either $C$ or $C_0$. Therefore, any essential cycle in a $G[V(K) \cup \{v, y\}]$ must pass through $v$, and this remains true if we add an edge $vy$ drawn along a path, call it $vy$, connecting $v$ and $y$ in $G - K$. Hence,  $G[V(K) \cup \{v, y\}] + vy$ has a compatible embedding in the plane,by Lemma~\ref{lem:G-x}, for any component $K$ of $G- \{v,y\}$. We can then combine all of these plane embeddings to obtain a [weakly] compatible embedding of $G$ in the plane.
\end{proof}

\section{Hanani-Tutte on the Torus}\label{sec:HTT}

In this section we prove Theorem~\ref{thm:HTtorus}. We explain how to view \iocro-drawings of graphs on a torus as cylinder-drawings of an essential cycle with bridges in Section~\ref{sec:CV}, and then introduce some tools to simplify bridges in Section~\ref{sec:RB}. The main proof is in Section~\ref{sec:PHTT}, with the exception of a lemma which we prove separately in Section~\ref{sec:PLCB}, after presenting a small set of customized tools in Section~\ref{sec:TRL}.

\subsection{The Cylinder-View}\label{sec:CV}

In this section we introduce a different way of looking at an \iocro-drawing on the torus. We assume that the drawing contains an essential cycle $C$.

\paragraph{Cylinder.}
We can assume that $C$ is crossing-free (by Lemma~\ref{lem:G-C}).
If we cut the torus along $C$ we obtain a
drawing $D'$ of a graph $G'$ on a cylinder $\mathcal{C}$,
in which $C$ is replaced by two cycles of the same length, $C^L$ and $C^R$.
Indeed, $(\mathbb{S}^1\times\mathbb{S}^1)- C \cong I \times \mathbb{S}^1$,
where $I$ is an open interval and $\mathbb{S}^1$ is the $1$-sphere.
In the current section $G'$ always denotes such a cylindrically drawn graph.
 The boundary of $\mathcal{C}$ consists of the disjoint union of $C^L$ and $C^R$.
$D'$ can have independent odd pairs of edges touching the boundary.
We fix a drawing $D'$ of $G'$ on the cylinder as described above.


\paragraph{Bridge.}
Given a subgraph $F$ of $G$ (possibly with no edges), an
{\em $F$-bridge} is a subgraph of $G$
that consists of either $(i)$ an edge of $E(G)- E(F)$
with both endpoints in $V(F)$, or $(ii)$ a component of $G- V(F)$
together with all edges between the component and $V(F)$ and their endpoints in $V(F)$.
We call $F$-bridges of type $(ii)$ {\em non-trivial}.
A subgraph $F$ and its $F$-bridges partition $E(G)$.
The vertices of $F$ in an $F$-bridge are its
\emph{feet}, the edges of the component incident to its feet are its \emph{legs},
and, for type $(ii)$, its component of $G- V(F)$ is its \emph{core}.
In $G'$, $C$-bridges correspond to $(C^L\cup C^R)$-bridges, identical except for their feet.

\paragraph{Upper Indices $L$ and $R$.}
Let $v\in V(G)$ be a vertex on $C$. We denote by $v^L$ and $v^R$ its
copy on $C^L$ and $C^R$, respectively, in $G'$.
Let $e\in E(G)$ be an edge intersecting $C$ (graph-theoretically, $C$ is free of crossings).
We denote by $e^L,e^{R}$ and $e^{LR}$,
respectively, its corresponding edge in $G'$ intersecting
only $C^L, C^R$ and both $C^L$ and $C^R$, respectively, in $G'$.
Only edges of $C$ can have more than one corresponding
edge in $G'$. Then $e^{LR}$ must be an edge joining $C^L$ with $C^R$,
and $e^L$ an edge with at least one end vertex on $C^L$ and none on $C^R$.
Similarly, for $e^R$. (Note that $e^L$, $e^R$, and $e^{LR}$ may be undefined.)
If $P$ is a path in $C$ we denote by $P^L$ and $P^R$ its copy on $C^L$ and $C^R$, respectively, in $G'$.


Let $H\subseteq G$ be a subgraph of a $C$-bridge.
We denote by either $H^L,H^{R}$ or $H^{LR}$ the subgraph of  $G'$ corresponding to $H$, depending on whether the subgraph corresponding to $H$ in $G'$  intersects (in its feet) either only  $C^L$, only $C^R$, or both $C^L$ and $C^R$, respectively.

For other subgraphs
of $G'$ the upper indices cannot always be used, for example, a subgraph containing parts of $C^L$ and $C^R$.
In these remaining cases, we write the subgraph of $G'$ without using the upper index.

\paragraph{Supporting Independent Odd Pairs}

If the two edges of an odd pair in $G$ attach to the same vertex $u \in C$ on opposite sides of (the crossing-free) $C$, then
the two edges turn into an independent odd pair of edges
in $G'$, attaching to $u^L$ and $u^R$. We say that $u$ (or $u^L$ and $u^R$) {\em support(s) an independent odd pair (relative to $C$)}.
If both edges are legs of the same $C$-bridge $H$, we say that {\em $H$ supports an independent odd pair at $u$} (or $u^L$ and $u^R$).

\paragraph{Cycles and Paths on a Cylinder.}
The complement of a closed curve drawn on a cylinder is partitioned into \emph{interior} and \emph{exterior} according to the two-coloring of the connected components.
Whenever we talk about an interior or exterior of a curve (or a cycle)
in the drawing of $G'$ we specify which components are understood
to be the interior and exterior.

An \emph{$L$-diagonal} is a path $P^L$  connecting a pair of distinct vertices on $C^L$
internally disjoint from both $C^L$ and $C^R$.
Similarly we define an $R$-diagonal.
Let $P^L$ be an $L$-diagonal.
The \emph{$L$-foundation of $P^L$} (similarly we define a $R$-foundation) is a path $P'^L$
contained in $C^L$ connecting $u^L$ with $v^L$ such that the counterpart in $G$ of the cycle $C'=P^L\cup P'^L$ is non-essential.
By viewing $C^R$ from the perspective
of $C^L$ as being ``at infinity'',
for non-essential cycles $C'$ obtained in this way the interior is defined
as the union of regions in the two-coloring of the complement of $C'$
having the same color as the component bounded by $P'^L$, and similarly if the roles of $C^L$ and $C^R$ are exchanged.
If we concatenate $P^L$ with the path that is complementary to $P'^L$ on $C^L$, we get an essential cycle
 by Lemma~\ref{lem:3PC}.

An \emph{$LR$-diagonal} is a path $P^{LR}$  in $G'$ joining a vertex  on $C^R$ with a vertex on  $C^L$  internally disjoint from $C^R \cup C^L$.

\paragraph{Three-Stars.}

A {\em three-star} is a $K_{1,3}$ which occurs as a bridge of three vertices $\{u,v,z\}$. Our goal will
be to show that a counterexample to Hanani-Tutte can be reduced to a set of pairwise edge-disjoint three-stars
that are  $\{u,v,z\}$-bridges, where none of $u,v,z$ are even. (For this we need some additional tools developed in the following sections.)
The underlying graph then must be a $K_{3,t}$, which is not toroidal for $t \geq 7$.
We can then complete the proof by using the results on Kuratowski minors from Section~\ref{sec:KM}.

\subsection{Reducing Bridges}\label{sec:RB}

The easiest case occurs when all independent odd pairs in a cylindrical drawing are incident on the same vertex of the $X$-configuration. Lemma~\ref{lem:HT1S} deals with this case, but we restate it here from the cylindrical point of view. This is one of the few results which does not require minimality.

\begin{lemma}\label{lem:Xuiop}
    Suppose $G$ contains an essential cycle $C$. If there is at most one vertex on $C$ which supports independent odd pairs of edges in $G'$, then $G$ has a compatible embedding in the torus.
\end{lemma}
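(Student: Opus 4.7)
Plan. The proof is a cylindrical restatement of Lemma~\ref{lem:HT1S}. First I would apply Lemma~\ref{lem:G-C} to clear $C$ of crossings while keeping the drawing \iocro\ and compatible with the original. Let $v^{*}$ denote the unique vertex of $C$ supporting an independent odd pair in $G'$, if such a vertex exists (and otherwise set $v^{*}$ to be undefined).

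The next step is to argue that every vertex $u\in V(C)\setminus\{v^{*}\}$ can be made even in $G$ by edge-flips at $u$. Because $C$ is crossing-free, the two $C$-edges at $u$ cross every other edge at $u$ evenly and contribute no odd crossings. Because $u$ supports no independent odd pair, any two bridge-edges at $u$ attaching on opposite sides of $C$ already cross evenly in $G$. Hence the only potentially odd crossings at $u$ are between bridge-edges on the same side of $C$, and these can be eliminated by edge-flips restricted to one side at a time, which keep $C$ crossing-free and do not introduce any new cross-side odd crossings. Since such flips are only performed at vertices that are currently odd (a vertex that is already even needs no attention), the rotations at vertices that are even in the original drawing are left untouched, and so the whole redrawing remains compatible with the original.

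After this reduction, the only possibly odd vertex of $C$ is $v^{*}$. If $v^{*}$ either does not exist or is itself even, I would invoke Corollary~\ref{cor:esseven} to obtain a compatible embedding of $G$ on the torus. Otherwise $v^{*}$ is the unique odd vertex of $C$, and Lemma~\ref{lem:HT1S} provides a compatible embedding. The step that requires the most care—and where the hypothesis of the lemma is really used—is the ``make $u$ even by same-side flips'' reduction: Lemma~\ref{lem:4obs} in principle permits a $4$-edge obstruction at $u$, so one has to use the absence of a cross-side odd pair at $u$, together with the fact that the same-side bridge-edges at $u$ sit in a linear (not cyclic) sub-arrangement of the rotation between the two $C$-edges, to rule out or circumvent such an obstruction.
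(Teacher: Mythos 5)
Your overall strategy --- clear $C$ of crossings, reduce to a situation where Lemma~\ref{lem:HT1S} (or Corollary~\ref{cor:esseven}) applies --- is the same as the paper's, and you correctly flag the step that is most in doubt. Unfortunately, that step is where the argument actually breaks down.

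You claim that the same-side bridge-edges at a vertex $u \in V(C)\setminus\{v^{*}\}$ can always be made pairwise even by flips, leaning on the fact that they form a ``linear'' sub-arrangement between the two $C$-edges in the rotation at $u$. This is false. The bridge-edges on one side of $C$ at $u$ are all mutually \emph{adjacent} (they share the endpoint $u$), so the \iocro\ hypothesis places no constraint whatsoever on their mutual crossing parities. In particular four of them, say $l_1,l_2,l_3,l_4$ in this cyclic sub-order, can realize exactly the four-edge obstruction discussed just before Lemma~\ref{lem:4obs}: $l_1$ and $l_3$ cross oddly and all other pairs among the four cross evenly. The ``linear'' position of the $l_i$ between the two crossing-free $C$-edges does not help: a flip moving some $l_i$ around the rotation --- past the $C$-edges, past the $R$-side edges, and back to the other end of the $L$-block --- returns $l_1,\dots,l_4$ to the \emph{same cyclic sub-order}, so the bad pattern persists and $u$ cannot be made even by any sequence of flips. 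Your reduction can therefore leave several odd vertices on $C$, and neither Lemma~\ref{lem:HT1S} nor Corollary~\ref{cor:esseven} applies.

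The paper avoids trying to make $u$ even ``in place'' and uses a \emph{vertex split} instead: split off one new vertex carrying the $L$-side bridge edges and (if needed) another carrying the $R$-side ones, each joined to $u$ by a crossing-free edge. Now $u$ has degree at most $4$ with all incident edges crossing-free, so $u$ is trivially even, with no appeal to the internal parity structure of the bridge-edges. The split-off vertices may well remain odd, but they no longer lie on $C$, so after doing this for every $u\in V(C)\setminus\{v^{*}\}$ the cycle $C$ has at most one odd vertex and Lemma~\ref{lem:HT1S} applies to the modified graph; contracting the split edges then recovers the rotations at the even vertices and yields a compatible embedding of $G$. The key observation you missed is that one only needs the vertices \emph{on} $C$ to be even --- the troublesome mutual parities among same-side bridge legs can be exported to vertices off $C$, where they cease to matter.
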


\begin{proof}
If it exists, let $v$ be the vertex on $C$ supporting all independent odd pairs. Suppose $u$ is an arbitrary vertex on $C$, different from $v$ if $v$ exists. We can then apply (at most) two vertex-split operations to $u$, one for each (non-empty) side of $C$ with edges incident to $u$ on that side attached, turning $u$ into an even vertex of degree at most $4$; the drawing remains \iocro. After performing these splits for all vertices on $C$, other than $v$ if it exists, we can apply Lemma~\ref{lem:HT1S} to obtain a compatible embedding of the modified $G$. Contracting the edges which resulted from the splits, we obtain a compatible embedding of $G$.
\end{proof}

The following lemma will be used to deal with a situation not covered by Lemma~\ref{lem:X2-cutodd}, where one of the feet is the shared vertex of $C$ and $C_0$.
Recall that a $C$-bridge is {\em non-trivial} if it not an edge, so it must contain a vertex in its core.


\begin{lemma}\label{lem:X2ftCBridge}
    Suppose that $G$ is a minimal graph that has an \iocro-drawing on the torus containing an $X$-configuration $(C,C_0)$, but {\em does not} have a weakly compatible embedding on the torus. The following are true:
    \begin{itemize}
        \itemi If there is a non-trivial $C$-bridge $H$ which has exactly two feet on $C$, both of which are odd, then $H$ contains $C_0$. (In particular, there exists at most one such bridge $H$.)
        \itemii If the $C$-bridge containing $C_0$ has exactly two feet on $C$, it is not possible that it supports independent odd crossings at both its feet.
    \end{itemize}
\end{lemma}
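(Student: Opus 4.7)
The plan is to prove both parts by contradiction, using the 2-cut reduction pattern developed in Lemmas~\ref{lem:nearly-3-connected} and~\ref{lem:X2-cutodd}, tuned so that the $X$-configuration survives every reduction step.

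For part (i), I would assume $H$ is a non-trivial $C$-bridge with feet $\{v,y\}$ both odd and $C_0\not\subset H$. Then $\{v,y\}$ is a 2-cut of $G$. Let $K$ be the core of $H$, and set $L:=G[V(K)\cup\{v,y\}]+vy$ and $G^{*}:=G-V(K)+vy$, with the new edge $vy$ drawn along a $v$-$y$ path through $H$ as in the proof of Lemma~\ref{lem:nearly-3-connected}, keeping the drawing \iocro. The first key observation is that every essential cycle in the subdrawing of $L$ must pass through $v$: any such cycle avoiding $v$ has vertex set contained in $V(K)\cup\{y\}$, so it either lies inside $K$ (and is therefore vertex-disjoint from $C$) or meets $V(C)$ only at $y$ (and is therefore vertex-disjoint from $C_0$, since $V(C)\cap V(C_0)=\{v\}$ forces $y\notin V(C_0)$); in either case Lemma~\ref{lem:Xreduction333} is violated. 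Applying Lemma~\ref{lem:G-x} with distinguished vertex $v$ yields a planar embedding of $L$ compatible with the subdrawing of $L-v$, which I would realize in a small disk with $vy$ on its boundary. Meanwhile $G^{*}$ retains the $X$-configuration $(C,C_0)$ (because $C_0$ lies outside $K$) and satisfies $G^{*}\prec G$, so minimality gives a weakly compatible embedding of $G^{*}$ on the torus. Splicing the disk-embedding of $L$ into a face of that embedding bounded by $vy$ and deleting $vy$ if it was not originally present produces a weakly compatible embedding of $G$---the odd rotations at $v,y$ absorb any mismatch across the splice---the desired contradiction.

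For part (ii), I would assume $B$ contains $C_0$, has feet $\{v,w\}$, and supports independent odd pairs $(l_1,l_2)$ at $v$ and $(l_3,l_4)$ at $w$. Any path $P$ inside the core of $B$ joining the core endpoints of $l_3$ and $l_4$ closes up with $l_3,l_4$ into a cycle $C'_0$ through $w$ lying entirely inside $B$, and the flip-parity invariance argument at the end of the proof of Lemma~\ref{lem:Rotations} (the four edges of $C$ and $C'_0$ at $w$ interleave with one odd pair among them) shows that $(C,C'_0)$ is a second $X$-configuration of the drawing, this one centered at $w$. My strategy is to exploit the coexistence of $(C,C_0)$ and $(C,C'_0)$ to produce a pair of vertex-disjoint essential cycles. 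Since $V(C_0)\cap V(C)=\{v\}$ and $V(C'_0)\cap V(C)=\{w\}$ and both cycles lie inside $B$, the only possible sharing is inside the core of $B$, so I would choose $P$ to avoid $V(C_0)$; once $V(C_0)\cap V(C'_0)=\emptyset$ is achieved, Lemma~\ref{lem:Xreduction333} applied to $C_0$ with the nearly-disjoint witness $C'_0$ delivers the contradiction.

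The hard part will be the case when every choice of $C'_0$ is forced to meet $C_0$ inside the core of $B$. Any unavoidable shared vertex $u$ must be even, since otherwise $\{v,u\}$ or $\{w,u\}$ would be a 2-cut consisting of two odd vertices not matching the exceptional shape allowed by Lemma~\ref{lem:X2-cutodd}. I would then perform a vertex-split at $u$---which preserves the \iocro-property because $u$ is even and affects no rotation of an even vertex---separating $C_0$ and $C'_0$ into vertex-disjoint essential cycles in the split graph. Applying Lemma~\ref{lem:Xreduction333} to the split graph forces a weakly compatible embedding which, upon contracting the split edge back, lifts to a weakly compatible embedding of $G$, the final contradiction.
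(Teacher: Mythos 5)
Your proof of part~(i) follows the paper's argument closely, but it has a small gap right at the start: you assume without comment that the vertex $V(C)\cap V(C_0)$ is one of the two feet of $H$ (you name it $v$ and then use $V(C)\cap V(C_0)=\{v\}$). The paper disposes of the alternative first: if the shared vertex $u$ of $C$ and $C_0$ is a foot of \emph{neither}, then the path in $H$ between its two feet, closed up along $C$ while avoiding $u$, is a cycle vertex-disjoint from $C_0$, contradicting Lemma~\ref{lem:Xreduction333}. (One could also invoke Lemma~\ref{lem:X2-cutodd} directly on the odd 2-cut $\{v,y\}$.) With that step inserted, the rest of your argument --- essential cycles in $H$ through the shared foot, Lemma~\ref{lem:G-x}, splice $L$ into a weakly compatible embedding of $G^*=G-V(K)+vy$, using the oddness of both feet --- is the same as the paper's.

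For part~(ii), your approach genuinely diverges from the paper's, and there are real gaps. Constructing $C'_0$ from $l_3, l_4$ and a core path $P$, and recognizing that $(C,C'_0)$ is a second $X$-configuration, is correct and matches the paper's cycle $C_1$. But your plan to choose $P$ so that $V(C_0)\cap V(C'_0)=\emptyset$, and then fall back on one ``unavoidable shared vertex $u$,'' does not work in general. First, there is no reason a \emph{single} forced shared vertex exists: the part of $C_0$ in the core of $B$ is a path that may separate the endpoints of $l_3$ and $l_4$, forcing every choice of $P$ to meet it, but different $P$'s may meet it at different vertices. Second, the assertion that any such shared vertex $u$ would give an odd 2-cut $\{v,u\}$ or $\{w,u\}$ is unjustified; being on every $C'_0$ (relative to one fixed $C_0$) is not the same as separating the graph. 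Third, even granting an even shared vertex $u$, a vertex-split separating $C_0$ from $C'_0$ is only possible if their edges \emph{touch} (do not interleave) at $u$ --- but in that case $C_0$ and $C'_0$ are already nearly disjoint and Lemma~\ref{lem:2disjoint} applies directly, so the split buys nothing; whereas if they \emph{cross} at $u$, no split can separate them and your argument ends. (Also, a vertex-split \emph{adds} an edge, so the resulting graph is not $\prec$-smaller, and Lemma~\ref{lem:Xreduction333} is a non-existence statement about the minimal counterexample, not a producer of embeddings; you would have to appeal to Lemma~\ref{lem:2disjoint} directly.) The paper's actual proof of~(ii) is substantially more involved: after establishing $(C,C_1)$, it shows $H$ is essentially the only $C$-bridge, passes to a reduced graph $G_0$ on the edge $uv$, deduces that $G_0$ has no $X$-configuration, analyzes the $C_0$-bridges of $G_0$, and then runs an induction on those bridges culminating in an application of Lemma~\ref{lem:Xuiop}. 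None of that machinery is replaceable by a disjoint-cycles argument alone.
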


\begin{proof}
   Let $u$ be the intersection of $C$ and $C_0$. We start by proving $(i)$.

   \begin{figure}[htp]
    \centering
    \includegraphics[scale=0.7]{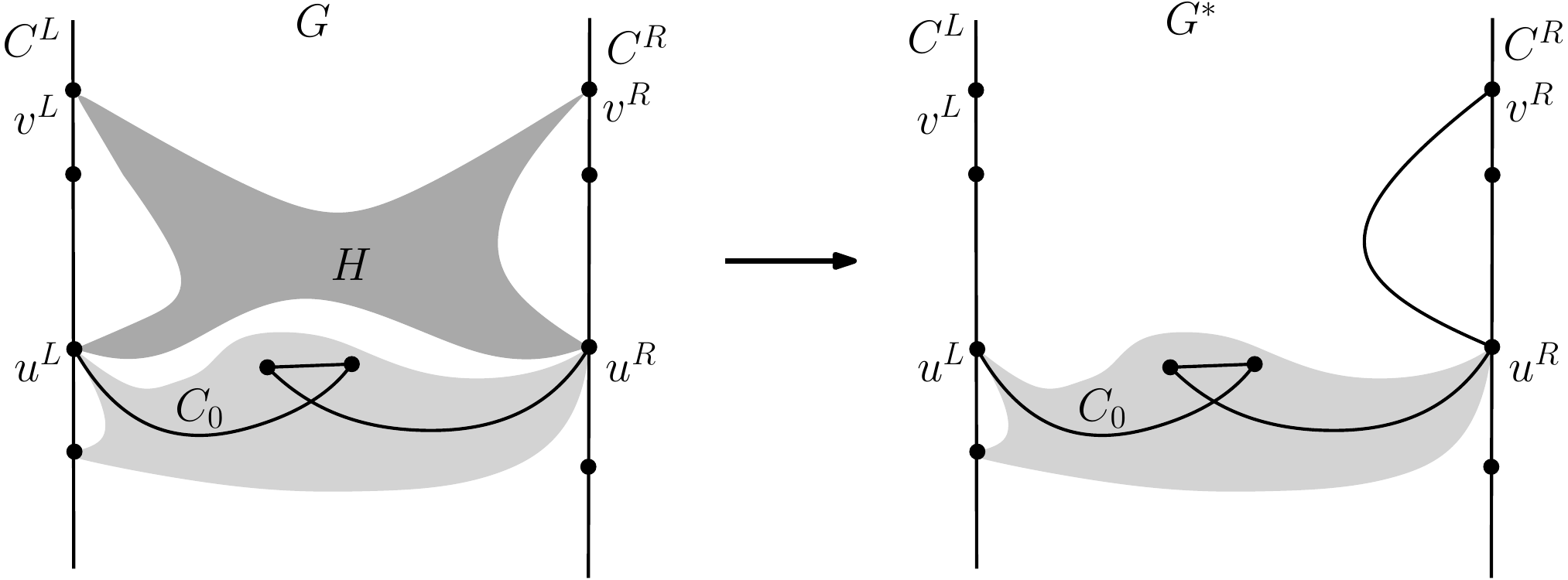}
    \caption{{\em Left}: A $C$-bridge $H$ with exactly two feet $u$ and $v$, and not containing $C_0$. {\em Right:} The reduced graph $G^*$ in the proof of Lemma~\ref{lem:X2ftCBridge}(i), in which the $C$-bridge $H$ is replaced with the edge $uv$ drawn along a path in $H$.}
    \label{fig:lemma62}
\end{figure}
   Let $H$ be a non-trivial $C$-bridge with exactly two feet on $C$ both of which are odd.  If $u$ is not one of the feet of $H$ we can apply Lemma~\ref{lem:Xreduction333}, since in this case $H \cup (C - \{u\}$) contains a cycle vertex-disjoint from $C_0$. So $u$ must be one of the feet of $H$. Let the other foot of $H$ be $v$. Suppose $H$ does not contain $C_0$. This situation is illustrated on the left side of Figure~\ref{fig:lemma62}; note that $H$ may not attach to both $v^L$ and $v^R$.

   Then $H-u$ cannot contain an essential cycle, since the cycle would be disjoint from $C_0$, contradicting Lemma~\ref{lem:2disjoint}.
   Therefore all essential cycles (if any) in $H$ pass through $u$. The same is true for $H + uv$, where we draw the edge $uv$ along a $uv$-path in $C$.
   By Lemma~\ref{lem:G-x}, $H+uv$ has a plane embedding, in which the embedding of $H-\{u\}$ is compatible with the original drawing. Let
   $G^* = G - (V(H)- \{u,v\}) + uv$. To obtain a drawing of $G^*$ we follow $G$ except that the edge $uv$ is drawn along a path in $H$. See the right side of Figure~\ref{fig:lemma62} for an illustration. We have that $G^* \prec G$, since $H$ is non-trivial,
   and $G^*$ contains $(C,C_0)$. By minimality, $G^*$ has a weakly compatible embedding. We can then embed $H+uv$ alongside $uv$ in the embedding of $G^*$; since both $u$ and $v$ are odd, changing their rotation is acceptable. This gives us a weakly compatible embedding of $G$, a contradiction. We conclude that $H$ must contain $C_0$; in other words it is the unique non-trivial $C$-bridge with two odd feet.

   To establish $(ii)$, we assume for the sake of contradiction, that $H$ supports independent odd crossings at both its feet $u$ and $v$. Let $P_v^{LR}$ be the path in $H$ from $v^L$ to $v^R$
   that starts and ends with the pair of edges involved in the independent odd crossing at $v$. Then $P_v^{LR}$ corresponds to an essential cycle $C_1$ in $G$. See the left side of Figure~\ref{fig:lemma62-2} for an illustration.
   We have two $X$-configurations then: $(C,C_0)$ and $(C,C_1)$.
    By Lemma~\ref{lem:Xreduction333},\\

    (*) there exists no cycle in $G$ nearly-disjoint from $C_1$. \\

   We claim that there is no non-trivial $C$-bridge besides $H$, and there can only be one trivial $C$-bridge, namely $uv$. Consider a $C$-bridge $H'$ other than $H$, trivial or not.
   Then $H'$ must have at least two feet on $C$ if $H'$ is trivial, and the same is true if $H'$ is non-trivial, by Lemma~\ref{lem:X2-connected}. The bridge $H'$ cannot
   have a foot different from both $u$ and $v$: if it did, there would be a cycle in $H' \cup C - \{u\}$ vertex-disjoint from $C_0$, or  in $H' \cup C - \{v\}$ vertex-disjoint from $C_1$, contradicting Lemma~\ref{lem:Xreduction333}. Hence, $H'$ must have exactly two feet, which are $u$ and $v$, on $C$. Since $H'$ does not contain $C_0$, part $(i)$, which we already proved, implies that $H'$ is trivial and consists of the single edge $uv$. It follows that $H$ is the only {\em non-trivial} $C$-bridge. and the only possible trivial $C$-bridge is $uv$.

   \begin{figure}[htp]
    \centering
    \includegraphics[scale=0.7]{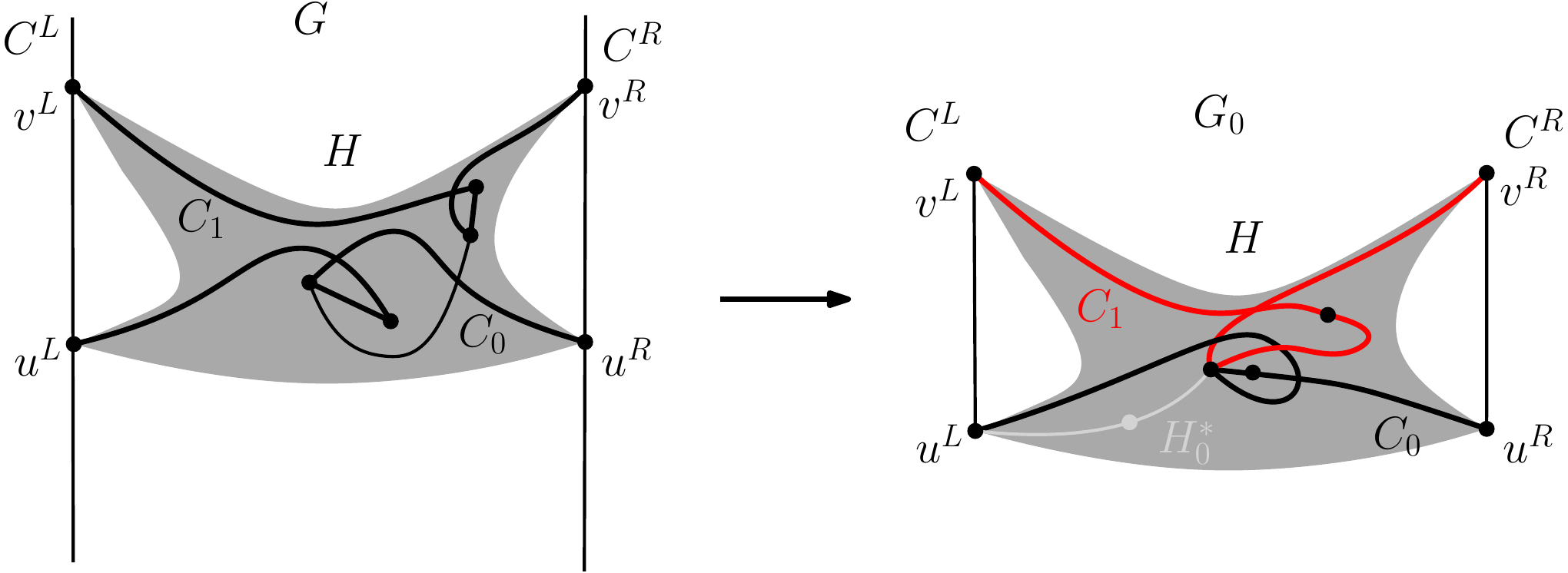}
    \caption{The cylinder view of the reduced graph $G_0$ in the proof of Lemma~\ref{lem:X2ftCBridge}(ii), in which the path $uwv$ (on the left) is removed  from $G$ (on the right). Even though the illustrations suggest so, $C_0$ and $C_1$ need not be disjoint.}
    \label{fig:lemma62-2}
   \end{figure}

   In particular, $u$ and $v$ are the only feet on $C$. Let $G_0$ be obtained from $G$ by removing $V(C)-\{u,v\}$ and adding the edge $uv$ (if $uv$ does not belong to $G$ already) drawn along $C$.  Since $u$ and $v$ are odd, a weakly compatible embedding of $G_0$ could be extended to a weakly compatible embedding of $G$ by embedding the one or two missing $u,v$-paths of $C$ close to $uv$. By minimality of $G$, we have \\

    (**) $G_0$ does not admit a weakly compatible embedding on the torus, and hence, does not contain an $X$-configuration by the minimality of $G$.  \\

    We claim that every $C_0$-bridge in $G_0$ except the one containing $v$, which we denote by $H_0$, is a path of length at most $2$ ending in $u$. 
    To see this, let $H^*_0$ be a $C_0$-bridge other than $H_0$; see the right side of Figure~\ref{fig:lemma62-2}. Since $H^*_0$ cannot contain a cycle nearly-disjoint
    from $C$, it has at most two feet on $C_0$, one of which is $u$, since $C_0 \cap C = \{u\}$. By the same token,
    every cycle in $H^*_0$ must pass through $u$. By Lemma~\ref{lem:X2-connected}, $u$ is not a cut-vertex of $G$, so $H^*_0$ must have a second foot on $C_0$, which by $(*)$ must be a vertex on $C_1$.  The core of $H^*_0$ is acyclic and it has only one edge to each of its feet, or else $H^*_0$ would have a cycle disjoint from $C$ or $C_1$.  Then $H^*_0$ must be a subdivided edge, since otherwise it would have a cut-vertex of $G$, contradicting Lemma~\ref{lem:X2-connected}. By the minimality of $G$, $H^*_0$ is a path of length at most $2$.

    In summary, $C_0$ is a cycle in $G_0$ that has a non-trivial $C_0$-bridge containing $v$, and all other $C_0$-bridges are paths of length at most $2$ with one end at the odd vertex $u$.

    We can therefore choose $C^+_0$ as the shortest cycle in $G_0$ that has a non-trivial $C^+_0$-bridge $H^+_0$ with an odd foot $u^+$ so that
    all other $C^+_0$-bridges are (subdivided) edges of length at most $2$ ending in $u^+$, and so that the number of edges in $H^+_0$ is maximized.


    We show by induction on the number of $C^+_0$-bridges other than $H^+_0$ that $G_0$  admits a weakly compatible embedding on the torus, which contradicts  (**), and thus concludes the proof.

    In the base case there is only one $C^+_0$-bridge, $H^+_0$. By~(**)  there is no $X$-configuration, so none of the vertices on $C^+_0$ support an independent odd pair (since that independent odd pair would have to belong to the unique $C^+_0$-bridge, which would force an $X$-configuration with $C^+_0$). Lemma~\ref{lem:Xuiop} then gives us a compatible embedding of $G_0$ on the torus.

    In the inductive case, $G_0$ contains a $C^+_0$-bridge $P^+_0$ different from $H^+_0$ that is a (subdivided) edge
    incident to $u^+$, which is odd, and a second vertex $v^+$ on $C^+_0$. Suppose there is such a $C^+_0$-bridge $P^+_0$ for which $v^+$ is a neighbor of $u^+$ on $C^+_0$, so the edge $u^+v^+$ is part of $C$. Choose $P^+_0$ so
    its end at $v^+$ is next to $u^+v^+$ in the rotation at $v^+$. We can then remove $P^+_0$, apply induction, and
    insert $P^+_0$ by following $u^+v^+$ closely. (This reestablishes the rotation at $v^+$, which matters if $v^+$ is even.)

    We can therefore assume that there are no $C^+_0$-bridges between $u^+$ and its neighbors on $C^+_0$ that are subdivided edges. Next, suppose there is a $C^+_0$-bridge $P^+_0$ which is a subdivided edges between $u^+$ and an odd vertex $v^+$, so we can modify the rotation at both $u^+$ and $v^+$. We already dealt with the case that $P^+_0$ joins two consecutive vertices along $C^+_0$, so we can assume that $u^+$ and $v^+$ are not consecutive along $C^+_0$. Then, by the choice of $C^+_0$, a part of $C^+_0$ between the end-vertices of $P^+_0$ must have the same length as $P^+_0$, which is $2$. Call the middle vertex of that part $w^+$. We know that $w^+$ is not incident on any subdivided-edge bridges (since $w^+$ is consecutive to $u^+$, and we already dealt with this case). Also, $w^+$ is not a foot of $H^+_0$, since otherwise, by maximality of $H^+_0$. Therefore, $w^+$ has degree $2$, and we can again remove $P^+_0$, apply induction, and insert $P^+_0$ back in the embedding along the edges of $C^+_0$.

    It remains to deal with the case that all the $C^+_0$-bridges different from $H^+_0$ are paths whose second foot,
    the foot different from $u^+$, is an even vertex.
    Suppose there were a foot $x \neq u^+$ on $C^+_0$ that supports an independent odd pair. Then $x$ can only be a foot of $H^+_0$, since all other bridges have $u^+$ or an even vertex as a foot. It follows that the independent odd pair at $x$ must belong to $H^+_0$, which forces an $X$-configuration in $C^+_0 \cup H^+_0$, contradicting (**). We
    can therefore assume that any two edges attaching at $x \in C^+_0 - \{u^+\}$ on opposite sides of $C^+_0$ cross evenly. Lemma~\ref{lem:Xuiop} then gives us a compatible embedding of $G_0$.
\end{proof}

The following lemma is the heart of the reduction; we will state it and prove it in Section~\ref{sec:PLCB}. Before that, we will see how to use it to complete the proof of the main result.

\begin{lemma}\label{lem:XCbridges}
Suppose $G$ is a minimal graph that has an \iocro-drawing on the torus containing an $X$-configuration $(C,C_0)$,
so that there is a $C$-bridge $H$ which attaches to $C$ in at least two vertices $u$ and $v$ that support independent odd pairs of legs in $H$, but $G$ {\em does not} have a weakly compatible embedding on the torus. Then
 \begin{itemize}
  \itemi if there is more than one $C$-bridge, then there is an $X$-configuration $(C^{+},C^{+}_0)$ in $G$
  for which there is only one $C^{+}$-bridge, and
  \itemii if there is only one $C$-bridge, then $G$ is a subgraph of a subdivision of a $K_5$ or a $K_{3,t}$ with bracers.
 \end{itemize}
\end{lemma}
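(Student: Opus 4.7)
The overall approach exploits the essential cycles created by the independent odd pairs in $H$: each odd pair at a foot $x$ gives a path $P_x^{LR}$ in the cylinder view, or equivalently an essential cycle through $x$ in $G$ passing otherwise through the interior of $H$. Together with Claim~\ref{clm:Xmod}, which says that any two essential cycles meeting at $x$ with the same four incident edges as $(C,C_0)$ form an $X$-configuration, this gives substantial freedom to replace $(C,C_0)$ by new $X$-configurations while keeping the same shared vertex.

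For part~(i), assume there is at least one $C$-bridge $H'$ other than $H$. The strategy is to choose a new essential cycle $C^+$ that passes through $u$, uses the two odd-pair edges of $H$ at $u$, and is closed up by a short path in the core of $H$ chosen to avoid the other feet of $H$ whenever possible; and to set $C^+_0$ to be an essential cycle through $u$ whose four incident edges at $u$ match those of $(C,C_0)$, namely two $C$-edges and two $C_0$-edges. By Claim~\ref{clm:Xmod}, $(C^+,C^+_0)$ is an $X$-configuration. To establish that only one $C^+$-bridge exists, one argues that $G-V(C^+)$ is connected: $C-\{u\}$ is a path, every other $C$-bridge attaches to $C-\{u\}$ via a foot other than $u$ (using Lemma~\ref{lem:X2-connected} to rule out any non-trivial bridge with only foot $u$), and the parts of $H$ not consumed by $C^+$ connect back to $C-\{u\}$ through the remaining feet of $H$. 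If the choice at $u$ fails, symmetry allows the same argument at $v$, and in the remaining corner cases we expect to apply Lemmas~\ref{lem:Xreduction333} and~\ref{lem:X2-cutodd} to either obtain the desired $C^+$ or derive an outright contradiction with minimality.

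For part~(ii), the unique $C$-bridge $H$ satisfies $G=C\cup H$, and by Lemma~\ref{lem:X2ftCBridge}(ii) $H$ must have at least three feet on $C$, including $u$ and $v$ and the intersection vertex of $C$ and $C_0$. The plan is to apply the minimality lemmas of Section~\ref{sec:MC} to collapse the core of $H$ to a minimal form: any cycle in the core of $H$ would yield a cycle nearly disjoint from $C$, forbidden by Lemma~\ref{lem:Xreduction333}, so the core is a forest; Lemma~\ref{lem:X2-connected} and Lemma~\ref{lem:X2-cutodd} rule out cut-vertices and most $2$-cuts of odd vertices, forcing each component of the core to attach by at least three legs and disallowing unnecessary internal vertices; and minimality forces degree-$2$ vertices in the core to be pure subdivision vertices. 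After suppression, the core decomposes into $K_{1,3}$-claws whose three feet are a fixed triple $\{a,b,c\}\subseteq V(C)$ (the three ``hub'' feet of $H$), with the $C$-arcs between them acting as bracers of length at most two; this yields a subdivision of $K_{3,t}$ with bracers. The $K_5$ case arises as the alternative when the core contains one extra vertex connected to all of $\{a,b,c\}$, producing a subdivision of $K_5$ instead.

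The main obstacle lies in part~(ii): ruling out the many potential ``rogue'' substructures of the core, and tracking exactly which feet support odd pairs relative to $C$ and how this interacts with $C_0$. Handling these cases cleanly is precisely what the customized tools announced for Section~\ref{sec:TRL} are designed for, and I expect the bulk of the work to be there. Part~(i), by contrast, should reduce to a relatively clean connectivity argument once the candidate cycles $C^+$ and $C^+_0$ are set up and Claim~\ref{clm:Xmod} is invoked.
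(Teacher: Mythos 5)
Your outline captures some of the right ingredients---essential cycles arising from the odd pairs at $u$ and $v$, use of the Section~\ref{sec:TRL} reduction lemmas, and a decomposition into $K_{1,3}$-claws---but it misidentifies the central structural parameter and misapplies Claim~\ref{clm:Xmod}.

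For part~$(i)$, you propose taking $C^+$ to pass through $u$ via the odd-pair legs of $H$ at $u$, and $C^+_0$ to be a cycle whose four edges at $u$ ``match those of $(C,C_0)$.'' But then $C^+\cup C^+_0$ has up to six distinct edges at $u$ (the two odd-pair legs plus the two $C$- and two $C_0$-edges), so the hypothesis of Claim~\ref{clm:Xmod} (same four incident edges as in $C\cup C_0$) is simply not satisfied. The paper does not invoke Claim~\ref{clm:Xmod} for this; instead, it observes directly that, since the two legs at $u$ form an independent odd pair attached to opposite sides of the crossing-free $C$, the four edges at $u$ in $C\cup R_u^{LR}$ cannot be made even by flips, which is the defining property of an $X$-configuration. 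Establishing that the shortcut $\sct(R_u^{LR})$ (or $\sct(Q_u)$) has only one bridge is then a genuine case analysis using Lemmas~\ref{lem:Xreduction333}, \ref{lem:X2-cutodd}, \ref{lem:X2-cuttwo}, \ref{lem:X2-cut2} and Claim~\ref{clm:Xshort}, not a simple statement that ``$G-V(C^+)$ is connected'' (which in any case would not rule out trivial bridges, and is itself far from obvious).

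For part~$(ii)$ the hub triple is not contained in $V(C)$: the third vertex $z$ lies in the \emph{core} of $H$, obtained as the attachment point of a bridging path between the two odd-pair paths $P_1^L$ (from $u^L$ to $v^L$) and $P_2^R$ (from $u^R$ to $v^R$). Moreover the dichotomy $K_5$ vs.\ $K_{3,t}$ is driven by whether $P_1^L$ and $P_2^R$ are vertex-disjoint or intersect, not by whether ``the core contains one extra vertex connected to all of $\{a,b,c\}$'': disjointness (CASE~(A)) forces $|V(C)|=3$ and yields a subdivided $K_5$ on $\{u,v,w,z,z'\}$, while intersection (CASE~(B)) yields a $K_{3,t}$ with bracers on $\{u,v,z\}$, with a further subcase distinction depending on whether the two non-essential cycles $C_1,C_2$ are complementary or agree on $C$. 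The bulk of the real work---showing that $z$ is odd, that $w$ has degree exactly~$4$ in CASE~(A), that the auxiliary essential cycles $Q_u,Q_v,T_u,T_v,R_u^{LR},R_v^{LR}$ rule out extra feet and extra legs, and the delicate doubling argument for claws attaching on both sides of $P_1^L$---has no counterpart in your proposal, and ``collapse the core to a minimal form'' does not substitute for it.
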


\subsection{Proof of Theorem~\ref{thm:HTtorus}}\label{sec:PHTT}
\label{sec:mainproof}

Assume, for a contradiction, that Theorem~\ref{thm:HTtorus} fails. Then there is a graph $G$ with an \iocro-drawing $D$ on the torus, such that $G$ cannot be embedded on the torus. Let $G$ be a minimal such counterexample, with \iocro-drawing $D$.
By Corollary~\ref{cor:3-connected} we know that $G$ is 3-connected, and by Lemma~\ref{lem:Rotations} that
$D$ contains an $X$-configuration $(C, C_0)$. So there is a counterexample to the following statement, which strengthens Theorem~\ref{thm:HTtorus}:

\begin{quote}
   If $G$ has an \iocro-drawing $D$ containing an $X$-configuration $(C,C_0)$, then $G$ has a weakly compatible embedding on the torus.
\end{quote}

When the statement is true, we say that $G$---or more precisely, $(G,D,C,C_0)$---satisfies {\em HT-XWC}, an acronym for Hanani-Tutte--X-Weakly-Compatibility. So we know that
there is a $(G,D,C,C_0)$ which does not satisfy HT-XWC, and therefore a minimal such counterexample.

Fix a $(G,D,C,C_0)$ violating HT-XWC for which $G$ is minimal
with respect to first, $\prec$ (defined at the beginning of Section~\ref{sec:MC}), second, the number of $C$-bridges, and third, $|E(C)| + |E(C_0)|$. We write $\prec'$ for the strict partial ordering defined in this way; by definition $\prec'$ refines $\prec$, so a $\prec'$-minimal counterexample is also a $\prec$-minimal counterexample.

We first consider the case that there exists a $C$-bridge that supports independent odd pairs at two (or more) vertices of $C$. Then, by Lemma~\ref{lem:XCbridges}$(i)$, there is an $X$-configuration  $(C^{+},C^{+}_0)$ so that there is only one $C^{+}$-bridge. By Lemma~\ref{lem:Xuiop} there cannot be a single vertex on  $C^{+}$ supporting all independent odd pairs in $G'$, with respect to $X$-configuration  $(C^{+},C^{+}_0)$, so there must be at least two such vertices. Since there is only one $C^{+}$-bridge, the independent odd pairs are supported by that single $C^{+}$-bridge. Hence, Lemma~\ref{lem:XCbridges} applies, and, by part $(ii)$, $G$ is a subgraph of a subdivision of $K_5$ or a $K_{3,t}$ with bracers. We get a contradiction by Lemma~\ref{lem:K5HTWC} in the first case and by Lemma~\ref{lem:K3nspokesHTWC} in the second case .


We conclude that for every $C$-bridge $H$ there is at most one vertex on $C$ so that $H$ supports an independent odd pair at that vertex. Let $u$ be the vertex in the intersection of $C$ and $C_0$.  By definition, the two $C_0$-edges incident to $u$ form an independent odd pair at $u$. Hence the path $P^{LR}$ obtained by following $C_0$ from $u^L$ to $u^R$ in $G'$ contains an independent odd pair at $u$. Let $H^{LR}$
be the $C$-bridge containing $P^{LR}$.

There must be a $v \in V(C)-\{u\}$ so that $v^L$ and $v^R$  are incident to edges forming an independent odd pair $l_1$ and $l_2$ (in $G'$),
since otherwise we are done by Lemma~\ref{lem:Xuiop}. Since $H^{LR}$ only supports independent odd pairs at $u$, it is not possible
that both $l_1$ and $l_2$ belong to $H^{LR}$, so at least one of them must belong to a different $C$-bridge $H'$.

First, suppose that $H'$ is not just a single edge.
Since $v$ and $u$ are odd, by Lemma~\ref{lem:X2ftCBridge}
$H'$ must have a foot $w'$ on $C^L$ or $C^R$ different from $u$ and $v$.
Then $H'$  contains a path, avoiding $u$, from $w'$ to $v$ (which lies on $C^L$ or $C^R$). This path is disjoint from $P^{LR}$. Adding to that a subpath of $C$ connecting $v$ to $w'$ while avoiding $u$ gives us a cycle $C'$ in $G$ which is vertex-disjoint from $P^{LR}$, and, therefore, $C_0$. This contradicts the choice of $G$ by  Lemma~\ref{lem:Xreduction333} (note that $C'$ does not have to be essential for the lemma to apply).

Therefore, $H'$ consists of a single edge $e$. We distinguish two cases:
$e$ is incident to one side of $C$ only, or $e$ is an $LR$-diagonal.

Suppose first that $e$ is incident to $C^L$ only. This contradicts the choice of $C$: Let $C'$ be the result of shortening $C$ by using $e$ to replace the $L$-foundation of $e$ on $C^L$. Then $C'$ is essential, and $C'$ cannot have more bridges than $C$, since $V(C') \subseteq V(C)$: we gain at most one new bridge (containing
the $L$-foundation of $e$ on $C^L$), but we lose the bridge $e$, so the total number of bridges does not increase. The
path $P^{LR}$ remains an $LR$-diagonal for $C'$, since its ends remain
on the opposite sides of $C'$. (This is not necessarily true if $e$ is an $LR$-diagonal.) The two new cycles still intersect in $u$
which can still not be made even by flips. Since $C'$ is shorter than $C$, and $C_0$ did not change, this contradicts the minimality of the $X$-configuration $(C, C_0)$.


 \begin{figure}[htp]
\centering
\includegraphics[scale=0.7]{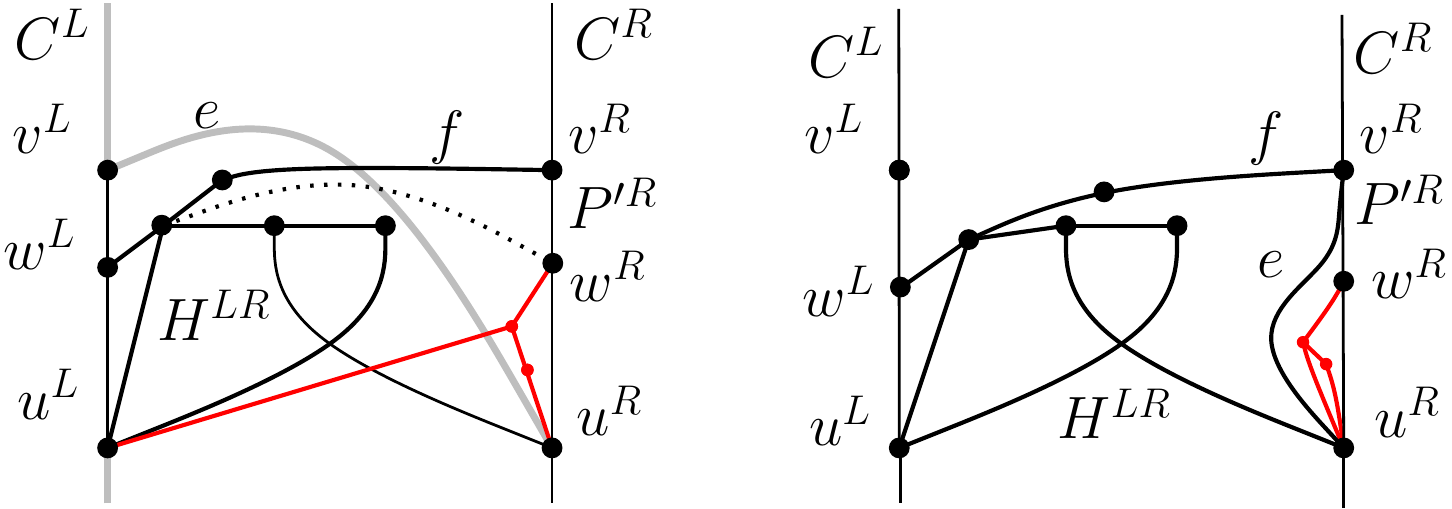}
 \caption{A $C$-bridge $H'$ is a single edge $e$ (left) that can be embedded along the path $P'^R$ after all the other $C$-bridges $H^\#$ different from $H$ are embedded along $C^R$ (right).}
\label{fig:orthogonalPair}
\end{figure}

In the remaining case, $e$ is an $LR$-diagonal.
We want to redraw $C$-bridges other than $H^{LR}$
so that $e$ can be drawn along $C^R$ or $C^L$, see Figure~\ref{fig:orthogonalPair}.

By symmetry, we can assume that $e$ is incident to $v^L$. The other endpoint of $e$ must be $u^R$. Otherwise we have an essential cycle disjoint from $C_0$, which passes through $u$: concatenate $e$ with the path on $C^R$ connecting the endpoint of $e$ on $C^R$ to $v^R$ while avoiding $u$. This contradicts Lemma~\ref{lem:2disjoint}.

Let $P'$ be a path between $u$ and $v$ on $C$. This path corresponds to two paths $P'^L$ and $P'^R$  on $C^R$ and  $C^L$, respectively, in $G'$.
We claim that it is not possible that $H^{LR}$ has feet in the interior of both both of these paths. If it did, we could use
a path (for example, a path passing along the dashed curve in the left illustration of Figure~\ref{fig:orthogonalPair}) in $H^{LR}$ connecting the two feet (and otherwise avoiding $C$) combined with a subpath of $P'^L$ (or $P'^R$) to
obtain an essential cycle which is vertex-disjoint from the essential cycle formed by $e$ and the path connecting $u$ and $v$ on $C$
which is not $P'$. This contradicts Lemma~\ref{lem:2disjoint}.

Without loss of generality then, we can assume that $P'^R$ contains no feet of $H^{LR}$.
Consider a $C$-bridge $H^{\#}$ other than $H$ and $H' = e$. Then $H^{\#}$ must have at least two feet in $C$ by Lemma~\ref{lem:X2-connected}.
On the other hand, $H^{\#}$ cannot have two feet on $C^L$ (or $C^R$) which are different from $u$, since this would contradict
Lemma~\ref{lem:Xreduction333} by giving us two vertex-disjoint cycles, the essential cycle $C_0$ and a cycle in $H^{\#} \cup C$ avoiding $u$.
Similarly, if $H^{\#}$ has a foot each on $C^L$ and $C^R$ both different from $u^L$ and $u^R$,
then $H^{\#} \cup C$ contains an essential cycle in $G$ avoiding $u$, and thereby vertex-disjoint from $C_0$, contradicting Lemma~\ref{lem:2disjoint}. Therefore, each such bridge $H^{\#}$ has exactly one foot on $C^L$ or $C^R$ other than $u^L$ or $u^R$ (and at least one of these).

We now prove that $e$ can be redrawn so that it is not an $LR$-diagonal.
Let ${\cal H}$ be the set of $C$-bridges (other than $H$ and $H'$) which have a foot in $P'^R$ different from $u^L$ and $u^R$ (and at least one of these).

We will show, by induction on the size of ${\cal H}$, that there is a weakly compatible redrawing
where all $C$-bridges in ${\cal H}$ as well as $e$ are free of crossings and $e$ is not an $LR$-diagonal. This completes the proof, since then we are back in an earlier case, where $e$ can be used to shortcut $C$, implying that $(G, D, C, C_0)$ was not a minimal counterexample.

In the base case, ${\cal H}$ is empty, and we can redraw $H' = e$ close to $P'^R$, crossing-free and no longer an $LR$-diagonal.
The rotation at $u$ and $v$ changes, but both
are odd vertices, so that is fine. Moreover, $(C,C_0)$ is still an $X$-configuration (since the rotation of the edges in $H^{LR}$ did not change with respect to $C$, and those edges could not be made even by flips).

In the inductive case we pick a bridge in ${\cal H}$ with a foot $w^R$ as close to $u^R$ (on $P'^R$) as possible. If $w$ is odd, we pick
any bridge $H^{\#}$ incident to $w^R$. If $w$ is even, we pick the bridge $H^{\#}$ first in the rotation at $w^R$ after $w^Ru^R$ (anti-clockwise). By Lemma~\ref{lem:Xreduction333} there is a single edge between $w^R$ and the core of $H^{\#}$ (if there was more than one edge, their endpoints are connected in the core of $H^{\#}$ resulting
in a cycle nearly disjoint from $C$).
Similarly, there cannot be a cycle in $H^{\#}$ avoiding $u$, since it would be vertex-disjoint from $C_0$, contradicting Lemma~\ref{lem:Xreduction333} due to the $X$-configuration $(C,C_0)$.
Therefore all cycles in $H^{\#}$ pass through $u$, and $H^{\#} \cup uw$ has a compatible plane embedding (except for, possibly, at $u$),
by Lemma~\ref{lem:G-x}.

By induction, we can assume that $e$ and all the bridges in ${\cal H}$ other than $H^{\#}$ have been (weakly compatibly) redrawn without crossings, and $e$ is no longer an $LR$-diagonal. We can then insert the embedding of $H^{\#}$ close to $w^Ru^R$, see the right illustration in Figure~\ref{fig:orthogonalPair}; in the case that $w$ is even, this reestablishes the rotation
at $w$, since we picked the bridge with the closest leg to $w^Ru^R$ in the rotation at $w$, and the bridge connects to $w^R$ via a single edge. This gives us a weakly compatible drawing in which $e$ as well as all bridges in ${\cal H}$ are free of crossings, and $e$ is no longer an $LR$-diagonal.

Note that in the induction argument on the size of ${\cal H}$ we do not use the minimality of $G'$, we only incrementally redraw the bridges in ${\cal H}$ so that $e$ can be redrawn without crossings attaching to one side of $C$ only.


\subsection{Three Reduction Lemmas}\label{sec:TRL}

In preparation for the proof of Lemma~\ref{lem:XCbridges} in the next section, we present three reduction lemmas.
For each lemma, we assume that $G$ is a minimal graph that has an \iocro-drawing on the torus containing an $X$-configuration $(C,C_0)$, but {\em does not} have a weakly compatible embedding on the torus,

\begin{lemma}
\label{lem:X3deg}
Let $v\in V(G)$ be a vertex of degree $3$, and $u$ and $w$ two of its neighbors.
If both $u$ and $w$ are odd, then $uw\not\in E(G)$ unless $G- uw$ does not contain an $X$-configuration.
\end{lemma}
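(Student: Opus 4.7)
The plan is to prove the contrapositive: assume $uw \in E(G)$ and that $G - uw$ still contains an $X$-configuration $(C', C_0')$ in its inherited \iocro-drawing $D|_{G-uw}$, and deduce that $G$ admits a weakly compatible embedding on the torus, contradicting the choice of $(G, D, C, C_0)$. Since $u$ and $w$ both remain adjacent to $v$, deleting $uw$ creates no new isolated vertices, so $G - uw \prec G$. Minimality of $G$ therefore provides a weakly compatible embedding $E$ of $G - uw$ on the torus with respect to $D|_{G-uw}$.

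I would then extend $E$ to an embedding $E'$ of $G$ by reinserting $uw$. The vertex $v$ has degree $3$ in $G - uw$, so its three incident edges $vu, vw, vx$ appear in some cyclic rotation at $v$ in $E$; in particular, $uv$ and $vw$ are consecutive on one side of $v$, and the face $F$ of $E$ at that corner has both $uv$ and $vw$ on its boundary. The plan is to route $uw$ inside $F$, following $uv$ closely from $u$, turning tightly around $v$, and then following $vw$ closely to $w$. This curve crosses no other edge, so $E'$ is an embedding of $G$; the only rotations that change in passing from $E$ to $E'$ are at $u$ and $w$, where $uw$ is inserted next to $uv$ and $vw$, respectively.

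Finally I would verify that $E'$ is weakly compatible with $D$. Evenness is automatic because $E'$ is an embedding. For rotations, every evenly connected component $K$ of $D$ avoids the odd vertices $u$ and $w$, so $K$ is a subgraph of $G - uw$ whose vertices are also even in $D|_{G-uw}$; hence $K$ is contained in some evenly connected component $K'$ of $D|_{G-uw}$, on which the rotations of $E$ are either all preserved or all reversed relative to $D|_{G-uw}$ by the weak compatibility of $E$ with $D|_{G-uw}$. Since $D$ and $D|_{G-uw}$ agree on rotations outside $\{u, w\}$, and the transition from $E$ to $E'$ only modifies the rotations at $u$ and $w$, the rotations of vertices in $K$ in $E'$ agree with those in $D$ up to a uniform reversal. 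This gives weak compatibility with $D$, and hence the desired contradiction.

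I do not foresee any serious obstacle; the delicate point is the bookkeeping of evenly connected components across the passage from $G$ to $G - uw$ and back, which is handled by the observation that the odd vertices $u$ and $w$ never participate in any evenly connected component of $D$, so the relevant components behave identically in $D$ and in $D|_{G-uw}$.
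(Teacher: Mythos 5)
Your proof is correct and takes essentially the same route as the paper: delete $uw$, invoke $\prec$-minimality to get a weakly compatible embedding of $G-uw$ (possible precisely because $G-uw$ still carries an $X$-configuration), then reinsert $uw$ through the face at the corner of $v$ between $vu$ and $vw$, observing that this only alters rotations at the odd vertices $u$ and $w$. The paper's own proof is a two-sentence version of exactly this argument; you have merely spelled out the bookkeeping of evenly connected components, which is a legitimate (and, if anything, helpful) expansion.
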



\begin{proof}
Otherwise we would  violate the minimality of $G$.
Indeed, removing $uw$ results in a graph that is not a counterexample, and $uw$ can be inserted into a toroidal embedding of $G- uw$ without introducing an edge crossings and while maintaining the rotations at even vertices in the given independently even drawing.
\end{proof}

\begin{lemma}
\label{lem:X2-cuttwo}
Suppose that $G$ contains a 2-cut $\{w',x\}$ such that $w'$ belongs to $C$, but not to $C_0$, and $x$ does not belong to $C$.
If the edges incident to $w'^L$ (or $w'^R$) in $G' - E(C)$ connect $w'$ with at least two connected components of $G[V- \{w',x\}]$ or at least one such component if $w'^Lx$ (or $w'^Rx$) is an edge of $G'$, then the vertex $x$ is even.
\end{lemma}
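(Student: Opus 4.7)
The plan is to prove the statement by contradiction: assume $x$ is odd and derive a conflict with Lemma~\ref{lem:Xreduction333} by exhibiting a cycle nearly disjoint from $C$.

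First, I would establish that $w'$ must be even. If both $w'$ and $x$ were odd, then $\{w',x\}$ would be a 2-cut of two odd vertices. Since $w' \in V(C)$ but $w' \notin V(C_0)$, and since $x \notin V(C)$, neither of $w',x$ lies in both cycles of the $X$-configuration. Hence the exceptional case of Lemma~\ref{lem:X2-cutodd} does not apply, a contradiction. So $w'$ is even. Next, by Lemma~\ref{lem:X2-connected}, $G$ is 2-connected, which means every component of $G-\{w',x\}$ must have edges to both $w'$ and $x$; otherwise $w'$ or $x$ alone would separate that component from the rest of $G$.

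Third, I would use the hypothesis about $w'^L$ (by symmetry, suppose the condition applies on the $L$-side) to construct a cycle $D$ in $G$ that lies entirely on the $L$-side of $C$ in the cylinder $G'$ and meets $C$ only at $w'$. In the subcase where edges at $w'^L$ outside $E(C)$ reach at least two distinct components $K_1,K_2$ of $G-\{w',x\}$, I would pick a $w'^L$-to-$x$ path $Q_i$ through each $K_i$ in $G'$ (possible because each $K_i$ reaches $x$), and take $D := Q_1 \cup Q_2$; $Q_1$ and $Q_2$ are internally disjoint since $K_1 \cap K_2 = \emptyset$. In the subcase where such edges reach at least one component $K_1$ and additionally $w'^Lx \in E(G')$, I would concatenate the edge $w'^Lx$ with a $w'^L$-to-$x$ path through $K_1$. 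Either way, $D$ is a cycle in $G$.

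Finally, I would verify that $D$ is nearly disjoint from $C$. Since every edge of $D$ is either inside some $K_i$, incident to $w'^L$, or equals $w'^Lx$, none of them crosses $C$ in the cylinder $G'$; thus $D$ sits on one side of $C$ and is non-essential. It is edge-disjoint from $C$ and shares only $w'$ with $C$. At $w'$ the two edges of $D$ both belong to the $L$-side of the rotation, so they do not interleave with the two $C$-edges, meaning $D$ touches $C$ at $w'$ rather than crossing it. Because $w'$ is even, $D$ and $C$ are nearly disjoint, and since $C$ is essential, this contradicts Lemma~\ref{lem:Xreduction333}, completing the proof. The main obstacle is the touching verification at $w'$: one must carefully use the cylinder decomposition to argue that confining $D$'s edges at $w'$ to the $L$-side forces the non-interleaving pattern, and this step depends crucially on $w'$ being even, which is what the first step secures.
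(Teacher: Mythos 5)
Your proof is correct and takes essentially the same approach as the paper: the paper argues directly that $w'$ must be odd (since $w'$ even yields a cycle nearly disjoint from $C$, contradicting Lemma~\ref{lem:Xreduction333}) and then applies Lemma~\ref{lem:X2-cutodd} to conclude $x$ is even; you run the same two lemmas in the contrapositive order, assuming $x$ odd, deducing $w'$ even via Lemma~\ref{lem:X2-cutodd}, and then exhibiting the same nearly disjoint cycle to contradict Lemma~\ref{lem:Xreduction333}. The cycle construction from two $w'^L$-to-$x$ paths through distinct components (or one such path plus the edge $w'^Lx$) and the touching check at the even vertex $w'$ are exactly the argument the paper has in mind.
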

\begin{proof}
If $w'$ is even, then the union of two connected components of $G[V- \{w',x\}]$ or the union of one such components together with the edge $w'^Lx$ or $w'^Rx$ contains a cycle $C'$ nearly disjoint from $C$, contradicting Lemma~\ref{lem:Xreduction333}. We conclude that $w'$ must be odd. Since $w'$ does not belong to $C_0$ and $x$ does not belong to $C$, Lemma~\ref{lem:X2-cutodd} applies, and $x$ must be even.
\end{proof}

\begin{lemma}
\label{lem:X2-cut2}
Suppose that $G$ has a 2-cut $\{u,v\}$ such that $v\not\in V(C)$ is even and $u\in V(C)$. Let $H=G[V(G^*)\cup\{u,v\}]$, where $G^*$ is the  union  of all components of $G- \{u,v\}$ excluding the component containing $C- \{u\}$. If $H$ does not contain an essential cycle, then $H$ is a path. In particular, $H$ has only a single edge adjacent to $v$.
\end{lemma}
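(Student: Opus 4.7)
The plan is to prove the contrapositive: assuming $H$ is not a path, I derive a contradiction with the minimality of $G$. Since $\{u,v\}$ is a 2-cut and $G$ is 2-connected (Lemma~\ref{lem:X2-connected}), each component of $G^*$ attaches to both $u$ and $v$, and hence $H$ is connected. Minimality forbids vertices of degree at most one in $G$, and any vertex in $V(H)\setminus\{u,v\}$ has all of its $G$-neighbors in $H$, so it has degree at least $2$ in $H$. If $H$ were acyclic, it would be a tree whose leaves must lie in $\{u,v\}$, forcing $H$ to be a $u$-$v$ path, contrary to assumption. Hence $H$ contains a cycle $C'$, necessarily non-essential by hypothesis. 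Since $V(H)\cap V(C)=\{u\}$, we have $V(C')\cap V(C)\subseteq\{u\}$, and if $u\not\in V(C')$ then $C$ and $C'$ are vertex-disjoint, so $(C,C')$ is nearly disjoint with $C$ essential, contradicting Lemma~\ref{lem:Xreduction333}.

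This leaves the case $u\in V(C')$. Here I reduce $G$. A short traversal argument shows that $E(C_0)\cap E(H)$ is either empty or a single $uv$-path: if $u$ had both of its $C_0$-edges in $H$, then $C_0$ would lie in $H$ (since $C_0$ can re-enter $K$ only through $\{u,v\}$, and vertices are simple), contradicting essentiality of $C_0$; the same applies at $v$. So each of $u,v$ contributes at most one endpoint to the subgraph $C_0\cap H$, and the degree-parity forces it to be either empty or a single $uv$-path. I choose a $uv$-path $P$ in $H$ containing $E(C_0)\cap E(H)$ (any $uv$-path if this intersection is empty), and set $G_1=K\cup P$, where $K=G[V(G)\setminus V(G^*)]$. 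Because $H$ contains a cycle, $|E(H)|>|E(P)|$, so $|E(G_1)|<|E(G)|$ and $G_1\prec'G$. The inherited drawing of $G_1$ is \iocro\ and still contains the $X$-configuration $(C,C_0)$, so by the minimality of $G$ there is a weakly compatible embedding $\phi_1$ of $G_1$ on the torus.

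To complete the contradiction, I reinsert $H$ in place of $P$ inside $\phi_1$. By Corollary~\ref{cor:nonessH}, $H$ admits a compatible planar embedding $\psi$. Since $H$ is drawn on the torus without essential cycles and attaches to the rest of $G$ only at $\{u,v\}$, one can arrange that $u$ and $v$ share a face of $\psi$ and that at each of $u,v$ the $H$-edges occupy the same contiguous rotation-slot they do in the original drawing of $G$. Replacing a disk neighborhood of $P$ in $\phi_1$ by $\psi$ then produces an embedding of $G$ whose rotation at $v$ agrees with or is the reverse of the original; because $v$ is even, weak compatibility tolerates flipping the $H$-side on $v$'s evenly connected component, so we obtain a weakly compatible embedding of $G$, contradicting the choice of $G$.

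The main obstacle is this last reinsertion step, specifically the geometric claim that a compatible planar embedding of $H$ can be chosen with $u$ and $v$ on a common face and with rotation-slots that match those of the original drawing at $u$ and $v$. The hypotheses $v\not\in V(C)$ and $v$ even are used crucially here: because $v$ lies off $C$, $H$ really sits to one side of the essential cycle, and because $v$ is even we retain the freedom to reverse the $H$-side rotation at $v$ if that is needed to align the two embeddings under weak compatibility.
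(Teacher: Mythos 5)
Your overall skeleton matches the paper's: show $H$ acyclic forces $H$ to be a path, otherwise $H$ contains a (non-essential) cycle, then replace $H$ by a short $uv$-path, invoke $\prec'$-minimality on the reduced graph, get a weakly compatible embedding, and reinsert $H$. Your observation that $C_0\cap H$ is at most a single $uv$-path is also correct (and the paper uses this implicitly). But you have correctly identified the crux --- the reinsertion step --- and there the proposal has a genuine gap, which is not fixable by the mechanism you suggest.

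The claim that ``because $v$ is even we retain the freedom to reverse the $H$-side rotation at $v$'' misreads weak compatibility. Weak compatibility does \emph{not} let you flip a sub-block of one vertex's rotation; it only lets you reverse the rotation at \emph{every} vertex of an entire evenly connected component simultaneously. Since $v$ is even, its rotation is a hard constraint: the reinserted $H$ must occupy exactly the same cyclic slot at $v$ that it did in the original drawing. For that to even make sense, you must first know that the $H$-edges at $v$ form a \emph{contiguous block} in the rotation at $v$ --- a nontrivial fact. The paper proves this by an alternation argument: if two $H$-edges at $v$ were interleaved with two paths to $C-u$, one gets two cycles meeting only at the even vertex $v$ with alternating ends, hence crossing oddly, hence both essential, contradicting the hypothesis that $H$ has no essential cycle. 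Your proposal neither states nor proves this consecutivity.

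Symmetrically, you never establish that $u$ is \emph{odd}, which the paper proves (via Lemma~\ref{lem:Xreduction333}: if $u$ were even, $C'$ and $C$ would be nearly disjoint). Oddness of $u$ is what lets the paper discard the rotation constraint at $u$ entirely; without it you would also need to control the rotation at $u$, and nothing in your argument does so.

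Finally, your appeal to Corollary~\ref{cor:nonessH} only gives \emph{some} compatible planar embedding of $H$; it does not place $u$ and $v$ on a common face, which your disk-replacement step needs. The paper gets this for free by first adjoining a $uv$-path $P$ through the rest of $G$, applying Lemma~\ref{lem:G-x} to $H^*=H\cup P$ (with the interior of $P$ playing the role of the deletable vertex $x$), and then removing $P$, leaving $u,v$ on a common face with the $H$-part of the embedding compatible. So the missing pieces are: (a) $u$ is odd; (b) the $H$-edges at $v$ are consecutive in the rotation; and (c) use $H\cup P$ with Lemma~\ref{lem:G-x}, not just Corollary~\ref{cor:nonessH}, to get $u,v$ cofacial. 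With (a)--(c) in hand, the reinsertion preserves the rotation at $v$ exactly (no flipping freedom invoked or needed) and ignores the rotation at the odd vertex $u$.
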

\begin{proof}
If $H$ is an edge or path, we are done. Hence $H$ must contain a cycle $C'$: all internal vertices of $H$, that is, vertices other than $\{u,v\}$, have degree at least two, otherwise there would be a cut-vertex. In particular, $H$ contains at least three edges. We claim that $u$ is odd. Suppose $u$ were even. The cycle $C'$ in $H$ cannot be essential, by assumption, so $C$ and $C'$ cross an even number of times. Since $u$ is even this means that $C$ and $C'$ are either vertex-disjoint, or touch in $u$, contradicting Lemma~\ref{lem:Xreduction333}. This shows that $u$ is an odd vertex.

The edges of $H$ at $v$ have to be consecutive in the rotation at $v$:
If not, there has to be a pair of paths $P_1$ and $P_2$ between $v$ and $C- u$  internally disjoint from $H$, whose ends at $v$ alternate with edges $e$, $f$ of $H$. Since $H$ and $C$ only share $u$, the cycle in $P_1 \cup P_2 \cup (C - u)$ and the cycle
in $H$ containing $e$ and $f$ only have the even vertex $v$ in common, and, since their ends alternate at $v$ and the drawing is \iocro, the two cycles cross an odd number of times, which implies both of them are essential, contradicting the assumption that $H$ does not contain an essential cycle.

In $G$, we now replace $H$ with a path of length two between $u$ and $v$ (following a $uv$-path in $H$ and suppressing all but one interior vertex). Then the resulting graph $G^*$ satisfies $G^* \prec G$ (since $H$ contains at least three edges). Moreover, $G^*$ still contains an $X$-configuration, with $C$ unchanged, and $C_0$ possible shortened: if $C_0 \cap H$ is non-empty, then that piece of $C_0$ is replaced by the new $uv$-path of length two. By minimality of $G$, there is a weakly compatible embedding of $G^*$ in the torus.

Let $H^*$ be $H$, together with a path $P$ between $v$ and $u$
internally disjoint from $H$ (such a $P$ exists, since otherwise $u$ is a cut-vertex, contradicting Lemma~\ref{lem:nearly-3-connected}. Then any essential cycle in $H^*$ must use $P$ (since $H$ does not contain any essential cycles by assumption), and, therefore, some interior vertex $x$ of $P$. We can then apply Lemma~\ref{lem:G-x} to show that $H^*$ has a plane embedding, in which the embedding of $H$ is compatible. Removing the internal vertices of
$P$ from the drawing of $H^*$, yields a compatible plane embedding of $H$ in which $u$ and $v$ lie in the same (outer) face. We can the insert this embedding in place of the $uv$-path in $G^*$ which replaced $H$, to obtain a weakly compatible embedding of $G$. Since $u$ is odd, we do not have to recover the rotation at $u$.

Again, we have reached a contradiction, so we can conclude that $H$ is a path (possibly a single edge).
\end{proof}

\subsection{Proof of Lemma~\ref{lem:XCbridges}}\label{sec:PLCB}


We restate the core of the lemma for reference:
\begin{quote}
  Suppose there is a $C$-bridge $H$ which attaches to $C$ in at least two vertices $u$ and $v$ that support independent odd pairs in $H$.
 \begin{itemize}
  \itemi If there is more than one $C$-bridge, then there is an $X$-configuration $(C^{+},C^{+}_0)$ in $G$
  for which there is only one $C^{+}$-bridge.
  \itemii If there is only one $C$-bridge, then $G$ is a subgraph of a subdivision of a $K_5$ or a $K_{3,t}$ with bracers.
 \end{itemize}
\end{quote}

Let the two independent odd pairs of $H$ at $u$ and $v$ be $(l_1^L$, $l_2^R)$ and $(l_3^L$, $l_4^R)$
with $l_1^L$, $l_2^R$ incident to $u^L$, $u^R$, and $l_3^L$, $l_4^R$ incident to $v^L$, $v^R$.\marginnote{$(l_1^L$, $l_2^R)$, $(l_3^L$, $l_4^R)$} See Figure~\ref{fig:P1P2DoNotInt1}.

We denote by $P_1^L$ a path in $H^{LR}$ from $u$ to $v$ starting with $l_1^L$ and ending with $l_3^L$. Similarly,
$P_2^R$ is a path in $H^{LR}$ from $u$ to $v$ starting with $l_2^R$ and ending with $l_4^R$. \marginnote{$P_1^L$, $P_2^R$}
Both paths $P_1^L$ and $P_2^R$ can be assumed to avoid vertices of $C$ in their interior.
We choose $P_1$ and $P_2$ so that the sum of the lengths of $P_1^L$ and $P_2^R$ is minimal; we will use this assumption in {\bf Case~(B2)} below at the very end of the proof.
Then, $P_1^L$ and $P_2^R$, are $L$ and $R$-diagonals.
Let $C_1$ and $C_2$, denote the cycles obtained by concatenating $P_1^L$ and $P_2^R$ with their $L$ and $R$-foundations on $C^L$ and $C^R$.\marginnote{ $C_1$, $C_2$}

A brief outline of the remainder of the proof: We distinguish two cases depending on whether $P_1^L$ and $P_2^R$ intersect.
In both cases we find, by the choice of $G'$, a vertex
$z$ that together with $u$ and $v$ forms
a set of vertices that due to the minimality of $G$
yields in $G$ a collection of $\{z,u,v\}$-bridges
that must be ``almost like'' three-stars.
The heart of the matter then is to reduce the instance so that Lemma~\ref{lem:K3nspokesHTWC}, or Lemma~\ref{lem:K5HTWC} apply. \\


For goal $(i)$, replacing $C$ with an essential cycle having only a single bridge, we will use a specific way to short-cut a path. Suppose $P$ is a path in $G'$.
A {\em $P$-shortcut} is a path\marginnote{$P$-shortcut, $\sct(P)$}
that is obtained from $P$ by the following iterative procedure. Start with $P_0=P$. If there exists a $P_i$-bridge $B$ (which is also a $P$-bridge) with a pair of feet $x$ and $y$, that is an edge $xy$ or a subdivided edge, that is, a path with all the internal vertices of degree 2, and the subpath of $P_i$ between $x$ and $y$ is neither an edge nor a subdivided edge  then we construct $P_{i+1}$ from $P_i$ by replacing the subpath of $P_i$ between $x$ and $y$ by $B$. If no such $B$ exists then $P_i$ is a shortcut of $P$.
We write $\sct(P)$ for a $P$-shortcut; it may not be unique, but it always exists. Moreover,  $\sct(P)$ is a path in $G'$ connecting the same endpoints as $P$.

When $P$ or $\sct(P)$ correspond to a cycle in $G$, we will write $P$ in $G$ and $\sct(P)$ in $G$ to refer to those cycles.

\begin{claim}\label{clm:Xshort} Suppose $P$ is a path in $G'$ which corresponds to a cycle in $G$ and both $P \cap C^L$ and $P \cap C^R$ are paths, with at least one of them being a single vertex.
\begin{itemize}
\itemi $P$ and $\sct(P)$ are essential cycles in $G$.
\itemii If $(C,P)$ is an $X$-configuration, and $\sct(P)$ has a single bridge in $G$, then $\sct(P)$ is part of an $X$-configuration.
\itemiii Any bridge of $P$ in $G$ is part of a bridge of $\sct(P)$ in $G$, or has become part of $\sct(P)$.
\end{itemize}
\end{claim}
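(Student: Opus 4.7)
The plan is to handle the three parts separately: part (i) by a topological observation about the cylinder-to-torus gluing; part (ii) by a case analysis combined with an application of Claim~\ref{clm:Xmod}; and part (iii) by induction on the number of shortcut steps.

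For (i), I first observe that since $P$ corresponds to a cycle in $G$, its two endpoints in $G'$ must be identified in $G$, and the only identifications available are of the form $w^L\sim w^R$ for $w\in V(C)$. Viewing $G'$ as drawn in the cylinder $S^1\times[0,1]$ with $C^L=S^1\times\{0\}$ and $C^R=S^1\times\{1\}$, the path $P$ runs from one boundary circle to the other. After the boundaries are identified to form $C$ in the torus $T$, the projection to the second coordinate gives a degree-one map to $S^1$, so the resulting cycle in $G$ has homology class of the form $b+k[C]\in H_1(T)\cong\mathbb Z^2$ with $b$ the meridional generator transverse to $[C]$. This class is nonzero modulo $2$, so by property~(iii) of the paper the cycle is essential. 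The same argument applies to $\sct(P)$ because shortcutting preserves endpoints.

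For (ii), I would start from the fact that $(C,P)$ being an X-configuration at $v$ forces $V(C)\cap V(P)=\{v\}$, hence $P\cap C^L=\{v^L\}$ and $P\cap C^R=\{v^R\}$. The plan is to produce a partner essential cycle $C^+$ for $\sct(P)$ meeting it in a single vertex, and then apply Claim~\ref{clm:Xmod}. In the easy case where no shortcut step uses a bridge with a foot at $v^L$ or $v^R$ and none absorbs vertices of $C$ into $V(\sct(P))$, I take $C^+:=C$, since then $V(C)\cap V(\sct(P))=\{v\}$ and the two edges of $\sct(P)$ at $v$ coincide with those of $P$. In the general case I would exploit the single-bridge hypothesis: the unique $\sct(P)$-bridge $B$ in $G$ contains whatever remains of $C$ outside $\sct(P)$, and I would carve from $B\cup\sct(P)$ a cycle $C^+$ meeting $\sct(P)$ in a single vertex $v^+$ whose four edges at $v^+$ inherit the non-flippability from the original X-configuration at $v$; the minimality of $G$ in the ordering $\prec'$ together with the single-bridge condition are used to rule out failures.

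For (iii), I would use induction on the number of shortcut steps. A single step $P_i\to P_{i+1}$ replaces the subpath of $P_i$ between two feet $x$ and $y$ of a $P_i$-bridge $B$ by $B$ itself. Since $B$ is an edge or subdivided edge, $V(B)\cup E(B)\subseteq P_{i+1}$, so $B$ has become part of $\sct(P)$. For any other $P_i$-bridge $B'$, the only vertices of $V(P_i)$ lost in passing to $V(P_{i+1})$ are the internal vertices of the replaced subpath; these now lie in $G'\setminus V(P_{i+1})$, and legs of $B'$ still attach either to the unchanged portion of $V(P_{i+1})$ or to the released vertices, so $B'$ is contained in some $P_{i+1}$-bridge. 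Iterating, every $P$-bridge in $G$ is either absorbed into $\sct(P)$ or lies inside some $\sct(P)$-bridge in $G$. The hardest step will be (ii), specifically the construction of $C^+$ and the verification that its four edges at the shared vertex with $\sct(P)$ still fail to be flippable to pairwise even; this is where the single-bridge hypothesis and the minimality of $G$ are essential.
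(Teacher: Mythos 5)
Parts (i) and (iii) are essentially sound. Your homology/degree argument for (i) is a valid and arguably cleaner alternative to the paper's crossing-parity argument; the only thing you gloss over is why $\sct(P)$ still corresponds to a genuine \emph{cycle} (not a closed walk with a repeated vertex) in $G$: one must observe that a shortcut step only introduces degree-$2$ vertices, and a degree-$2$ vertex on $C$ has both its edges on $C$ and so can never be an internal vertex of a $P$-bridge; hence $\sct(P)$ picks up no new $C$-vertices. For (iii), your induction on shortcut steps matches the paper's reasoning.

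Part (ii) has a genuine gap. Your plan is to exhibit a partner cycle $C^{+}$ and invoke Claim~\ref{clm:Xmod}, splitting into an easy case (take $C^{+}=C$, when no shortcut touches the feet at $v^L,v^R$) and a ``general case.'' The easy case does not always hold: a $P$-bridge used as a shortcut can perfectly well have a foot at $v^L$ or $v^R$, replacing the first or last edge of $P$, in which case the four edges at $v$ in $C\cup\sct(P)$ differ from those in $C\cup P$ and Claim~\ref{clm:Xmod} simply does not apply. And your ``general case'' is not an argument: ``carve from $B\cup\sct(P)$ a cycle $C^+$ meeting $\sct(P)$ in a single vertex $v^+$ whose four edges inherit the non-flippability'' is precisely the thing that needs to be \emph{shown}, and you give no construction of $C^{+}$ and no reason why the non-flippability would be inherited, particularly if $v^{+}\ne v$. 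Appealing to the minimality of $G$ in $\prec'$ here is also a warning sign; minimality is irrelevant to this claim.

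The paper's actual argument for (ii) avoids $C$ entirely and is uniform over both of your cases. Apply the first stage of Lemma~\ref{lem:G-C} to make $\sct(P)$ even. The property ``$v$ cannot be made even by flips'' is a flip-invariant (it follows from the original $X$-configuration $(C,P)$), and $v$ still lies on $\sct(P)$; so after the flips there remains a pair of edges at $v$ crossing oddly, and neither of them can be on $\sct(P)$ since $\sct(P)$ is now even. Because $\sct(P)$ has a \emph{single} bridge $B$, both of these edges lie in $B$, and since the core of $B$ is connected there is a cycle through both edges that meets $\sct(P)$ only at $v$. That cycle together with $\sct(P)$ is the desired $X$-configuration. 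This is the key step you were missing: the single-bridge hypothesis is used to find the partner cycle inside $B$ after first ``zeroing out'' the $\sct(P)$-edges at $v$, not to patch up an application of Claim~\ref{clm:Xmod}.
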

\begin{proof}

$(i)$.  Since $P$ is a path in $G'$ and a cycle in $G$, it must connect  $x^L$ on $C^L$ to $x^R$ on $C^R$ for some $x \in V(G)$.  By symmetry we can assume that $x^L$ is the only vertex of $P$ on $C^L$. The cycle in $G$ corresponding to $P$  must be essential, since a  generic perturbation of the cycle crosses $C$ an odd number of times. The shortcut $\sct(P)$ also is a path between $x^L$ and $x^R$ in $G'$. By the construction of  $\sct(P)$, $x$ is also the only vertex  such that $x^L$  is on $\sct(P)$. Indeed, only degree-2 can be on $\sct(P)$ and not on $P$. However, degree-2 vertices on $C$ are not feet of any $C$-bridge, and therefore every degree-2 vertex $y^L$ on $C^L$ is not adjacent to any vertex of $P$ besides possibly $x^L$.
It follows that  $y^L$ cannot be contained in $\sct(P)$, since there does not exist a $P$-bridge with two feet containing $y^L$.
Since $\sct(P)$ intersect $C^L$ only in its end vertex, a  generic perturbation of the cycle corresponding to   $\sct(P)$ crosses $C$ an odd number of times. Thus,~$(i)$ follows.

$(ii)$.  We can then make the edges of $\sct(P)$ even (in $G$), using Lemma~\ref{lem:G-C}. Let $P \cap C = \{u\}$. We know that $u$ cannot be made even by edge-flips; since it still lies on $\sct(P)$, that means there are two edges incident to $u$ which cross oddly. Since $\sct(P)$ only has a single bridge, there is a cycle containing both edges, and no other vertices of $\sct(P)$. Then that cycle, together with $\sct(P)$, forms an $X$-configuration.

$(iii)$. The shortcut $\sct(P)$ consists of (a subset of the) vertices of $P$ and degree-$2$ vertices. Therefore, any bridge of $\sct(P)$ must attach to $\sct(P)$ in vertices of $P$, so it is, or contains, a bridge of $P$.
\end{proof}

We begin with the case that $P_1^L$ and $P_2^R$ do not intersect, illustrated in Figure~\ref{fig:P1P2DoNotInt1}.

 \begin{figure}[htp]
\centering
\includegraphics[scale=0.7]{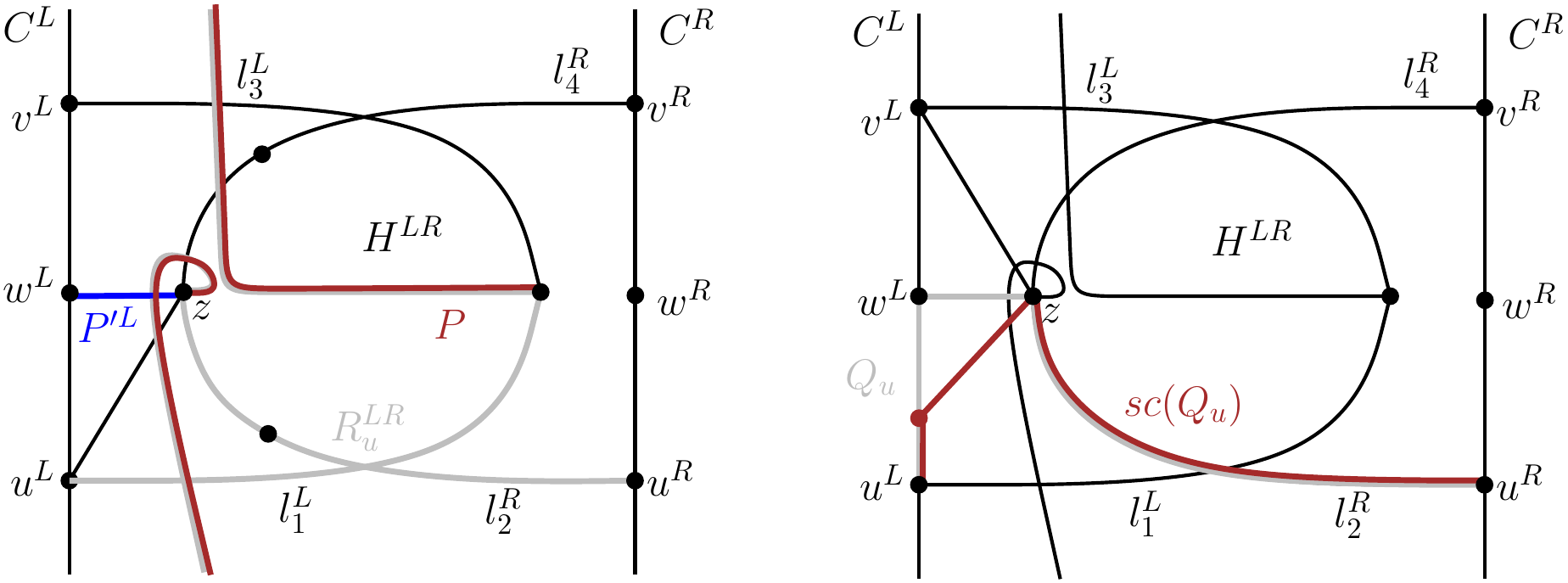}
 \caption{The case when $P_1^L$ and $P_2^R$ connecting $u^L,v^L$ and $u^R,v^R$, respectively, in $H^{LR}$ do not intersect in $G'$. The paths $P,P'^L, R_u^{LR}$  highlighted on the left, $Q_u$ and $sc(Q_u)$ highlighted on the right.}
\label{fig:P1P2DoNotInt1}
\end{figure}

\subsubsection*{CASE (A): $P_1^L$ and $P_2^R$ do not intersect.}

There exists a path $P$ in $H$ that joins a vertex of $P_1^L$ with a vertex $z$ of $P_2^R$.\marginnote{$P$ \bfseries{(A)}}
Since all the vertices of $P_1^L$ are contained in the interior of $C_2$ and all the vertices of $P_2^R$ are contained in the interior of $C_1$ all the feet of $H^{LR}$ are on the cycles $C_1$ and $C_2$.
By Lemma~\ref{lem:X2ftCBridge}, $H^{LR}$ has another foot on $C^L$ or $C^R$.
Let us say that foot is $w^L \neq u^L, v^L$ on $C^L$.\marginnote{$w^L$, $P'^L$ \bfseries{(A)}}
Let $P'^L$
be a path between $w^L$ and $P_2^R$ in $H^{LR}$.

For future reference, we give names to some of the $LR$-diagonals.
Let $Q_u$\marginnote{$Q_u$ \bfseries{(A)}}
denote the path in $G'$ consisting of the subpath of $P_2^R$ between $u^R$ and $z$; $P'^L$; and the subpath of $C^L$ between $w^L$ and $u^L$ not containing $v^L$.
Symmetrically, let $Q_v$\marginnote{$Q_v$ \bfseries{(A)}}
denote the path in $G'$ consisting of the subpath of $P_2^R$ between $v^R$ and $z$; $P'^L$; and the subpath of $C^L$ between $w^L$ and $v^L$ not containing $u^L$.
Let $R_u^{LR}$ \marginnote{$R_u^{LR}$ \bfseries{(A)}}
denote the $LR$-diagonal in $G'$ obtained by concatenating the part of $P_2^R$ between $u^R$ and $z; P$; and the part of $P_1^L$ between the end vertex of $P$ and $u^L$.
Symmetrically, we define $R_v^{LR}$. \marginnote{$R_v^{LR}$ \bfseries{(A)}}
Refer to Figure~\ref{fig:P1P2DoNotInt1}.
By Claim~\ref{clm:Xshort} all of $Q_u$, $Q_v, R_u^{LR}$ and $R_v^{LR}$ are essential cycles in $G$.
Then each of the pairs $(C \oplus C_1, Q_u)$, $(C \oplus C_2, Q_v)$, $(C, R_u^{LR})$, and $(C, R_v^{LR})$ contains the four edges at $u$ (or $v$) which cannot be made even, so each pair is an $X$-configuration, implying that
all of $Q_u, Q_v, R_u^{LR}$, and $R_v^{LR}$ are part of $X$-configurations.

\medskip

We collect some crucial properties of $z, w^L$ and  $P'^L$ which are needed in the remainder of the argument.

\begin{claim}
\label{clm:3rdfoot}
We have the following.
\begin{enumerate}[(a)]
\item \label{it:3rdfoota} Vertex $z$ is the end vertex of $P'^L$.
\item \label{it:3rdfootb} Vertices $\{w^L, z\}$ disconnects any interior vertex of $P'^L$ from the rest of $G'$.
\item \label{it:3rdfootc} Vertex $z$ is odd.
\item \label{it:3rdfootd}  Vertex $w^L$ has degree 1 in $H^{LR}$.
\end{enumerate}
\end{claim}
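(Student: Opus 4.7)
The plan is to prove each of (a)--(d) by contradiction, in each case constructing a cycle nearly disjoint from one of the essential cycles $C$, $C_0$, $Q_u$, $Q_v$, $R_u^{LR}$, $R_v^{LR}$ associated with the $X$-configurations in $G$ identified just before the claim. By Lemma~\ref{lem:Xreduction333} this contradicts $\prec'$-minimality of $(G,D,C,C_0)$.

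I would begin with part (c). Suppose $z$ were even. Since $z$ is interior to $P_2^R$, the three relevant edges at $z$---the two $P_2^R$-edges and the terminal edge of $P$---can be made pairwise-even by edge-flips at $z$. I would then split $z$ into two vertices $z_1, z_2$, with $z_1$ incident to the two $P_2^R$-edges and $z_2$ incident to the $P$-edge, connected by a short crossing-free edge $z_1 z_2$. In the resulting \iocro-drawing the essential cycles $R_u^{LR}$ (using $P$) and $C_2$ (using $P_2^R$) are vertex-disjoint; contracting $z_1 z_2$ back, these become nearly disjoint at the even vertex $z$, contradicting Lemma~\ref{lem:Xreduction333} applied with the $X$-configuration $(C, R_u^{LR})$.

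Parts (a), (d), (b) follow a similar template, exploiting the abundance of essential cycles established above. For (a), if $P'^L$ ends at $z' \neq z$ on $P_2^R$, say $z'$ lies between $u^R$ and $z$, then the $LR$-diagonal obtained by concatenating the subpath of $P_2^R$ from $u^R$ to $z'$, the path $P'^L$, and the $C^L$-subpath from $w^L$ to $u^L$ gives an essential cycle (by Claim~\ref{clm:Xshort}(i)); combining $P$ with the subpath of $P_2^R$ from $z'$ to $z$ and a subpath of $P_1^L$ then yields a second essential cycle nearly disjoint from it. For (d), a second $H^{LR}$-edge at $w^L$ initiates another path from $w^L$ that reaches $P_1^L \cup P_2^R$; combining it with $P'^L$ gives an extra cycle structure that, analogously to (a), produces two nearly disjoint essential cycles. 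For (b), an additional connection from an interior vertex $x$ of $P'^L$ to a vertex outside $V(P'^L) \cup \{w^L,z\}$ provides a detour around $P'^L$ that creates a cycle nearly disjoint from either $Q_u$ or $Q_v$.

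The main obstacle is checking essentiality (via Claim~\ref{clm:Xshort}(i)) and the ``nearly disjoint'' condition (requiring shared vertices to be even and touching) in each subcase. Since the constructed cycles may intersect existing $X$-configuration cycles at vertices whose parity is a priori unknown, I would rely on Lemmas~\ref{lem:X2-cutodd}, \ref{lem:X2-cuttwo}, and~\ref{lem:X2-cut2} to constrain the parities, and on the fact that we have several $X$-configurations available ($(C,C_0)$, $(C,R_u^{LR})$, $(C,R_v^{LR})$, $(C \oplus C_1, Q_u)$, $(C \oplus C_2, Q_v)$) to choose the one whose essential cycle is most convenient for each application of Lemma~\ref{lem:Xreduction333}.
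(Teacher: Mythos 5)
Your high-level plan --- exhibit, in each subcase, a cycle that is nearly disjoint from one of the essential cycles furnished by the $X$-configurations, and invoke Lemma~\ref{lem:Xreduction333} or Lemma~\ref{lem:2disjoint} --- is indeed the paper's strategy. But the specific cycle pairings you propose break down, and this is exactly where the work in the claim lies.

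For part~(c), you pair $R_u^{LR}$ with $C_2$. These are not even edge-disjoint: by definition $R_u^{LR}$ traverses the subpath of $P_2^R$ from $u^R$ to $z$, and $C_2$ is $P_2^R$ together with its $R$-foundation, so the two share that whole subpath of $P_2^R$. Your vertex-split at $z$ does not help, and in fact $R_u^{LR}$ uses both one $P_2^R$-edge at $z$ and the $P$-edge at $z$, so after putting the two $P_2^R$-edges at $z_1$ and the $P$-edge at $z_2$, the cycle $R_u^{LR}$ itself passes through both $z_1$ and $z_2$. Also $C_2$ is non-essential by construction, whereas a clean application of Lemma~\ref{lem:2disjoint} needs both cycles essential. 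The paper instead pairs $Q_u$ with $R_v^{LR}$: these meet only in $z$ (by parts~(a),(b)), and both are essential.

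For part~(b), the cycles $Q_u$ and $Q_v$ both contain $P'^L$ as a subpath, so a detour through an interior vertex $x$ of $P'^L$ necessarily meets $Q_u$ (resp.\ $Q_v$) at $x$ and at further vertices of $P'^L$; nearly-disjointness fails there at vertices whose parity you cannot control. The paper uses $R_u^{LR}$ and $R_v^{LR}$ precisely because they avoid the interior of $P'^L$.

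Part~(a) as written does not close: the path ``$P$ together with the subpath of $P_2^R$ from $z'$ to $z$ and a subpath of $P_1^L$'' joins three arcs that do not form a cycle without further data, and you have not located an end vertex for the purported $LR$-diagonal. Moreover your first cycle passes through $u$, which is odd, so you would have to be careful that the second cycle avoids $u$ entirely. The paper's construction for~(a) avoids $u$, $v$, and $z$ in the constructed cycle and then pairs it with whichever of $R_u^{LR}$, $R_v^{LR}$ it misses.

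For part~(d), a direct two-nearly-disjoint-cycles argument is not what the paper does; it establishes (via (a),(b)) that any $H^{LR}$-edge at $w^L$ lives in a $\{w^L,z\}$-bridge that is a path, so a second such edge would make $\{w,z\}$ a $2$-cut, and then Lemma~\ref{lem:X2-cuttwo} together with (c) ($z$ odd) gives the contradiction. Your sketch omits this $2$-cut step entirely.

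Finally, a structural point: $Q_u$ and $Q_v$ are only well-defined once part~(a) is known ($P'^L$ must end at $z$), so one cannot freely invoke them while proving~(a); the paper therefore proves (a) first using only $R_u^{LR}$ and $R_v^{LR}$, and your reordering (starting with~(c)) makes the dependencies circular.
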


A statement analogous to the claim with the third foot being $w^R$ on $C^R$ holds by symmetry (we will use that variant of the claim below).

\begin{proof}
We start with the proof of~\eqref{it:3rdfoota}
Note that $P'^L$ is disjoint from $P_1^L$ (since its vertices lie outside $C_2$ which contains all vertices of $P_1^L$ in its interior).
The end vertex of $P'^L$ on $P_2^R$ must be $z$, the end vertex of $P$ on $P_2^R$: If $P'^L$ had an end vertex $z' \neq z$ on $P_2^R$,
there would be two vertex-disjoint cycles: consider the cycle starting with $w^L$, following $P'^L$ up to $z'$, then the part of $P_2^R$ either to $u^R$ or  $v^R$ whichever avoids $z$, and finally, the part of $C^R$ back to $w^R$ avoiding either $u^R$ or  $v^R$; this cycle is vertex-disjoint from either $R_u$ or $R_v$, depending on which of these was avoided by the first cycle. Hence, $P'^L$ ends in $z$ on $P_2^R$, which concludes the proof of~\eqref{it:3rdfoota}.

We continue with~\eqref{it:3rdfootb}. Claim~\eqref{it:3rdfootb} is true for the vertices in the core of $H$, since the core of $H$ is a tree, leaving only vertices of $C^L$, $u^R$ and $v^R$ due to the fact that the interior vertices of $P'^L$ are in the interior of $C_1$. If there were a path from an interior vertex
of $P'^L$ to a vertex of $C^L$, or $u^R$ or $v^R$, avoiding both $w$ and $z$, then such a path, together with a subpath of $C^L$, forms a path giving rise to a cycle in $G$ vertex-disjoint from an essential cycle corresponding to $R_u^{LR}$ or $R_v^{LR}$, since the subpath of $C^L$ can be chosen to avoid at least one of $u$ and $v$. This would contradict Lemma~\ref{lem:Xreduction333}.
By Lemma~\ref{lem:Xreduction333}, it also follows that the  $\{w^L, z\}$-bridge $B$ containing $P'^L$ is just a (subdivided) edge, i.e., all of its internal vertices are of degree two.
Indeed, if the bridge $B$ is not just a path it must contain a cycle disjoint from $z$ or $w^L$. The cycle is not contained in the core of $H$, which is a tree. The cycle must then pass through $w^L$ and not through $z$. Hence,  Lemma~\ref{lem:Xreduction333} applies
if we use $R_u^{LR}$ or $R_v^{LR}$.

Next, we prove~\eqref{it:3rdfootc}.
If $z$ is even, or can be made even by flips, then $Q_u$ and $R_v^{LR}$ would correspond to nearly disjoint essential cycles in $G$, contradicting Lemma~\ref{lem:2disjoint}.  Hence, $z$ is odd and cannot be made even by flips.

Finally, we prove~\eqref{it:3rdfootd}.
 The parts~\eqref{it:3rdfoota} and~\eqref{it:3rdfootb}, imply that any edge of $H^{LR}$ incident to $w^L$ must be contained in a $\{w^L, z\}$-bridge that is a path (possibly just an edge). Hence, if there is another such  edge incident to $P'^L$ besides the one in $P'^L$ then $\{w,z\}$ forms a 2-cut in $G$. Combining this with the oddness of $z$, proved in part~\eqref{it:3rdfootc}, Lemma~\ref{lem:X2-cuttwo} then implies that $w^L$ has degree $1$ in $H$, that is, the core of $H$ has only one leg ending in $w^L$, which concludes the proof.
\end{proof}

We are now in a position to prove $(i)$ for Case $(A)$.

\subsubsection*{CASE (A) - Establishing Property $(i)$.}

We would like to argue that $Q_u$ has only a single bridge; this need not be true, however. For example, if we subdivide the edge $u^Lz$ in $G'$ in the left picture of Figure~\ref{fig:P1P2DoNotInt1}, the resulting path is a bridge of $Q_u$. We therefore work with a $Q_u$-shortcut, $\sct(Q_u)$. Claim~\ref{clm:Xshort} tells us that $\sct(Q_u)$ is an essential cycle in $G$ and
that any bridge of $Q_u$ (in $G$) is contained in some bridge of $\sct(Q_u)$, or has become part of $\sct(Q_u)$.

Suppose then that $\sct(Q_u)$ contains a bridge $H_2$ other than the bridge containing $R_v^{LR}$. If $H_2$
has two feet on $\sct(Q_u)$ different from $z$, then $\sct(Q_u) \cup H_2$ contains an essential cycle avoiding $z$ (passing through the two feet),
which contradicts Lemma~\ref{lem:2disjoint}, since $R_v^{LR}$ corresponds to a disjoint essential cycle, so this does not happen.
Since $G$ is $2$-connected, by Lemma~\ref{lem:X2-connected}, $H_2$ must have exactly
two feet on $\sct(Q_u)$, one of which is $z$. The other foot of $H_2$ cannot be one
of the degree-two vertices short-cutting $Q_u$ (being a foot), and it cannot belong to the core of $H$, since that is a tree.
Therefore, the other foot must lie on $C$. Suppose the foot lies on $C^L$, and call it $w'^L$.
We allow the case $w'^L = u^L$, but note
that $w'^L \neq w^L$, since $w^L$ has degree $1$ in $H$ as we argued earlier.

By Claim~\ref{clm:3rdfoot}~\eqref{it:3rdfootb} and~\eqref{it:3rdfootd},  $H_2$ is a subdivided edge (in case $w'^L = u^L$ it {\em is} an edge, by Lemma~\ref{lem:X2-cutodd}, since both $z$ and $u$ are odd; we work with $X$-configuration $(C,Q_v)$ to apply the lemma). Now $\sct(Q_u)$ contains a path from $z$ to $w'^L$;
if this path contained any vertex of $Q_u$, it would have been replaced by $H_2$ in $\sct(P)$; since, by assumption, that
did not happen, the path from $z$ to $w'^L$ in $\sct(Q_u)$ contains no vertices of $Q_u$, which implies that it a (subdivided) edge.
This contradicts Lemma~\ref{lem:X2-cuttwo},
because $z$ is not even; we work with $X$-configuration  $(C, Q_v)$. The lemma applies, since  $w'$ belong to $C$ and not to $Q_v$, and $z$ does not belong to $C$.

Hence $\sct(Q_u)$ has only one bridge, the bridge containing $R_v^{LR}$. Moreover, the cycles corresponding to $\sct(Q_u)$ and $R_v^{LR}$ are essential and intersect in a single vertex $z$, which, as we saw, cannot be made even by flips. This concludes the argument of part $(i)$ of Case $(A)$.

\subsubsection*{CASE (A)- Establishing Property $(ii)$.}

For part $(ii)$ we can assume that $C$ has only a single bridge.


Suppose $C$ has length at least $4$. Then there must be a fourth vertex $w'$ on $C$.\marginnote{$w'$ \bfseries{(A)}}
If $w'$ is not incident to a leg of $H^{LR}$ we can suppress $w'$ thereby violating the minimality of the counterexample.
So $w'$ is incident to a leg of $H^{LR}$, and this leg must be contained in a cycle vertex-disjoint from an essential one,
contradicting Lemma~\ref{lem:2disjoint}: If $w'^L$ is incident to such a leg $l$, a cycle passing through $l$, and  containing $z$ and $w^L$ is vertex-disjoint from an essential cycle whose edge set is the symmetric difference of $E(C_1)$ and $E(C^L)$.
Otherwise, if $l$ is incident to $w'^R$; in this case, a cycle passing through $l$ and a subpath of $C_1$, and containing  $v^R$, say, is vertex-disjoint from $Q_u$. In both cases, we contradict Lemma~\ref{lem:2disjoint}.

Therefore, $C$ has length $3$. Consider the case that $w$ has degree $3$ in $G$. In that case we are done, by Lemma~\ref{lem:X3deg}, since the edge $uv$ violates the lemma: $u$ and $v$ are odd, and the essential cycle $\sct(Q_u)$ (or $\sct(Q_v)$), as we argued in part $(i)$, has a single $\sct(Q_u)$-bridge, and removing the edge $uv$ does not change that. If all vertices of $\sct(Q_u)$ can be made even by flips, then, by Corollary~\ref{cor:esseven}, $G - uv$ has a weakly compatible embedding in the torus, and to that embedding we can add back $uv$ to get a weakly compatible embedding of $G$ in the torus. Hence $\sct(Q_u)$ contains an odd vertex which cannot be made even by edge-flips, and, since $\sct(Q_u)$ has a single bridge, there must be a cycle $C''$ in $G - uv$ so that $(\sct(Q_u), C'')$ is an $X$-configuration, contradicting Lemma~\ref{lem:X3deg}.

We conclude that $w$ has degree at least $4$ and $C$ consists of three vertices only, $u,v$ and $w$.
We argued earlier that $w^L$ has degree at most one in a $C$-bridge.
 Claim~\ref{clm:3rdfoot}~\eqref{it:3rdfootb} and~\eqref{it:3rdfootd}
applies also to $w^R$.
Hence, $w$ has degree exactly $4$, with both $w^L$ and $w^R$
being of degree $3$ in $G'$. We can then define $Q_u$ and $Q_v$ with $w^R$ playing the role of $w^L$.
It follows (by symmetry) that in $G$ there exists an essential cycle avoiding $z$ and $v$, and an essential cycle avoiding $z$ and $u$ that can play the role of $C$ similarly as the cycles corresponding to $\sct(Q_u)$ or $\sct(Q_v)$. Moreover, there is a path $P'^R$ between $w^R$ and $z'$, the endpoint of $P$ on $P_1^L$; we have $z \neq z'$ since $P_1^L$ and $P_2^R$ do not intersect by assumption. As we did with $P'^L$, we can show that $P'^R$ is a subdivided edge, and $z'$ is odd.

Let us summarize what we know at this point: $C$ consists of vertices $u$, $v$, and $w$; $w$ has degree $4$ in $G$ and is connected to $u$ and $v$ by an edge, and to $z$ and $z'$ by subdivided edges. We claim that $G$ has no other $\{u,v,z\}$-bridge (attached at all three vertices) besides the one containing $w$. If it did,
we could replace $zu^L$ in $Q_u$ (for $w^L$) by a path from $z$ to $u$ through that bridge (avoiding $v$), creating a vertex-disjoint cycle to the $Q_v$ we constructed for $w^R$ (which passes through $w_R$ so belongs to the bridge containing $w$). This would contradict Lemma~\ref{lem:2disjoint}. Since $z$ and $z'$ are odd and do not lie on $C$, Lemma~\ref{lem:X2-cutodd} implies that $\{z, z'\}$ is not a cut, and similarly $\{u, z\}$ and $\{v, z\}$.
Since there are no cycles disjoint from $C$, and we have accounted for all edges attaching to $C$, this means that $z$ and $z'$ if they are connected,  are connected by an edge.
We conclude that $G$ is a (subgraph of a) subdivision of $K_5$ with all vertices, $u$, $v$, $w$, $z$, and $z'$ odd. By Lemma~\ref{lem:K5HTWC} such a $G$ is not a counterexample.

\smallskip

\subsubsection*{ CASE (B) $P_1^L$ and $P_2^R$ intersect.}

The intersection of $P_1^L$ and $P_2^R$ must be a path, possibly consisting of a single vertex: If $x$ and $y$ are interior vertices of $P_1^L$,
then the unique path between $x$ and $y$ in the core of $H^{LR}$ (which is a tree) must belong to $P_1^L$. The same is true
for $P_2^{R}$, showing that $P_1^L$ and $P_2^R$ intersect in a path.

First, we consider the case that $C_1$ and $C_2$ are (edge)-complements of each other on $C$.

\subsubsection*{ CASE (B1) $C_1$ and $C_2$ are complementary on $C$.}

See Figure~\ref{fig:rotationAtZ}~(left) for an illustration; in the figure the red subpath of $C^L$ is the complement of the grey subpath of $C^R$.

We correct the rotation at the vertices of $P_1^L$ so that the edges of $P_1^L$ cross every other {\em adjacent} edge in $G'$ an even number of times. Let $H'^{LR}$ denote the union of $P_1^L$ and $P_2^R$.\marginnote{$H'^{LR}$ \bfseries{(B1)}}

Let $R_u^{LR}$ and $R_v^{LR}$\marginnote{$R_u^{LR}$, $R_v^{LR}$ \bfseries{(B1)}}
denote the paths joining $u^L$ and $u^R$, and $v^L$ and $v^R$, respectively, in $H'^{LR}$. Since $C$ is essential, and $C_1$ and  $C_2$ are non-essential,
$C_1 \oplus (C_2 \oplus C)=R_u \oplus R_v$ has non vanishing homology over $\ZN_2$. Hence, $R_u$ and $R_v$ as curves cross an odd number of times. Indeed, $R_u$ and $R_v$ are essential cycles in $G$, and therefore belong to different non-vanishing homology class over $\ZN_2$.
It follows that the order of end vertices of $P_1^L\cap P_2^R$ along $P_1^L$ and $P_2^R$ when traversing  $P_1^L$ and $P_2^R$, respectively, from $u^L$ to $v^L$, and from $u^R$ to $v^R$, is reversed.

Let $w^R$\marginnote{$w^R$ \bfseries{(B1)}}
denote a third foot of $H^{LR}$ (a third foot must exist by Lemma~\ref{lem:X2ftCBridge}, we arbitrarily assume it lies on $C^R$, the case $C^L$ is symmetric). 

Let $z$\marginnote{$z$, $P^R$ \bfseries{(B1)}}
denote the vertex in  $H'$, but not on $C$, joined by a path $P^R$ connecting $H'^{LR}$ with $w^R$ (internally disjoint from $C^R,C^L$ and $H'^{LR}$).

\begin{claim}
\label{clm:z}
The vertex $z$ lies in the intersection of $P_1^{L}$ and $P_2^{R}$, and there exists no pair of vertex-disjoint paths in $G'$ internally disjoint from $H'^{LR}$ joining the intersection of $P_1^{L}$ and $P_2^{R}$ with $C^R$, and the same applies to $C^L$ by symmetry.
\end{claim}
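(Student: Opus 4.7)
The plan is to prove each assertion by contradiction, in each case exhibiting two vertex-disjoint essential cycles in $G$ and invoking Lemma~\ref{lem:2disjoint}.

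For the first assertion, suppose for contradiction that $z \notin P_1^L \cap P_2^R$. Since $z \in V(H'^{LR}) \setminus V(C)$, $z$ must be an interior vertex of one of the four arms of $H'^{LR}$ outside the intersection; by the $L \leftrightarrow R$ and $u \leftrightarrow v$ symmetries I may assume $z$ lies in the interior of the subpath of $P_1^L$ from $u^L$ to $s_1$. Form a closed walk $\Gamma$ in $G$ by concatenating three pieces: the subpath of $P_1$ from $u$ to $z$, the image $P$ in $G$ of $P^R$ (a path from $z$ to $w$), and the arc of $C$ from $w$ to $u$ that avoids $v$. In $G'$ this is a walk from $u^L$ to $u^R$, so in $G$ it is a cycle wrapping $C$ once at $u$, and hence $\Gamma$ is essential. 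The essential cycle corresponding to $R_v^{LR}$ in $G$ has its vertex set contained in $\{v\} \cup V(P_1^L \cap P_2^R) \cup V(v\text{-arm of } P_1) \cup V(v\text{-arm of } P_2)$, while every vertex of $\Gamma$ lies on the $u$-arm of $P_1$, in $C \setminus \{v\}$, in the interior of $P$ (which is internally disjoint from $H'^{LR}$), or equals $u$. Hence $\Gamma$ and $R_v^{LR}$ are vertex-disjoint essential cycles, contradicting Lemma~\ref{lem:2disjoint}.

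For the second assertion, suppose two vertex-disjoint paths $\alpha$ (from $s_a \in P_1^L \cap P_2^R$ to $w_a \in C^R$) and $\beta$ (from $s_b \in P_1^L \cap P_2^R$ to $w_b \in C^R$) exist in $G'$, each internally disjoint from $H'^{LR}$; by truncating we may assume their interiors avoid $C^L \cup C^R$. After possibly swapping $\alpha \leftrightarrow \beta$, I assume $s_a$ precedes $s_b$ when $P_1^L \cap P_2^R$ is traversed from $s_1$ to $s_2$. I build two cycles: $\mathcal{Z}_u$ concatenating the $u$-arm of $P_1$, the subpath of $P_1^L \cap P_2^R$ from $s_1$ to $s_a$, the path $\alpha$, and an arc of $C^R$ from $w_a$ back to $u^R$; and $\mathcal{Z}_v$ defined analogously with $v, s_2, s_b, \beta, w_b$, using an arc of $C^R$ from $w_b$ to $v^R$. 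In $G'$ each cycle is a walk from $u^L$ to $u^R$ (respectively $v^L$ to $v^R$), so each is essential in $G$. The subpaths of $P_1^L \cap P_2^R$ used are $S[s_1,s_a]$ and $S[s_2,s_b]$, which are disjoint by the choice of order; the $P_1$-arms are disjoint; and $\alpha, \beta$ are disjoint by hypothesis. To handle the arcs of $C^R$, I enumerate cyclic orders of $\{u^R,v^R,w_a,w_b\}$ on $C^R$: in three of the four possible cyclic orders, the arc from $w_a$ to $u^R$ avoiding $v^R$ and the arc from $w_b$ to $v^R$ avoiding $u^R$ are already disjoint; in the remaining ``crossing'' order $(u^R,w_b,w_a,v^R)$, I exchange the pairing of $\alpha,\beta$ with $\mathcal{Z}_u,\mathcal{Z}_v$ or substitute $P_2$-arms for $P_1$-arms so that the two cycles become vertex-disjoint. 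In every case we obtain two vertex-disjoint essential cycles, contradicting Lemma~\ref{lem:2disjoint}.

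The main obstacle is the case analysis for the second assertion, specifically verifying that in every cyclic order of $\{u^R,v^R,w_a,w_b\}$ on $C^R$ the arms and arcs can be consistently chosen so that $\mathcal{Z}_u$ and $\mathcal{Z}_v$ remain disjoint; the ``crossing'' order is the delicate subcase. The corresponding statement for $C^L$ is immediate from the $L \leftrightarrow R$ symmetry of the construction.
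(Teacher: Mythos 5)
Your strategy matches the paper's: exhibit two vertex-disjoint essential cycles and invoke Lemma~\ref{lem:2disjoint}. The paper's own proof defers entirely to Figure~\ref{fig:P1P2DoInt0}, so spelling out the cycle constructions is welcome. However, a key reduction step in the first assertion is not justified, and the ``crossing'' subcase of the second assertion is left unresolved.

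For the first assertion, the appeal to an $L\leftrightarrow R$ symmetry is not legitimate. Swapping $L$ and $R$ interchanges $P_1^L$ with $P_2^R$, but it also turns $P^R$ (a path from $z$ to the foot $w^R\in C^R$) into a path terminating on $C^L$, which is a different configuration; the object $z$ is defined by a path to $C^R$ specifically. Only the $u\leftrightarrow v$ symmetry is available, so the case where $z$ lies in the interior of an arm of $P_2^R$ is genuinely left out. In that case your proposed cycle $\Gamma$ becomes the closed walk $u^R\to z\to w^R\to u^R$ (part of $P_2^R$, then $P^R$, then an arc of $C^R$), which touches only $C^R$ rather than being an $LR$-diagonal. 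Such a walk is \emph{not} automatically essential: a simple closed curve in the cylinder touching only one boundary circle may well bound a disk. So you cannot conclude that $\Gamma$ and $R_v^{LR}$ are two disjoint essential cycles. When $z$ is on $P_1^L$, by contrast, $\Gamma$ joins $u^L$ to $u^R$ and its essentiality is immediate, which is presumably why that is the case the figure depicts. You need a separate argument for $z\in P_2^R$: either a different pair of disjoint essential cycles that avoids the intersection path (which is hard, since the only $u^L$--$u^R$ path in $H'^{LR}\cup P^R$ now runs through $P_1^L\cap P_2^R$ and hence meets $R_v^{LR}$), or a direct proof that the walk through $z,w^R$ and $C^R$ wraps the cylinder.

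For the second assertion, you correctly flag the cyclic order $(u^R,w_b,w_a,v^R)$ as the delicate subcase, but the two remedies you sketch do not plainly work. Exchanging the pairing (routing $\mathcal{Z}_u$ through $\beta$ at $s_b$ and $\mathcal{Z}_v$ through $\alpha$ at $s_a$) makes the two intersection subpaths $[s_1,s_b]$ and $[s_a,s_2]$ overlap on the nonempty segment $[s_a,s_b]$, so the cycles are no longer vertex-disjoint. Substituting the $P_2$-arm for the $P_1$-arm turns the cycle into a closed walk touching only $C^R$, reintroducing the essentiality problem above. This subcase needs a concrete construction, or an argument that the crossing order cannot occur.
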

\begin{proof}
Refer to  Figure~\ref{fig:P1P2DoInt0}.
If the first part of the claim does not hold we contradict  Lemma~\ref{lem:2disjoint}, see the top-left part of the figure. If the second part of the claim does not hold we are done by the same token, see the top-right and bottom part of the figure.
\end{proof}

\begin{figure}[htp]
\centering
\includegraphics[scale=0.7]{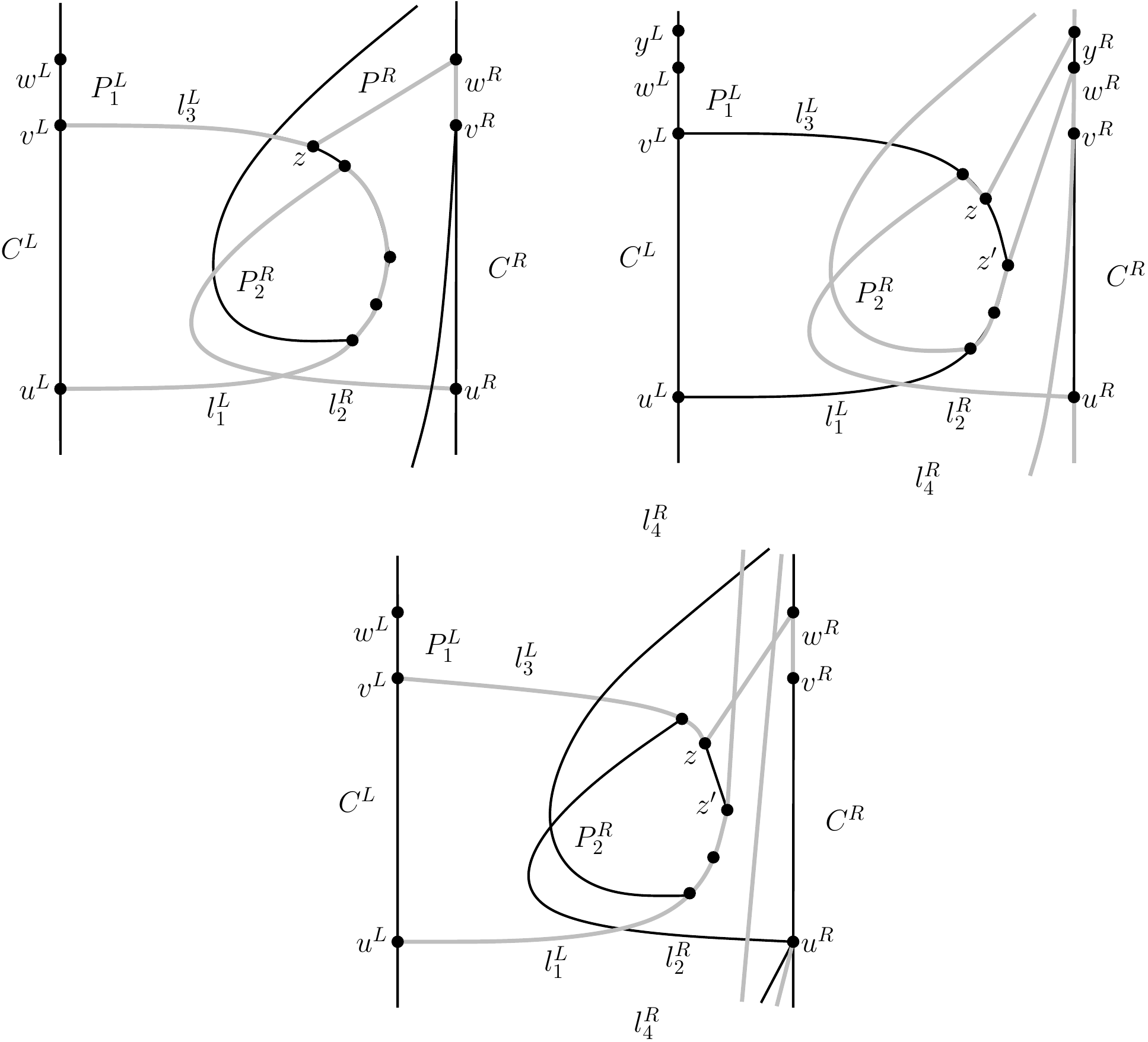}
 \caption{$P_1^L$ and $P_2^R$  intersect. A pair of gray essential cycles violating Lemma~\ref{lem:2disjoint} if $z$ is not in the intersection of $P_1^L$ and $P_2^R$ (left). A pair of disjoint paths joining the intersection of $P_1^L$ and $P_2^R$ with $C^R$ yielding a pair of gray essential cycles violating Lemma~\ref{lem:2disjoint} (right,bottom). }
\label{fig:P1P2DoInt0}
\end{figure}

 \begin{figure}[htp]
\centering
\includegraphics[scale=0.7]{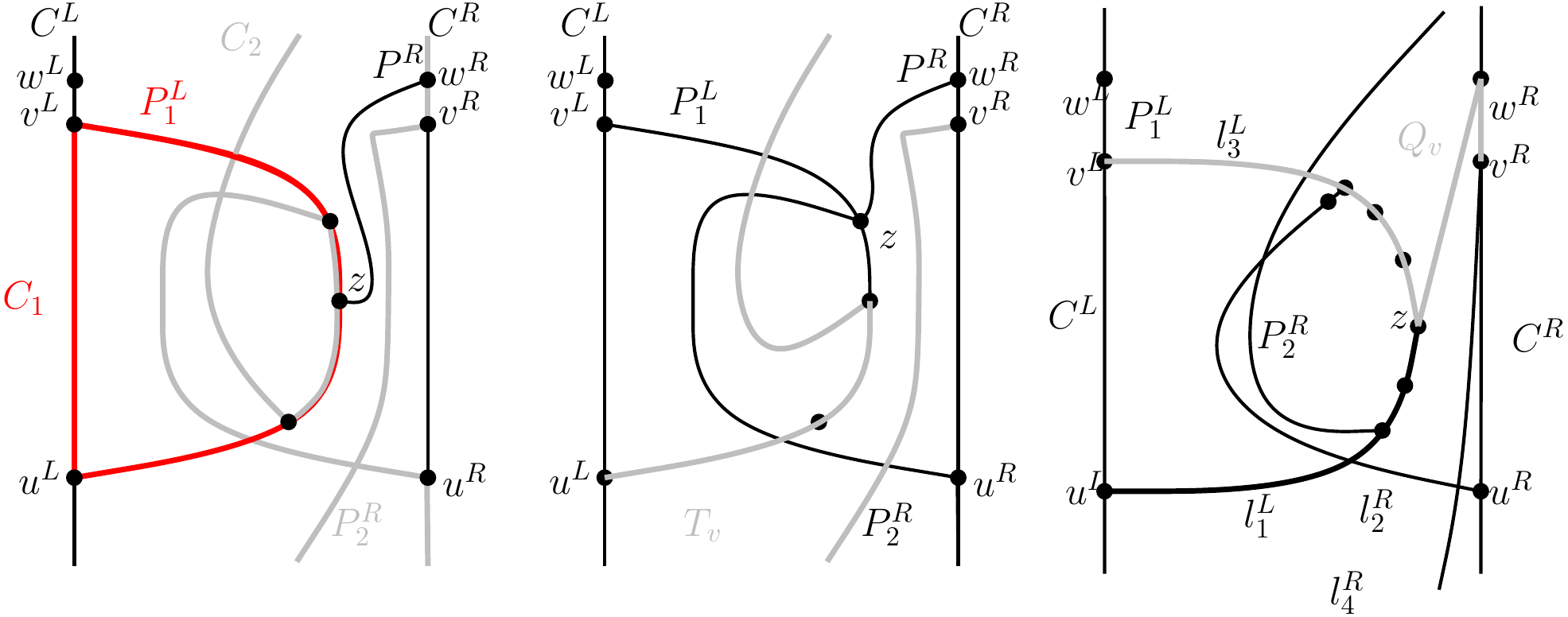}
 \caption{The interiors of $C_1$ (red), and $C_2$ (grey) in the regions incident to their vertices (left). The dashed edges (paths) are not possible (right). The one between $z$ and $w^L$ due to Lemma~\ref{lem:2disjoint}. }
\label{fig:rotationAtZ}
\end{figure}

Recall that $R_u^{LR}$ is the path in $H'^{LR}$ joining $u^L$ with $u^R$, and similarly for $R_v^{LR}$. By choice of $u$ and $v$
both $(C, R_u^{LR})$ and $(C,R_v^{LR})$ are $X$-configurations, but $R_u^{LR}$, and $R_v^{LR}$, may have more than one bridge.
That is, why we work with $\sct(R_u^{LR})$, and $\sct(R_v^{LR})$, the $R_u^{LR}$- and $R_v^{LR}$-shortcuts.

\subsubsection*{CASE (B1)- Establishing Property $(i)$.}

We prove property $(i)$ by showing that $\sct(R_u^{LR})$ is an essential cycle with a single bridge (and similarly for $\sct(R_v^{LR})$). By Claim~\ref{clm:Xshort}, $\sct(R_u^{LR})$ is an essential cycle, so we have to show that $\sct(R_u^{LR})$ only has a single bridge, which then implies that $\sct(R_u^{LR})$ is part of an $X$-configuration by Claim~\ref{clm:Xshort}.


We will work with an essential cycle $C_3$\marginnote{$C_3$ \bfseries{(B1)}}
in $G$ defined as follows: The cycle $C_3$ combines $P^R$, the subpath of $C^R$ connecting $w^R$ with $v^R$ avoiding $u^R$, and the part of $P_2^R$ between $v^R$ and $z$. See Figure~\ref{fig:P1P2DoInt1a}. The cycle $C_3$ is essential due to the fact that its small generic perturbation crosses $R_u^{LR}$ an odd number of times.

 \begin{figure}[htp]
\centering
\includegraphics[scale=0.7]{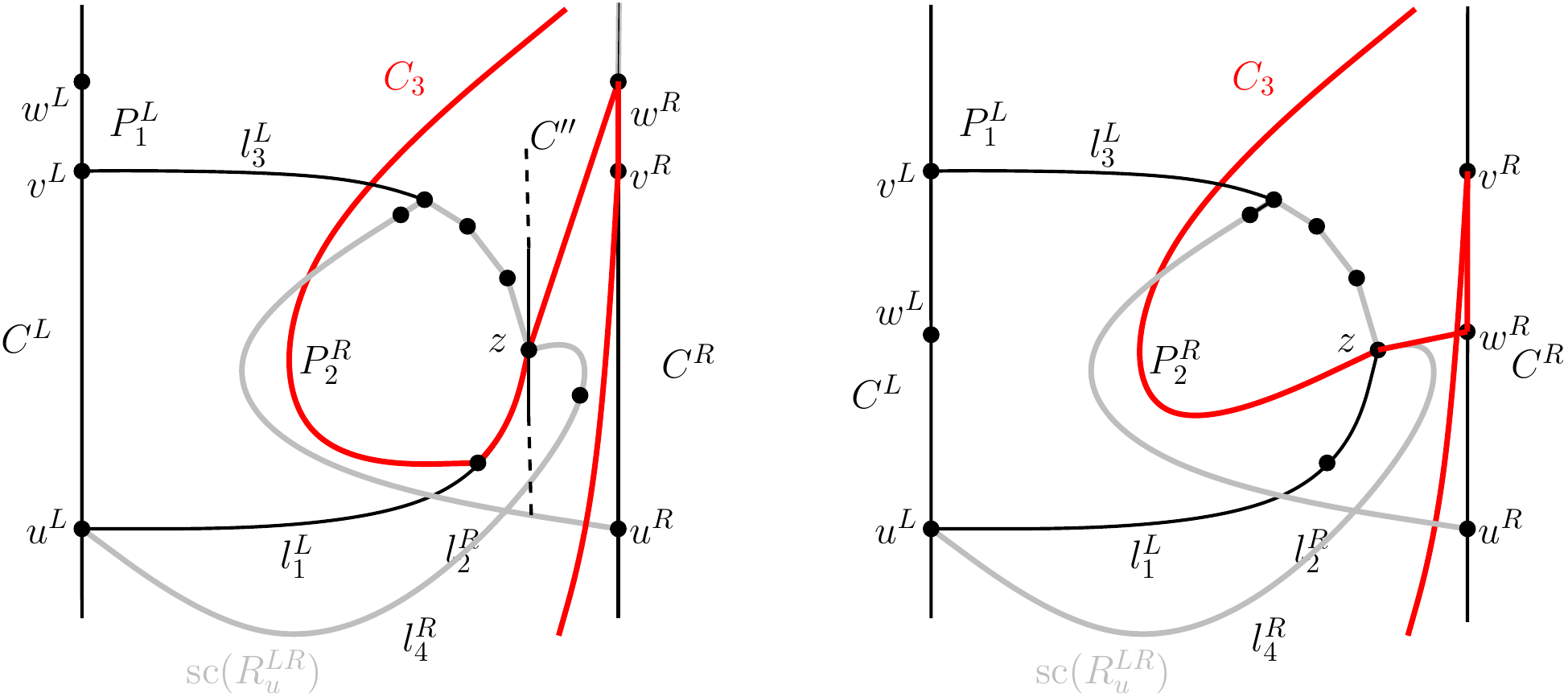}
 \caption{Path $\sct(R_u^{LR})$ and the cycle $C_3$. The dashed path cannot exist if $z$ is even since
 it yields a pair of nearly disjoint essential cycles, which in $G'$ form a pair of paths between $u^L$ and $u^R$, and $v^L$ and $v^R$, respectively. We depict the two cases depending on the position of $w^R$ on $C^R$ with respect to $C_2$.}
\label{fig:P1P2DoInt1a}
\end{figure}

Suppose there were a $\sct(R_u^{LR})$-bridge $H_2$ different from the bridge containing $C- u$. Then $H_2$ cannot have two
feet in the interior of $\sct(R_u^{LR})$, since this would imply that $H_2 \cup \sct(R_u^{LR}) - u$ contains a cycle vertex-disjoint from $C$, which contradicts Lemma~\ref{lem:Xreduction333}. Since $H_2$ must have at least two feet (by Lemma~\ref{lem:X2-connected}), those feet must be $\{u,u'\}$, where $u'$ is an interior vertex of $\sct(R_u^{LR})$. Since
vertices of $\sct(R_u^{LR})$ have either degree $2$ or belong to $R_u^{LR}$, we know that $u'$, as a foot, must belong
to $R_u^{LR}$.

Since there are two $\sct(R_u^{LR})$-bridges, $\{u,u'\}$ must be a cut-set. We can then apply Lemma~\ref{lem:X2-cutodd}, with $X$-configuration
$(C, R_v^{LR})$, to conclude that $u'$ is even.

Let $F$ denote the union of connected components of $G- \{u',u\}$  not containing $v$ (this also includes an edge $u'u$ if it exists).
Let $H''=G[V(F)\cup \{u',u\}]$. By Lemma~\ref{lem:X2-cut2}, either $H''$ has only a single edge adjacent to $u'$ or there is an essential cycle $C''$  in $H''$. In the first case, $H''$ is a (subdivided) edge---by Lemma~\ref{lem:Xreduction333}---between $u$ and $u'$, which cannot be the case due to the existence of $H_2$.

Hence, $H''$ contains an essential cycle $C''$. $C''$ must pass through $u$ (otherwise $C''$ and $C$ are vertex-disjoint,
contradicting Lemma~\ref{lem:2disjoint}).
If $u' \neq z$, then $u'$ lies either before or after $z$ on the path $R_u^{LR}$ from $u^L$ to $u^R$. If $u'$ lies on
the $u^L$-to-$z$ part of $R_u^{LR}$, then $C''$ is vertex-disjoint from the
essential cycle $Q_v$\marginnote{$Q_v$ \bfseries{(B1)}}
through $v$ and $z$ consisting of the part of $P_1^L$ from $v^L$ to $z$ followed by $P^R$ and the part of $C^R$ from $w^R$ to $v^R$ avoiding $u^R$; if $u'$ lies on the $z$-to-$u^R$  part of $R_u^{LR}$, then $C''$ is vertex-disjoint from the essential cycle $C_3$. In both cases, Lemma~\ref{lem:2disjoint} applies, establishing a contradiction.

We conclude that $u' = z$. Recall that $u'$ is even. Therefore, if $C''$ touches either $Q_v$ or $C_3$ in $u' = z$, we are done by
Lemma~\ref{lem:2disjoint}. Hence $C''$ crosses both of those cycles in $z$. Then $C''$ must be nearly disjoint from $R_v^{LR}$, see Figure~\ref{fig:P1P2DoInt1a}. This contradicts Lemma~\ref{lem:2disjoint}.

This completes the proof of $(i)$ in case (B1).
\medskip

Before turning to $(ii)$, we show that there are two more paths in $G'$ that can (potentially) play the role of $C$, in the sense of being essential with a single bridge, and therefore part of an $X$-configuration.

Refer to Figure~\ref{fig:rotationAtZ}~(middle-right). Let  $Q_u$\marginnote{$Q_u$ \bfseries{(B1)}}
be the path in $G'$ obtained by following $P_1^L$ from $u^L$ to $z$, then following $P^R$ from $z$ to $w^R$, and $C^R$ from $w^R$ to $u^R$ avoiding $v^R$. We already defined $Q_v$ earlier ($P_1^L$ from $v^L$ to $z$, then $P^R$ from $z$ to $w^R$ and $C^R$ from $w^R$ to $v^R$ avoiding $u^R$). By Claim~\ref{clm:Xshort}, both $Q_u$ and $Q_v$ are essential cycles in $G$.
Let  $T_v$\marginnote{$T_v$ \bfseries{(B1)}}  denote the cycle
starting at $v^R$ following $P_2^R$ until it intersects $P_1^L$; from that intersection continue along $P_1^L$ to $u^L$ and, along $C^L$ from $u^L$ to $v^L$, avoiding $w^L$.
Similarly, $T_u$ \marginnote{$T_u$ \bfseries{(B1)}}
starts at $v^R$ and follows $P_2^R$ until it intersects $P_1^L$, which it follows to $u^L$; then from $u^L$ to $v^L$ along $C^L$ avoiding $w^L$.

If $z$ does not lie on $T_u$, then $(Q_u, T_u)$ forms an $X$-configuration; if $z$ does not lie on $T_v$, then $(Q_v, T_v)$ is an $X$-configuration. If $P_1^L$ and $P_2^R$ intersect in at least one edge, then $z$ cannot belong to both $T_u$ and $T_v$, so one
of the two cases holds.

\subsubsection*{CASE (B1)- Establishing Property $(ii)$. $P_1^L$ and $P_2^R$ contains at least one edge.}

We can now complete the proof of $(ii)$ in case that the intersection of $P_1^L$ and $P_2^R$ contains at least one edge.
 We can use the fact that either $(Q_u, T_u)$ or  $(Q_v, T_v)$ is an $X$-configuration (though we may not know which of the two it is). First, we show that  \\

$(\circ) \ $ $w^L$ cannot be a foot of $H^{LR}$. \\

Suppose for the sake of contradiction that $w^L$ is a foot of $H^{LR}$.
 Let $P_w^L$ denote the shortest path  between $w^L$ and $H'^{LR}$, that is completely contained in $H^{LR}$. We distinguish two cases depending on whether  $P_w^L$ ends in $z$.
If $P_w^L$  ends in $z$, then the essential cycle consisting of $P_w^L$, $P^R$ would be vertex-disjoint from either $T_u$ or $T_v$ (here we use that $P_1^L$ and $P_2^R$ have at least one edge in common), which is a contradiction.
Therefore, $P_w^L$ ends in a vertex $y\not = z,w^L$ on $H'^{LR}$. Then $y$ is
in the interior of the cycle $C_u^R$ formed by $P_u^R$ (or $P_v^R$ in which case a symmetric argument applies) and its $R$-foundation.
Indeed, the part of $P_1^L$ between $u^L$ and $y$ crosses $P_2^R$, and hence, $P_u^R$,  an odd number of times. If a pair of legs incident to $w^L$ and $w^R$ form an independent odd pair, $w$ can play the role of $v$ and we end up in {\bf CASE~(A)}.
Otherwise, since $y$ is in the interior of  $C_u^R$,
$P_u^R$ must intersect $P_w^L$ an odd number of times, which is not possible (contradiction). This completes the proof of $(\circ)$.\\

If $C$ consists of three vertices only, $u,v$ and $w$ then similarly as in {\bf  CASE (A)} we are done by Lemma~\ref{lem:X3deg}:
By~$(\circ)$, $w^L$ is only incident to the two $C$-edges (since $H$ is the only $C$-bridge now). In $G$, we can then split $w$ to make it of degree three and remove the edge $uv$ (this contradicts the minimality of $G$, the number of edges remains the same, but the number of vertices increased).

Hence, there must be a fourth foot of $H^{LR}$. We want to show that this implies that $z$ is odd. Suppose the fourth foot lies on $C^R$, call it $w'^R$.\marginnote{$w'^R$ \bfseries{(B1)}}
By Claim~\ref{clm:z}, the shortest path $S^R$ in $H^{LR}$ between $w'^R$ and $P_1^L \cup P_2^R$ must attach in $z$. If $z$ is even, then
$S$ attaches outside of $C_1$, and inside $C_2$, which means that $w'^R$ lies on $C_2$, like $w^R$.
Then the cycle formed by $P^R$, $S^R$, and the path from $w^R$ to $w'^R$ along $C^R$ avoiding $u^R$ is nearly disjoint from
from $\sct(R_u^{LR})$, contradicting Lemma~\ref{lem:Xreduction333}. We conclude that $z$ is odd in this case. We proved the following. \\

(*) If $w'^R$ is a foot of $H^{LR}$ different from $u^R,v^R$ and $w^R$ then the shortest path $S^R$ (in $H^{LR}$) between $w'^R$ and $H'^{LR}=P_1^L \cup P_2^R$ ends in $z$, which must be odd in this case. \\

Suppose the fourth foot of $H^{LR}$ lies on $C^L$, call it $w'^L$.\marginnote{$w'^L$ \bfseries{(B1)}}
Let $S^L$ be a shortest path between $w'^L$ and $H'^{LR}=P_1^L \cup P_2^R$. Similarly as $S^R$ from the above, path $S^L$ must end in the intersection $P_1^L \cap P_2^R$. Indeed, otherwise its concatenation with a part of  $P_1^L$ or $P_2^R$, and a part $C^L$ or $C^R$ is a cycle in $G$ that is vertex-disjoint from $R_u^{LR}$ or $R_v^{LR}$ (contradiction by Lemma~\ref{lem:Xreduction333}), in particular, it attaches to $P_1^L$.

We need the following analog of (*). \\

(**)  If $w'^L$ is a foot of $H^{LR}$ different from $u^L,v^L$ and $w^L$ then the shortest path $S^L$ (in $H^{LR}$) between $w'^L$ and $H'^{LR}=P_1^L \cup P_2^R$ ends in $z$, which must be odd in this case. \\

 Suppose $S^L$ instead attaches at a vertex $z' \neq z$. We distinguish four cases, based on whether $z'$ occurs before or after $z$ on $P_1^L$  on the way from $u^L$, and whether $w$ and $w'$ belong to the same, or different parts of $C - \{u,v\}$. Let us first assume that $w$ and $w'$ belong to different parts of $C - \{u,v\}$. Let us also assume that $z'$ occurs before $z$ on $P_1^L$. As we argued in part $(i)$, in this case $z$ does not lie on $T_v$, and so $Q_v$ is part of a $X$-configuration $(Q_v, T_v)$. But $Q_v$ is vertex-disjoint from $W_u'$ starting at $w'^L$, following $S^L$ to $P_1^L$, then following $P_1^L$ to $u^L$, and $C^L$ from $u^L$ to $w'^L$ avoiding $v^L$. This contradicts Lemma~\ref{lem:Xreduction333}.  If, instead, $z'$ occurs after $z$ on $P_1^L$, we know that $z$ does not lie on $T_u$, so $(Q_u, T_u)$ is an $X$-configuration. In this case, consider the cycle $W_v'$ starting at $w'^L$, following $S^L$ to $P_1^L$, then following $P_1^L$ to $v^L$, and $C^L$ from $v^L$ to $w'^L$ avoiding $u^L$. $W_v'$ is disjoint from $Q_v$, so we have a contradiction by  Lemma~\ref{lem:Xreduction333}.

We can therefore assume that $w$ and $w'$ belong to the same part of $C - \{u,v\}$. Suppose the ordering along $C$ is $u, w, w', v$. If $z'$ occurs before $z$ on $P_1^L$, consider $W_v$ starting at $v_R$, following $P_2^R$ until it intersects $P_1^L$, then following $P_1^L$ to $z'$, then $S$ to $w'^L$ and $C^L$ to $v^L$ avoiding $w^L$ and $u^L$. The second cycle is $W_u$ starting at $u^L$ following $P_1^L$ to $z$, then $P$ to $w^R$ and $C^R$ to $u^R$ avoiding $w'^R$ and $v^R$. Cycles $W_v$ and $W_u$ are vertex-disjoint, and, both are essential. Therefore Lemma~\ref{lem:2disjoint} applies and we are done in this case.
If the ordering along $C$ is $u, w', w, v$, we can swap sides and apply the same arguments to obtain a contradiction.

We conclude that $S^L$ attaches in $z$. Suppose $z$ were even. If the foot of $S^L$ on $C^L$ belongs to $C^L-C_1$, then $S^L$ attaches to $z$ from outside $C_1$ (like $P^R$), and the cycle
consisting of $P^R$, $S^L$, and the path from $w^R$ to $w'^R$ along $C^R$ avoiding $u^R$ is nearly disjoint from
from $\sct(R_u^{LR})$, contradicting Lemma~\ref{lem:Xreduction333}. Hence, the foot of $S^L$ must belong to
$C_1 \cap C^L$. Then the cycle consisting of $S$ (which must attach to $z$ from inside $C_1$) together
with the subpaths of $C_1$ connecting it to $u^L$ touches $Q_v$ and the cycle consisting of $S$ together
with the subpaths of $C_1$ to $v^L$ touches $Q_u$. Since at least one of $Q_u$ or $Q_v$ belongs to an $X$-configuration,
we can invoke Lemma~\ref{lem:Xreduction333} to conclude that this is a contradiction. Again, we conclude that $z$ is odd thereby completing the proof of~(**).

Together, (*) and (**) tell us that in the presence of a fourth foot, $z$ must be an odd vertex.
By Lemma~\ref{lem:X2-cutodd}, neither $\{z,v\}$,
nor $\{z,u\}$ are $2$-cuts (work with $(C,R_u)$ and $(C,R_v)$); and neither is $\{u,v\}$ (since there is only one $C$-bridge, and an additional foot). It follows that there are no $\{z,v\}$-,
$\{z,u\}$- or $\{u,v\}$-bridges, except, possibly, the edges $zu$, $zv$, and (certainly) $uv$. Since there are no cut-vertices,
this means that except for these three edges, $G$ is the union of $\{u,v,z\}$-bridges, with each bridge attaching to all three vertices.
We would like to show that every $\{u,v,z\}$-bridge is (a subdivision of) $K_{1,3}$, since this proves $(ii)$.
We distinguish two cases:

\subsubsection*{Subcase 1. The vertex $w^R$ is incident to exactly one leg of $H^{LR}$.}
Without loss of generality, we assume that $z$ does not lie on $T_v$, so $(Q_v,T_v)$ forms an $X$-configuration.
Let $H_w$\marginnote{$H_w$ \bfseries{(B1)}}
denote the $\{u,v,z\}$-bridge containing $w^R$, and thus $P^R$.

We will show that that $H_w$  is a (subdivision) of $K_{1,3}$, but first,
using that  $w^R$ is not incident to more than one leg of $H^{LR}$, we prove that \\

(***) the two edges of $P_1^L$ incident to $z$ belong to different $\{u,v,z\}$-bridges (or $\{u,z\}$, $\{v,z\}$-bridges) and both bridges are different
from $H_w$.  \\

For a contradiction, assume two of those bridges are the same; then there is a cycle $C_z$ passing through the two $P_1^L$-edges incident to $z$, or through one of those edges, and an edge belonging to $H_w$. Since $C_z$ belongs to a $\{u,v,z\}$-bridge (or $\{u,z\}$, $\{v,z\}$-bridge), it avoids both $u$ and $v$. Then the cycle $C_z$ must still intersect $C$ due to Lemma~\ref{lem:Xreduction333}, and therefore we either violate~(*) or~(**).

Next, we show that $H_w$ can have only one leg at $z$. If there were two legs, then there would be a cycle $C_z$
through $z$ in $H_w$. By~(***), $H_w$ does not contain any of the two edges of $P_1^L$ incident to $z$, neither does $C_z$, so $C_z$
is disjoint from  $T_v$, contradicting  Lemma~\ref{lem:Xreduction333}; here we use that $z$ does not belong to $T_v$. Furthermore, a cycle avoiding $z$ and $u$ (respectively, $v$) in $H_w$ is disjoint from $R_u^{LR}$ (respectively, $R_v^{LR}$), which is again a contradiction with Lemma~\ref{lem:Xreduction333}.
Hence, $H_w$ cannot contain any cycle. Finally, there are no cut-vertices by Lemma~\ref{lem:X2-connected} in $G$, and therefore $H_w$ is a (subdivision of) $K_{1,3}$.

By~(***), $Q_v$ passes through a trivial $\{v,z\}$-bridge or $\{u,v,z\}$-bridge $H_{Q_v}$ and $H_w$; and $T_v$ is contained in a single $\{u,v,z\}$-bridge $H_{T_v}$ which $Q_v$ meets only in $v$. Both $H_{T_v}$ and $H_{Q_v}$ have a very simple structure which we delve further into next.

\begin{figure}[htp]
\centering
\includegraphics[scale=0.7]{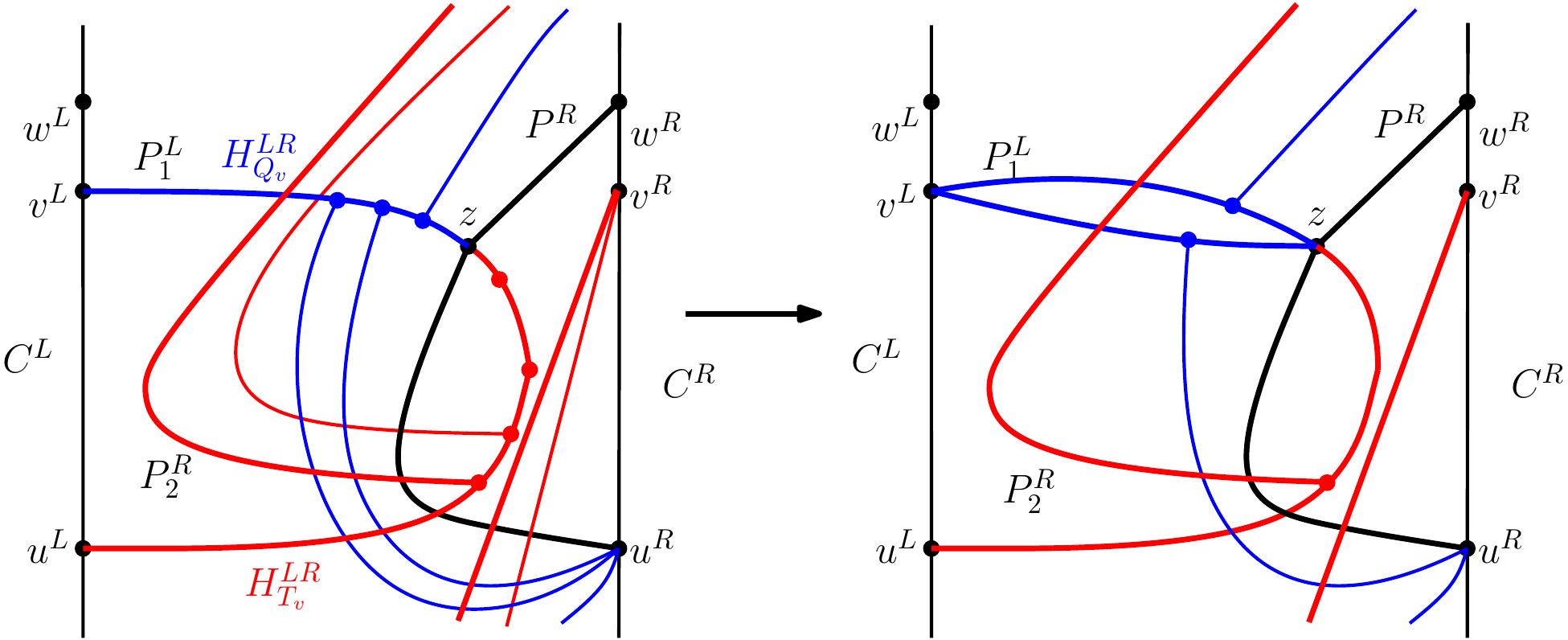}
 \caption{The bridges $H_{T_v}^{LR}$ and $H_{Q_v}^{LR}$ (left). The doubling of the path in $H_{Q_v}^{LR}$ between $z$ and $v^L$ (right).}
\label{fig:doubling}
\end{figure}

\medskip

Refer to Figure~\ref{fig:doubling}.
 We  show that   \\

 (****)~The vertex $v$  is  incident to a single edge in $H_{Q_v}$, and that   $u$ is incident to a single edge in $H_{T_v}$. \\

If $H_{T_v}$ contains a cycle passing through $u$ and disjoint from $v$ and $z$ then this cycle is disjoint from $Q_v$ (contradiction with  Lemma~\ref{lem:Xreduction333}).
If $H_{Q_v}$ contains a cycle passing through $v$ and disjoint from $u$ and $z$ then this cycle is disjoint either from $R_u^{LR}$ or $Q_u$ depending on whether $z$ is the end point of $P_1^L \cap P_2^R$. In particular, if $z$ is  the end point of $P_1^L \cap P_2^R$ then it is disjoint from $R_u^{LR}$,  and otherwise it is disjoint from $Q_u$ and also  $(Q_u,T_u)$ forms an $X$-configuration in this case, and hence, we can apply  Lemma~\ref{lem:Xreduction333} (to derive a contradiction).

By~(***) and~(****), we see that both $v$ and $z$ are  incident to a single edge in $H_{Q_v}$. Therefore if $H_{Q_v}$ contains a cycle passing through $u$ and disjoint from $v$ and $z$ then this cycle must intersect $P_1^L$. Since no cycle is disjoint from $C$,
$H_{Q_v}$ is  a path with a single bridge $B_u$ which is a (subdivided) star with the center $u$. Unless $H_{Q_v}$ contains two edges attached to the opposite sides of $P_1^L$ at an even vertex or each at a different even vertex
we can remove  edges from  $B_u$ in order to convert  $H_{Q_v}$ into a (subdivision of) $K_{1,3}$ (apply the minimality of the counterexample and reinsert the deleted edges into the embedding). Otherwise, we remove all the edges from  $B_u$, except for the $u$-to-$P_1^L$ paths ending in the two special  edges attached to the opposite sides of $P_1^L$. Again, this is possible (apply the minimality of the counterexample and reinsert the deleted edges into the embedding). Now, we split $H_{Q_v}$ into two (subdivisions of) $K_{1,3}$ by doubling $P_1^L$ except at its end-vertices, which can be easily achieved while constructing an \iocro-drawing of the resulting graph. Indeed, every interior vertex of  $P_1^L$  is even at this point.

Similarly, if $H_{T_v}$ contains a cycle passing through $v$ and disjoint from $u$ and $z$ then this cycle must intersect  $P_1^L$ and analogously as in the previous paragraph we reduce $H_{T_v}$ to at most 2 subdivisions of $K_{1,3}$.
Now, any other $\{u,v,z\}$-bridge is easily seen to be (almost)  a subdivided $K_{1,3}$: a cycle through $u$ in such bridge that avoids $z$ and $v$ is disjoint from $Q_v$. Now, a cycle through $v$ in any other $\{u,v,z\}$-bridge  that avoids $u$ and $z$ is either disjoint from  $R_u^{LR}$ (contradiction), or we can again split the bridge into at most two $K_{1,3}$, and we are done.

\subsubsection*{Subcase 2. The vertex $w^R$ is incident to more than one leg of $H^{LR}$.}

If there exists at least three legs of $H^{LR}$ at $w^R$ they can be extended into pairwise interior disjoint paths ending on $P_1^L$. By Lemma~\ref{lem:X2-cuttwo}, they cannot all end in $z$, since $z$ is odd.  Hence, one of them ends in $z'\not= z$. It follows that both $(Q_v,T_v)$ and $(Q_u,T_u)$ form an $X$-configuration if we appropriately redirect $Q_v$ or $Q_u$ through $z'$.
Now, we  obtain a cycle through two of the legs  disjoint from $T_v$ or $T_u$  (contradiction with Lemma~\ref{lem:Xreduction333}).
Hence, using $(\circ)$ there exists at most two legs and therefore $w^R$ can be made even, but in this case we have a cycle nearly disjoint from $C$ (contradiction with Lemma~\ref{lem:Xreduction333}).

 We conclude that all $\{u,v,z\}$-bridges (except for the edges $\{uz,vz,uv\}$) are subdivisions of $K_{1,3}$
attached to $\{u,v,z\}$.


This completes the proof of $(ii)$ in the case that $P_1^L \cap  P_2^R$ contains at least one edge. \\

\subsubsection*{CASE (B1)- Establishing Property $(ii)$. $P_1^L$ and $P_2^R$ intersect in a single vertex.}

We are left with showing $(ii)$ in the case that $P_1^L$ and $P_2^R$ intersect in a single vertex $z$. If $z$
can be made even by flips, we are done by Lemma~\ref{lem:2disjoint}, since $R_u^{LR}$ and $Q_v$ touch in the even vertex $z$.
We conclude that $z$ is odd, and cannot be made even by edge-flips.

At this point we can complete the argument as in the case that $P_1^L$ and $P_2^R$ have an edge in common.  In this case, the previous argument applies unless there exists a path $S$ in $H^{LR}$ connecting $z$ and $w^L$; or $w'^R$ and $w'^R$ such that $w'^R$ is in $C_2$ if and only if $w''^R$ is in $C_2$.

 If there exits a pair of feet $w'^L$ and $w''^R$ of $H^{LR}$ such that $w'$ and $w''$ are both in  $C- C_2$ or both in $C_2$,
 and both $w'$ and $w''$ are even, possibly $w'=w''$, we remove all the legs of $H^{LR}$ incident to $C^R- C_2$  except for the pair attached at $w^L$ and $w'^R$.
 Then we double the part of $C$ passing through $C- C_2$ or $C_2$. Here, we again use the minimality of the counterexample when removing the edges. Hence, we obtain a  pair of $\{u,v,z\}$-bridges in $G$ that are (subdivisions of)  $K_{1,3}$. Otherwise, we turn $\{u,v,z\}$-bridges containing edges of $C$ (there can be at most two of them) into (subdivisions of) $K_{1,3}$ by edge removals.   The rest of the argument goes through as in the previous case since we don't need to use the $X$-configuration $(T_u,Q_u)$ or $(T_v,Q_v)$ at this point.

\subsubsection*{ CASE (B2) $C_1$ and $C_2$ agree on $C$.}

 \begin{figure}[htp]
\centering
\includegraphics[scale=0.7]{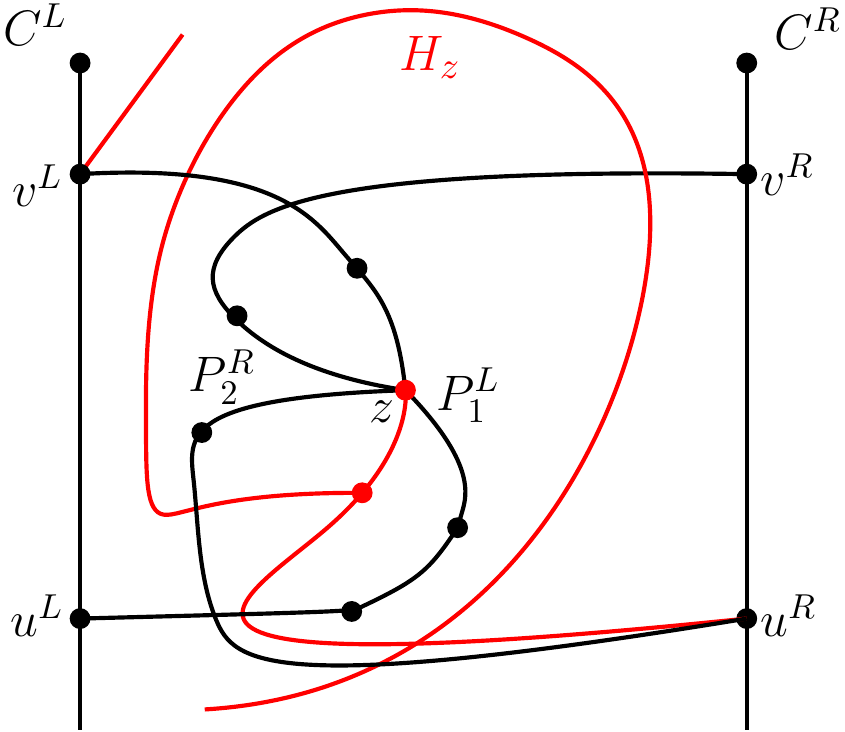}
 \caption{A $\{u,v,z\}$-bridge with both $u$ and $v$ as feet.}
\label{fig:uvzfeet}
\end{figure}

We consider the case that the subpaths of $C_1$ on $C^L$ and $C_2$ on $C^R$ are the same as subgraphs of $G$.

If the intersection of $P_1^L$ and $P_2^R$ is not a single vertex we will find a pair of vertex-disjoint paths in $C_1 \cup C_2$, contradicting Lemma~\ref{lem:2disjoint}.
Let $R_u^{LR}$ and $R_v^{LR}$\marginnote{$R_u^{LR}$, $R_v^{LR}$ \bfseries{(B2)}}
denote the paths joining $u^L$ and $u^R$, and $v^L$ and $v^R$, respectively, in $P_1^L\cup P_2^R$. Now, $C_1\oplus C_2=R_u \oplus R_v$ has vanishing homology over $\ZN_2$. Hence, $R_u$ and $R_v$ are in the same homology class over $\ZN_2$.
It follows that the order of end vertices of $P_1^L\cap P_2^R$ along $P_1^L$ and $P_2^R$ when traversing  $P_1^L$ and $P_2^R$, respectively, from $u^L$ to $v^L$, and from $u^R$ to $v^R$, is the same.
Hence, $R_u^{LR}$ and $R_v^{LR}$ are disjoint.
Since the cycles in $G$ corresponding to $R_u^{LR}$ and $R_v^{LR}$ are essential, this contradicts Lemma~\ref{lem:2disjoint}.

Therefore, we can assume that the intersection of $P_1^L$ and $P_2^R$ consists of a single vertex $z$.\marginnote{$z$ \bfseries{(B2)}}
If $z$ can be made even by flips, we do so.
In this case, $R_u^{LR}$ and $R_v^{LR}$ touch at $z$: The crossing parity
between $R_u^{LR}$ and $R_v^{LR}$ is the same as the crossing parity between $P_1^L$ and $P_2^R$. Now $P_1^L$ and $P_2^R$ (as part of two closed curves
$C_1$ and $C_2$) cross evenly, so $R_u^{LR}$ and $R_v^{LR}$ do too. The only edges of $R_u$ and $R_v$ which can cross oddly, are the edges incident to
$u$ and $v$, and the edges incident to $z$. The edges incident to $u$ and $v$ in $R_u^{LR}$ and $R_v^{LR}$ form two independent odd pairs, so together they do not affect the crossing parity  between $R_u^{LR}$ and $R_v^{LR}$. Thus,  $R_u$ and $R_v$
cross evenly overall. Now, if $R_u^{LR}$ and $R_v^{LR}$  ``cross'' in $z$, it would follow that the cycles corresponding to $R_u$ and $R_v$ are not in the same homology class over $\ZN_2$. However, we already showed the opposite in the previous paragraph. Hence, $R_u^{LR}$ and $R_v^{LR}$ touch at $z$, which contradicts Lemma~\ref{lem:2disjoint}.

We conclude that $z$ is odd and cannot be made even by flips. Let the edges of $R_u^{LR}$ incident to $z$ be $e$ and $f$. Correct
the rotation at $z$ so that $e$ and $f$ cross each other and every other edge at $z$ evenly. If $e$ and $f$
are consecutive in the rotation at $z$ (not separated by any edge in $G$), then $R_v^{LR}$ and $R_u^{LR}$ touch at $z$, and we are done, as before.
Therefore, there must be some edge $g$ separating $e$ and $f$. Since the graph is $2$-connected (by Lemma~\ref{lem:X2-connected}), there must be a
path $P'$ starting with $g$ at $z$, and connecting $z$ to $C$ ($P'$ avoids the interior vertices of $H^{LR}$, since otherwise,
we would contradict Lemma~\ref{lem:Xreduction333}). By inspecting the rotation at $z$ (note that we made edges of $P_1^L$ and $P_2^R$ cross each other evenly at $z$, see Figure~\ref{fig:uvzfeet}), the vertices of this path lie both in the interior of $C_1$ and $C_2$, so $P'$ can only attach to $C$ by crossing $C_1$ or $C_2$ oddly, which it can only do if it connects to $u^L$, $u^R$, $v^L$, or $v^R$. So $g$ is part of a
$\{u,v,z\}$-bridge $H_z$. If both $u$ and $v$ are feet of $H_z$ (which can happen, see Figure~\ref{fig:uvzfeet}), then $H_z$ contains a path
from $u$ to $v$ avoiding $z$ and disjoint from $P_2^R$. We can therefore treat this as {\bf CASE (A)} or {\bf CASE (B1)} instead depending on whether $H_z^{LR}$ attaches at the both $C^L$ and $C^R$.  Hence, any such $\{u,v,z\}$-bridge
is either a $\{u,z\}$-bridge or a $\{v,z\}$-bridge.
By the choice of $P_1^L$ and $P_2^R$ minimizing the sum of their lengths, which we assumed at the beginning of the the proof, the $\{u,z\}$-bridge or  $\{v,z\}$-bridge $H_z$  is not an edge and therefore feet of $H_z$   form a  2-cut.  Since $z$ is odd,
Lemma~\ref{lem:X2-cutodd} implies that there is no such $H_z$ (since $P_1^L \cup P_2^R$ already contains paths from $z$ to each of $u^L$, $u^R$, $v^L$, and $v^R$; we work with $X$-configurations $(C,R_u)$ and $(C,R_v)$ to make sure the lemma can be applied). Hence $e$ and $f$ are consecutive after all, and that case we already dealt with.

\section{Applications and Open Questions}\label{sec:A}

We survey results which rely on the Hanani-Tutte theorem, and discuss how Theorem~\ref{thm:HTtorus} can lead to toroidal versions of some of these results; along the way, we also encounter some open questions, and make some conjectures. This extends the discussion in~\cite{S14}).

\subsection*{Adjacent Crossings}

Theorem~\ref{thm:HTtorus} implies that if a graph can be drawn on the torus so that the only crossings are between adjacent edges, then it is toroidal. This would appear to be intuitively clear, but for the plane, for the projective plane, and now for the torus, we only know it to be true by virtue of their respective Hanani-Tutte theorems; even for the plane no simpler proof is known. Since we know that the Hanani-Tutte theorem does not hold for orientable surfaces of genus at least $4$~\cite{fulek2019counterexample}, this begs the question whether we can always remove adjacent crossings.

\begin{conjecture}
  If a graph can be drawn in a surface without any independent crossings, then the graph is embeddable in that surface.
\end{conjecture}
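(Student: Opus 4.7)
The plan is to reduce the conjecture to the weak Hanani--Tutte theorem for surfaces (Theorem~\ref{thm:WHTS}). A drawing in which no two independent edges cross is trivially \iocro, so independent pairs are already even; it suffices to redraw on the same surface $S$ so that every \emph{adjacent} pair of edges also crosses evenly, whereupon Theorem~\ref{thm:WHTS} yields an embedding in $S$. For the plane, projective plane, and (by Theorem~\ref{thm:HTtorus}) the torus, this step is subsumed by the full Hanani--Tutte theorem, so the real content of the conjecture lies in higher-genus orientable surfaces and in the non-orientable surfaces starting with the Klein bottle.

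First I would try to eliminate odd adjacent pairs vertex by vertex. At a vertex $v$, edge-flips can change the pairwise parities of the incident edges, and by Lemma~\ref{lem:4obs} the only obstruction is a set of four edges at $v$ whose crossing parities cannot be reconciled by flips alone. This is where the strong hypothesis---no crossings at all between independent edges, not merely an even number---should give leverage. Concretely, if $e$ and $f$ share the endpoint $v$ and cross at a point $p$, the two subarcs of $e$ and $f$ from $v$ to $p$ bound a closed loop $\gamma_p$ that, by the hypothesis, can only be crossed by edges adjacent to $e$ or $f$; so a tubular neighbourhood of $\gamma_p$ has very restricted structure. I would attempt to use edge-vertex moves to push adjacent crossings along such loops and then absorb them, decreasing the number of adjacent crossings at each step.

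The main obstacle will be when $\gamma_p$ is essential in $S$, since then the local redrawing may disturb the global topology and potentially create new independent crossings elsewhere. Following the methodology of this paper, I would set up a $\prec$-minimal counterexample (ordered roughly as in Section~\ref{sec:MC}, with the number of adjacent crossings added as the primary parameter) and mimic Sections~\ref{sec:WDC}--\ref{sec:PX}: show that the minimal counterexample contains no pair of nearly disjoint essential cycles, and isolate an analogue of the $X$-configuration driven by the unavoidable four edges of Lemma~\ref{lem:4obs}. A key expected gain over the general iocr-$0$ setting is that the strong hypothesis should rule out the counterexamples of~\cite{fulek2019counterexample} that defeat Hanani--Tutte in genus at least four, because those constructions rely crucially on independent edges crossing twice.

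Finally, as in the torus proof, the argument would culminate in a base-case analysis in which $G$ is essentially a subdivided Kuratowski-type minor on $S$; appropriate analogues of Lemmas~\ref{lem:K3nspokesHTWC} and~\ref{lem:K5HTWC} would then finish the proof. The genuinely hard step, and the reason the statement is phrased as a conjecture rather than a theorem, is the lack of an excluded-minor list (or an analogous bounded family of base configurations) for surfaces beyond the torus: extending the delicate $K_{3,t}$ and $K_5$ analysis carried out here appears to require new ideas tailored to each surface, and finding a uniform argument that works for all surfaces at once is the central obstacle.
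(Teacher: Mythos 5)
The statement you are targeting is a \emph{conjecture} in the paper (Section~\ref{sec:A}, ``Applications and Open Questions''); the authors offer no proof and explicitly pose it as open, noting that the genus-$4$ counterexample to Hanani--Tutte from~\cite{fulek2019counterexample} does not settle it precisely because that construction has independent edges crossing (evenly but nonzero). You recognize this, and what you offer is a research program rather than a proof, which is the appropriate stance. Your observation that the strong hypothesis forces any loop $\gamma_p$ formed by the two sub-arcs of $e$ and $f$ from $v$ to a crossing $p$ to be crossable only by edges adjacent to $e$ or $f$ is a genuine piece of leverage that an \iocro-drawing alone does not provide, and is not something the paper exploits.

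The central step of the plan, however, has an unaddressed gap that is worth naming concretely. The redrawing operations you invoke (edge-flips, edge-vertex moves, pushing adjacent crossings along $\gamma_p$) preserve the \iocro\ property but do \emph{not} preserve the stronger property of having zero independent crossings: after the first such move, independent edges will generically begin to cross (evenly), the special structure near $\gamma_p$ that the strong hypothesis delivered in the \emph{initial} drawing is destroyed, and you are back in the general \iocro\ setting, where the conclusion is the full Hanani--Tutte theorem, known to fail on orientable surfaces of genus at least $4$. So any induction on the number of adjacent crossings needs a separately maintained invariant that encodes the special local structure through the redrawing steps, and none is proposed; the paper's own $\prec$-minimality framework is likewise built around the weaker \iocro\ invariant, not around crossing-freeness of independent pairs. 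Together with the base-case issue you already flag (no bounded Kuratowski-type family beyond the torus), this is exactly why the statement remains a conjecture and not a theorem.
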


The counterexample from~\cite{fulek2019counterexample} requires independent edges to cross (evenly), so it does not resolve the conjecture.

\subsection*{Crossing Number Variants}

The Hanani-Tutte theorem is related to the {\em independent odd crossing number} $\iocr(G)$ of a graph $G$, the fewest number of pairs of independent edges that have to cross oddly in a drawing of $G$ (see~\cite[Section 6]{PT00}). The Hanani-Tutte theorem then states that $\iocr(G) = 0$ implies that $\cro(G) = 0$, where $\cro(G)$ is the traditional crossing number of $G$. It is even true that $\iocr(G) = \cro(G)$ for $\iocr(G) \leq 2$~\cite{PSS10}, though we know that there are graphs $G$ for which $\iocr(G) < \cro(G)$~\cite{PSS08}. The two crossing numbers cannot be arbitrarily far apart though, one can show that $\cro(G) \leq \binom{2\iocr(G)}{2}$~\cite{PSS10}. It is not known whether similar bounds, or any bounds, for that matter, hold for other surfaces, not even the projective plane. (The subscript in  crossing number variants indicates the surface we work on.)

\begin{conjecture}
  $\cro_{\Sigma}(G) \leq \binom{2\iocr_{\Sigma}(G)}{2}$, where $\Sigma$ is the projective plane, or the torus.
\end{conjecture}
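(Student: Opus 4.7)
The plan is to mimic the planar argument for $\cro(G)\le\binom{2\iocr(G)}{2}$ from~\cite{PSS10}, substituting the appropriate Hanani-Tutte theorem at the key step: for $\Sigma$ the projective plane, the compatible version of~\cite{CKP+17}, and for $\Sigma$ the torus, the newly-proved Theorem~\ref{thm:HTtorus}. First fix a drawing $D$ of $G$ on $\Sigma$ witnessing $k:=\iocr_\Sigma(G)$, and let $E_0\subseteq E(G)$ be the set of edges participating in at least one independent odd pair, so $|E_0|\le 2k$. The inherited drawing of $G-E_0$ is an \iocro-drawing on $\Sigma$, so Hanani-Tutte on $\Sigma$ replaces it by an embedding $\Phi$ of $G-E_0$.

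The remaining task is to reinsert the edges of $E_0$ into a drawing of $G$ on $\Sigma$ with at most $\binom{|E_0|}{2}\le\binom{2k}{2}$ crossings. I would aim for the following invariant in the final drawing: each $E_0$-edge is crossing-free with every edge of $G-E_0$, and each pair of $E_0$-edges crosses at most once. For the first property, for each $e\in E_0$ use edge-vertex moves to arrange that $e$ crosses every edge of $G-E_0$ evenly (its independent crossings with $G-E_0$-edges are already even by the choice of $E_0$, and adjacent crossings can be evened by flips), and then remove these crossings by the one-dimensional Whitney trick employed in the proof of Lemma~\ref{lem:G-C}. For the second property, an analogous Whitney-trick pass between pairs of $E_0$-edges would reduce any even pair to zero crossings and leave each odd pair with a single crossing.

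The main obstacle is topological: the Whitney trick applied along $f\in E(G-E_0)$ leaves behind closed-curve components of $e$ that must be absorbed without spoiling the rest of the drawing, and on a higher-genus surface these components can be essential and cross other closed curves oddly. On the projective plane, combining this with the compatible Hanani-Tutte theorem~\cite{CKP+17} should suffice, making that case essentially a routine transfer of the planar argument. The torus case is harder: compatibility genuinely fails~\cite[Theorem 7]{fulek2019counterexample} and only weak compatibility (Theorem~\ref{thm:WUHT}) is available, so reinsertion may force rotation-reversals of evenly-connected components of $\Phi$ that introduce extra $E_0$-$E_0$ crossings; these must be amortized against the $\binom{2k}{2}$ budget. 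The $3$-connectivity hypothesis of Theorem~\ref{thm:WUHT} and the case of essential $E_0$-edges (whose homology class constrains their routing inside $\Phi$) will likely have to be handled separately by the bracer-and-splice devices from Section~\ref{sec:KM}, and carefully controlling these extra crossings is where I expect the proof to require genuinely new ideas.
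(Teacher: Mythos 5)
This statement appears in the paper as a \emph{Conjecture}, not a theorem: the text immediately preceding it says ``It is not known whether similar bounds, or any bounds, for that matter, hold for other surfaces, not even the projective plane.'' The paper offers no proof for you to compare against, and your proposal does not settle the open problem either --- as you concede at the end, ``carefully controlling these extra crossings is where I expect the proof to require genuinely new ideas.'' An honest research plan is not a proof.

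Let me be concrete about why the gaps you flag are fatal rather than merely awkward. The Whitney trick in Lemma~\ref{lem:G-C} works because the object being cleared, $C$, is a non-separating simple closed curve: the closed-curve debris produced by cutting and locally reconnecting an edge $f$ along $e^*$ can be rejoined by pairs of parallel arcs passing around $C$, precisely because the complement of $C$ is connected. When you instead try to clear a single edge $e\in E_0$ of its crossings with $G-E_0$, there is no analogous closed curve to route around; the debris components are closed curves that can be homologically essential and cross other edges (and each other) oddly, and there is no crossing-free way to reabsorb them on a positive-genus surface. Separately, once you replace the inherited drawing of $G-E_0$ by a fresh embedding $\Phi$ via Hanani--Tutte, (weak) compatibility constrains only rotations at even vertices, not the face structure of $\Phi$; the original arcs of the $E_0$-edges bear no controlled relationship to the faces of $\Phi$, so there is no a priori bound in terms of $\iocr_\Sigma(G)$ on how often a reinserted $E_0$-edge must cross $\Phi$. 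Your target invariant that each $E_0$-edge is crossing-free with $G-E_0$ therefore cannot be reached by the route you sketch. Finally, $G-E_0$ need not be $3$-connected, so Theorem~\ref{thm:WUHT} is not applicable to it as stated, and on the projective plane the available compatible Hanani--Tutte theorem still does not resolve the debris-absorption problem, so the projective-plane case is not ``a routine transfer'' either.
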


We also now have a crossing lemma for $\iocr$ on the torus.

\begin{corollary}[Crossing Lemma]\label{cor:iocrcl}
  $\iocr_T(G) \geq c m^3/n^2$,  where $T$ is the torus, $c = 1/64$, $m = |E(G)|$, and $n = |V(G)|$.
\end{corollary}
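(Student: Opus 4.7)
The plan is to follow the classical probabilistic proof of the crossing lemma, due to Ajtai--Chv\'atal--Newborn--Szemer\'edi and Leighton, feeding in Theorem~\ref{thm:HTtorus} to supply the base inequality.

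First I would establish the linear bound $\iocr_T(G) \geq m - 3n$. Fix an \iocro-drawing on the torus realising $\iocr_T(G)$ independent odd pairs. From each such pair delete one of its two edges, so that in total at most $\iocr_T(G)$ edges are removed. The restricted drawing of the resulting graph $G'$ has no independent odd crossings, so by Theorem~\ref{thm:HTtorus} $G'$ embeds on the torus. Euler's formula on the torus for a simple graph then forces $m(G') \leq 3n$, and hence $m - \iocr_T(G) \leq 3n$.

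Next I would run the standard random sub-graph argument. Fix an \iocro-drawing $D$ of $G$ with $\iocr_T(G)$ independent odd pairs, and form a random induced subgraph $G_p$ by retaining each vertex independently with probability $p \in (0,1]$, inheriting $D$ on the surviving edges to obtain a drawing $D_p$ of $G_p$. Let $n_p = |V(G_p)|$, $m_p = |E(G_p)|$, and let $X_p$ be the number of independent odd pairs of $D_p$. By linearity of expectation $\mathbb{E}[n_p] = pn$ and $\mathbb{E}[m_p] = p^2 m$, while each independent odd pair of $D$ has four distinct endpoints and therefore survives in $D_p$ with probability exactly $p^4$, giving $\mathbb{E}[X_p] = p^4 \, \iocr_T(G)$. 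Since $\iocr_T(G_p) \leq X_p$ (the restricted drawing witnesses an upper bound) and, by the previous paragraph, $\iocr_T(G_p) \geq m_p - 3 n_p$, taking expectations yields
\begin{equation*}
    p^4 \, \iocr_T(G) \;\geq\; p^2\, m - 3 p\, n.
\end{equation*}

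Finally I would optimise $p$. Choosing $p = 4n/m$, which is at most $1$ in the regime $m \geq 4n$ where the claimed inequality is nontrivial, one computes
\begin{equation*}
    \iocr_T(G) \;\geq\; \frac{m}{p^2} - \frac{3n}{p^3} \;=\; \frac{m^3}{16 n^2} - \frac{3 m^3}{64 n^2} \;=\; \frac{m^3}{64 n^2}.
\end{equation*}
There is no genuine obstacle beyond the base inequality: once Theorem~\ref{thm:HTtorus} is in hand, the delete-and-Euler step is immediate and the remainder is the textbook sampling calculation. The role played by Theorem~\ref{thm:HTtorus} is exactly to guarantee that killing one edge from every independent odd pair really does leave a toroidal graph, which is the whole point of a strong Hanani--Tutte theorem in this surface.
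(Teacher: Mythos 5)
Your proof is correct and matches the paper's intended route: the paper cites the standard crossing-lemma argument, and you supply exactly that---Theorem~\ref{thm:HTtorus} plus Euler's formula on the torus gives the base bound $\iocr_T(G) \geq m - 3n$, and the usual vertex-sampling with $p = 4n/m$ then yields the constant $1/64$. One small caveat, which applies to the paper's statement of the corollary as well: the hypothesis $m \geq 4n$ is genuinely needed rather than merely marking ``where the inequality is nontrivial,'' since a toroidal triangulation with $m = 3n$ satisfies $\iocr_T(G) = 0$ while $m^3/(64n^2) > 0$, so the bound can actually fail below that threshold.
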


The proof follows from the standard crossing lemma argument done carefully combined with Theorem~\ref{thm:HTtorus} (see the section on crossing lemma variants in~\cite{S13b}). Since $\iocr$ lower bounds several other crossing number variants, such as $\iacr$, $\pcr$, and $\pcr_-$ this also implies a crossing lemma for these variants. Also see~\cite{K20} for a sketch of an argument that shows a crossing lemma for $\iocr$ for arbitrary surfaces, but without explicit constant $c$.

\subsection*{Pseudodisks and Admissible Regions}

Smorodinsky and Sharir~\cite{SS04} showed that if $P$ is a collection of $n$ points, and $C$ a collection of $m$ pseudo-disks in the plane such that every pseudo-disk in $C$ passes through a distinct pair of points in $P$, and no pseudo-disk contains a point of $P$ in its interior, then $m \leq 3n-6$, where $3n-6$ is just the maximal number of edges in a planar graph. For pseudodisks in the projective plane, we have $m \leq 3n-3$~\cite{S14}, and, using Theorem~\ref{thm:HTtorus} we can now conclude that $m \leq 3n$ for pseudodisks in the torus.

A family of simply connected regions is {\em $k$-admissible} if each pair of region boundaries  intersect in an even number of points, not exceeding $k$.  Whitesides and Zhao~\cite{whitezhao}
showed that the union of $n \geq 3$ planar $k$-admissible sets is bounded by at most $k(3n-6)$ arcs, where, again, $3n-6$ is
the bound on the number of edges in a planar graph. This result was reproved by Pach and Sharir~\cite{PS99} using
the Hanani-Tutte theorem. Consequently, we get a bound of $k(3n-3)$ for regions in the projective plane~\cite{S14},
and $3kn$ for regions in the torus.

Keszegh~\cite{K19} gave a more uniform treatment of these types of results based on hypergraphs, which at the core again uses the Hanani-Tutte theorem. While he only states results for the plane, all of his material should lift to the projective plane and the torus based on the availability of the Hanani-Tutte theorem on those surfaces.

\subsection*{Surfaces and Pseudosurfaces}

We know that, at least for orientable surfaces, Hanani-Tutte characterizations will stop at genus $4$. This still leaves a fair number of open cases, and our general set-up may be useful in exploring those cases. For example, we may ask whether our proof can be adjusted to cover the projective plane (this would give a third proof, after~\cite{PSS09} and~\cite{CdVKP17}), or extend it to the Klein bottle (the first case where there are non-trivial surface separating cycles), or the spindle pseudosurface with $n$ pinchpoints.

The graph minor theorem for surfaces implies that embeddability in a surface can be characterized by a finite set of forbidden minors. For the plane and the projective plane, proofs of the Hanani-Tutte theorem were possible, because the list
of forbidden minors is known explicitly~\cite{PSS09}. For the torus, this is no longer the case (though progress is being made on all cubic obstructions). One may ask, whether it is possible to go backwards: from the Hanani-Tutte theorem to
the set of forbidden minors? For the plane this was done by van der Holst~\cite{vdH07}. Can this work be extended
to the $1$-spindle, the projective plane (where~\cite{CdVKP17} gave a proof of Hanani-Tutte not using forbidden minors), or even the torus?

The proof of Theorem~\ref{thm:HTtorus} is algorithmic in the sense that given an \iocro-drawing of $G$ we can
find an embedding of $G$ in the torus (all reductions in the lemmas are constructive like that, so the minimal counterexample assumption can be turned into a recursion). This does not imply that testing whether a graph
can be embedded in the torus lies in polynomial time (see~\cite[Section~1.4.2]{S14}). The planar Hanani-Tutte theorem
can be turned into a linear system of equations over $GF{2}$, which is solvable in polynomial time.
Unfortunately, modeling the handle of the torus requires quadratic equations, which
loses us polynomial-time solvability. This is similar to the situation for the projective plane,
where the projective handle also leads to quadratic equations.

\begin{question}
 Can the Hanani-Tutte criteria for the projective plane and the torus be turned into a polynomial-time test?
\end{question}

For the $1$-spindle this is easily possible, but not quite surprising.

Such tests would not be competitive with existing algorithms---the running time in the plane is $O(n^6)$, but they have the potential to be significantly simpler (the planar Hanani-Tutte test is).

We have focussed on the Hanani-Tutte theorem over $\ZN_2$, that is, we count crossings by parity. We can also count crossings algebraically, over $\ZN$, by assigning a crossing the value $+1$ or $-1$ based on the direction of the crossing (left-to-right, or right-to-left, for an arbitrary orientation of the edges). The counterexample to the Hanani-Tutte theorem on surfaces of genus $4$ is for the $\ZN_2$-version~\cite{fulek2019counterexample}; does the Hanani-Tutte theorem hold algebraically, over $\ZN$? Similarly, we saw that the unified Hanani-Tutte theorem, Theorem~\ref{thm:HTS}, due to ~\cite{FKP17} fails for the torus. Does the algebraic variant, over $\ZN$, hold on the torus?

Finally, the Hanani-Tutte theorem has some variants in the plane which have not yet been generalized. For example, if a graph can be drawn in the plane so that every cycle in the graph has an even number of independent self-crossings, then the graph is planar~\cite[Theorem 1.16]{S14}. Is this result still true for the projective plane, or the torus? The proof in the plane uses the Kuratowski minors.

\subsection*{Arf and Approximating the Hanani-Tutte Theorem}

The fact that the Hanani--Tutte theorem cannot be extended to all orientable surfaces  has some positive practical consequences. Some results about graphs embedded on a surface remain true if we only require that the graph has an independently even drawing on the surface, and this is a strictly larger class of graphs in general (starting at genus $4$).

One notable example is the Arf invariant formula for the number of perfect matchings in a graph embedded on an orientable surface~\cite[Remark 1.4, Theorem 1]{LM11}\footnote{The authors attribute the extension of the result  to independently even drawings  to Norine. The extension is even more general, since instead of independently even drawings Norine considers, so-called (perfect) matching even drawings, in which every perfect matching induces an even number of crossings. It is easy to see that independently even drawings form a proper sub-class of matching even drawings on every surface.}, see also~\cite{cimasoni2007dimers,norine2008pfaffian}.
The proof of the similar result by Tesler~\cite{tesler2000matchings}, which applies to all closed surfaces,
still works for independently even drawings instead of embeddings. In both cases, the complexity of the formula depends exponentially on the genus of the underlying surface, so for graphs which have an
independently even drawing in a surface in which they cannot be embedded, the complexity of the formula is improved; its length does not depend on the genus of the graph, but rather its {\em $\ZN_2$-genus}, the smallest genus of a surface on which the graph has an independently even drawing.

It is therefore interesting to study how large the gap between the genus and the $\ZN_2$-genus of a graph may be. It is known that the gap can be at least linear~\cite[Corollary 4.1]{fulek2019counterexample}. There also is an upper bound on this gap, but it is not effective~\cite{FK18}.

\bibliographystyle{plain}
\bibliography{HTTorus}

\end{document}